\newenvironment{keywords}
{
\begin{center}
\textbf{Keywords}\\
\vspace{0.17cm}
\begin{minipage}{14.5cm}}
{\footnotesize
\end{minipage}
\end{center}}
\newenvironment{Abstract}
{
\begin{center}
\textbf{Abstract}\\
\vspace{0.25cm}
\begin{minipage}{14.5cm}}
{\footnotesize
\end{minipage}
\end{center}}
\newcommand{\mail}[1]{\small\href{mailto:#1}{#1}}
\newtheorem{theorem}{Theorem}[subsection]
\newtheorem{proposition}{Proposition}[subsection]
\newtheorem{lemma}{Lemma}[subsection]
\newtheorem{corollary}{Corollary}[subsection]
\newtheorem{example}{Example}[subsection]
\theoremstyle{definition}
\newtheorem{Def}{Definition}[subsection]
\newtheorem{remark}{Remark}[subsection]
\newtheorem{obs}{Observation}[subsection]
\begin{document}
\title{\huge Graded Poisson and Graded Dirac structures}

\author{Manuel de León \orcidlink{0000-0002-8028-2348} \\ \mail{mdeleon@icmat.es}}
\affil{Instituto de Ciencias Matemáticas, Campus Cantoblanco, Consejo Superior de Investigaciones Científicas, C/Nicolás Cabrera, 13–15, Madrid 28049, Spain}
\affil{Real Academia de Ciencias Exactas, Físicas y Naturales de España, C/Valverde, 22, Madrid 28004, Spain}

\author{ Rubén Izquierdo-López \orcidlink{0009-0007-8747-344X} \\ \mail{rubizqui@ucm.es}}
\affil{Instituto de Ciencias Matemáticas, Campus Cantoblanco, Consejo Superior de Investigaciones Científicas, C/Nicolás Cabrera, 13–15, Madrid 28049, Spain}

\date{\today}

\maketitle

\begin{Abstract}
There have been several attempts in recent years to extend the notions of symplectic and Poisson structures in order to create a suitable geometrical framework for classical field theories, trying to achieve a success similar to the use of these concepts in Hamiltonian mechanics. These notions always have a graded character, since the multisymplectic forms are of a higher degree than two. Another line of work has been to extend the concept of Dirac structures to these new scenarios. In the present paper we review all these notions, relate them and propose and study a generalization that (under some mild regularity conditions) includes them and is of graded nature. We expect this generalization to allow us to advance in the study of classical field theories, their integrability, reduction, numerical approximations and even their quantization.
\end{Abstract}

\begin{keywords}
 Multisymplectic forms, Multi-Dirac structures, Graded Poisson structures, Graded Dirac structures, Higher Poisson structures, Higher Dirac structures, Classical field theories.
\end{keywords}

\tableofcontents

\section{Introduction}

As it is well-known, mechanics experienced a drastic change as soon as it was able to use symplectic geometry in its description. This occurred in the 50's and 60's of the last century, and made it possible to obtain Hamilton's equations as the integral curves of a vector field on a symplectic manifold, in fact, on the cotangent bundle of the configuration space of the system. This implied a liberation from coordinates and the possibility of obtaining the usual properties of mechanical systems (conservation of energy, other conserved quantities, Noether's theorem, integration, reduction procedures, Hamilton-Jacobi theory, Arnold--Liouville theorem and corresponding action-angle coordinates\ldots) in a simple and elegant way.\\

Consequently, one direction of research has been to extend symplectic geometry to more general situations describing classical field theories. At the end of the 60's, three groups of physicist-mathematicians independently developed a formalism called multisymplectic, which sought to extend the symplectic of mechanics to this case \cite{Goldschmidt1973,Garcia1973,Kijowski1979}. The difficulty of this new geometry is that, while symplectic geometry is very rigid (it is always locally equivalent to the canonical one of a cotangent bundle via the Darboux theorem), the situation is very different in the multisymplectic case. In the last 50 years, much effort has been made to achieve progress in this leading, and despite many achievements, a theory as satisfactory as for mechanics has not yet been achieved, being still a field of research in full development. 
\\

A symplectic structure $\omega$ on a manifold $M$ determines an algebraic structure in the function algebra $C^\infty(M)$ through the Poisson bracket defined by 
$$
\{f, g\} = \omega(X_f, X_g),
$$
where $f, g \in C^\infty(M)$ and $X_f$ and $X_g$ are the corresponding Hamiltonian vector fields. The existence of the Poisson bracket allows not only to express the evolution of the observables, but it is also key for quantization processes. But this algebraic structure on $C^\infty(M)$ can also be interpreted on the manifold itself by defining the Poisson tensor as the bivector $\Lambda$ given by
$$
\Lambda(df, dg) = \{f, g\},
$$
see \cite{Lichnerowicz1977}.
It is therefore natural that research has been directed towards generalizations of these notions in order to apply them to classical field theories. The brackets were extended in \cite{Cantrijn1996} (see also \cite{Leon2024a}, and more recently, \cite{GayBalmaz2024}), defining a graded algebra. In this case, the duality of Hamiltonian vector fields versus 1-forms is naturally extended to multivector fields and higher degree differential forms, since the multisymplectic form allows for several levels.\\ 

Another piece in our way is the notion of Dirac structures: They were introduced independently by T. Courant \cite{Courant1986, Courant1990} and I. Dorfman \cite{Dorfman1987, dorfman} as a simultaneous generalization of presymplectic and Poisson structures. A Dirac structure on a manifold $M$ is a maximal isotropic and involutive vector subbundle $D$ of $TM \oplus T^*M$. These structures have remarkable properties that have been used in differential geometry as well as in classical mechanics. There has been several approaches to generalize Dirac geometry to classical field theories.\\

In \cite{Vankerschaver2012}, looking for a formalism which unifies both the Lagrangian and Hamiltonian setting (usually called Skinner-Rusk formalism, see \cite{Skinner1983}), J. Vankerschaver, H. Yoshimura and M. Leok defined a multi-Dirac structure of degree $k$ on a manifold $M$ to be a sequence of vector sub-bundles $D_1, \dots, D_k$,
$$
D_p \subseteq E_p = \bigvee_p M \oplus_M \bigwedge^{k+1-p} M,
$$ 
where $1 \leq p \leq k,$ and $k$ is a fixed integer,
which will be the degree of the multi-Dirac structure. This sequence of multivector fields and forms satisfies a certain maximally isotropic property, together with involutivity with respect to the Courant bracket, defined as 
$$\llbracket (U, \alpha), (V, \beta) \rrbracket := \left([U, V], (-1)^{(p-1)q} \pounds_U \beta +
         (-1)^{q} \pounds_V \alpha - (-1)^q \frac{1}{2} d( \iota_V \alpha  + (-1)^{pq} \iota_U \beta)\right).$$
The case $k=1$ recovers the usual notion of Dirac structure in mechanics.  In this paper, the authors show
that there exists a graded multiplication and a graded bracket on the space of sections of $D_1, \dots, D_k,$ and that the latter is endowed with the structure of a Gerstenhaber algebra with respect to these two operations. Furthermore,
they define a multi-Poisson bracket on a distinguished subset of the space of forms and show that this bracket satisfies the graded Jacobi identity up to exact forms.\\

Later, in \cite{Zambon2012}, M. Zambon noticed that given a multi-Dirac structure of order $k$ on a manifold $M$, $D_1, \dots, D_k$, then $D_1$
determines completely the multi-Dirac structure by the equality $D_p = (D_1)^{\perp, p},$ for all $1 \leq p \leq k,$ where $$(D_p)^{\perp, q} = \{(u, \alpha) \in E_q : \iota_u \beta - (-1)^{pq} \iota_v \alpha = 0, \forall (v, \beta) \in D_q\}.$$ 
He then introduced
the notion of higher Dirac structure of order $k$ on a manifold $M$ as a vector subbundle
   
\[\begin{tikzcd}
	D & {E_1 = TM \oplus_M \bigwedge^{k}M} \\
	M
	\arrow[hook, from=1-1, to=1-2]
	\arrow[from=1-1, to=2-1]
	\arrow[curve={height=-12pt}, from=1-2, to=2-1]
\end{tikzcd}\]
satisfying the following properties
\begin{enumerate}[i)]
    \item It is \textit{Lagrangian}, that is $D = D ^{\perp, 1}.$
    \item It is \textit{involutive} with respect to the Courant bracket.
\end{enumerate}
M. Zambon also proved that, under some regularity conditions, both notions coincide.\\

Regarding the generalization of Poisson geometry, in \cite{Bursztyn2011}, H. Bursztyn proposed a definition of higher-Poisson structure that generalizes the usual one.

\begin{Def}[Higher Poisson structure]
    A \textbf{higher Poisson structure} of order $k$ on a manifold $M$ is a pair $(S,\sharp)$, where
    $S$ is a vector subbundle 
    \[\begin{tikzcd}
        S & {\bigwedge^kM} \\
        M
        \arrow[hook, from=1-1, to=1-2]
        \arrow[from=1-1, to=2-1]
        \arrow[curve={height=-12pt}, from=1-2, to=2-1]
    \end{tikzcd},\]
    and $\sharp$ is a vector bundle mapping $$\sharp: S \rightarrow TM$$ satisfying:
    \begin{enumerate}[i)]
        \item $S^{\circ, 1} = 0$, where $S^{\circ, 1}$ denotes the first annihilator. In general, for a family of forms $S \subseteq \bigwedge^a M$,
        and $p \leq a$, we define $$S^{\circ, p} := \{U \in \bigvee_p M:\,\, \iota_U \alpha = 0, \forall \alpha \in S\}.$$
        \item $\sharp$ is \textit{skew-symmetric}, that is, $\iota_{\sharp(\alpha)} \beta = - \iota_{\sharp(\beta)} \alpha,$ for all $\alpha, \beta \in S.$
        \item $S$ is \textit{involutive} with respect to the bracket\footnote{Although this notation overlaps with Schouten-Nijenhuis bracket, the meaning will be clear from the context.} 
        $$[\alpha, \beta] := \pounds_{\sharp(\alpha)} \beta - \iota_{\sharp(\alpha)} \d{\beta}$$
        and satisfies $$\sharp([\alpha, \beta]) = [\sharp(\alpha), \sharp(\beta)].$$
    \end{enumerate} 
\end{Def}
For $k = 1$ we recover the classical notion of a Poisson manifold.\\

However, in \cite{Bursztyn2019}, H. Bursztyn, N. Martinez-Alba and R. Rubio observed that, in general, the graph of a higher Poisson structure 
$$D := \{(\sharp(\alpha), \alpha): \,\, \alpha \in S\} \subseteq E_1 = TM \oplus_M \bigwedge^k M$$
\textit{does not} define a higher Dirac structure. Nevertheless, they noticed that it does define a \textit{weak higher Dirac structure} (what they
simply call higher Dirac), a not so restrictive version of the definition introduced by M. Zambon that allows for higher Poisson structures to be included.\\

The above definitions and results introduce new notions to be explored in the field of geometries underlying classical field theories.\\

The objectives of this paper are twofold. First, to integrate all these geometric and algebraic notions into a common framework that allows them to be related and extended. Second, to clarify the necessary notions:

\begin{enumerate}[i)]
    
\item A definition of higher order Poisson structure in terms of graded multivectors that extends the usual one, which we call graded Poisson.

\item A definition of a graded algebraic Poisson bracket that is in bijective correspondence with the geometric definition, obtaining the characterization of dynamics in terms of the observable algebra;

\item A notion of a graded Dirac structure that integrates the two previous concepts (at least under some mild assumptions), as well as the multisymplectic structures (which we do not ask to fulfill any regularity condition, i.e., they are only closed forms of higher degree).
\end{enumerate}

The following diagram summarizes how all of these concepts 
relate to each other:
\begin{center}

\tikzset{every picture/.style={line width=0.75pt}} 

\begin{tikzpicture}[x=0.75pt,y=0.75pt,yscale=-1,xscale=1]
    
    \draw  (135,68.6) .. controls (135,41.54) and (156.94,19.6) .. (184,19.6) -- (530,19.6) .. controls (557.06,19.6) and (579,41.54) .. (579,68.6) -- (579,215.6) .. controls (579,242.66) and (557.06,264.6) .. (530,264.6) -- (184,264.6) .. controls (156.94,264.6) and (135,242.66) .. (135,215.6) -- cycle ;
    \draw  (149,59.6) .. controls (149,45.24) and (160.64,33.6) .. (175,33.6) -- (536,33.6) .. controls (550.36,33.6) and (562,45.24) .. (562,59.6) -- (562,137.6) .. controls (562,151.96) and (550.36,163.6) .. (536,163.6) -- (175,163.6) .. controls (160.64,163.6) and (149,151.96) .. (149,137.6) -- cycle ;
    \draw  (144,57.6) .. controls (144,41.58) and (156.98,28.6) .. (173,28.6) -- (260,28.6) .. controls (276.02,28.6) and (289,41.58) .. (289,57.6) -- (289,223.6) .. controls (289,239.62) and (276.02,252.6) .. (260,252.6) -- (173,252.6) .. controls (156.98,252.6) and (144,239.62) .. (144,223.6) -- cycle ;
    \draw (155,62.2) .. controls (155,49.17) and (165.57,38.6) .. (178.6,38.6) -- (258.4,38.6) .. controls (271.43,38.6) and (282,49.17) .. (282,62.2) -- (282,133) .. controls (282,146.03) and (271.43,156.6) .. (258.4,156.6) -- (178.6,156.6) .. controls (165.57,156.6) and (155,146.03) .. (155,133) -- cycle ;
    \draw (223.51,138.65) .. controls (229.01,110.74) and (276.52,96.6) .. (329.62,107.06) .. controls (382.72,117.53) and (421.31,148.64) .. (415.81,176.55) .. controls (410.31,204.46) and (362.8,218.6) .. (309.7,208.13) .. controls (256.6,197.67) and (218.01,166.56) .. (223.51,138.65) -- cycle ;
    \draw(236.48,132.46) .. controls (236.77,118.73) and (265.44,108.19) .. (300.52,108.92) .. controls (335.61,109.65) and (363.82,121.37) .. (363.53,135.09) .. controls (363.25,148.82) and (334.58,159.36) .. (299.49,158.63) .. controls (264.41,157.9) and (236.2,146.18) .. (236.48,132.46) -- cycle ;
    
    \draw (382,218) node [anchor=north west][inner sep=0.75pt]  [xscale=1,yscale=1] [align=left] {Weak higher Dirac};
    \draw (361,54) node [anchor=north west][inner sep=0.75pt]  [xscale=1,yscale=1] [align=left] {Graded Dirac};
    \draw (182,57) node [anchor=north west][inner sep=0.75pt]  [xscale=1,yscale=1] [align=left] {Multi-Dirac};
    \draw (172,216) node [anchor=north west][inner sep=0.75pt]  [xscale=1,yscale=1] [align=left] {Higher Dirac};
    \draw (296,172) node [anchor=north west][inner sep=0.75pt]  [xscale=1,yscale=1] [align=left] {Higher Poisson};
    \draw (245,125) node [anchor=north west][inner sep=0.75pt]  [xscale=1,yscale=1] [align=left] {Graded Poisson};

    \end{tikzpicture}

\end{center}

The extent to which graded Dirac structures (resp. graded Poisson) fail to include weak higher Dirac structures (resp. higher Poisson structures) is characterized by some mild regularity conditions, essentially proving that these two concepts coincide (see \cref{Dirac_Structures:Overview:Thm:Equivalence_HigherDirac_multiDirac}, \cref{Graded_Poisson_structures:thm:Equivalence}).\\

The main results of the paper are the following: 

\begin{enumerate}[i)]
    \item The first main result of this paper is \cref{Dirac_structures:thm:Differential_equivalence}, where we show that graded Dirac structures are characterized by the structure on degree $n$, generalizing the equivalent result in higher Dirac geometry by M. Zambon in \cite{Zambon2012}.
    \item We also study the natural foliation induced by any graded Poisson manifold (\cref{Graded_Poisson_structures:thm:Poisson_implies_Multisymplecticfoliation}), and we propose a way of recovering a graded Poisson structure from a non-degenerate multisymplectic foliation in \cref{Graded_Poisson_structures:thm:Multisymplecticfoliation_implies_Poisson}. These results are then extended to graded Dirac manifolds in \cref{Graded_Poisson_structures:thm:Multisymplecticfoliation_implies_Dirac}.
    \item The induced graded Poisson bracket by a graded Poisson structure is also studied, and we prove the last main result of this paper, \cref{Graded_Poisson_brackets:thm:Bracket_Implies_Structure}, where we show that under some integrability conditions on the family of form subbundles, any graded Poisson bracket induces a graded Poisson structure, thus providing the equivalence between the geometric and algebraic aspect of brackets, similar to the correspondence present in Poisson geometry. The equivalent result is then extended to graded Dirac structures in \cref{GradedPoissonStructures:thm:Bracket_implies_GradedDirac}.
\end{enumerate}

The diagram below shows the interplay between the introduced concepts, comparing the case in classical mechanics, and the case in classical field theories, showing the achieved generalization of the structures appearing in classical mechanics to a graded nature.

\begin{center}
\tikzset{every picture/.style={line width=0.75pt}} 

\begin{tikzpicture}[x=0.75pt,y=0.75pt,yscale=-1,xscale=1]
    
    \draw (428.77,108.85) .. controls (428.77,89.38) and (444.55,73.6) .. (464.02,73.6) -- (569.75,73.6) .. controls (589.22,73.6) and (605,89.38) .. (605,108.85) -- (605,220.35) .. controls (605,239.82) and (589.22,255.6) .. (569.75,255.6) -- (464.02,255.6) .. controls (444.55,255.6) and (428.77,239.82) .. (428.77,220.35) -- cycle ;
    \draw (463.07,208.23) .. controls (463.37,188.26) and (494.49,173.63) .. (532.58,175.53) .. controls (570.67,177.43) and (601.31,195.16) .. (601.01,215.12) .. controls (600.71,235.08) and (569.59,249.72) .. (531.49,247.81) .. controls (493.4,245.91) and (462.77,228.19) .. (463.07,208.23) -- cycle ;
    \draw (370.07,164.49) .. controls (370.28,146.43) and (413.18,133.41) .. (465.89,135.4) .. controls (518.6,137.38) and (561.17,153.63) .. (560.96,171.69) .. controls (560.75,189.75) and (517.85,202.78) .. (465.14,200.79) .. controls (412.43,198.8) and (369.87,182.55) .. (370.07,164.49) -- cycle ;
    \draw    (342,195.15) .. controls (340.88,174.72) and (362.12,161.46) .. (390.27,164.01) ;
    \draw [shift={(392,164.18)}, rotate = 186.41] [color={rgb, 255:red, 0; green, 0; blue, 0 }  ][line width=0.75]    (10.93,-3.29) .. controls (6.95,-1.4) and (3.31,-0.3) .. (0,0) .. controls (3.31,0.3) and (6.95,1.4) .. (10.93,3.29)   ;
    \draw (95.37,111.85) .. controls (95.37,92.38) and (111.15,76.6) .. (130.62,76.6) -- (236.35,76.6) .. controls (255.82,76.6) and (271.6,92.38) .. (271.6,111.85) -- (271.6,223.35) .. controls (271.6,242.82) and (255.82,258.6) .. (236.35,258.6) -- (130.62,258.6) .. controls (111.15,258.6) and (95.37,242.82) .. (95.37,223.35) -- cycle ;
    \draw  (152.92,206.04) .. controls (151.68,183.42) and (175.4,165.17) .. (205.9,165.28) .. controls (236.41,165.39) and (262.15,183.81) .. (263.39,206.43) .. controls (264.63,229.04) and (240.9,247.29) .. (210.4,247.19) .. controls (179.89,247.08) and (154.16,228.66) .. (152.92,206.04) -- cycle ;
    \draw (106.04,150.67) .. controls (105.35,131.55) and (131.84,115.66) .. (165.21,115.19) .. controls (198.58,114.72) and (226.2,129.84) .. (226.89,148.96) .. controls (227.58,168.09) and (201.08,183.97) .. (167.71,184.44) .. controls (134.34,184.92) and (106.73,169.8) .. (106.04,150.67) -- cycle ;
    
    \draw (471.36,103.63) node [anchor=north west][inner sep=0.75pt]  [xscale=1,yscale=1] [align=left] {Graded Dirac};
    \draw (481.08,207.63) node [anchor=north west][inner sep=0.75pt]  [xscale=1,yscale=1] [align=left] {Graded Poisson};
    \draw (415.64,157.63) node [anchor=north west][inner sep=0.75pt]  [xscale=1,yscale=1] [align=left] {Multisymplectic};
    \draw (280.91,197.25) node [anchor=north west][inner sep=0.75pt]  [xscale=1,yscale=1] [align=left] {Forms $\displaystyle \omega $ such that\\$\displaystyle \operatorname{Im} \flat _{p}$ does not have\\constant rank};
    \draw (127.96,91.3) node [anchor=north west][inner sep=0.75pt]  [xscale=1,yscale=1] [align=left] {Dirac};
    \draw (185.68,199.02) node [anchor=north west][inner sep=0.75pt]  [xscale=1,yscale=1] [align=left] {Poisson\\};
    \draw (117.24,137.91) node [anchor=north west][inner sep=0.75pt]  [xscale=1,yscale=1] [align=left] {Pre-symplectic\\};
    \draw (112,44) node [anchor=north west][inner sep=0.75pt]  [xscale=1,yscale=1] [align=left] {Classical mechanics};
    \draw (444,36) node [anchor=north west][inner sep=0.75pt]  [xscale=1,yscale=1] [align=left] {Classical field theories};

    \end{tikzpicture}
\end{center}

The paper is structured as follows. In \cref{section:Overview} we first give some basic definitions in multisymplectic geometry and recall some fundamental things about Dirac structures. Later, we review the different notions of multi-Dirac structures, higher Dirac structures, and weak higher Dirac structures existing in the literature in \cref{Overview:subsection:multidirac}, \cref{Overview:subsection:higher_Dirac}, and \cref{Overview:subsection:Weak_Higher_Dirac}, respectively. In \cref{section:Graded_Dirac_structures} we introduce our own definition of graded Dirac structures, beginning by studying the linear case, then extending this notion to the realm of manifolds. In this section we also study the relation of graded Dirac structures to the existing structures in the literature.
\cref{section:Graded_Poisson_structures} is devoted to discuss graded Poisson manifolds and graded Poisson brackets as well as their relations. In addition, in \cref{Graded_Poisson_structures:subsection:Currents} we apply the previous results to study currents. Finally, in \cref{section:Conclusions} we highlight the main results of the paper and give some comments about our future work. We also include an appendix to present the definition and main properties of the so-called Schouten-Nijenhuis bracket that we are constantly using
without mention along the paper.

\begin{center}{\bf Notation and conventions}
\end{center}
We will use these notations throughout the paper; to facilitate the reading, they are collected here.
\begin{enumerate}
    \item All manifolds are assumed $C^\infty$-smooth and finite dimensional.
    \item Einstein's summation convention is assumed throughout the text, unless stated otherwise.
    \item $\bigvee_p M$ denotes the vector bundle of $p$-vectors on $M$, $\bigwedge^p TM.$
    \item $\mathfrak{X}^p(M) = \Gamma\left(\bigvee_p M\right)$ denotes the space of all multivector fields of order $p$.
    \item $\bigwedge^aM$ denotes the vector bundle of $a$-forms on $M$, $\bigwedge^a T^\ast M.$
    \item $\Omega^a(M) = \Gamma\left (\bigwedge^a M\right )$ denotes the space of all $a$-forms on $M$.
    \item $[\cdot, \cdot]$ denotes the Schouten-Nijenhuis bracket on multivector fields. We use the sign conventions of \cite{Marle1997}.
    \item $\pounds_U \alpha = \d{\iota_U\alpha} - (-1)^p \iota \d{\alpha}$, for $U \in \mathfrak{X}^p(M)$,
    $\alpha \in \Omega^a(M)$ denotes the Lie derivative along multivector fields. 
    For a proof of its main properties, we refer to \cite{Forger2003a}.
    \item For a subbundle $K \subseteq \bigvee_p M,$ and for $a \geq p,$ we denote by $$K^{\circ, a} = \{\alpha \in \bigwedge^a M: \iota_K \alpha = 0\}$$ the annihilator of order $a$ of $K$.
    \item Similarly, for a subbundle $S \subseteq \bigwedge^a M,$ and $p \leq a$ we denote by 
    $$S^{\circ, p} = \{U \in \bigvee_p M: \iota_U S = 0\}$$ the annihilator of order $p$ of $S$.
\end{enumerate}


\section{An overview of previous generalizations of Dirac structures to classical field theories}
\label{section:Overview}

\subsection{Multisymplectic manifolds}
\label{Overview:subsection:multisymplectic}

In this subsection we recall some basic definitions in Multisymplectic geometry.

\begin{Def}[Multisymplectic manifold] A \textbf{multisymplectic manifold} of order $k$ is a pair $(M, \omega),$ where $M$ is a manifold, and $\omega \in \Omega^{k+1}(M)$ is a closed $(k+1)$-form. Both the multisymplectic manifold and the multisymplectic form are called \textbf{regular} or \textbf{non-degenerate} when the map
$$TM \rightarrow \bigwedge^k M, \, v \mapsto \iota_v \omega$$ is a vector bundle monomorphism.
\end{Def}

The canonical example of a multisymplectic manifold is the one generalizing the cotangent bundle from mechanics to a multi-cotangent bundle:
\begin{example} Let $Y$ be a manifold and define (for $k \leq \dim Y$) $$M:= \bigwedge^k Y.$$ There is a canonical $k$-form defined on $M$, the \textbf{Liouville} $k$-form, defined as 
$$\Theta|_{\alpha}(v_1, \dots, v_k) := \alpha(\tau_\ast v_1, \dots, \tau_\ast v_k),$$
where $\tau: \bigwedge^k Y \rightarrow Y$ denotes the canonical projection. Then, 
$$\Omega := \d{\Theta}$$ is a closed $(k+1)$-form and thus endows $M$ with a canonical multisymplectic structure. To obtain its local form, notice that for any set of coordinates $(y^i)$ on $Y$, we can define the induced coordinates on $M$, $(y^i, p_{i_1, \dots, i_k}),$ representing the form
$$\alpha = p_{i_1, \dots, i_k} \d{y^{i_1}} \wedge \cdots \wedge \d{y^{i_k}}.$$ Then, the canonical multisymplectic form takes the local expression
$$\Omega = \d{p}_{i_1, \dots, i_k} \wedge \d{y^{i_1}} \wedge \cdots \wedge \d{y^{i_k}}.$$
\end{example}
The main example of multisymplectic manifolds arising in the study of classical field theory is the following
\begin{example}
\label{ex:Classical_field_theory}
The extended Hamiltonian formalism of classical field theory occurs in 
    $$M := \bigwedge^n_2 Y  = \{ \alpha \in \bigwedge^n Y, \iota_{e_1 \wedge e_2} \alpha = 0, e_1, e_2 \in \ker d \pi\},$$
    where $$\pi: Y \rightarrow X$$ is a fibered manifold, and $n = \dim X.$ The manifold $\bigwedge^n_2 Y$ can be endowed with a non-degenerate multisymplectic structure by restricting the canonical multisymplectic form on $\bigwedge^k Y.$ Making abuse of notation, we will still denote by $\Omega$ the restriction. Introducing canonical coordinates $(x^\mu, y^i, p^\mu_i, p)$, where $\pi(x^\mu, y^i) = (x^\mu)$ are fibered coordinates, representing the form 
    $$\alpha = p \d{^n} x + p^\mu_i \d{y^i} \wedge \d{^{n-1}} x_\mu,$$ we have
    $$\Omega = \d{p} \wedge \d{^n} x + \d{p^\mu_i} \wedge \d{y^i} \wedge \d{^{n-1}} x_\mu.$$ A straight-forward calculation shows that this multisymplectic structure is non-degenerate. For an in-depth treatment of classical field theory using multisymplectic geometry we refer to \cite{Binz1988}.
\end{example}

The natural analogue of a Hamiltonian vector field is that of a Hamiltonian multivector field:

\begin{Def}[Hamiltonian multivector field, Hamiltonian form] Let $(M, \omega)$ be a multisymplectic manifold. Then, a multivector field $U \in \mathfrak{X}^p(M)$ is called \textbf{Hamiltonian} if $$\iota_U \omega = \d{\alpha},$$ for certain $(k-p)-$form $\alpha \in \Omega^{k-p}(M)$ which is called \textbf{Hamiltonian} as-well.
\end{Def}

For details on the study of multisymplectic manifolds, we refer to \cite{Ryvkin2019, Gotay2004, Cantrijn1996, Cantrijn1999}.
\subsection{Dirac structures}
\label{Graded_Dirac_Strcutures:Subsection:Dirac_Structures}

Dirac structures, first introduced independently by Courant \cite{Courant1990} and Dorfman \cite{dorfman}, are a generalization of pre-symplectic manifolds (manifolds equipped with
a closed $2$-form $\omega$, see \cite{Abraham2008,Arnold1989}) and Poisson manifolds (manifolds equipped with an integrable bivector field $\Lambda$, 
see \cite{Vaisman1994}). Dirac geometry has applications to the theory of constraints
in classical mechanics. Indeed, in general, a submanifold of a Poisson manifold does not inherit a Poisson structure; however, under certain 
mild assumptions, it inherits a Dirac structure. For a swift introduction to Dirac geometry, see \cite{Bursztyn2011}.\\

A Dirac structure on manifold $M$ is a vector subbundle of $ E:= TM \oplus_M T^\ast M$. There are two important operations related to this vector 
bundle. The first one, a natural point-wise pairing: given $x \in M$, and  $(u, \alpha), (v, \beta) \in E_x = T_x M \oplus T_x^\ast M,$ 
we can define the symmetric bracket

\begin{equation} \label{Dirac_structures:Pointwise_pairing}
    \llangle (u, \alpha), (v, \beta) \rrangle := \beta(u) + \alpha(v).
\end{equation}
The second one, an operation on sections, the \textbf{Courant bracket} 
$$\Gamma(E) \otimes \Gamma(E) \xrightarrow{\llbracket\cdot, \cdot\rrbracket} \Gamma(E) $$
defined as  

\begin{equation} \label{Dirac_structures:Courant_bracket}
    \llbracket (u, \alpha), (v, \beta)\rrbracket := ([u, v], \pounds_u \beta - \pounds_v \alpha + \frac{1}{2} d (\alpha(v) - \beta(u))).
\end{equation}

\begin{Def}[Dirac structure] A \textbf{Dirac structure} on a manifold $M$ is a vector subbundle 
\[\begin{tikzcd}
	L & {TM \oplus_MT^\ast M} \\
	M
	\arrow[hook, from=1-1, to=1-2]
	\arrow[from=1-1, to=2-1]
	\arrow[curve={height=-6pt}, from=1-2, to=2-1]
\end{tikzcd}\]
    satisfying the following properties:
    \begin{enumerate}[i)]
        \item It is \textit{Lagrangian} with respect to the bracket of \cref{Dirac_structures:Pointwise_pairing}, that is, 
        $$L = L^\perp := \{(v, \beta) \in E: \,\, \llangle (v, \beta), (u, \alpha) \rrangle  = 0, \forall (u, \alpha) \in L\}.$$
        \item It is  \textit{involutive} (i.e. closed) with respect to the Courant bracket (\cref{Dirac_structures:Courant_bracket}).
    \end{enumerate}
\end{Def}

There has been several attempts to generalize the concepts of \cref{Graded_Dirac_Strcutures:Subsection:Dirac_Structures}
to the realm of field theories (multisymplectic geometry), which we briefly review now.

\subsection{Multi-Dirac structures}
\label{Overview:subsection:multidirac}

The first definition was proposed by J. Vankershaver, H. Yoshimura and M. Leok in \cite{Vankerschaver2011, Vankerschaver2012}. 
Motivated by the graded nature of multisymplectic geometry, instead of using the bundle $E = TM \oplus_M T^\ast M,$ 
they defined the family $$E_p := \bigvee_p M \oplus_M \bigwedge^{k+1-p} M,$$ where $1 \leq p \leq k,$ and $k$ is a \textit{fixed} integer,
which will be the degree of the multi-Dirac structure. Note that the particular case $k = 1$ recovers the bundle of Dirac geometry, $E$.\\

We also have two brackets, which are the generalization of the point-wise pairing and the Courant bracket. Let $k\geq 1$ be a fixed integer, 
and $1 \leq p , q \leq k$ such that $p +q\leq k+1$.\\
\begin{enumerate}
    \item The \textit{graded point-wise pairing} $$E_p \otimes_M E_q \xrightarrow{\llangle \cdot, \cdot \rrangle} \bigwedge^{k+1-(p+q)} M$$
    at $x \in M$ for $(U,\alpha) \in (E_p)\big |_x, (V, \beta) \in (E_q)\big |_x$ is given by 
    \begin{equation}\label{Graded_Dirac_structures:Graded_Point-wise_pairing}
        \llangle (U, \alpha), (V, \beta) \rrangle = \iota_U \beta  - (-1)^{pq} \iota_V \alpha.
    \end{equation}
    \item The \textit{graded Courant bracket} $$\Gamma(E_p) \otimes \Gamma(E_q) \xrightarrow{\llbracket\cdot, \cdot\rrbracket} \Gamma(E_{p+q - 1})$$ 
    is given by 
    \begin{equation}\label{Graded_Dirac_structures:Graded_Courant_bracket}
        \llbracket (U, \alpha), (V, \beta) \rrbracket := \left([U, V], (-1)^{(p-1)q} \pounds_U \beta +
         (-1)^{q} \pounds_V \alpha - (-1)^q \frac{1}{2} d( \iota_V \alpha  + (-1)^{pq} \iota_U \beta)\right).
    \end{equation}
\end{enumerate}

\begin{Def}[Multi-Dirac structure] A \textbf{multi-Dirac structure} of degree $k$ on a manifold $M$ is a sequence of vector subbundles 
    $D_1, \dots, D_k$ 
    \[\begin{tikzcd}
        {D_p} & {E_p = \bigvee_p M \oplus_M \bigwedge^{k+1-p}M} \\
        M
        \arrow[hook, from=1-1, to=1-2]
        \arrow[from=1-1, to=2-1]
        \arrow[curve={height=-12pt}, from=1-2, to=2-1]
    \end{tikzcd}\]
    satisfying:
    \begin{enumerate}[i)]
        \item It is \textit{Lagrangian} with respect to the bracket defined in \cref{Graded_Dirac_structures:Graded_Point-wise_pairing}, that is,
        $$D^q = (D^p)^{\perp, q} := \{(V, \beta) \in E^q: \,\, \llangle (V, \beta), (U, \alpha) \rrangle = 0, \forall (U, \alpha) \in D^p\},$$
        for all $p, q$ such that $p + q \leq k+1.$ 
        \item It is \textit{involutive} with respect to the Courant bracket (\cref{Graded_Dirac_structures:Graded_Courant_bracket}), that is, 
        $\llbracket \cdot, \cdot \rrbracket$ restricts to an operation on sections 
        $$\Gamma(D_p) \otimes \Gamma(D_q) \xrightarrow{\llbracket\cdot, \cdot\rrbracket} \Gamma(D_{p+q-1}).$$
    \end{enumerate}
\end{Def}

Multi-Dirac structures generalize multisymplectic structures. Indeed, we have the following result.

\begin{proposition}[\cite{Vankerschaver2011}] Let $\omega \in \Omega^{k+1}(M)$ be a $(k+1)$-form. 
    Then, the sequence of vector subbundles $$D_p := \{(U, \iota_U \omega): \,\, U \in \bigvee_p M\}$$ defines a multi-Dirac structure
    of order $k$ if and only
    if $\d{\omega} = 0.$
\end{proposition}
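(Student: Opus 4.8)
The plan is to treat the two defining conditions of a multi-Dirac structure separately, and to show that the \emph{Lagrangian} condition holds for every $(k+1)$-form $\omega$, so that the whole content of the ``if and only if'' is carried by \emph{involutivity}. Two purely algebraic facts drive everything. First, for $U \in \bigvee_p M$ and $V \in \bigvee_q M$ the interior products graded-commute, $\iota_U \iota_V = (-1)^{pq} \iota_V \iota_U$. Second, if a form $\gamma \in \bigwedge^m M$ with $m \geq p$ satisfies $\iota_U \gamma = 0$ for all $U \in \bigvee_p M$, then $\gamma = 0$ (contract with the decomposable $p$-vectors $\partial_{i_1} \wedge \cdots \wedge \partial_{i_p}$ to read off every component). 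It is also worth noting at the outset that every section of $D_p$ is of the form $(U, \iota_U \omega)$ for a unique $U \in \mathfrak{X}^p(M)$, so computing the bracket on such pairs exhausts all sections.

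For the Lagrangian part, I would plug $(U, \iota_U \omega) \in D_p$ and $(V, \iota_V \omega) \in D_q$ into the graded pairing and use the first fact to get $\llangle (U, \iota_U \omega), (V, \iota_V \omega) \rrangle = \iota_U \iota_V \omega - (-1)^{pq} \iota_V \iota_U \omega = 0$, giving $D_q \subseteq (D_p)^{\perp, q}$. For the reverse inclusion, take $(V, \beta) \in (D_p)^{\perp, q}$; the pairing condition reads $\iota_V \iota_U \omega = (-1)^{pq} \iota_U \beta$ for all $U$, and rewriting the left-hand side via graded commutation yields $\iota_U (\iota_V \omega - \beta) = 0$ for every $U \in \bigvee_p M$. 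Since $\iota_V \omega - \beta$ has degree $k+1-q \geq p$ (this is exactly where the standing hypothesis $p+q \leq k+1$ is used), the second fact forces $\beta = \iota_V \omega$, i.e.\ $(V, \beta) \in D_q$. Hence $(D_p)^{\perp, q} = D_q$ for \emph{every} $\omega$, closed or not, and condition i) never sees the closedness of $\omega$.

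For involutivity, I would compute $\llbracket (U, \iota_U \omega), (V, \iota_V \omega) \rrbracket$ directly. Its first slot is $[U,V] \in \bigvee_{p+q-1} M$, and in the $d$-term I would first simplify $\iota_V \iota_U \omega + (-1)^{pq} \iota_U \iota_V \omega = 2 \iota_V \iota_U \omega$ using graded commutation. Expanding the two Lie-derivative terms with the generalized Cartan formula $\pounds_W = \d \iota_W - (-1)^{\deg W} \iota_W \d$, together with the identity $\iota_{[U,V]} = (-1)^{(p-1)q} \pounds_U \iota_V - \iota_V \pounds_U$ relating the Schouten--Nijenhuis bracket to the generalized Lie derivative (Appendix), I expect all terms to cancel except one, leaving the second slot equal to $\iota_{[U,V]} \omega - (-1)^p \iota_V \iota_U \d\omega$. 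Consequently $\llbracket (U, \iota_U \omega), (V, \iota_V \omega) \rrbracket \in D_{p+q-1}$ if and only if $\iota_V \iota_U \d\omega = 0$. If $\d\omega = 0$ this vanishes for all $U, V$, so $D_\bullet$ is involutive; conversely, involutivity in the case $p = q = 1$ says $\iota_v \iota_u \d\omega = 0$ for all vector fields $u, v$, and since $\d\omega$ has degree $k+2 \geq 2$, the second fact (contracting with pairs of coordinate vector fields) forces $\d\omega = 0$. Combining this with the Lagrangian part gives the equivalence.

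The only real work is the sign bookkeeping in the involutivity computation: one must match the factors $(-1)^{(p-1)q}$, $(-1)^q$ and $(-1)^{pq}$ appearing in the Courant bracket against the signs produced by the Cartan formula and by the Schouten identity, checking that everything collapses to the single residual term $-(-1)^p \iota_V \iota_U \d\omega$. Granting the two algebraic facts and the Schouten--Lie identity from the Appendix, the remaining steps are routine.
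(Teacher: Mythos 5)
Your proposal is correct and follows essentially the same route as the paper: the Lagrangian condition is verified unconditionally via graded commutation of interior products plus nondegeneracy of contraction against arbitrary $p$-vectors, and involutivity is reduced—using the Cartan formula and the identity $\iota_{[U,V]}\omega = (-1)^{(p-1)q}\pounds_U \iota_V \omega - \iota_V \pounds_U \omega$—to the single residual term $\iota_V \iota_U \d\omega$, whose vanishing for all $U,V$ is equivalent to $\d\omega = 0$. Your explicit degree check $k+1-q \geq p$ in the Lagrangian converse and the observation that the case $p = q = 1$ already forces $\d\omega = 0$ are slight sharpenings of steps the paper leaves implicit, but the argument is the same.
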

\begin{proof}
    First, we will prove that it is Lagrangian. Indeed, given two multivector fields $U \in \mathfrak{X}^p(M), V \in \mathfrak{X}^q(M),$ we have
    \begin{align*}
        \llangle (U, \iota_U \omega), (V, \iota_V \omega)\rrangle &= \iota_U \iota_V \omega - (-1)^{pq} \iota_V \iota_U \omega = 0,
    \end{align*}
    which yields $$ D_q \subseteq (D_p)^{\perp, q}.$$ Now, conversely, if $(V, \beta) \in E_q$ satisfies
    $$\llangle (U, \iota_U \omega), (V, \iota_V \omega)\rrangle = 0,$$
    for any multivector field $U \in \mathfrak{X}^p(M)$, then,
    \begin{align*}
        0 = \iota_U \beta - (-1)^{pq} \iota_V \iota_U \omega =  \iota_U \beta -  \iota_U \iota_V \omega,
    \end{align*}
    and, since $U$ is arbitrary, this implies $\beta = \iota_V \omega$, proving that
    $$
    D_q = (D_p) ^{\perp,q}.
    $$
    Now, it only remains to show that it is involutive if and only if $\d{\omega} = 0.$ Recall that
    $$\iota_{[U,V]} \omega = (-1)^{(p-1)q} \pounds_U \iota_V \omega - \iota_V \pounds_U \omega.$$
    Now, $D_p$ is integrable if and only if
    \begin{align*}
        \iota_{[U, V]} \omega &= (-1)^{(p-1)q} \pounds_U \iota_V \omega +
         (-1)^{q} \pounds_V \iota_U \omega - (-1)^q \frac{1}{2} d( \iota_V \iota_U \omega  + (-1)^{pq} \iota_U \iota_V \omega)\\
         &= (-1)^{(p-1)q} \pounds_U \iota_V \omega +
         (-1)^{q} \pounds_V \iota_U \omega - (-1)^q  d \iota_V \iota_U \omega\\
         &= (-1)^{(p-1)q} \pounds_U \iota_V \omega - \iota_V \d{\iota_U} \omega,
    \end{align*}
    that is, if and only if 
    \begin{align*}
        (-1)^{(p-1)q} \pounds_U \iota_V \omega - \iota_V \pounds_U \omega 
        = (-1)^{(p-1)q} \pounds_U \iota_V \omega - \iota_V \d{\iota_U} \omega.
    \end{align*}
    It is clear that the previous equation holds for every $U \in \mathfrak{X}^p(M)$, and $V \in \mathfrak{X}^q(M)$
    only when
    $$\iota_V \iota_U \d{\omega} = 0,$$
    which is equivalent to $\omega$ being closed.
\end{proof}

\subsection{Higher Dirac structures}
\label{Overview:subsection:higher_Dirac}

In \cite{Zambon2012}, M. Zambon noticed that given a multi-Dirac structure of order $k$ on a manifold $M$, $D_1, \dots, D_k$, $D_1$
determines completely the multi-Dirac structure by the equality $D^p = (D^1)^{\perp, p},$ for all $1 \leq p \leq k.$ He then introduced
the following concept.

\begin{Def}[Higher Dirac structure] A \textbf{higher Dirac structure} of order $k$ on a manifold $M$ is a vector subbundle
\[\begin{tikzcd}
	D & {E_1 = TM \oplus_M \bigwedge^{k}M} \\
	M
	\arrow[hook, from=1-1, to=1-2]
	\arrow[from=1-1, to=2-1]
	\arrow[curve={height=-12pt}, from=1-2, to=2-1]
\end{tikzcd}\]
satisfying
\begin{enumerate}[i)]
    \item It is \textit{Lagrangian}, that is $D = D ^{\perp, 1}.$
    \item It is \textit{involutive} with respect to the Courant bracket.
\end{enumerate}
\end{Def}

M. Zambon also proved the equivalence (under some mild assumptions) of multi-Dirac structures and higher Dirac structures.
\begin{theorem}[\cite{Zambon2012}]
    \label{Dirac_Structures:Overview:Thm:Equivalence_HigherDirac_multiDirac}
    Let $M$ be a manifold and fix $k \geq 1$. Then, there is an injective mapping
    $$\{\text{Multi-Dirac structures of order $k$ on M}\} \hookrightarrow \{\text{Higher Dirac structures of order $k$ on $M$}\}$$
    given by $$D_1, \dots, D_k \mapsto D  = D_1$$ which is a bijection onto the set of higher Dirac structures $D \subseteq E^1$ such that
    $D^{\perp, p}$ defines a vector subbundle for $1 \leq p \leq k$.
\end{theorem}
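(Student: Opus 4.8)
The plan is to treat the three assertions --- well-definedness, injectivity, and the identification of the image --- separately, reducing everything to pointwise linear algebra on the fibers $E_p|_x$ and reassembling via the constant-rank (subbundle) hypothesis. First I would dispose of well-definedness and injectivity, both of which are essentially formal. If $D_1,\dots,D_k$ is a multi-Dirac structure, then taking $p=1$ in the Lagrangian condition $D_q=(D_p)^{\perp,q}$ gives $D_q=(D_1)^{\perp,q}$ for every $1\le q\le k$, which is exactly the observation that $D_1$ determines the whole sequence; the case $p=q=1$ gives $D_1=(D_1)^{\perp,1}$ and $\llbracket\Gamma(D_1),\Gamma(D_1)\rrbracket\subseteq\Gamma(D_1)$, so $D=D_1$ is a higher Dirac structure. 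Injectivity is then immediate, since $D_q=(D_1)^{\perp,q}$ forces all remaining terms to agree once $D_1$ is fixed. Finally, because each $D_q=(D_1)^{\perp,q}$ is by hypothesis a vector subbundle, the image lands inside the set of higher Dirac structures $D$ for which $D^{\perp,p}$ is a subbundle for all $1\le p\le k$.

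The substance of the theorem is surjectivity onto that set. Given a higher Dirac structure $D\subseteq E_1$ with all $D^{\perp,p}$ subbundles, I would set $D_p:=D^{\perp,p}$ (so $D_1=D^{\perp,1}=D$) and check that $D_1,\dots,D_k$ is multi-Dirac. The core is the \emph{double-annihilator identity} $(D^{\perp,p})^{\perp,q}=D^{\perp,q}$ for all $p+q\le k+1$, which is precisely the multi-Dirac Lagrangian condition $D_q=(D_p)^{\perp,q}$, and which I would prove fiberwise. I would first record the structural normal form of a linear Lagrangian $D\subseteq V\oplus\bigwedge^k V^\ast$: writing $R:=\mathrm{pr}_V(D)$ for the image under the projection $E_1\to V$, isotropy forces $\{\alpha:(0,\alpha)\in D\}\subseteq R^{\circ,k}$, maximality upgrades this to equality, and so $D$ sits in an exact sequence $0\to R^{\circ,k}\to D\to R\to 0$ and is the graph over $R$ of a form $\omega$ satisfying the graded symmetry $\iota_v\omega(v')=-\iota_{v'}\omega(v)$ read off from the $(1,1)$ pairing. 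From this one computes $D^{\perp,p}$ explicitly as a graph over $(R^{\circ,k})^{\circ,p}$ governed by contraction with $\omega$, and then verifies $(D^{\perp,p})^{\perp,q}=D^{\perp,q}$ by direct substitution. The special feature that $p+q=k+1$ makes the graded pairing \cref{Graded_Dirac_structures:Graded_Point-wise_pairing} take values in $\bigwedge^0 V^\ast=\mathbb{R}$ and become \emph{perfect} is useful mainly as a dimension-bookkeeping device (it pins down $\dim D^{\perp,q}$), but the content of the identity lives in the normal form. The subbundle hypothesis guarantees that $R$ and all the ranks appearing are locally constant, so the fiberwise identities glue into an identity of subbundles; this is exactly the regularity a general higher Dirac structure can violate, which is why the image is a proper subset.

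It remains to transfer involutivity from $D_1$ to the whole tower, i.e. $\llbracket\Gamma(D_p),\Gamma(D_q)\rrbracket\subseteq\Gamma(D_{p+q-1})$. Having the Lagrangian condition already in hand is essential here, for it says precisely that $D^{\perp,q}$ annihilates $D^{\perp,p}$ whenever $p+q\le k+1$. I would argue by induction, the base $p=q=1$ being the given involutivity of $D$. To show $\llbracket a,b\rrbracket\in D_{p+q-1}=D^{\perp,p+q-1}$ for $a\in\Gamma(D_p)$, $b\in\Gamma(D_q)$, I pair with an arbitrary $c\in\Gamma(D_1)$ and invoke an invariance (Leibniz-type) identity for the graded Courant bracket \cref{Graded_Dirac_structures:Graded_Courant_bracket}, expressing $\llangle\llbracket a,b\rrbracket,c\rrangle$ through the $\pounds$-derivative of $\llangle b,c\rrangle$ together with $\llangle b,\llbracket a,c\rrbracket\rrangle$ and $\llangle a,\llbracket b,c\rrbracket\rrangle$ carrying the appropriate graded signs. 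The derivative term vanishes because $b\in D^{\perp,q}$ annihilates $c\in D_1$; in the remaining terms the inductive hypothesis places $\llbracket a,c\rrbracket\in\Gamma(D_p)$ and $\llbracket b,c\rrbracket\in\Gamma(D_q)$, whereupon the Lagrangian condition forces the pairings against $b\in D^{\perp,q}$ and $a\in D^{\perp,p}$ to vanish. Hence $\llbracket a,b\rrbracket$ annihilates $\Gamma(D_1)$ and therefore lies in $\Gamma(D^{\perp,p+q-1})$.

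I expect the main obstacle to be exactly this last step: pinning down the graded invariance identity with the correct signs --- it is the graded analogue of the Courant-algebroid compatibility between anchor, bracket and pairing --- and organizing the induction so that each auxiliary bracket $\llbracket a,c\rrbracket$, $\llbracket b,c\rrbracket$ has genuinely already been controlled, avoiding the circularity that threatens the boundary cases $p=1$ or $q=1$ (which I would handle first, as a separate lemma $\llbracket\Gamma(D_p),\Gamma(D_1)\rrbracket\subseteq\Gamma(D_p)$). The Lagrangian double-annihilator is the conceptual crux and the place where the subbundle hypothesis enters decisively, but it is the involutivity bookkeeping where the sign-tracking is most delicate and error-prone.
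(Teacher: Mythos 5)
A preliminary remark: the paper does not actually prove this statement --- it is quoted from \cite{Zambon2012} --- so the natural comparison is with the paper's proofs of its own analogues, namely \cref{Linear_Dirac_Structures:thm:Linear_equivalence} together with \cref{Dirac_structures:thm:Differential_equivalence}. Your outline is correct in structure, and it takes a genuinely different route in both halves. For the linear core you work on the \emph{vector} side, encoding $D$ by $R=\operatorname{pr}_1(D)$ and a skew map $\omega\colon R\to \bigwedge^k V^\ast/R^{\circ,k}$; the paper works on the \emph{form} side, encoding $D$ by a triple $(S,K,\sharp)$ with $S=\operatorname{pr}_2(D)$, $K=S^{\circ,1}$, $\sharp\colon S\to V/K$ (\cref{Dirac_Structures:Linear_Dirac_Structures:prop:EquivalentDescriptionHigherDirac}), and its one hard ingredient is the annihilator identity of \cref{Linear_Graded_Dirac_structures:Auxiliary_Lemma_II}, proved via $\Im f^\ast = (\ker f)^{\circ}$. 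For involutivity you pair $\llbracket a,b\rrbracket$ against $\Gamma(D_1)$ and invoke a Courant-algebroid-type invariance identity; the paper instead uses the generator description $D_p=\langle (V\wedge U,\iota_U\alpha)\rangle$ (\cref{Linear_Graded_Dirac:corollary:Dp_in_terms_of_D1}) and an explicit Cartan-calculus induction. Your route is conceptually cleaner and needs no generators; the paper's is more computational but uses only identities it states in its appendix. As for the circularity you flag in the boundary case: it is resolved by putting the $D_1$-section in the \emph{differentiating} slot of the invariance identity, so the inner bracket $\llbracket c,e\rrbracket$ is a $D_1$--$D_1$ bracket controlled by the base hypothesis; once that lemma is in place, your main step needs no further induction at all.

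One step, however, is substantially underestimated: ``verifies $(D^{\perp,p})^{\perp,q}=D^{\perp,q}$ by direct substitution.'' The inclusion $(D^{\perp,p})^{\perp,q}\subseteq D^{\perp,q}$ really is a substitution (pair against $(v\wedge U',\iota_{U'}\alpha)$ with $(v,\alpha)\in D$, $U'\in\bigwedge^{p-1}V$, and use that a form of degree $k-q\ge p-1$ killed by all $(p-1)$-vectors vanishes --- this is where $p+q\le k+1$ enters). But the reverse inclusion, i.e.\ the isotropy $\llangle D^{\perp,p},D^{\perp,q}\rrangle=0$, does \emph{not} follow by substituting your graph descriptions: for $(U,\alpha)\in D^{\perp,p}$ and $(W,\beta)\in D^{\perp,q}$ you only control the contractions of $\beta$ with vectors of $R$, so $\iota_U\beta$ cannot be evaluated until you prove the structure lemma $(R^{\circ,k})^{\circ,p}=R\wedge\bigwedge^{p-1}V$ (valid when $\operatorname{codim}R\ge k$) and treat separately the degenerate case $\operatorname{codim}R<k$, where the Lagrangian condition forces $R=V$ and $D$ to be the graph of a $(k+1)$-form. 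Similarly, your claim that $D^{\perp,p}$ projects onto \emph{all} of $(R^{\circ,k})^{\circ,p}$ needs an extension lemma: every linear $\phi\colon R\to\bigwedge^{k-p}V^\ast$ with $\iota_{v'}\phi(v)=-\iota_v\phi(v')$ is of the form $\phi(v)=\iota_v\alpha$ (provable by induction on $\dim R$, but not free). All these statements are true, so nothing in your plan fails; but they are precisely the content the paper isolates in its Auxiliary Lemma, they are the technical heart of the linear half, and labelling them ``direct substitution'' is where your sketch is thinnest.
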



\subsection{Weak higher Dirac structures} \label{Overview:subsection:Weak_Higher_Dirac}

We know that multi-Dirac and higher Dirac structures can be thought of as a generalization of multisymplectic geometry. Whether
these kind of geometries can be thought of as a generalization of some sort of ``higher Poisson structure'' was studied in \cite{Bursztyn2019}.\\

Earlier, in \cite{Bursztyn2011}, H. Bursztyn proposed the following definition for a generalization of Poisson structure to the realm of
field theories.

\begin{Def}[Higher Poisson structure]
    \label{Dirac_structures:Overview:Definition:Weak-Higher-Dirac-Structure}
    A \textbf{higher Poisson structure} of order $k$ on a manifold $M$ is a pair $(S,\sharp)$, where
    $S$ is a vector subbundle 
    \[\begin{tikzcd}
        S & {\bigwedge^kM} \\
        M
        \arrow[hook, from=1-1, to=1-2]
        \arrow[from=1-1, to=2-1]
        \arrow[curve={height=-12pt}, from=1-2, to=2-1]
    \end{tikzcd},\]
    and $\sharp$ is a vector bundle mapping $$\sharp: S \rightarrow TM$$ satisfying:
    \begin{enumerate}[i)]
        \item $S^{\circ, 1} = 0$, where $S^{\circ, 1}$ denotes the first annihilator. In general, for a family of forms $S \subseteq \bigwedge^a M$,
        and $p \leq a$, we define $$S^{\circ, p} := \{U \in \bigvee_p M:\,\, \iota_U \alpha = 0, \forall \alpha \in S\}.$$
        \item $\sharp$ is \textit{skew-symmetric}, that is, $\iota_{\sharp(\alpha)} \beta = - \iota_{\sharp(\beta)} \alpha,$ for all $\alpha, \beta \in S.$
        \item $S$ is \textit{involutive} with respect to the bracket
        $$[\alpha, \beta] := \pounds_{\sharp(\alpha)} \beta - \iota_{\sharp(\alpha)} \d{\beta}$$
        and satisfies $$\sharp([\alpha, \beta]) = [\sharp(\alpha), \sharp(\beta)].$$
    \end{enumerate} 
\end{Def}

Notice that, for $k = 1$, we recover the notion of a classical Poisson structure, where $\sharp$ is the morphism induced by the bivector $\Lambda$.
Indeed, for $k = 1$, the condition $S^{\circ, 1} = 0$ is equivalent to $S = T^\ast M$ which, together with skew-symmetry of $\sharp$, implies
that it is the contraction with a bivector field $\Lambda$. Involutivity is characterized by 
$$[\Lambda, \Lambda] = 0,$$ which is exactly the notion of a Poisson manifold.\\

As we mentioned above, Poisson manifolds are a particular case of Dirac manifolds defining the vector subbundle as its graph
$$D := \left\langle (\sharp(\alpha), \alpha) \in E_1, \, \alpha \in T^\ast M \right\rangle.$$
But, in \cite{Bursztyn2019}, H. Bursztyn, N. Martinez-Alba and R. Rubio observed that, in general, the graph of a higher Poisson structure 
$$D := \{(\sharp(\alpha), \alpha): \,\, \alpha \in S\} \subseteq E_1 = TM \oplus_M \bigwedge^k M$$
fails to define a higher Dirac structure. Nevertheless, they noticed that it does define a \textit{weak higher Dirac structure} (what they
simply call higher Dirac).

\begin{Def}[Weak higher Dirac structure] A weak higher Dirac structure of degree $k$ is a vector subbundle $D \subseteq E_1$ satisfying
    \begin{enumerate}[i)]
        \item It is \textit{weakly Lagrangian,} that is, it is isotropic $$D \subseteq D^{\perp,1}$$ and $$D \cap TM = (\operatorname{pr}_2(D))^{\circ, 1},$$ where 
        $$\operatorname{pr_2}: E_1 \rightarrow \bigwedge^k M$$ is the projection onto the second factor.
        \item It is \textit{involutive} with respect to the Courant bracket.
    \end{enumerate}
\end{Def}

In the case $k = 1$, this notion and the one introduced by M. Zambon coincide, but for $k \geq 2$ it is slightly more general.\\


The weak analogue to multi-Dirac structures (in the sense of \cite{Vankerschaver2011,Vankerschaver2012})
is yet to be defined and studied. The relevance of finding the so-called \textit{graded Dirac} structure is that multisymplectic manifolds $(M, \omega)$
are naturally equipped with a graded Lie algebra, having as graded vector space Hamiltonian forms of arbitrary degree (see \cite{Cantrijn1999,Cantrijn1996}). 
We begin by studying the linear case in the next section.

\section{Graded Dirac structures}
\label{section:Graded_Dirac_structures}


\subsection{Linear graded Dirac structures}
\label{Graded_Dirac_Strcutures:Subsection:Linear_Graded_Dirac_Structures}

Let $V$ be a \textit{finite dimensional} vector space and $k\geq 1$ be a fixed integer. Following the notation of \cref{section:Overview},
define 
$$
E_p := \bigwedge^p V \oplus \bigwedge^{k+1-p} V^\ast,
$$ 
for $1 \leq p \leq k$. Next, take $p, q$ such that $p + q \leq k+1$, then we have the graded-symmetric pairing 
$$E_p \otimes E_q \xrightarrow{\llangle \cdot, \cdot \rrangle} \bigwedge^{k+1-(p+q)} V^\ast$$ defined by \cref{Graded_Dirac_structures:Graded_Point-wise_pairing}.\\

The natural weak analogue of multi-Dirac structures at the linear level is the following
\begin{Def}[Graded Dirac structure] A \textbf{linear graded Dirac structure} of order $k$ on a vector space $V$ is a sequence of subspaces $D_1, \dots, D_k$, $D_p \subseteq E_p$
    satisfying the following property:
    \begin{enumerate}[i)]
        \item The sequence is \textit{weakly Lagrangian}, that is, it is isotropic, 
        $$D^p \subseteq (D^q)^{\perp, p}$$ for all $p,q$ such that $p +q \leq k+1;$ and it satisfies
        $$D^q \cap \bigwedge^q V = \left(\operatorname{pr}_2(D^p) \right)^{\circ, q},$$
        for all $p, q$ such that $p +q \leq k+1$.
    \end{enumerate}
\end{Def}

Let us also define what we mean by a linear weak higher Dirac structure.

\begin{Def}A \textbf{linear weak higher Dirac structure} of order $k$ on a vector space $V$ is a subspace $D \subseteq E_1$
    satisfying:
    \begin{enumerate}[i)]
        \item It is \textit{weakly Lagrangian}, that is, it is isotropic, 
        $$D \subseteq (D)^{\perp, 1};$$ and it satisfies
        $$D \cap V = \left(\operatorname{pr}_2(D) \right)^{\circ, 1}.$$
    \end{enumerate}
\end{Def}

\begin{remark} Clearly, if $D_1, \dots, D_k$ is a graded Dirac structure of order $k$ on $V$, $D_1$ is a linear weak higher Dirac structure
    of order $k$ on $V$.
\end{remark}

Hence, we get a well defined mapping 
$$
\{\text{Linear graded Dirac structures}\} \rightarrow \{\text{Linear weak higher Dirac structures}\}$$
given by 
$$
D_1, \dots, D_k \mapsto D_1.
$$
The rest of this subsection is concerned with proving that this mapping is a bijection, 
giving the linear weak analogue of \cref{Dirac_Structures:Overview:Thm:Equivalence_HigherDirac_multiDirac}. The main idea is giving a different description
of graded Dirac structures. The proof is rather technical, and it is not necessary to understand the rest of the text, although it does uses ideas from the description of graded Poisson structures, as the reader may see (\cref{section:Graded_Poisson_structures}).

\begin{theorem}[Equivalence of linear graded Dirac structures and linear weak higher Dirac structures] 
    \label{Linear_Dirac_Structures:thm:Linear_equivalence}
    The mapping $$\{\text{Linear graded Dirac structures of order }k\} \rightarrow \{\text{Linear weak higher Dirac structures of order }k\}$$
    given by $$D_1, \dots, D_k \mapsto D_1$$
    defines a one-to-one correspondence.
\end{theorem}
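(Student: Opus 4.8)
The plan is to establish the bijection by exhibiting an explicit inverse map. Given a linear weak higher Dirac structure $D \subseteq E_1$, I would reconstruct the full sequence $D_1, \dots, D_k$ by setting $D_1 := D$ and defining, for $2 \leq p \leq k$,
$$
D_p := (D_1)^{\perp, p} = \{(V, \beta) \in E_p : \llangle (V, \beta), (U, \alpha)\rrangle = 0,\ \forall (U, \alpha) \in D_1\}.
$$
This mirrors Zambon's reconstruction $D_p = (D_1)^{\perp, p}$ from \cref{Dirac_Structures:Overview:Thm:Equivalence_HigherDirac_multiDirac}, but the weakly-Lagrangian setting forces me to prove the \emph{weak} conditions rather than the Lagrangian ones. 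The two things to verify are: first, that the sequence so defined is genuinely a linear graded Dirac structure (isotropy $D^p \subseteq (D^q)^{\perp, p}$ and the kernel condition $D^q \cap \bigwedge^q V = (\operatorname{pr}_2(D^p))^{\circ, q}$ for all $p + q \leq k+1$); and second, that the two assignments are mutually inverse, which amounts to showing that if one starts from a graded Dirac structure $D_1, \dots, D_k$ then $D_p = (D_1)^{\perp, p}$ already holds, so that no information is lost by remembering only $D_1$.

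The key steps, in order, are as follows. First I would record elementary properties of the pairing $\llangle\cdot,\cdot\rrangle$ and of the operation $(-)^{\perp, p}$: graded-symmetry, and the interaction of $\perp$ with the annihilator $(-)^{\circ, q}$, since the kernel condition is phrased in terms of $\operatorname{pr}_2$ and $(-)^{\circ, q}$. Second, for the forward direction I would show that the weak-Lagrangian conditions at level $1$ propagate to all levels $p$: concretely, that $D_1 \subseteq (D_1)^{\perp, 1}$ together with $D_1 \cap V = (\operatorname{pr}_2 D_1)^{\circ, 1}$ implies $D_p = (D_1)^{\perp, p}$ satisfies the corresponding pair of conditions at level $(p,q)$. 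Third, for surjectivity/inverse I would take an arbitrary graded Dirac structure and prove the identities $D_p = (D_1)^{\perp, p}$ directly from the weakly-Lagrangian axioms. This is where the isotropy $D^p \subseteq (D^q)^{\perp,p}$ gives one inclusion essentially for free, and the kernel condition is needed to upgrade it to equality.

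The hard part will be the reconstruction identity $D_p = (D_1)^{\perp, p}$ in the \emph{weak} setting: unlike the strict Lagrangian case handled by Zambon, isotropy alone gives only $D_p \subseteq (D_1)^{\perp, p}$, and the reverse inclusion must be extracted from the annihilator condition $D^q \cap \bigwedge^q V = (\operatorname{pr}_2 D^p)^{\circ, q}$. Making this work requires a careful dimension count or a splitting argument separating the ``purely multivector'' part $D_p \cap \bigwedge^p V$ from the part with nonzero form-component, and controlling how $\operatorname{pr}_2(D_p)$ and $\operatorname{pr}_1(D_p)$ pair against each other across different levels. As the excerpt itself anticipates, this is genuinely technical and draws on the description of graded Poisson structures; I expect the cleanest route is to first prove the statement for the graph-type building blocks and then assemble the general case, rather than attempting a single global argument. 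The remaining bookkeeping — graded signs in $\llangle\cdot,\cdot\rrangle$ and consistency of the conditions under the symmetry $p \leftrightarrow q$ — is routine once the reconstruction identity is in hand.
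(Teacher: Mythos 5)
Your plan rests on the reconstruction $D_p := (D_1)^{\perp,p}$, and this is where it breaks: in the weak setting that identity is genuinely false, not merely hard to prove. Here is a counterexample. Take $V = \mathbb{R}^6$, $k = 3$, $S = \langle dx^1\wedge dx^2\wedge dx^3,\ dx^4\wedge dx^5\wedge dx^6\rangle \subseteq \bigwedge^3 V^\ast$, and let $D_1 := \{0\}\oplus S$ be the graph of the zero map $\sharp = 0$; since $S^{\circ,1} = 0$, this is a linear weak higher Dirac structure. Pairing an element $(U,\beta)\in E_2$ against $(0,\alpha)\in D_1$ only constrains the multivector component, so $(D_1)^{\perp,2} = S^{\circ,2}\oplus\bigwedge^2V^\ast$. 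This space is not isotropic: both $(e_1\wedge e_4,\,0)$ and $(0,\,dx^1\wedge dx^4)$ lie in it (note $e_1\wedge e_4\in S^{\circ,2}$), yet $\llangle (e_1\wedge e_4,0),(0,dx^1\wedge dx^4)\rrangle = \iota_{e_1\wedge e_4}(dx^1\wedge dx^4) = \pm 1 \neq 0$. So your forward map does not even land in the set of linear graded Dirac structures. Moreover, the genuine graded Dirac structure extending $D_1$ is $D_p = K_p\oplus S^{k+1-p}$ with $K_p = S^{\circ,p}$ and $S^{k+1-p} = \langle \iota_W\alpha : W\in\bigwedge^{p-1}V,\ \alpha\in S\rangle$; for $p=2$ this is $9+6=15$-dimensional, strictly contained in the $9+15=24$-dimensional $(D_1)^{\perp,2}$. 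Hence your claim for the inverse direction, that any graded Dirac structure already satisfies $D_p = (D_1)^{\perp,p}$, also fails, and no dimension count or splitting argument can extract the reverse inclusion from the kernel condition: allowing $D_p$ to be strictly smaller than the orthogonal is precisely what distinguishes the weak axioms from Zambon's Lagrangian ones (\cref{Dirac_Structures:Overview:Thm:Equivalence_HigherDirac_multiDirac}), where $D_q = (D_p)^{\perp,q}$ holds by definition and your formula would be tautological.

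The correct inverse is built from contractions, not orthogonals. The paper trades both kinds of structure for triples $(S,K,\sharp)$ (\cref{Dirac_Structures:Linear_Dirac_Structures:prop:EquivalentDescription_GradedDirac}, \cref{Dirac_Structures:Linear_Dirac_Structures:prop:EquivalentDescriptionHigherDirac}), observes that the graded Dirac axioms force $K_p = S^{\circ,p}$ and $S^a = (S^{\circ,a})^{\circ,a}$, identifies this double annihilator with the span $\langle \iota_U\alpha : U\in\bigwedge^{k-a}V,\ \alpha\in S\rangle$ via a duality argument ($\Im f^\ast = (\ker f)^{\circ}$, \cref{Linear_Graded_Dirac_structures:Auxiliary_Lemma_II}), and then shows that $\sharp_a(\iota_U\alpha) := \sharp(\alpha)\wedge U + K_{k+1-a}$ is well defined; uniqueness is automatic because each piece of data was forced. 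The resulting $D_p$ is the graph of $\sharp_{k+1-p}$, i.e.\ $K_p\oplus 0$ together with the span of the elements $(\sharp(\alpha)\wedge U,\ \iota_U\alpha)$, which in general sits strictly inside $(D_1)^{\perp,p}$. If you want to salvage your outline, replace $(D_1)^{\perp,p}$ throughout by this construction; the substantive work then becomes exactly what the paper's proof does, namely the double-annihilator description of the form components and the well-definedness of the maps $\sharp_a$.
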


\begin{proposition}[An equivalent description of linear graded Dirac structures] \label{Dirac_Structures:Linear_Dirac_Structures:prop:EquivalentDescription_GradedDirac}
    A linear graded Dirac structure 
    of order $k$, $D_1, \dots, D_k,$ is equivalent to a family of vector subspaces and mappings
    \begin{enumerate}[i)]
        \item $S^{k+1-p} \subseteq \bigwedge^{k+1-p} V^\ast,$ $p = 1, \dots, k$;
        \item $K_p \subseteq \bigwedge^{p} V$,
        \item $\sharp_{k+1-p}: S^{k+1-p} \rightarrow \left(\bigwedge^{p} V \right)/ K_p,$
    \end{enumerate}
    for $1 \leq p \leq k+1$ satisfying the following properties:
    \begin{enumerate}[i)]
        \item \label{Linear_Dirac_Structures:prop:equivalentDiracStrcutures:enumerate:1}
        $K_p = (S^{k+1-q})^{\circ,p},$ for all $p,q$ such that $p + q \leq k+1;$
        \item \label{Linear_Dirac_Structures:prop:equivalentDiracStrcutures:enumerate:2}
        $\iota_{\sharp_{k+1-p}(\alpha)} \beta = (-1)^{pq}\iota_{\sharp_{k+1-q}(\beta)}\alpha,$
        for $\alpha \in S^{k+1-p},$ $\beta \in S^{k+1-q}.$
    \end{enumerate}
    This correspondence is given by the `graph' of $\sharp_a$ 
    $$D_p := \{ (U , \alpha): \alpha \in S^{k+1-p}, \, \sharp_{k+1-p}(\alpha) = U + K_p\}.$$
\end{proposition}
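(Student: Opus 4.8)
The plan is to notice that the passage between a sequence $D_1,\dots,D_k$ and the triples $(S^{k+1-p},K_p,\sharp_{k+1-p})$ is, before any Dirac axiom is invoked, just the standard linear-algebraic description of a subspace of a direct sum in terms of its projection to one summand, its intersection with the other, and the induced graph map; the actual content of the proposition is that the two weakly-Lagrangian conditions translate verbatim into the listed conditions (i) and (ii). So first I would set up the dictionary. Given $D_p\subseteq E_p$ I put $S^{k+1-p}:=\operatorname{pr}_2(D_p)$, $K_p:=D_p\cap\bigwedge^pV$, and define $\sharp_{k+1-p}\colon S^{k+1-p}\to\left(\bigwedge^pV\right)/K_p$ by $\sharp_{k+1-p}(\alpha):=U+K_p$ for any $U$ with $(U,\alpha)\in D_p$; this is well defined since $(U,\alpha),(U',\alpha)\in D_p$ force $(U-U',0)\in D_p$, i.e.\ $U-U'\in K_p$. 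Conversely the graph $D_p:=\{(U,\alpha):\alpha\in S^{k+1-p},\ \sharp_{k+1-p}(\alpha)=U+K_p\}$ recovers $\operatorname{pr}_2(D_p)=S^{k+1-p}$, $D_p\cap\bigwedge^pV=K_p$ and $\sharp_{k+1-p}$. I would check these two assignments are mutually inverse; this part is purely formal and uses none of the Dirac axioms.

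With the dictionary fixed, I would translate the two axioms separately. Under the dictionary the condition $D^q\cap\bigwedge^qV=\left(\operatorname{pr}_2(D^p)\right)^{\circ,q}$ reads exactly $K_q=(S^{k+1-p})^{\circ,q}$, and ranging over all admissible $p,q$ (equivalently, relabelling $p\leftrightarrow q$) this is precisely condition (i). For the isotropy $D^p\subseteq(D^q)^{\perp,p}$, unwinding the pairing $\llangle\cdot,\cdot\rrangle$ gives $\iota_U\beta-(-1)^{pq}\iota_V\alpha=0$ for all $(U,\alpha)\in D_p$ and $(V,\beta)\in D_q$. Here I would first record the crucial well-definedness point: since $\beta\in S^{k+1-q}$ and, by condition (i), $K_p=(S^{k+1-q})^{\circ,p}$, any two representatives of $\sharp_{k+1-p}(\alpha)$ differ by an element of $K_p$ that annihilates $\beta$, so $\iota_{\sharp_{k+1-p}(\alpha)}\beta$ is unambiguous; symmetrically $\iota_{\sharp_{k+1-q}(\beta)}\alpha$ is unambiguous using $K_q=(S^{k+1-p})^{\circ,q}$. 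Rewritten this way the isotropy identity is literally condition (ii), $\iota_{\sharp_{k+1-p}(\alpha)}\beta=(-1)^{pq}\iota_{\sharp_{k+1-q}(\beta)}\alpha$.

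Finally I would assemble the equivalence in both directions. Starting from a graded Dirac structure, the dictionary produces the triples, the $\cap$-condition yields (i), and, (i) being then available to make the contractions well defined, isotropy rewrites as (ii). Conversely, from triples satisfying (i) and (ii) I form the graphs $D_p$; condition (i) gives back the $\cap$-condition and simultaneously guarantees that $\iota_{\sharp_{k+1-p}(\alpha)}\beta$ is representative-independent, so that (ii) forces $\llangle(U,\alpha),(V,\beta)\rrangle=0$, i.e.\ isotropy. I expect the only genuinely delicate point to be this well-definedness of $\iota_{\sharp_{k+1-p}(\alpha)}\beta$: it must hold \emph{simultaneously} for every admissible $q$, which is exactly why condition (i) is demanded as the equality $K_p=(S^{k+1-q})^{\circ,p}$ for all $q$ with $p+q\leq k+1$, and not merely for one $q$. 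Isotropy by itself only yields the inclusion $K_p\subseteq(S^{k+1-q})^{\circ,p}$ (test against $\alpha=0$); the reverse inclusion — the non-degeneracy genuinely encoded in the $\cap$-condition — is precisely what makes the graph description faithful.
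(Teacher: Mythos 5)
Your proposal is correct and follows essentially the same route as the paper: the same dictionary $S^{k+1-p}=\operatorname{pr}_2(D_p)$, $K_p=D_p\cap\bigwedge^pV$, $\sharp_{k+1-p}$ the induced graph map, with the two weakly-Lagrangian axioms translating into conditions (i) and (ii). Your explicit treatment of the representative-independence of $\iota_{\sharp_{k+1-p}(\alpha)}\beta$ (and the remark that isotropy alone only gives the inclusion $K_p\subseteq(S^{k+1-q})^{\circ,p}$) is a point the paper leaves to the reader in its ``as one can easily check'' step, so your write-up is, if anything, more complete.
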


\begin{proof} Let us first prove that a family of vector subspaces and mappings $(S^{k+1-p}, K_p, \sharp_{k+1- p})$ 
    satisfying the previous conditions for $1 \leq p \leq k+1$ defines a linear graded Dirac structure. We only need to check that
    the sequence defined by $$D_p := \{ (U , \alpha): \alpha \in S^{k+1-p}, \, \sharp_{k+1-p}(\alpha) = U + K_p\}$$ is weakly Lagrangian.
    It is clearly isotropic. Indeed, let $(U, \alpha) \in D_p$, $(V, \beta) \in D_q$ for $p,q$ such that $p+q \leq k+1.$ Then, we have that
    $$\sharp_{k+1-p}(\alpha) = U + K_p, \,\, \sharp_{k+1-q}(\beta) = V + K_q$$ and thus
    $$\llangle (U, \alpha), (V, \beta)\rrangle = \iota_{\sharp_{k+1-p}(\alpha)} \beta - (-1)^{pq}\iota_{\sharp_{k+1-p}(\beta)} \alpha = 0,$$
    by \ref{Linear_Dirac_Structures:prop:equivalentDiracStrcutures:enumerate:2}.
    Finally, since $$D_p \cap \bigwedge^p V = K_p , \,\, \operatorname{pr}_2(D_q) = S^{k+1-q},$$
    using \ref{Linear_Dirac_Structures:prop:equivalentDiracStrcutures:enumerate:1} we get $$D_q \cap \bigwedge^q V = \left(\operatorname{pr}_2(D_p) \right)^{\circ, q},$$
    for all $p,q$ such that $p+q \leq k+1,$ which proves that it is weakly Lagrangian. \\

    Conversely, let $D_1, \dots, D_k$ be a linear graded Dirac structure on $V$. Define 
    $$K_p := D_p \cap \bigwedge^p V,$$ and 
    $$S^{k+1-p} := \{\alpha \in \bigwedge^{k+1-p} V^\ast, \,\, \exists u \in K_p, \, u + \alpha \in D_p\}.$$
    If we put $$\sharp_{k+1-p}(\alpha) := u + K_p,$$ for certain $u$ such that $u + \alpha \in D_p,$ then we get
    the desired structure, as one can easily check.
\end{proof}

The same line of reasoning yields the following result

\begin{proposition}\label{Dirac_Structures:Linear_Dirac_Structures:prop:EquivalentDescriptionHigherDirac}
    A linear higher Dirac structure $D \subseteq V \oplus \bigwedge^k V^\ast$ is equivalent
    to a choice of subspaces $S \subseteq \bigwedge^k V^\ast$, $K\subseteq V$ and a mapping
    $$\sharp: S \rightarrow V/K$$ satisfying
    \begin{enumerate}[a)]
        \item $K = S^{\circ, 1};$
        \item $\iota_{\sharp(\alpha)} \beta = - \iota_{\sharp(\beta)} \alpha,$ for all $\alpha, \beta \in S.$
    \end{enumerate}
    The correspondence is given by $$D := \{(U, \alpha) \in E_1 : \,\, \sharp(\alpha) = U + K\}.$$
\end{proposition}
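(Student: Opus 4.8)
The plan is to follow the strategy of \cref{Dirac_Structures:Linear_Dirac_Structures:prop:EquivalentDescription_GradedDirac} almost verbatim, producing explicit maps in each direction between linear weak higher Dirac structures $D \subseteq E_1$ and triples $(S, K, \sharp)$, and then checking that the two constructions invert one another. Since $k$ enters only through the degree $\bigwedge^k V^\ast$ of the form-part, this single-bundle statement is genuinely just the $p = q = 1$ instance of the previous proof, with the sequence $D_1, \dots, D_k$ collapsed to the single datum $D$; I would flag this up front so that the argument can proceed by specialization rather than from scratch.

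For the direction from a triple to a subspace, given $(S, K, \sharp)$ with $K = S^{\circ,1}$ and $\iota_{\sharp(\alpha)}\beta = -\iota_{\sharp(\beta)}\alpha$, I would set $D := \{(U,\alpha) \in E_1 : \alpha \in S,\ \sharp(\alpha) = U + K\}$. Linearity of $\sharp$ makes $D$ a subspace. One checks that $D \cap V = K$ (since $(U,0) \in D$ precisely when $U \in K$) and that each $\alpha \in S$ admits a lift $U \in V$ of $\sharp(\alpha) \in V/K$, whence $(U,\alpha) \in D$ and $\operatorname{pr}_2(D) = S$. Isotropy $D \subseteq D^{\perp,1}$ unwinds, for $p=q=1$, to $\iota_U\beta + \iota_V\alpha = 0$ for all $(U,\alpha),(V,\beta) \in D$, which is exactly skew-symmetry (b), once one observes that $\iota_U\beta$ depends only on the class $U + K$ because $K = S^{\circ,1}$ annihilates $\beta \in S$. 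Finally $D \cap V = K = S^{\circ,1} = (\operatorname{pr}_2 D)^{\circ,1}$ is the remaining weakly Lagrangian condition.

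Conversely, given such a $D$ I would set $K := D \cap V$, $S := \operatorname{pr}_2(D)$, and $\sharp(\alpha) := U + K$ for any $U$ with $(U,\alpha) \in D$. The second weakly Lagrangian condition $D \cap V = (\operatorname{pr}_2 D)^{\circ,1}$ reads exactly $K = S^{\circ,1}$, i.e.\ (a), while isotropy yields (b) by the same unwinding as above. One then verifies that feeding this triple back into the first construction returns $D$ (if $(U,\alpha)$ lies in the reconstructed set then $\alpha \in S$ gives some $(U_0,\alpha) \in D$ with $U - U_0 \in K \subseteq D$, so $(U,\alpha) \in D$), and symmetrically that starting from a triple and returning recovers $(S,K,\sharp)$, so the assignments are mutually inverse.

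The main obstacle, as in the graded case, is the well-definedness of $\sharp$ as a map $S \to V/K$: if $(U,\alpha)$ and $(U',\alpha)$ both lie in $D$, then $(U-U',0) \in D \cap V = K$, so $U + K = U' + K$ and $\sharp$ is single-valued. This same point underlies the well-definedness of the contractions $\iota_{\sharp(\alpha)}\beta$ appearing in (b), which again rests on $K = S^{\circ,1}$. With this settled, every remaining step is a routine transcription of the proof of \cref{Dirac_Structures:Linear_Dirac_Structures:prop:EquivalentDescription_GradedDirac}.
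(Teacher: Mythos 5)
Your proposal is correct and takes essentially the same approach as the paper: the paper's own proof is a one-line appeal to the ``same procedure'' as \cref{Dirac_Structures:Linear_Dirac_Structures:prop:EquivalentDescription_GradedDirac}, and what you have written is precisely that procedure specialized to $p=q=1$ with the details (well-definedness of $\sharp$ via $D \cap V = K$, isotropy $\leftrightarrow$ skew-symmetry, and the mutual inversion of the two constructions) filled in explicitly.
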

\begin{proof} It follows from a similar procedure as in \cref{Dirac_Structures:Linear_Dirac_Structures:prop:EquivalentDescription_GradedDirac}.
\end{proof}

Now we are ready to prove \cref{Linear_Dirac_Structures:thm:Linear_equivalence}.

\begin{proof}(of \cref{Linear_Dirac_Structures:thm:Linear_equivalence}) By 
    \cref{Dirac_Structures:Linear_Dirac_Structures:prop:EquivalentDescription_GradedDirac} and 
    \cref{Dirac_Structures:Linear_Dirac_Structures:prop:EquivalentDescriptionHigherDirac}, to prove that the mapping is both an
    injection and a surjection, we need to check that given two
    subspaces $S \subseteq \bigwedge^k V^\ast$, $K \subseteq V$, and a mapping $$\sharp: S \rightarrow V/K$$ satisfying 
    \begin{enumerate}[a)]
        \item $K = S^{\circ, 1};$
        \item $\iota_{\sharp(\alpha)} \beta = - \iota_{\sharp(\beta)} \alpha,$ for all $\alpha, \beta \in S;$
    \end{enumerate}
    there exists an \textit{unique} family of vector subspaces $S^a\subseteq \bigwedge^aV^\ast$, $K_p \subseteq \bigwedge^p V$ and mappings 
    $$\sharp_{a}: S^a \rightarrow \bigwedge^{k+1-a} V / K_{k+1-a}$$ satisfying
    \begin{enumerate}[(i)]
        \item \label{proof_condition1}
        $K_p = (S^{k+1-q})^{\circ,p},$ for all $p,q$ such that $p + q \leq k+1;$
        \item \label{proof_condition2}
        $\iota_{\sharp_{k+1-p}(\alpha)} \beta = (-1)^{pq}\iota_{\sharp_{k+1-q}(\beta)}\alpha,$
        for $\alpha \in S^{k+1-p},$ $\beta \in S^{k+1-q};$
    \end{enumerate}
    and such that $$S = S^k, \, K = K_1, \,\sharp = \sharp_k.$$
    Let $(S, K, \sharp)$ be given as above. We divide the proof in five steps.
    \begin{center}
        \textit{\underline{1. Definition of $K_p$}}
    \end{center}
    It is clear that we need to define 
    $$K_p := (S)^{\circ, p},$$
    by condition \ref{proof_condition1}. 
    \begin{center}
        \textit{\underline{2. Definition of $S^a$}}
    \end{center}
    Now, to define $S^a \subseteq \bigwedge^a V^\ast,$ notice that condition \ref{proof_condition2}
    implies, for $p = a$, $q = k+1-a$, $$K_a = (S^a)^{\circ,a}.$$ Consequently, since $(K_a)^{\circ, a}$ can be thought of
    as the annihilator of $K_a$ in $(\bigwedge^{a} V)^\ast = \bigwedge^{a} V^\ast,$ we have
    $$(K_a)^{\circ,a} = ((S) ^{\circ, a})^{\circ,a} = S^a,$$
    which determines $S^a$ for each $1 \leq a \leq k+1$.  Now it only remains to define the mappings
    $$\sharp_a : S^a \rightarrow \bigwedge^{k+1-a} V/ K_{k+1-a}.$$
    \begin{center}
        \textit{\underline{3. Description of the subspaces $S^a$}}
    \end{center}
    First, notice that the previous description of the
    subspaces $S^a$ implies that we have the inclusion
    $$\left \langle \iota_U \alpha: U \in\bigwedge^{k-a} V^\ast, \,\alpha \in S \right \rangle \subseteq S^a.$$ Now, let us check that 
    we have the equality.

    \begin{lemma}[Auxiliary Lemma] \label{Linear_Graded_Dirac_structures:Auxiliary_Lemma_II} Let $V$ be a \textit{finite dimensional} vector space,
        $k \leq \dim V$, and $S \subseteq \bigwedge^k V^\ast$ be a subspace of forms. Then, for any $1 \leq a \leq k$ we have
        $$\left \langle \iota_U \alpha: U \in\bigwedge^{k-a} V, \,\alpha \in S \right \rangle = (K_a)^{\circ, a},$$
        where $K_a = S^{\circ, a}.$
    \end{lemma}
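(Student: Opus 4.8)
The plan is to deduce the equality from the double-annihilator identity for the nondegenerate contraction pairing between $\bigwedge^a V$ and $\bigwedge^a V^\ast$. Write $T := \langle \iota_U \alpha : U \in \bigwedge^{k-a} V,\ \alpha \in S\rangle \subseteq \bigwedge^a V^\ast$ for the subspace on the left-hand side. The inclusion $T \subseteq (K_a)^{\circ,a}$ is immediate and is exactly the containment recorded just before the lemma: for $U \in \bigwedge^{k-a}V$, $\alpha \in S$ and $W \in K_a = S^{\circ,a}$ one has $\iota_W(\iota_U\alpha) = \pm\,\iota_U(\iota_W\alpha) = 0$, since $\iota_W\alpha = 0$. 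Hence it remains to prove $(K_a)^{\circ,a}\subseteq T$, and I would obtain this (in fact both inclusions at once) by identifying $T$ as a double annihilator.

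The crucial step is to compute the annihilator $T^{\circ,a} = \{W \in \bigwedge^a V : \iota_W\beta = 0,\ \forall\,\beta\in T\}$ and to show that it equals $K_a$. Since $T$ is spanned by the elements $\iota_U\alpha$, a multivector $W \in \bigwedge^a V$ lies in $T^{\circ,a}$ precisely when $\iota_W\iota_U\alpha = 0$ for all $U \in \bigwedge^{k-a}V$ and all $\alpha \in S$. Commuting the two contractions, $\iota_W\iota_U\alpha = (-1)^{a(k-a)}\iota_U(\iota_W\alpha)$ (the sign being irrelevant here since every term is set to zero), this condition reads $\iota_U(\iota_W\alpha) = 0$ for every $U \in \bigwedge^{k-a}V$. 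For fixed $\alpha$ the element $\iota_W\alpha$ lies in $\bigwedge^{k-a}V^\ast$, and the vanishing of $\iota_U(\iota_W\alpha)$ against all $U \in \bigwedge^{k-a}V$ forces $\iota_W\alpha = 0$, because the contraction pairing $\bigwedge^{k-a}V \times \bigwedge^{k-a}V^\ast \to \mathbb{R}$ is nondegenerate. Thus $W \in T^{\circ,a}$ if and only if $\iota_W\alpha = 0$ for all $\alpha \in S$, i.e. if and only if $W \in S^{\circ,a} = K_a$; that is, $T^{\circ,a} = K_a$.

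Finally I would invoke the double-annihilator theorem: under the canonical identification $\bigwedge^a V^\ast \cong (\bigwedge^a V)^\ast$ furnished by contraction, taking annihilators is the usual orthogonality for a nondegenerate dual pairing, so $(T^{\circ,a})^{\circ,a} = T$ for the subspace $T$. Combining this with the previous step gives $(K_a)^{\circ,a} = (T^{\circ,a})^{\circ,a} = T$, which is the claimed equality. The only points requiring care are bookkeeping ones: fixing the contraction convention so that $\iota_W\iota_U = \pm\iota_U\iota_W$, and checking that the contraction pairings in degrees $a$ and $k-a$ are genuinely perfect so that both the reduction $\iota_W\alpha = 0$ and the double-annihilator step are legitimate. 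The degree hypotheses $1 \le a \le k \le \dim V$ are precisely what make these spaces behave as honest duals, and I expect no substantive obstacle beyond establishing the identity $T^{\circ,a} = K_a$, which carries all the content.
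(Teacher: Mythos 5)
Your proof is correct, and it reaches the statement by a genuinely different route than the paper. The paper's proof proceeds by reduction to generators: it first treats the case $S = \langle \beta \rangle$ of a single $k$-form, where the left-hand side is recognized as the image of the dual of the contraction map $\Phi_\beta \colon \bigwedge^a V \to \bigwedge^{k-a} V^\ast$, $U \mapsto \iota_U \beta$, so that the linear-algebra identity $\Im \Phi_\beta^\ast = (\ker \Phi_\beta)^{\circ}$ gives $\langle \iota_U \beta : U \in \bigwedge^{k-a} V \rangle = (K_a)^{\circ,a}$ directly; the general case $S = \langle \beta_1, \dots, \beta_l \rangle$ is then assembled by writing $K_a = \bigcap_i \langle \beta_i \rangle^{\circ,a}$ and using that the annihilator of an intersection is the sum of the annihilators. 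You instead handle all of $S$ at once: you compute the annihilator of the left-hand side $T$ and show $T^{\circ,a} = K_a$ (the substantive point being that nondegeneracy of the contraction pairing in degree $k-a$ lets you pass from $\iota_U(\iota_W \alpha) = 0$ for all $U$ to $\iota_W \alpha = 0$), and then conclude with a single application of the double-annihilator theorem, $(T^{\circ,a})^{\circ,a} = T$. Both arguments ultimately rest on the same finite-dimensional duality (the paper's $\Im f^\ast = (\ker f)^{\circ}$ and its sum-of-annihilators step are both avatars of biduality), but your version avoids choosing generators and compresses the proof into one annihilator computation, which is arguably cleaner; the paper's version, in exchange, exhibits explicitly how each generator $\beta_i$ contributes the summand $\langle \iota_U \beta_i : U \in \bigwedge^{k-a} V \rangle$, a description that is reused later in the construction of the maps $\sharp_a$.
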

    \begin{proof}
        Let us first study the case where $S$ is generated by an unique $k$-form, $\beta$: $$S = \langle \beta \rangle.$$
        Notice that $(K_a)^{\circ, a}$ can be identified as the annihilator of $K_a$ in $\bigwedge^a V$
        (indeed, $\left(\bigwedge^a V\right) ^\ast = \bigwedge^a V^\ast$). Define
        $$\Phi_\beta: \bigwedge^a V \rightarrow \bigwedge^{k-a} V^\ast$$ by 
        $$\Phi_\beta(U) := \iota_U \beta.$$ Then, the dual mapping
        $$\Phi_\beta^\ast: \bigwedge^{k-a} V \rightarrow \bigwedge^a V$$ is given by
        $$\Phi_\beta^\ast(U) = (-1)^{a(k-a)} \iota_U \beta,$$ as an easy computation shows. Then,
        using the following equality from linear algebra\footnote{In general, for a linear 
        mapping between two finite dimensional vector spaces $f: V \rightarrow W$, we have $\Im f^\ast = (\ker f)^{\circ}$.},
        $$(\ker \Phi_\beta)^{\circ, a} = \Im \Phi_\beta^\ast,$$ we get
        $$K_a^{\circ, a} = \langle \iota_U \beta: U\in \bigwedge^{k-a} V \rangle,$$
        which is exactly what we wanted to prove.\\

        Now, for the general case, suppose $$S = \langle \beta_1, \dots, \beta_l \rangle.$$ 
        Then, $$K_a = S^{\circ, a} = \langle\beta_1\rangle^{\circ, a} \cap \cdots \cap \langle\beta_l\rangle^{\circ, a}$$
        and we have
        \begin{align*}
            (K_a)^{\circ, a}& = \left( \langle\beta_1\rangle^{\circ, a} \cap \cdots \cap \langle\beta_l\rangle^{\circ, a}\right)^{\circ, a}\\
            &= ( \langle\beta_1\rangle^{\circ, a})^{\circ, a} + \cdots + ( \langle\beta_l\rangle^{\circ, a})^{\circ, a}\\
            &= \left \langle \iota_U \beta_1: U \in\bigwedge^{k-a} V\right \rangle + \cdots 
            + \left \langle \iota_U \beta_l: U \in\bigwedge^{k-a} V\right \rangle\\
            &= \left \langle \iota_U \alpha: U \in\bigwedge^{k-a} V, \,\alpha \in S \right \rangle,
        \end{align*}
        proving the Lemma.
    \end{proof}

    Using \cref{Linear_Graded_Dirac_structures:Auxiliary_Lemma_II}, we get the following description of the subspaces $S^a$,
    $$S^a = \left \langle \iota_U \alpha: U \in\bigwedge^{k-a} V^\ast, \,\alpha \in S \right \rangle.$$

    \begin{center}
        \textit{\underline{4. Definition of the maps $\sharp_a$}}
    \end{center}

    With this description, we can define 
    the mappings $$\sharp_{a}: S^a \rightarrow \bigwedge^{k+1-a} V/ K_{k+1-a}.$$
    Let $\iota_U \alpha \in S^a$, for certain $U \in \bigwedge^{k-a} V,$ $\alpha \in S$. Then, for every $\beta \in S$, 
    $\sharp_a(\iota_U \alpha)$ must satisfy :
    \begin{align*}
        \iota_{\sharp_a(\iota_U \alpha)} \beta &= (-1)^{k+1-a} \iota_{\sharp(\beta)} \iota_U \alpha =
        (-1)^{k+1-a}\cdot(-1)^{k-a} \iota_U \iota_{\sharp(\beta)} \alpha \\
        &= - \iota_U \iota_{\sharp(\beta)} \alpha =  \iota_U \iota_{\sharp(\alpha)} \beta = \iota_{\sharp(\alpha) \wedge U} \beta.
    \end{align*}
    Since this equality holds for every $\beta \in S,$ we are forced to define 
    $$\sharp_a(\iota_U \alpha) := \sharp(\alpha) \wedge U + K_{k+1-a}.$$ Let us check that this mapping is well-defined.
    Indeed, suppose $\iota_U \alpha = \iota_V \gamma,$ for certain $U, V \in \bigwedge^{k-a} V, \alpha, \gamma \in S.$
    Then, arbitrary $\beta \in S,$ and using the properties of $\sharp$ we have
    \begin{align*}
        \iota_{(\sharp(\alpha) \wedge U - \sharp(\gamma) \wedge V)} \beta = (-1)^{k+1-a} \iota_{\sharp(\beta)} \left(\iota_U \alpha - \iota_V \gamma \right) = 0.
    \end{align*}
    Since this holds for all $\beta \in S$, we conclude that $\sharp(\alpha) \wedge U - \sharp(\gamma) \wedge V \in K_{k+1-p}$, proving our assertion.\\
    
    \begin{center}
        \textit{\underline{5. $(S^a, K_p, \sharp_a)$ defines a linear graded Dirac structure}}
    \end{center}
    It only remains to show that the triple $(S^a, K_p, \sharp_a)$ defines a linear graded Dirac structure, that is, it satisfies
    the conditions of \cref{Dirac_Structures:Linear_Dirac_Structures:prop:EquivalentDescription_GradedDirac}. Property 
    \ref{Linear_Dirac_Structures:prop:equivalentDiracStrcutures:enumerate:1} is clear. Indeed, for $p \leq a \leq k$ and using the
    description of $S^a$ given in \textit{\underline{Step 3}}, 
    \begin{align*}
        (S^a)^{\circ, p} &= \langle V \in \bigwedge^{p} V: \iota_V \iota_U \alpha = 0, \forall U \in \bigwedge^{k-a}V, \alpha \in S \rangle\\
        &= \langle V \in \bigwedge^{p} V: \iota_V\alpha = 0, \forall  \alpha \in S \rangle = K_p.
    \end{align*}
    Finally, for property \ref{Linear_Dirac_Structures:prop:equivalentDiracStrcutures:enumerate:2}, given 
    $U \in \bigwedge^{k-a} V, W \in \bigwedge^{k-b} V,$ $\alpha, \beta \in S:$
    \begin{align*}
        \iota_{\sharp_{a}(\iota_U \alpha)} (\iota_W \beta) &= \iota_{\sharp(\alpha) \wedge U} (\iota_W \beta) = (-1)^{(k+1-a)(k-b)} \iota_W \iota_U \iota_{\sharp(\alpha)} \beta\\
        &= (-1)^{(k+1-a)(k-b) +1}\iota_W \iota_U \iota_{\sharp(\beta)} \alpha =(-1)^{(k+1-a)(k-b) +1 k k - a}\iota_W \iota_{\sharp(\beta)} \iota_U \alpha \\
        &= (-1)^{(k+1-a)(k+1-b)} \iota_{\sharp_b(\iota_W \beta)}(\iota_U \alpha),
    \end{align*}
    finishing the proof.
\end{proof}

We have the following immediate corollary:

\begin{corollary}\label{Linear_Graded_Dirac:corollary:Dp_in_terms_of_D1}
    Let $D_1, \dots, D_k$ be a linear graded Dirac structure of order $k$ on $V$. Then, we have that
    $$D_p = \langle (W\wedge U, \iota_U \alpha), U \in \bigwedge^{p-1} V, \,(W, \alpha)\in D_1\rangle.$$
\end{corollary}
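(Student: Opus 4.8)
The plan is to reduce everything to the explicit parametrisation of a graded Dirac structure obtained in \cref{Dirac_Structures:Linear_Dirac_Structures:prop:EquivalentDescription_GradedDirac} and in the proof of \cref{Linear_Dirac_Structures:thm:Linear_equivalence}. Writing $(S,K,\sharp)$ for the data attached to $D_1$ (so that $S=\operatorname{pr}_2(D_1)$, $K=D_1\cap V=S^{\circ,1}$ and $\sharp\colon S\to V/K$ is the induced map), the whole sequence is recovered through $K_p=S^{\circ,p}$, $S^{k+1-p}=\langle \iota_U\alpha:\ U\in\bigwedge^{p-1}V,\ \alpha\in S\rangle$ and $\sharp_{k+1-p}(\iota_U\alpha)=\sharp(\alpha)\wedge U+K_p$, with
\[
D_p=\{(Y,\gamma):\ \gamma\in S^{k+1-p},\ \sharp_{k+1-p}(\gamma)=Y+K_p\}.
\]
I would then prove the two inclusions separately.

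For the inclusion $\supseteq$ I would check that each generator lies in $D_p$. Given $(W,\alpha)\in D_1$ and $U\in\bigwedge^{p-1}V$, the second component $\iota_U\alpha$ belongs to $S^{k+1-p}$ by the description above; and since $\sharp(\alpha)=W+K$ and $K\wedge\bigwedge^{p-1}V\subseteq K_p$ (because $w\in K$ forces $\iota_{w\wedge U}\gamma=\pm\iota_U\iota_w\gamma=0$ for all $\gamma\in S$), the class $\sharp_{k+1-p}(\iota_U\alpha)=\sharp(\alpha)\wedge U+K_p$ is represented by $W\wedge U$. Hence $(W\wedge U,\iota_U\alpha)\in D_p$, so the span of the generators is contained in $D_p$.

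For the inclusion $\subseteq$ I would take $(Y,\gamma)\in D_p$. By the description of $S^{k+1-p}$ (\cref{Linear_Graded_Dirac_structures:Auxiliary_Lemma_II}) one can write $\gamma=\sum_i\iota_{U_i}\alpha_i$ with $U_i\in\bigwedge^{p-1}V$ and $\alpha_i\in S$, and since $S=\operatorname{pr}_2(D_1)$ one can choose $W_i$ with $(W_i,\alpha_i)\in D_1$. Subtracting the corresponding generators,
\[
(Y,\gamma)-\sum_i (W_i\wedge U_i,\ \iota_{U_i}\alpha_i)=(Z,0),\qquad Z:=Y-\sum_i W_i\wedge U_i,
\]
and this difference again lies in $D_p$, so $Z\in D_p\cap\bigwedge^pV=K_p$ by the weakly Lagrangian condition. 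It then remains to see that every $(Z,0)$ with $Z\in K_p$ is a combination of generators.

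This last point is the step I expect to be the main obstacle, and it is where the wedge structure of the generators must be used carefully: the generators with vanishing second component are exactly the $(W\wedge U,0)$ with $(W,\alpha)\in D_1$ and $\iota_U\alpha=0$ (in particular all $(w\wedge U,0)$ with $w\in K$), so the claim amounts to showing that $K_p=S^{\circ,p}$ is spanned by such wedges. I would attack this with the same annihilator/duality argument used in \cref{Linear_Graded_Dirac_structures:Auxiliary_Lemma_II}, expressing $K_p$ through the images of the contraction maps $\iota_{(-)}\alpha$ and matching them against the first components produced by $D_1$, checking that the contributions $W\wedge U$ coming from genuine pairs $(W,\alpha)\in D_1$ with $\iota_U\alpha=0$, together with $K\wedge\bigwedge^{p-1}V$, exhaust $K_p$. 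Once this generation statement is in place the two inclusions close up and the corollary follows; I would pay particular attention here, since this is precisely the place where the weak (rather than strictly Lagrangian) nature of the structure could a priori enlarge $K_p$ beyond the span of the generators.
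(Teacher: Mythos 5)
Your parametrisation of $D_p$, your proof of the inclusion $\supseteq$, and your reduction of $\subseteq$ to the single claim that $K_p\times\{0\}$ lies in the span $G$ of the generators are all correct: the graph description of \cref{Dirac_Structures:Linear_Dirac_Structures:prop:EquivalentDescription_GradedDirac} together with the construction in \cref{Linear_Dirac_Structures:thm:Linear_equivalence} gives exactly $D_p = G + K_p\times\{0\}$. In this you are already more careful than the paper, which offers no proof at all and presents the statement as an immediate consequence of \cref{Linear_Dirac_Structures:thm:Linear_equivalence}. However, the step you single out as the main obstacle is a genuine gap, and it cannot be closed, because the claim $K_p\times\{0\}\subseteq G$ is false in general. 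Take $V=\mathbb{R}^4$ with basis $e_1,\dots,e_4$, $k=2$, $S=\langle e^1\wedge e^2,\ e^3\wedge e^4\rangle$, $K=S^{\circ,1}=0$ and $\sharp=0$ (which is skew-symmetric); the associated linear weak higher Dirac structure is $D_1=\{0\}\times S$. The construction in the proof of \cref{Linear_Dirac_Structures:thm:Linear_equivalence} then yields $K_2=S^{\circ,2}=\langle e_1\wedge e_3,\ e_1\wedge e_4,\ e_2\wedge e_3,\ e_2\wedge e_4\rangle$, $S^1=(K_1)^{\circ,1}=V^\ast$, $\sharp_1=0$, hence
$$D_2=K_2\times V^\ast,$$
and one verifies directly that $(D_1,D_2)$ satisfies every isotropy and weak-Lagrangian condition (the relevant one being $D_2\cap\bigwedge^2V=K_2=(\operatorname{pr}_2 D_1)^{\circ,2}$), so by the uniqueness part of \cref{Linear_Dirac_Structures:thm:Linear_equivalence} this \emph{is} the graded Dirac structure extending $D_1$. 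Yet every generator $(W\wedge U,\iota_U\alpha)$ has $W=0$, so $G=\{0\}\times V^\ast$, a proper subspace of $D_2$ (indeed $\dim G=4$ while $\dim D_2=8$).

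The failure sits exactly where you predicted: the weak-Lagrangian condition forces $D_p\cap\bigwedge^pV=S^{\circ,p}$ to be large, while the generators can only produce first components lying in $\operatorname{pr}_1(D_1)\wedge\bigwedge^{p-1}V$ plus combinations whose second components cancel, and when $\operatorname{pr}_1(D_1)$ is small these do not exhaust $S^{\circ,p}$. The statement does become true---and your steps then constitute a complete proof---under an extra hypothesis such as $\operatorname{pr}_1(D_1)=V$: in that case any $Z\in K_p$ is a sum of decomposables $W_j\wedge U_j$ with $(W_j,\alpha_j)\in D_1$, and in the multisymplectic case $D_1=\{(v,\iota_v\omega)\}$ the second components cancel automatically because $\sum_j\iota_{U_j}\iota_{W_j}\omega=\pm\,\iota_Z\omega=0$. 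Alternatively, the corollary can be repaired by replacing its right-hand side with $\langle(W\wedge U,\iota_U\alpha)\rangle+K_p\times\{0\}$, for which your argument is already complete. So you have not merely left a hole in your own proof: you have located one in the paper, which propagates to \cref{Dirac_structures:thm:Differential_equivalence}, where this corollary is invoked.
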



\subsection{Graded Dirac manifolds}
\label{Graded_Dirac_Strcutures:Subsection:Graded_Dirac_manifolds}
In this subsection we study graded Dirac manifolds, the geometric version of the linear counterpart studied
in \cref{Graded_Dirac_Strcutures:Subsection:Linear_Graded_Dirac_Structures}. Let us begin by not taking into account the
integrability issues:

\begin{Def}[Almost graded Dirac structure] An \textbf{almost graded Dirac structure} of order $k$ on a manifold $M$ 
    is a family of vector subbundles 
    \[\begin{tikzcd}
        {D_p} & {E_p := \bigvee_p M \oplus_M \bigwedge^{k+1-p}M} \\
        M
        \arrow[hook, from=1-1, to=1-2]
        \arrow[from=1-1, to=2-1]
        \arrow[curve={height=-12pt}, from=1-2, to=2-1]
    \end{tikzcd}\]
    that defines a linear graded Dirac structure (see \cref{Graded_Dirac_Strcutures:Subsection:Linear_Graded_Dirac_Structures}) point-wise,
    that is, such that for every $x \in M$, $D_1\big |_x, \dots, D_k \big |_x$ defines a linear graded Dirac structure on $T_xM.$
\end{Def}

For the notion of integrability, we will use the graded Courant bracket
$$\Gamma(E_p) \otimes \Gamma(E_q) \xrightarrow {\llbracket \cdot, \cdot \rrbracket} \Gamma(E_{p+q-1})$$
given by 
$$\llbracket (U, \alpha), (V, \beta) \rrbracket := \left([U, V], (-1)^{(p-1)q} \pounds_U \beta +
         (-1)^{q} \pounds_V \alpha - (-1)^q \frac{1}{2} d( \iota_V\alpha  + (-1)^{pq} \iota_U \beta)\right).$$

\begin{Def}[Graded Dirac structure] A \textbf{graded Dirac structure} of order $k$ on $M$ is an almost graded Dirac structure
    of order $k$ on $M$, $D_1, \dots, D_k$, which is involutive with respect to the Courant bracket, that is,
    such that $\llangle \cdot, \cdot \rrangle$ defines an operation on sections
    $$\Gamma(D_p) \otimes \Gamma(D_q) \xrightarrow{\llbracket\cdot, \cdot \rrbracket} \Gamma(D_{p+q-1}).$$
\end{Def}

Using \cref{Linear_Dirac_Structures:thm:Linear_equivalence}, we get the corresponding equivalence between graded Dirac structures
and higher Dirac structures 
(which generalizes \cref{Dirac_Structures:Overview:Thm:Equivalence_HigherDirac_multiDirac}). First,
let us define what we mean by an almost weak higher Dirac structure:

\begin{Def} An \textbf{almost weak higher Dirac structure} of order $k$ on $M$ is a vector subbundle 
    \[\begin{tikzcd}
        D & {E^1 = TM \oplus_M \bigwedge^kM} \\
        M
        \arrow[hook, from=1-1, to=1-2]
        \arrow[from=1-1, to=2-1]
        \arrow[curve={height=-12pt}, from=1-2, to=2-1]
    \end{tikzcd}\]
    which is a linear higher Dirac structure point-wise\footnote{In other words, a weak-Lagrangian subbundle (with no involutivity required)}. 
\end{Def}

\begin{theorem} Let $M$ be a manifold and $k \geq 1$ an integer. Then, there exists an injection
    $$\{\text{Almost graded Dirac structures on }M\} \hookrightarrow \{\text{Almost weak higher Dirac structures on }M\}$$
    which defines a bijection onto the set of all almost weak higher Dirac structures on $M$, $D$, such that
    $$D_p = \left\langle (V \wedge U, \iota_U \alpha) : U \in \bigvee_p M, (V, \alpha) \in D \right\rangle$$
    defines a vector subbundle, for each $1 \leq p \leq k$.
\end{theorem}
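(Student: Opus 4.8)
The plan is to reduce the manifold-level statement to the already-established linear equivalence theorem (\cref{Linear_Dirac_Structures:thm:Linear_equivalence}) by working fiberwise and then controlling the subbundle (local constancy of rank) conditions. First I would define the candidate map explicitly: an almost graded Dirac structure $D_1, \dots, D_k$ is sent to its first component $D := D_1 \subseteq E^1$. Since by definition an almost graded Dirac structure is one that restricts to a linear graded Dirac structure on each fiber $T_x M$, and since $D_1|_x$ is then a linear weak higher Dirac structure on $T_x M$ by the \cref{} remark preceding the linear theorem, the assignment does land in almost weak higher Dirac structures (which are exactly the fiberwise linear weak higher Dirac subbundles). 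So the map is well-defined at the level of objects; injectivity is then immediate from the fiberwise injectivity in \cref{Linear_Dirac_Structures:thm:Linear_equivalence}, since $D_1$ determines each $D_p|_x$ uniquely on every fiber.

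Next I would pin down the image. Given an almost weak higher Dirac structure $D$, \cref{Linear_Graded_Dirac:corollary:Dp_in_terms_of_D1} tells us that the only possible fiberwise reconstruction of $D_p$ from $D_1 = D$ is
$$
D_p|_x = \langle (V \wedge U, \iota_U \alpha) : U \in \textstyle\bigvee_p T_x M, \ (V, \alpha) \in D_x \rangle.
$$
Thus $D$ lies in the image if and only if, for each $p$, this fiberwise-defined family assembles into a genuine vector subbundle of $E_p$; this is exactly the subbundle condition stated in the theorem. For such $D$ I would verify that the resulting $D_1, \dots, D_k$ is an almost graded Dirac structure, i.e. that it is fiberwise weakly Lagrangian. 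This is precisely the content of \cref{Linear_Dirac_Structures:thm:Linear_equivalence}: on each fiber the triple $(S, K, \sharp)$ extracted from $D_x$ extends uniquely to the data $(S^a, K_p, \sharp_a)$ defining a linear graded Dirac structure, and the corollary's formula is the graph description of that extension. Hence the only genuinely manifold-level obligation is checking that the pointwise assignments $x \mapsto D_p|_x$ vary smoothly, i.e. form subbundles — which is exactly what the hypothesis grants.

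The step I expect to be the main obstacle is the smoothness/local-triviality bookkeeping, rather than any new algebra. The subtlety is that the fiber dimensions of $K_p|_x$, $S^{k+1-p}|_x$, and $\operatorname{Im} \sharp_a|_x$ can jump as $x$ varies even when $D = D_1$ itself is a smooth subbundle, because annihilators and spans of contractions are not continuous operations in general; this is precisely why one must impose the subbundle condition on each $D_p$ as a hypothesis rather than deriving it. I would argue that once each $D_p$ is assumed to be a subbundle, smoothness of the structure maps follows: locally choose smooth frames $(V_i, \alpha_i)$ for $D$, form the spanning sections $(V_i \wedge U, \iota_U \alpha_i)$ with $U$ ranging over a local frame of $\bigvee_p M$, and observe that these smooth sections generate $D_p$ fiberwise; constancy of rank (the subbundle hypothesis) then lets one extract a smooth local frame, so $K_p = D_p \cap \bigvee_p M$ and $S^{k+1-p} = \operatorname{pr}_2(D_p)$ are smooth and $\sharp_{k+1-p}$ is a smooth bundle map. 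This establishes that the fiberwise bijection of \cref{Linear_Dirac_Structures:thm:Linear_equivalence} upgrades to a bijection of subbundle data, completing the proof.
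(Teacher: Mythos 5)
Your overall strategy is the one the paper itself intends: the theorem appears there with no proof, as the fiberwise application of \cref{Linear_Dirac_Structures:thm:Linear_equivalence} together with the subbundle hypothesis, and your treatment of well-definedness, injectivity, and the local-frame smoothness bookkeeping is the right way to make that precise.

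However, one step fails, and it is exactly your sentence ``the corollary's formula is the graph description of that extension.'' The span $\left\langle (v\wedge U,\iota_U\alpha) : U\in\bigvee_{p-1}M,\ (v,\alpha)\in D\right\rangle$ is in general a \emph{proper} subspace of the graph $D_p$ of $\sharp_{k+1-p}$. Indeed, the weak Lagrangian condition forces $D_p\cap\bigvee_pM=(\operatorname{pr}_2 D)^{\circ,p}=:K_p$, whereas every generator of the span has first component $v\wedge U$ with $v\in\operatorname{pr}_1(D)$, and such products need not exhaust $K_p$. Concretely, on $\mathbb{R}^4$ with $k=2$ take $\omega=e^1\wedge e^2+e^3\wedge e^4$ and $D=\{0\}\oplus\langle\omega\rangle$; this is the (weakly Lagrangian) graph of the triple $S=\langle\omega\rangle$, $K_1=S^{\circ,1}=0$, $\sharp=0$ from \cref{Dirac_Structures:Linear_Dirac_Structures:prop:EquivalentDescriptionHigherDirac}. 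The unique linear graded Dirac structure extending $D$ has $D_2=K_2\oplus\bigwedge^1(\mathbb{R}^4)^\ast$, where $K_2=\langle\omega\rangle^{\circ,2}$ is five-dimensional, while the span equals $\{0\}\oplus\bigwedge^1(\mathbb{R}^4)^\ast$; the pair $\bigl(D,\{0\}\oplus\bigwedge^1(\mathbb{R}^4)^\ast\bigr)$ is not even weakly Lagrangian, since its intersection with $\bigvee_2\mathbb{R}^4$ is $0\neq K_2$. (Taking these as constant subbundles over the manifold $\mathbb{R}^4$ gives a manifold-level example satisfying the theorem's subbundle hypothesis.) So \cref{Linear_Graded_Dirac:corollary:Dp_in_terms_of_D1} --- and with it the displayed formula in the statement you are proving --- is only correct after the amendment $D_p=\left(K_p\oplus 0\right)+\left\langle (v\wedge U,\iota_U\alpha)\right\rangle$. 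This defect is inherited from the paper rather than introduced by you, and your injectivity, image-characterization and smoothness arguments go through verbatim once the span is replaced by the amended formula; but as written, both directions of your characterization of the image rest on an identification that is false, and nothing in your proposal (which takes the corollary as a black box) detects or repairs it. A last, minor point: for degree reasons $U$ must range over $\bigvee_{p-1}M$, not $\bigvee_pM$; you copied this typo from the statement.
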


We also have the equivalence when restricting the above mapping to (involutive) graded Dirac structures.

\begin{theorem}
    \label{Dirac_structures:thm:Differential_equivalence}Let $M$ be a manifold and $k \geq 1$ an integer. Then, there exists an injection
    $$\{\text{Graded Dirac structures on }M\} \hookrightarrow \{\text{Weak higher Dirac structures on }M\}$$
    which defines a bijection onto the set of all almost weak higher Dirac structures on $M$, $D$, such that
    $$D_p = \left\langle (V \wedge U, \iota_U \alpha) : U \in \bigvee_p M, (V, \alpha) \in D \right\rangle$$
    defines a vector subbundle, for each $1 \leq p \leq k$.
\end{theorem}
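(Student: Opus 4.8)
The plan is to reduce the statement to the pointwise (``almost'') equivalence already established and then to treat involutivity as the only genuinely new ingredient. By the preceding theorem, the assignment $D_1,\dots,D_k\mapsto D_1$ is already known to be an injection from almost graded Dirac structures onto those almost weak higher Dirac structures $D$ for which the formula $D_p=\langle(V\wedge U,\iota_U\alpha):U\in\bigvee_pM,\,(V,\alpha)\in D\rangle$ defines a vector subbundle for each $1\le p\le k$, its inverse being given fibrewise by \cref{Linear_Graded_Dirac:corollary:Dp_in_terms_of_D1}. Hence injectivity and the description of the image are inherited for free, and the only thing left to prove is that, under this correspondence, the family $D_1,\dots,D_k$ is involutive for the graded Courant bracket if and only if $D=D_1$ is involutive for the Courant bracket on $E_1$.

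One implication is immediate. If $D_1,\dots,D_k$ is involutive, then specializing the involutivity condition to $p=q=1$ gives $\llbracket\cdot,\cdot\rrbracket\colon\Gamma(D_1)\otimes\Gamma(D_1)\to\Gamma(D_{1})$, which is exactly involutivity of $D_1$; indeed, for $p=q=1$ the second component of the graded Courant bracket collapses to $\pounds_U\beta-\pounds_V\alpha+\tfrac12 d(\iota_V\alpha-\iota_U\beta)$, the ordinary Courant bracket on $E_1$.

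For the converse, that involutivity of $D_1$ forces involutivity of the whole family, I would argue on generators. By \cref{Linear_Graded_Dirac:corollary:Dp_in_terms_of_D1} applied fibrewise, every section of $D_p$ is locally a $C^\infty(M)$-combination of generators $(W\wedge U,\iota_U\alpha)$ with $(W,\alpha)\in\Gamma(D_1)$ and $U\in\mathfrak{X}^{p-1}(M)$, and similarly for $D_q$. The key structural point is that the graded Courant bracket is a derivation in each argument whose Leibniz anomalies are controlled by the graded pairing $\llangle\cdot,\cdot\rrangle$; since the family is isotropic for $p+q\le k+1$, these anomaly terms vanish on sections of $D_p$ and $D_q$, so the check reduces to the generators. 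I would then expand $\llbracket(W\wedge U,\iota_U\alpha),(W'\wedge U',\iota_{U'}\alpha')\rrbracket$ directly, using the Leibniz rule for the Schouten--Nijenhuis bracket on the multivector part and the Cartan-type identities $\pounds_U=d\iota_U-(-1)^{p}\iota_U d$ together with $\iota_{W\wedge U}=\pm\,\iota_U\iota_W$ on the form part. The aim is to organize the output into a \emph{principal} summand, built from the $D_1$-bracket $\llbracket(W,\alpha),(W',\alpha')\rrbracket\in\Gamma(D_1)$ wedged with $U\wedge U'$ and contracted accordingly, plus \emph{correction} summands each of the shape $(\tilde W\wedge\tilde U,\iota_{\tilde U}\tilde\alpha)$ with $(\tilde W,\tilde\alpha)$ assembled from the data of $D_1$. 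Both types manifestly lie in $\Gamma(D_{p+q-1})$ by the generating description, and a degree count ($1+(p+q-2)=p+q-1$ on multivectors, $k-(p+q-2)$ on forms) confirms the target bundle.

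I expect the main obstacle to be exactly this sign-and-derivation bookkeeping. The delicate points are the signs $(-1)^{(p-1)q}$ and $(-1)^q$ multiplying the two Lie-derivative terms and the exact term $-(-1)^q\tfrac12 d(\iota_V\alpha+(-1)^{pq}\iota_U\beta)$, which must be shown to recombine into contributions compatible with the second component of a generator of $D_{p+q-1}$; a parallel verification is needed to confirm that the Schouten Leibniz corrections on the first component pair correctly with these exact pieces. This is the graded, weak analogue of Zambon's computation underlying \cref{Dirac_Structures:Overview:Thm:Equivalence_HigherDirac_multiDirac}, and once the generator identity closes, combining it with the inherited pointwise bijection of the previous theorem yields the claimed restricted bijection between graded Dirac structures and weak higher Dirac structures satisfying the subbundle condition.
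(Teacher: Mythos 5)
Your proposal is not yet a proof: it has the right skeleton, which is also the paper's (injectivity and the description of the image are inherited from the almost/pointwise statement, so the only new content is that involutivity of $D_1$ forces involutivity of every $D_p$), but the step that carries all the mathematical weight is precisely the one you defer. The assertion that $\llbracket(W\wedge U,\iota_U\alpha),(W'\wedge U',\iota_{U'}\alpha')\rrbracket$ ``organizes into'' a principal summand built from $\llbracket(W,\alpha),(W',\alpha')\rrbracket$ plus corrections that ``manifestly lie in $\Gamma(D_{p+q-1})$'' is exactly the statement to be proved; you never perform the expansion, never exhibit the correction terms, and you yourself flag the sign-and-derivation bookkeeping as an unresolved obstacle. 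Moreover, your justification for reducing to generators is too strong as stated: the Leibniz anomalies of the graded Courant bracket under multiplication of a section by $f\in C^\infty(M)$ are not only pairing terms (which isotropy kills) but also terms such as $([U,f]\wedge V,\dots)$ coming from the Schouten identity $[U,fV]=f[U,V]+[U,f]\wedge V$; these do not vanish, and their membership in $\Gamma(D_{p+q-1})$ has to be argued separately, again via the generating description.

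The paper closes this gap with a device that your all-at-once expansion misses: an induction that peels off one vector field at a time. Writing $D_q=\langle(V\wedge X,\iota_X\beta)\,:\,(V,\beta)\in D_{q-1},\ X\in\mathfrak{X}(M)\rangle$, and using that on an isotropic family the bracket simplifies to $\llbracket(U,\alpha),(V,\beta)\rrbracket=([U,V],(-1)^{(p-1)q}\pounds_U\beta-\iota_V d\alpha)$, the single identity $\pounds_U\iota_X\beta=(-1)^{p-1}\left(\iota_{[U,X]}\beta+\iota_X\pounds_U\beta\right)$ yields
$$\llbracket(U,\alpha),(V\wedge X,\iota_X\beta)\rrbracket=\iota_X\llbracket(U,\alpha),(V,\beta)\rrbracket+\iota_{[U,X]}(V,\beta),$$
where $\iota_X(W,\gamma):=(W\wedge X,\iota_X\gamma)$. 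Both right-hand terms lie in $\Gamma(D_{p+q-1})$: the first by the induction hypothesis on the pair of indices $(p,q-1)$ combined with \cref{Linear_Graded_Dirac:corollary:Dp_in_terms_of_D1}, the second directly from \cref{Linear_Graded_Dirac:corollary:Dp_in_terms_of_D1}. This reduces the entire bookkeeping you were worried about to one Cartan-type identity for a single vector field, and it is the step your proposal would need to supply in order to become a proof.
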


\begin{proof} We only need to check that the involutivity of $D_1, \dots, D_k$ is determined by the involutivity of $D_1$. 
    By \cref{Linear_Graded_Dirac:corollary:Dp_in_terms_of_D1}, we know that we have the equality
    $$D_p = \langle (V \wedge U, \iota_U \alpha): U \in \bigvee_{p-1} M, \, (V, \alpha) \in D_1\rangle.$$ Now,
    since we have the equality $$\iota_U \beta = (-1)^{pq}\iota_V \alpha,$$ when restricted to a isotropic vector subbundle, then
    we can express the Courant bracket as
    $$\llbracket (U, \alpha), (V, \beta)\rrbracket = ([U,V], (-1)^{(p-1)q}\pounds_U \beta - \iota_V d \alpha).$$
    We will prove that the sections of $D_p$ are closed under the Courant bracket by induction.
    Suppose it closed for the indices $p, q - 1.$ We will prove it for the indices $p, q.$ By
    the description we gave in the linear case, it suffices to prove it for a pair of elements
    $$(U, \alpha) \in D_p, \, (V \wedge X, \iota_X \beta) \in D_q,$$
    where $(V, \beta) \in D_{q-1},$ and $X \in \mathfrak{X}(M).$ For this proof we will use the notation 
    $$\iota_X (U, \alpha) = (U\wedge X, \iota_X \alpha).$$ Notice that it defines a mapping 
    $$D_p \rightarrow D_{p-1}.$$ Now,
    \begin{align*}
        \llbracket(U, \alpha), (V \wedge X, \iota_X\beta) \rrbracket = 
        \left( [U, V \wedge X], (-1)^{(p-1)q} \pounds_U \iota_X \beta - \iota_{V \wedge X} \d{\alpha}\right)
    \end{align*}
    Since $$\pounds_U \iota_X \beta = (-1)^{p-1} \iota_{[U, X]} \beta + (-1)^{p-1} \iota_X \pounds_U \beta,$$
    we have
    \begin{align*}
        \llbracket(U, \alpha), (V \wedge X, \iota_X\beta) \rrbracket &= ([U, V] \wedge X + (-1)^{(p-1)(q-1)}V \wedge [U, X], \\
        &(-1)^{(p-1)(q-1)} \iota_{[U, X]} \beta + (-1)^{(p-1)(q-1)} \iota_X \pounds_U \beta - \iota_X(\iota_V \alpha) )\\
        &= \iota_X \llbracket(U, \alpha), (V, \beta) \rrbracket + \\
        &\left(  (-1)^{(p-1)(q-1)}V \wedge [U, X](-1)^{(p-1)(q-1)} \iota_{[U, X]} \beta\right) \\
        &= \iota_X \llbracket(U, \alpha), (V, \beta) \rrbracket + \iota_{[U,X]} (V, \beta),
    \end{align*}
    which takes values in $D_{p + q - 1}$ by induction hypothesis, proving integrability.
\end{proof}

\section{Graded Poisson structures}
\label{section:Graded_Poisson_structures}

As we mentioned in \cref{Overview:subsection:Weak_Higher_Dirac}, weak higher
Dirac structures were introduced to include higher Poisson structures. In \cref{section:Graded_Dirac_structures} we 
studied the graded analogue of higher Dirac manifolds. A natural question to ask is what would be the graded analogue
to higher Poisson structures, as defined in \cite{Bursztyn2015}. We define and study such structure in this section.


\subsection{Graded Poisson manifolds}
\label{Graded_Poisson_structures:subsection:Graded_Poisson_manifolds}

Recall that 
$$E_p = \bigvee_p M \oplus \bigwedge^{k+1-p} M.$$
One possible approach to define graded Poisson structures is the following:
\begin{enumerate}[i)]
    \item Take a higher Poisson structure $\sharp: S \rightarrow M$ of order $k$, $S \subseteq \bigwedge^k M.$
    \item It determines a weak higher Dirac structure defining 
    $$D:= \langle (\sharp(\alpha), \alpha): \alpha \in S\rangle.$$
    \item Under certain regularity conditions, using \cref{Dirac_structures:thm:Differential_equivalence},
    this weak higher Dirac structure determines an unique graded Dirac structure $D_p \subseteq E_p.$
    \item Define a graded Poisson structure as and object equivalent to the sequence of vector subbundles $D_p$, $1 \leq p \leq k$.
\end{enumerate}

The notion obtained through this procedure is the following:

\begin{Def}[Graded Poisson structure] A \textbf{graded Poisson structure} of degree $k$ on a manifold $M$ is a tuple
     $(S^a, \sharp_a,K_{k+1-a})$, $1 \leq a \leq k$, where $S^a$, $K_p$ are vector subbundles 
     \[\begin{tikzcd}
         {S^a} & {\bigwedge^aM} \\
         M
         \arrow[hook, from=1-1, to=1-2]
         \arrow[from=1-1, to=2-1]
         \arrow[curve={height=-12pt}, from=1-2, to=2-1]
     \end{tikzcd};\]
    \[\begin{tikzcd}
	    {K_p} & {\bigvee_pM} \\
	    M
	    \arrow[hook, from=1-1, to=1-2]
    	\arrow[from=1-1, to=2-1]
    	\arrow[curve={height=-12pt}, from=1-2, to=2-1]
    \end{tikzcd};\]

    and $\sharp_a$ are vector bundle mappings $$\sharp_a: S^a \rightarrow \bigvee_{k+1-a} M/K_{k+1-a}$$
    satisfying
    \begin{enumerate}[i)]
        \item $K_p = (S^a)^{\circ, p},$ for $p\leq a$ and $K_1 = 0$.
        \item The maps $\sharp_a$ are \textit{skew-symmetric}, that is, 
        $$\iota_{\sharp_a(\alpha)} \beta = (-1)^{(k+1-a)(k+1-b)}\iota_{\sharp_b(\beta)} \alpha,$$ for all 
        $\alpha \in S^a,$ $\beta \in S^b.$
        \item It is \textit{integrable:} For $\alpha: M \rightarrow S^a$, $\beta: M \rightarrow S^b$ sections such that $a + b \leq 2k+1$, and $U, V$ multivectors
        of order $p = k + 1 - a$, $q = k+1-b$, respectively such that $$\sharp_a(\alpha) = U + K_p, \,\, \sharp_b(\beta)= V + K_q,$$ we have
        that the $(a+ b - k)$-form 
        $$\theta := (-1)^{(p-1)q}\pounds_U \beta + (-1)^{q} \pounds_V \alpha - \frac{(-1)^q}{2} d \left( \iota_V \alpha + (-1)^{pq} \iota_U \beta\right)$$
        takes values in $S_{a+b-k},$ and $$\sharp_{a+b - k}(\theta) = [U, V] + K_{ p +q - 1}.$$
    \end{enumerate}
\end{Def}

\begin{remark} The vector subbundles $S^a$ determine a submodule of the module of $a$-forms $\Omega^a(M)$, defined as the space of all $a$-forms $\alpha \in \Omega^{a}(M)$ such that $\alpha |_{x} \in S^a,$ for every $x \in M.$ Throughout the rest of the text, we will say that $\alpha$ \textbf{takes values in} $S^a$ if the previous condition holds.
\end{remark}

A graded Poisson structure of order $k$ on $M$ determines a graded Dirac structure of order $k$ on $M$ as follows:

\begin{proposition} Let $(S^a, K_{k+1-a}, \sharp_a)$ be a graded Poisson structure on $M$. Then, its ``graph''
     $$D_p := \langle (U, \alpha) \in E_p: \sharp_{k+1-p}(\alpha) = U +K_{p} \rangle$$ is a graded Dirac structure.
\end{proposition}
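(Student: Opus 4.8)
The plan is to verify the two defining conditions of a graded Dirac structure for the family $D_p := \langle (U, \alpha) \in E_p : \sharp_{k+1-p}(\alpha) = U + K_p \rangle$: first that each $D_p$ is weakly Lagrangian point-wise (i.e. an almost graded Dirac structure), and second that it is involutive with respect to the graded Courant bracket. The first part is essentially free: by \cref{Dirac_Structures:Linear_Dirac_Structures:prop:EquivalentDescription_GradedDirac}, the point-wise data $(S^a|_x, K_p|_x, \sharp_a|_x)$ defines a linear graded Dirac structure at each $x$ precisely when conditions \ref{Linear_Dirac_Structures:prop:equivalentDiracStrcutures:enumerate:1} and \ref{Linear_Dirac_Structures:prop:equivalentDiracStrcutures:enumerate:2} of that proposition hold. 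But these are exactly conditions i) and ii) in the definition of a graded Poisson structure (the skew-symmetry condition ii) matches the sign $(-1)^{(k+1-a)(k+1-b)} = (-1)^{pq}$ after writing $p = k+1-a$, $q = k+1-b$). So $D_1, \dots, D_k$ is automatically an almost graded Dirac structure, and the $D_p$ are honest vector subbundles because the $S^a$, $K_p$ are.

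The remaining work is involutivity, and this is where condition iii) of the graded Poisson definition enters. I would show that the graded Courant bracket $\llbracket \cdot, \cdot \rrbracket$ restricts to an operation $\Gamma(D_p) \otimes \Gamma(D_q) \to \Gamma(D_{p+q-1})$. Take sections $(U, \alpha) \in \Gamma(D_p)$ and $(V, \beta) \in \Gamma(D_q)$, so that $\alpha$ takes values in $S^{k+1-p}$, $\beta$ in $S^{k+1-q}$, with $\sharp_{k+1-p}(\alpha) = U + K_p$ and $\sharp_{k+1-q}(\beta) = V + K_q$. Writing out the Courant bracket gives $\llbracket (U,\alpha),(V,\beta)\rrbracket = ([U,V], \theta)$, where $\theta$ is precisely the $(a+b-k)$-form appearing in condition iii). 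That condition asserts exactly that $\theta$ takes values in $S^{a+b-k} = S^{k+1-(p+q-1)}$ and that $\sharp_{a+b-k}(\theta) = [U,V] + K_{p+q-1}$. By the very description of $D_{p+q-1}$ as the graph of $\sharp_{a+b-k}$, these two facts say $([U,V], \theta) \in \Gamma(D_{p+q-1})$, which is involutivity.

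The main subtlety — and the step I would be most careful about — is checking that the statement of condition iii) is genuinely equivalent to what the Courant bracket formula produces, in particular that the indices and degrees line up and that the definition of $\theta$ in iii) is well-posed independently of the choice of multivector representatives $U, V$ of the classes $\sharp_a(\alpha)$, $\sharp_b(\beta)$ modulo $K_p$, $K_q$. Here one must verify that changing $U$ by an element of $K_p$ (and $V$ by an element of $K_q$) changes $\theta$ only by a form that still lies in $S^{a+b-k}$ and leaves the class $[U,V] + K_{p+q-1}$ unchanged; this uses $K_p = (S^a)^{\circ,p}$ together with the formula $\iota_{[U,V]}\gamma$ expressed via Lie derivatives, and is morally the same bookkeeping already carried out in the closing induction of \cref{Dirac_structures:thm:Differential_equivalence}. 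Once well-posedness is settled, the equivalence is a direct matching of formulas, so the proposition follows by simply reading condition iii) as the statement that the Courant bracket of graph sections is again a graph section.
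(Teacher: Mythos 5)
Your proof is correct and takes essentially the same route as the paper: the weakly Lagrangian property is exactly the pointwise linear argument of \cref{Dirac_Structures:Linear_Dirac_Structures:prop:EquivalentDescription_GradedDirac}, and involutivity is a direct reading of condition iii) of the graded Poisson definition, whose form $\theta$ coincides verbatim with the second component of the graded Courant bracket, so that $([U,V],\theta)\in\Gamma(D_{p+q-1})$ by the graph description. The well-posedness concern in your final paragraph is unnecessary for this implication: condition iii) is quantified over \emph{all} representatives $U,V$ of the classes $\sharp_a(\alpha)$, $\sharp_b(\beta)$, so for given sections $(U,\alpha)\in\Gamma(D_p)$, $(V,\beta)\in\Gamma(D_q)$ it applies directly to that particular choice and no independence check is required.
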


\begin{proof} Checking that the family of vector subbundles $D_1, \dots, D_k$ is weakly-Lagrangian is exactly
    the proof of \cref{Dirac_Structures:Linear_Dirac_Structures:prop:EquivalentDescription_GradedDirac}, although in this case we have
    non-degeneracy $K_1 = 0$. Integrability follows directly from the definition. 
\end{proof}

Given a graded Poisson structure of order $k$, $(S^a, K_{k+1-a}, \sharp_a)$ on $M$, we can recover the structure in arbitrary degrees
from the structure in degree $a = k$, $\sharp_k: S^k \rightarrow TM$ as follows:

\begin{enumerate}[i)]
    \item $S^a = \left \langle\iota_U \alpha: U \in \bigvee_{k-a} M, \, \alpha \in S_k \right \rangle,$
    \item $K_p = (S_k)^{\circ,p},$
    \item $\sharp_a(\iota_U \alpha)  = \sharp_k(\alpha) \wedge U + K_{k+1-a},$ for $\alpha \in S_k,$ $U \in \bigvee_{k-a} M.$
\end{enumerate}

And we have

\begin{theorem}
    \label{Graded_Poisson_structures:thm:Equivalence}
    Let $M$ be a manifold and $k \geq 1$ an integer. Then, there exists an injection
    $$\{\text{Graded Poisson structures on }M\} \hookrightarrow \{\text{Higher Poisson structures on }M\}$$
    which defines a bijection onto the set of all higher Poisson structures on $M$, $(S, \sharp)$, such that both
    $$S^a = \left \langle\iota_U \alpha: U \in \bigvee_{k-a} M, \, \alpha \in S \right \rangle,\,\,  K_p = (S)^{\circ,p}$$
    define a vector subbundle, for each $1 \leq p \leq k,$ $1 \leq a \leq k$.
\end{theorem}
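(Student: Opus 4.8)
The plan is to mirror the structure of the proof of \cref{Dirac_structures:thm:Differential_equivalence}, reducing everything to the already-established linear correspondence \cref{Linear_Dirac_Structures:thm:Linear_equivalence} together with the equivalent descriptions in \cref{Dirac_Structures:Linear_Dirac_Structures:prop:EquivalentDescription_GradedDirac} and \cref{Dirac_Structures:Linear_Dirac_Structures:prop:EquivalentDescriptionHigherDirac}. First I would define the candidate map: given a graded Poisson structure $(S^a, K_{k+1-a}, \sharp_a)$, send it to the pair $(S, \sharp) := (S^k, \sharp_k)$, where $\sharp_k \colon S^k \to TM$ (note $K_1 = 0$, so there is no quotient and $\sharp_k$ genuinely lands in $TM$). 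I would first verify this target is a \emph{bona fide} higher Poisson structure in the sense of \cref{Dirac_structures:Overview:Definition:Weak-Higher-Dirac-Structure}: condition $S^{\circ,1} = 0$ is exactly $K_1 = (S^k)^{\circ,1} = 0$; skew-symmetry of $\sharp_k$ is the $a = b = k$ case of property ii) of a graded Poisson structure; and involutivity together with $\sharp([\alpha,\beta]) = [\sharp\alpha,\sharp\beta]$ is the $p = q = 1$ (equivalently $a = b = k$) case of the integrability condition iii), since for $p = 1$ the bracket $\theta$ collapses to $\pounds_{\sharp(\alpha)}\beta - \iota_{\sharp(\alpha)}\d\beta = [\alpha,\beta]$.

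Next I would establish \textbf{injectivity and the reconstruction formulas}. The content here is that the data in arbitrary degree $a$ is forced by the degree-$k$ data via the three displayed formulas preceding the theorem, namely $S^a = \langle \iota_U\alpha : U \in \bigvee_{k-a}M,\ \alpha \in S\rangle$, $K_p = S^{\circ,p}$, and $\sharp_a(\iota_U\alpha) = \sharp_k(\alpha)\wedge U + K_{k+1-a}$. Pointwise these are precisely Steps 1--4 in the proof of \cref{Linear_Dirac_Structures:thm:Linear_equivalence}: the description of $S^a$ is \cref{Linear_Graded_Dirac_structures:Auxiliary_Lemma_II}, the identity $K_p = (S^a)^{\circ,p}$ is Step 5, and the forcing of $\sharp_a$ is the uniqueness computation in Step 4. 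Thus at each point $x$ the fiber data is uniquely determined, giving injectivity and showing the reconstruction formulas produce the only possible preimage.

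For \textbf{surjectivity onto the stated subset}, I would take a higher Poisson structure $(S,\sharp)$ for which both $S^a$ and $K_p = S^{\circ,p}$ define vector subbundles for all relevant $a, p$, and define $\sharp_a$ by the reconstruction formula. Well-definedness of $\sharp_a$ and the two linear graded-Dirac axioms (i) $K_p = (S^a)^{\circ,p}$ and (ii) the graded skew-symmetry $\iota_{\sharp_a(\alpha)}\beta = (-1)^{(k+1-a)(k+1-b)}\iota_{\sharp_b(\beta)}\alpha$ are exactly what Steps 4 and 5 of \cref{Linear_Dirac_Structures:thm:Linear_equivalence} verify pointwise, so the degenerate ($K_1 = 0$) non-integrable skeleton transports verbatim. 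The genuinely new ingredient beyond the Dirac theorem is the \emph{integrability} clause iii): I must show that involutivity of $(S,\sharp)$ upgrades to the full graded integrability condition in every bidegree $(a,b)$. Here I would invoke \cref{Dirac_structures:thm:Differential_equivalence}: the graph $D := \langle(\sharp(\alpha),\alpha)\rangle$ is a weak higher Dirac structure, and its reconstructed family $D_p = \langle(V\wedge U, \iota_U\alpha) : U \in \bigvee_{p-1}M,\ (V,\alpha)\in D\rangle$ is an (involutive) graded Dirac structure precisely because the relevant $D_p$ are subbundles — a hypothesis guaranteed by the subbundle assumptions on $S^a$ and $K_p$. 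The graded Courant-bracket closure of the $D_p$ is then literally the integrability identity iii) written out on the graph, completing the correspondence.

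The main obstacle I anticipate is bookkeeping rather than conceptual: carefully matching the index conventions $a \leftrightarrow p = k+1-a$ between the Poisson-side description (indexed by form-degree $a$) and the Dirac-side description (indexed by multivector-degree $p$), and confirming that the involutivity of the single pair $(S,\sharp)$ propagates to \emph{all} bidegrees through the inductive argument of \cref{Dirac_structures:thm:Differential_equivalence} rather than only the $p=q=1$ case. Concretely, the induction there reduces closure in bidegree $(p,q)$ to closure in $(p,q-1)$ via the relation $\llbracket(U,\alpha),(V\wedge X,\iota_X\beta)\rrbracket = \iota_X\llbracket(U,\alpha),(V,\beta)\rrbracket + \iota_{[U,X]}(V,\beta)$; I would need to check this descent is compatible with the $\sharp_a$ defined by the reconstruction formula, so that the resulting $\theta$ indeed takes values in $S^{a+b-k}$ with $\sharp_{a+b-k}(\theta) = [U,V] + K_{p+q-1}$. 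Since all of this is the differential shadow of the purely linear and Dirac-theoretic facts already proved, I expect no essentially new analytic difficulty, only the verification that the chosen subbundle hypotheses are exactly those needed to run the \cref{Dirac_structures:thm:Differential_equivalence} argument on the graph.
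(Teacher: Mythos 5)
Your proposal is correct and is essentially the proof the paper intends: the theorem is stated there without an explicit argument, immediately after the reconstruction formulas, and the intended justification is precisely your reduction --- pointwise existence and uniqueness of the family $(S^a, K_p, \sharp_a)$ from \cref{Linear_Dirac_Structures:thm:Linear_equivalence} (its Steps 1--5), plus transfer of the integrability condition between the higher Poisson graph and the family $D_p$ via \cref{Dirac_structures:thm:Differential_equivalence}, the Courant-bracket closure of the $D_p$ being literally condition iii) of a graded Poisson structure written on the graph. One cosmetic caveat: at $p=q=1$ the form $\theta$ collapses (using skew-symmetry and Cartan's formula) to $\pounds_{\sharp(\alpha)}\beta - \iota_{\sharp(\beta)}\d{\alpha}$, as in the paper's own computation in \cref{Graded_Poisson_structures:thm:Poisson_implies_Multisymplecticfoliation}, rather than to $\pounds_{\sharp(\alpha)}\beta - \iota_{\sharp(\alpha)}\d{\beta}$ as you wrote following the paper's (apparently misprinted) display of Bursztyn's bracket; this does not affect your argument.
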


An important result in Poisson geometry is that a Poisson manifold is foliated by symplectic leaves, and that this
foliation determines completely the Poisson structure. Graded Poisson manifolds also admit a natural foliation:

\begin{Def}[Multisymplectic foliation] A \textbf{multisymplectic foliation} of order $k$ of $M$ is a possibly singular foliation $\mathcal{F}$ (in the sense of
Stefan-Sussmann \cite{Sussmann1973, Stefan1974}) together with a choice of a closed $(k+1)$-form $$\omega_F \in \Omega^{k+1}(F)$$ defined on each leaf of the foliation $F \in \mathcal{F},$ which is smooth in the following sense. For each choice of vector fields tangent to the foliation, $X_1, \dots, X_{k+1},$ the function defined by 
$$x \mapsto (\omega_F|_x)(X_1|_x, \dots, X_{k+1}|_x)$$ is smooth, where $F \in \mathcal{F}$ is the unique leaf passing through $x$, namely, $x \in F.$
\end{Def}

\begin{theorem}
    \label{Graded_Poisson_structures:thm:Poisson_implies_Multisymplecticfoliation}
    Let $(S^a, K_{k+1-a}, \sharp_a)$ be a graded Poisson structure on $M$. Then, $M$ admits a multisymplectic foliation.
\end{theorem}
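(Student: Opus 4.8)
The plan is to build the foliation from the top-degree anchor $\sharp_k \colon S^k \to TM$ (it lands in $TM$ itself because condition (i) forces $K_1 = (S^k)^{\circ,1} = 0$), and then to equip each leaf with a closed $(k+1)$-form reconstructed from the forms in $S^k$. First I would introduce the (a priori singular) distribution $\Delta := \operatorname{Im}\sharp_k \subseteq TM$, i.e. the image of the locally finitely generated $C^\infty(M)$-module $\mathcal{D} := \{\sharp_k(\alpha) : \alpha \text{ a section of } S^k\}$. Writing $X_\alpha := \sharp_k(\alpha)$, the integrability axiom (iii) specialized to $a=b=k$ (so $p=q=1$) produces, for sections $\alpha,\beta$ of $S^k$, a section $\theta \in S^k$ with $\sharp_k(\theta) = [X_\alpha, X_\beta]$; hence $[X_\alpha, X_\beta] \in \mathcal{D}$ and $\mathcal{D}$ is involutive. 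Being an involutive, locally finitely generated module of vector fields, $\mathcal{D}$ generates, by the Stefan--Sussmann theorem, a (possibly singular) foliation $\mathcal{F}$ whose leaf through $x$ satisfies $T_x F = \Delta_x$.

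Next I would define, on each leaf $F$, the $(k+1)$-form $\omega_F$ by declaring $$\omega_F(\sharp_k(\alpha_0), v_1, \dots, v_k) := \alpha_0(v_1, \dots, v_k), \qquad v_i \in T_x F,$$ equivalently $\iota_{\sharp_k(\alpha)}\omega_F = \alpha|_F$ for every $\alpha \in S^k$. Two points must be verified. Well-definedness: if $\sharp_k(\alpha_0) = 0$, the skew-symmetry relation (ii) gives $\iota_{\sharp_k(\beta)}\alpha_0 = -\iota_{\sharp_k(\alpha_0)}\beta = 0$ for all $\beta$, so $\alpha_0$ annihilates every vector of $\Delta$ and hence $\alpha_0|_F = 0$; thus the value is independent of the chosen primitive of $v_0$. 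Full antisymmetry: $\omega_F$ is automatically skew in $v_1, \dots, v_k$, while the swap of $v_0, v_1$ is precisely the relation $\iota_{X_{\alpha_1}}\alpha_0 = -\iota_{X_{\alpha_0}}\alpha_1$; since the adjacent transpositions $(01), (12), \dots, (k-1\,k)$ generate the full symmetric group, $\omega_F$ is a genuine $(k+1)$-form.

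To prove $d\omega_F = 0$ I would establish $\pounds_{X_\alpha}\omega_F = d(\alpha|_F)$ for every $\alpha$. Contracting with an arbitrary $X_\beta$ and using $\iota_{X_\beta}\pounds_{X_\alpha} = \pounds_{X_\alpha}\iota_{X_\beta} - \iota_{[X_\alpha, X_\beta]}$ together with $\iota_{X_\alpha}\omega_F = \alpha|_F$ and $[X_\alpha, X_\beta] = \sharp_k(\theta)$, the computation reduces---after substituting the explicit $\theta$ from axiom (iii), collapsing the exact term via the skew-symmetry identity $\iota_{X_\alpha}\beta = -\iota_{X_\beta}\alpha$, and invoking Cartan's formula---to $\iota_{X_\beta}\pounds_{X_\alpha}\omega_F = \iota_{X_\beta}\,d(\alpha|_F)$. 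Since the $X_\beta$ span $TF$, this forces $\pounds_{X_\alpha}\omega_F = d(\alpha|_F)$; comparing with Cartan's formula $\pounds_{X_\alpha}\omega_F = d(\alpha|_F) + \iota_{X_\alpha}d\omega_F$ yields $\iota_{X_\alpha}d\omega_F = 0$ for all $\alpha$, and because the $X_\alpha$ span $TF$ we conclude $d\omega_F = 0$. Smoothness of the family $\{\omega_F\}$ is then immediate: on vector fields of the form $\sharp_k(\alpha)$, which generate the tangent distribution, $\omega_F$ is evaluated directly on the smooth sections $\alpha$ of $S^k$.

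The main obstacle is the closedness step: it is the only place where the full force of the integrability axiom (iii) enters, and one must check that the two Lie-derivative terms and the exact term packaged into $\theta$ collapse exactly---via skew-symmetry and Cartan's identity---onto $d(\alpha|_F)$, with no residual terms surviving the restriction to $F$. A secondary technical point is the singular integrability: one must confirm that $\mathcal{D}$ is genuinely locally finitely generated, so that involutivity suffices to apply Stefan--Sussmann rather than being merely closed under the bracket.
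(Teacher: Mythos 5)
Your proof is correct and follows essentially the same route as the paper's: the Stefan--Sussmann foliation of $\operatorname{Im}\sharp_k$ obtained from the involutivity axiom with $a=b=k$, the leafwise form defined by $\iota_{\sharp_k(\alpha)}\omega_F = \alpha|_F$ with well-definedness and full skewness coming from the skew-symmetry axiom, and closedness extracted from the same simplification $\theta = \pounds_{X_\alpha}\beta - \iota_{X_\beta}d\alpha$. The only cosmetic difference is that you package the last step as $\pounds_{X_\alpha}\omega_F = d(\alpha|_F)$ plus Cartan's formula, whereas the paper double-contracts $\iota_{[X,Y]}\omega_F$ and cancels terms directly; these are the same computation.
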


\begin{proof} Define the (generalized) distribution $$E := \Im \sharp_k \subseteq TM.$$ Then, $E$ is integrable in the sense of
    Stefan and Sussmann (see \cite{Stefan1974,Sussmann1973}). Indeed, by integrability of the Poisson structure, we have
    that for $\alpha, \beta: M \rightarrow S_k$ sections, 
    $$[\sharp_k(\alpha), \sharp_k(\beta)] = \sharp_k(\theta),$$
    where
    $$\theta := (-1)^{(p-1)q}\pounds_{\sharp_a(\alpha)} \beta + (-1)^{q} \pounds_{\sharp_b(\beta)} \alpha - 
    \frac{(-1)^q}{2} \d {\left( \iota_{\sharp_b(\beta)} \alpha + (-1)^{pq} \iota_{\sharp_a(\alpha)} \beta\right)} \in S,$$
    concluding that $E$ is closed under Lie bracket. Therefore, $E$ determines a foliation on $M$, $\mathcal{F}.$ 
    Let $F \in \mathcal{F}$ be a leaf and $x \in F.$  Then, by definition, $T_x F = E |_x.$ Define
    $$\flat: E|_x \rightarrow \bigwedge^{k}(E|_x)^\ast$$
    as $$\flat(\sharp_k(\alpha)) := i^\ast \alpha,$$ where $\alpha \in (S_k)|_x$, and $$i: E\hookrightarrow TM$$
    denotes the inclusion. \\

    Let us check that $\flat$ is well-defined. Indeed, suppose $$\sharp_k(\alpha) = \sharp_k(\beta).$$ 
    Then, for each $\gamma_1, \dots, \gamma_k \in S^k$, 
    \begin{align*}
        \alpha (\sharp_k(\gamma_1), \dots, \sharp_k(\gamma_k)) &= - \gamma_1(\sharp_k(\alpha), \dots, \sharp_k(\gamma_k))\\
        &= - \gamma_1(\sharp_k(\beta), \dots, \sharp_k(\gamma_k))\\
        &= \beta (\sharp_k(\gamma_1), \dots, \sharp_k(\gamma_k)),
    \end{align*}
    which yields $i^\ast \alpha = i^\ast \beta$ and proves well-definedness.\\

    Now, since for each $\alpha, \beta \in S^k$ we have
    \begin{align*}
        \iota_{\sharp_k(\alpha)} \beta = - \iota_{\sharp_k(\beta)} \alpha,
    \end{align*}
    we have that for each $u, v \in E|_x$, $$\iota_u \flat(v) = - \iota_v \flat(u).$$
    Therefore, $\flat$ determines a $(k+1)$-form $$\omega |_x \in \bigwedge^{k+1} (E|_x)^\ast.$$ This defines a $(k+1)$-form
    on each leaf $F \in \mathcal{F},$ which we shall denote $\omega_F.$ The collection of these forms is smooth. The proof is finished once we show that $\d{\omega_F} = 0.$\\

    Let $X, Y$ be vector fields on $F$, for certain leaf $F \in \mathcal{F}.$ Then, at least locally, there exists forms $\alpha$, $\beta$
    on $M$ taking values in $S^k$, such that $X$, $\sharp_k(\alpha),$ and $Y, \sharp_k(\beta)$ are $\iota$-related,
    where $$\iota: F \rightarrow M$$ denotes the inclusion, that is, such that for every $x \in F$,
    $$ X |_x = \sharp_k(\alpha)|_x , \,  Y|_x = \sharp_k(\beta)|_x.$$
    Then $$[X, Y]|_x = [\sharp_k(\alpha), \sharp_k(\beta)] |_x,$$ which, by involutivity of the Poisson structure,
    gives 
    \begin{align*} 
        [X, Y]|_x& = [\sharp_k(\alpha), \sharp_k(\beta)] |_x \\
        &= \sharp_k(\theta |_x) = \sharp_k\left( \pounds_{\sharp_k(\alpha)} \beta - \pounds_{\sharp_k(\beta)} 
        \alpha + \frac{1}{2}\d{(\iota_{\sharp_{k}(\beta)} \alpha - \iota_{\sharp_k(\alpha)}\beta)} \right)\big |_x\\
        &= \sharp_k\left ( \pounds_{\sharp_k(\alpha)} \beta - \iota_{\sharp_k(\beta)} \d{\alpha}\right ) \big |_x.
    \end{align*}
    Recall that
    \begin{align} 
        \iota_{[X,Y]} \omega_F &= \pounds_X \iota_Y \omega_F - \iota_Y \pounds_X \omega_F\\
        \label{Graded_Poisson_structures:eq:contraction_bracket}
        &= \pounds_X \iota_Y \omega_F - \iota_Y d {\iota_X \omega_F} - \iota_Y \iota_X \d{\omega_F}.
    \end{align}
    Now, since $$\iota_{\sharp_k(\alpha)} (\omega_F)|_x = \iota^\ast \alpha,$$
    for every $\alpha \in (S^k) \big |_x,$ we have, on the one hand
    \begin{align*}
        \left(\iota_{[X, Y]} \omega_F \right)|_x &= \iota_{\sharp_k(\theta)|_x} (\omega_F)|_x = i^\ast(\theta)|_x\\
        &= \iota^\ast\left( \pounds_{\sharp_k(\alpha)} \beta - \iota_{\sharp_k(\beta)} \d{\alpha}\right)\big |_x,
    \end{align*}
    and, on the other hand,
    \begin{align*}
        \left(\pounds_X \iota_Y \omega_F - \iota_Y \d{\iota_X \omega_F} - \iota_Y \iota_X \d{\omega_F} \right)|_x = 
        \iota^\ast\left(\pounds_{\sharp_k(\alpha)} \sharp_k(\beta) - \iota_{\sharp_k(\beta)} \d{\sharp_k (\alpha)}\right)|_x
        - \left(\iota_Y \iota_X \d{\omega_F} \right)|_x.
    \end{align*}
    Using these last two equalities and \cref{Graded_Poisson_structures:eq:contraction_bracket}, we must have
    $$\left(\iota_Y \iota_X \d{\omega_F} \right)|_x = 0.$$ Since $X, Y$ and $x \in F$ are arbitrary, it follows that $$\d{\omega_F} = 0,$$
    concluding the proof.
\end{proof}

\begin{remark} In the case $k = 1$ this recovers the classical symplectic foliation of Poisson manifolds. 
Indeed, in this case the induced closed $2$-form on each leaf $\omega_F,$ $F \in \mathcal{F}$ is non-degenerate, giving a symplectic form.
\end{remark}

For the converse, we can recover a graded Poisson structure from non-degenerate multisymplectic foliations with some regularity condition. The necessity of these regularity conditions is to be expected. Indeed, in Poisson geometry, the Poisson structure is completely determined by its symplectic foliation, but an arbitrary symplectic foliation does not necessarily arise from a bivector field, needing differentiability conditions in order to guarantee the converse (see \cite{Vaisman1994}). In a graded setting, it will not be true that a graded Poisson structure is characterized by its induced multisymplectic foliation. Nevertheless, under certain regularity conditions, we can guarantee that the induced foliation defines an extension of the original graded Poisson structure:

\begin{theorem}
    \label{Graded_Poisson_structures:thm:Multisymplecticfoliation_implies_Poisson}
    Let $(\mathcal{F}, \omega_\mathcal{F})$ be a non-degenerate multisymplectic foliation of $M$, and suppose that the collection of subspaces 
    $$S |_x := (i^\ast)^{-1}\left( \Im \flat_{\omega_F} |_x\right)$$ defines a vector subbundle $$S \hookrightarrow \bigwedge^k M,$$ where $\flat_{\omega_F}: T_x F \rightarrow \bigwedge^k T^\ast_x F$ denotes the morphism induced by $\omega_{F}$, and $$i^\ast: \bigwedge^k T^\ast_x M \rightarrow T^\ast_x F$$ denotes the restriction of forms. Suppose further that the collection of mappings $$\flat^{-1}:\Im \flat_{\omega_F} \rightarrow TF, F \in \mathcal{F}$$ is smooth in the following sense: $\flat^{-1}(i^\ast \alpha) \in \mathfrak{X}(M)$ when $\alpha \in \Omega^k(M)$ takes values in $S$. Then,
    $M$ admits a higher Poisson structure $\sharp: S \rightarrow TM$ such that $(\mathcal{F}, \omega_\mathcal{F})$ is the multisymplectic foliation
    obtained through \cref{Graded_Poisson_structures:thm:Poisson_implies_Multisymplecticfoliation}.
\end{theorem}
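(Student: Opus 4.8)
The plan is to reverse the construction of \cref{Graded_Poisson_structures:thm:Poisson_implies_Multisymplecticfoliation}: there we passed from a graded Poisson structure to a multisymplectic foliation by taking $E = \Im\sharp_k$ and inverting $\flat$; here I would recover $\sharp_k$ from $\flat^{-1}$. Concretely, I would define
$$\sharp : S \to TM, \qquad \sharp(\alpha) := \flat^{-1}(i^\ast \alpha),$$
where at each point $\alpha|_x \in S|_x = (i^\ast)^{-1}(\Im\flat_{\omega_F}|_x)$, so $i^\ast\alpha|_x$ lies in the domain $\Im\flat_{\omega_F}$ of $\flat^{-1}$ and $\sharp(\alpha)$ is a well-defined tangent vector in $T_xF \subseteq T_xM$. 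The smoothness hypothesis on $\flat^{-1}$ is exactly what guarantees $\sharp(\alpha)\in\mathfrak{X}(M)$ for $\alpha$ taking values in $S$, so $\sharp$ is a genuine vector bundle morphism. I would then verify the three axioms of a higher Poisson structure in turn.

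For the first axiom, $S^{\circ,1} = 0$: a vector $v \in S^{\circ,1}|_x$ annihilates every $\alpha|_x\in S|_x$, hence annihilates every $i^\ast\alpha = \flat_{\omega_F}(u)$ with $u\in T_xF$; since $\omega_F$ is nondegenerate as a multisymplectic form on the leaf, $\Im\flat_{\omega_F}$ detects all of $T_xF$, forcing $v=0$ (one must also check $v$ is tangent to $F$, which follows because forms not restricting trivially to $F$ separate the normal directions). Skew-symmetry of $\sharp$, the second axiom, is immediate from skew-symmetry of $\flat_{\omega_F}^{-1}$: writing $u = \sharp(\alpha)$, $w=\sharp(\beta)$ so that $i^\ast\alpha = \iota_u\omega_F$, $i^\ast\beta = \iota_w\omega_F$, one computes
$$\iota_{\sharp(\alpha)}\beta = \iota_u(i^\ast\beta) = \iota_u\iota_w\omega_F = -\iota_w\iota_u\omega_F = -\iota_{\sharp(\beta)}\alpha,$$
using that contraction of $\alpha$ against vectors tangent to $F$ only sees $i^\ast\alpha$.

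For integrability, the third axiom, I would transcribe the closedness computation of \cref{Graded_Poisson_structures:thm:Poisson_implies_Multisymplecticfoliation} essentially in reverse. Given sections $\alpha,\beta$ taking values in $S$, the vector fields $X=\sharp(\alpha)$, $Y=\sharp(\beta)$ are tangent to the leaves, so $[X,Y]$ is tangent too, and I would use $\d\omega_F = 0$ together with \cref{Graded_Poisson_structures:eq:contraction_bracket} to show that $i^\ast$ of the bracket $[\alpha,\beta] = \pounds_{\sharp(\alpha)}\beta - \iota_{\sharp(\alpha)}\d\beta$ equals $\iota_{[X,Y]}\omega_F = \flat_{\omega_F}([X,Y])$; applying $\flat^{-1}$ then yields simultaneously that $[\alpha,\beta]$ takes values in $S$ and that $\sharp([\alpha,\beta]) = [\sharp(\alpha),\sharp(\beta)]$. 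Finally I would confirm that feeding this $\sharp$ back through \cref{Graded_Poisson_structures:thm:Poisson_implies_Multisymplecticfoliation} reproduces $(\mathcal{F},\omega_\mathcal{F})$: the image $\Im\sharp$ recovers the tangent distribution $E$ of the leaves, and the form reconstructed from $\sharp$ via $\flat(\sharp(\alpha)) = i^\ast\alpha$ is by definition the original $\omega_F$.

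The main obstacle is the integrability step: matching the graded Lie-derivative bracket on forms with the Lie bracket of the corresponding leafwise vector fields requires careful bookkeeping of the pullback $i^\ast$, of which multivector degrees survive contraction against leaf-tangent vectors, and of the sign conventions in \cref{Graded_Poisson_structures:eq:contraction_bracket}. One must check that $[\alpha,\beta]$ really lands in $S$ (not merely that its restriction is realized by some leafwise vector field), which is where the regularity assumption that $S$ is a subbundle, rather than a pointwise-defined collection, becomes essential. The well-definedness of $\sharp$ up to the subtlety that $S|_x$ is the preimage under $i^\ast$ of $\Im\flat_{\omega_F}$ — so distinct $\alpha|_x$ with the same $i^\ast\alpha|_x$ map to the same vector — is what makes $\sharp$ genuinely single-valued, and this should be recorded explicitly before the axioms are verified.
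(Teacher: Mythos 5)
Your route is the same as the paper's (whose own proof is only a few lines): define $\sharp:=\flat_{\omega_F}^{-1}\circ i^\ast$ on $S$, get well-definedness from non-degeneracy of $\flat_{\omega_F}$, smoothness from the stated hypotheses, and involutivity by running the closedness computation of \cref{Graded_Poisson_structures:thm:Poisson_implies_Multisymplecticfoliation} in reverse; you additionally verify $S^{\circ,1}=0$ and the recovery of the foliation, which the paper leaves unstated. Two minor slips in those verifications: in the $S^{\circ,1}=0$ argument, the forms that separate the directions transverse to $F$ are precisely those restricting to zero on $F$ (elements of $\ker i^\ast\subseteq S$), not the ones restricting non-trivially; and for the involutivity computation to close up as you describe, the bracket must be read as $\pounds_{\sharp(\alpha)}\beta-\iota_{\sharp(\beta)}d\alpha$ (the form actually used in the proof of \cref{Graded_Poisson_structures:thm:Poisson_implies_Multisymplecticfoliation}), since with $\sharp(\alpha)$ in both slots one gets $d\iota_{\sharp(\alpha)}\beta$, whose restriction to a leaf is not $\iota_{[X,Y]}\omega_F$.

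The genuine gap is in your skew-symmetry step, and it is worth naming because it is exactly the point the paper hides behind ``clearly skew-symmetric''. Your first equality $\iota_{\sharp(\alpha)}\beta=\iota_u(i^\ast\beta)$ is not an identity of $(k-1)$-forms on $M$: for $k\geq 2$ the contraction of an ambient $k$-form $\beta$ with a leaf-tangent vector $u$ is \emph{not} determined by $i^\ast\beta$; what is true is only $i^\ast(\iota_u\beta)=\iota_u(i^\ast\beta)$. Hence your chain proves the leafwise identity $i^\ast\bigl(\iota_{\sharp(\alpha)}\beta+\iota_{\sharp(\beta)}\alpha\bigr)=0$, whereas condition ii) of the definition of a higher Poisson structure demands $\iota_{\sharp(\alpha)}\beta+\iota_{\sharp(\beta)}\alpha=0$ as forms on $M$. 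The stronger identity genuinely fails for this $(S,\sharp)$, because $S=(i^\ast)^{-1}\left(\Im\flat_{\omega_F}\right)$ contains $\ker i^\ast$ while $\sharp$ annihilates it: take $M=\mathbb{R}^4$ with coordinates $(x^1,x^2,x^3,z)$, leaves $\{z=c\}$, $\omega_F=dx^1\wedge dx^2\wedge dx^3$, so $k=2$ and $S=\bigwedge^2M$; then $\alpha=dx^2\wedge dx^3$ gives $\sharp\alpha=\partial/\partial x^1$, $\beta=dx^1\wedge dz$ gives $\sharp\beta=0$, and $\iota_{\sharp\alpha}\beta+\iota_{\sharp\beta}\alpha=dz\neq 0$. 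So, as written, neither your argument nor the paper's establishes axiom ii); both establish only its restriction to the leaves, and the theorem is correct only if the skew-symmetry (equivalently, isotropy of the graph) required of a higher Poisson structure is understood in that leafwise sense.
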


\begin{proof} It is enough to define $$\sharp: S |_x \rightarrow T_xM$$ as $$\sharp(\alpha) := e,$$ if $\flat_{\omega_F}(e) = i^\ast \alpha.$ This is clearly well-defined (by non-degeneracy of $\flat_{\omega_F}$) and it is clearly skew-symmetric. Smoothness of $\sharp$ is clear from the hypotheses. Involutivity follows from a similar argument to the one made in \cref{Graded_Poisson_structures:thm:Poisson_implies_Multisymplecticfoliation}.
\end{proof}

\begin{remark} Again, the case $k = 1$ recovers the fact that a Poisson manifold is determined by its symplectic foliation. Indeed, in this case $$S = T^\ast M,$$ and the smoothness condition is implied by the fact that $\sharp: T^\ast M \rightarrow TM$ is smooth. However, for degree $k > 1$ we do not recover the original Poisson structure from its foliation, even in the hypotheses of \cref{Graded_Poisson_structures:thm:Multisymplecticfoliation_implies_Poisson}. Nevertheless, we do recover an extension of it when it defines a non-degenerate multisymplectic foliation, and $S$ (the collection of spaces defined in \cref{Graded_Poisson_structures:thm:Multisymplecticfoliation_implies_Poisson}) is a vector subbundle. Indeed, given a higher Poisson structure of order $k$ $$\sharp: S_k \rightarrow TM$$ satisfying the previous conditions, smoothness of $\flat^{-1}$ follows easily from smoothness of $\sharp.$ Furthermore, we clearly have $S_k \subseteq S,$ and that the obtained morphism through \cref{Graded_Poisson_structures:thm:Multisymplecticfoliation_implies_Poisson} extends the original one.
\end{remark}

\begin{remark} In general, we cannot extend the vector subbundle defined in \cref{Graded_Poisson_structures:thm:Multisymplecticfoliation_implies_Poisson} to a graded Poisson structure. However, 
    if we can guarantee that the obtained vector subbundle $S \subseteq \bigwedge^k M$ satisfies the hypothesis of 
    \cref{Graded_Poisson_structures:thm:Equivalence}, we obtain a graded Poisson structure whose 
    multisymplectic foliation is the original one.
\end{remark}

We have a series of interesting particular 
cases of \cref{Graded_Poisson_structures:thm:Multisymplecticfoliation_implies_Poisson}:



\begin{Def} A multisymplectic foliation $\mathcal{F}, \omega_F, F \in \mathcal{F}$ is said to have \textbf{constant linear type} if for any pair of leaves, $ F_1, F_2 \in \mathcal{F}$ and points $x_1 \in F_1, x_2 \in F_2$ there exists a linear isomorphism
$$\phi: T_{x_1} F_1 \rightarrow T_{x_2} F_2$$ satisfying 
$$\phi^\ast (\omega_{F_2})|_{x_2} = (\omega_{F_1})|_{x_1}.$$
\end{Def}

\begin{remark} If a multisymplectic foliation is of constant linear type, it is of constant dimension.
\end{remark}

\begin{corollary}
\label{constant-linear-type}
Non-degenerate multisymplectic foliations of constant linear type are induced by graded Poisson structures
\end{corollary}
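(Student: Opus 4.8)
The statement to prove is Corollary \ref{constant-linear-type}: non-degenerate multisymplectic foliations of constant linear type are induced by graded Poisson structures. My strategy is to verify that the hypotheses of Theorem \ref{Graded_Poisson_structures:thm:Multisymplecticfoliation_implies_Poisson} are automatically satisfied when the foliation has constant linear type, thereby producing a higher Poisson structure $\sharp: S \to TM$, and then (invoking the final remark before the corollary) to upgrade this to a genuine graded Poisson structure via Theorem \ref{Graded_Poisson_structures:thm:Equivalence}. So the corollary is really a matter of checking two regularity conditions: first, that the collection of subspaces $S|_x = (i^\ast)^{-1}(\Im \flat_{\omega_F}|_x)$ forms a vector subbundle of $\bigwedge^k M$, and second, that the inverse map $\flat^{-1}$ is smooth in the stated sense.

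**Key steps.** First I would exploit that constant linear type forces constant dimension of the leaves (by the remark immediately preceding the corollary), so the foliation $\mathcal{F}$ is \emph{regular}; this already guarantees that the tangent distribution $T\mathcal{F}$ is a smooth subbundle of $TM$ and gives local foliated charts in which the leaves are plaques. Second, working in such a foliated chart, I would show that the fibre dimension of $\Im \flat_{\omega_F}|_x \subseteq \bigwedge^k T_x^\ast F$ is locally constant: this is exactly where constant linear type is used, since the linear isomorphisms $\phi$ identifying $(\omega_{F_1})|_{x_1}$ with $(\omega_{F_2})|_{x_2}$ conjugate the respective $\flat$-maps, hence match their ranks. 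With the leaf dimension constant and the rank of $\flat_{\omega_F}$ constant, the smoothness of $\omega_F$ (built into the definition of a multisymplectic foliation) lets me conclude that $x \mapsto \Im \flat_{\omega_F}|_x$ is a smooth subbundle of $\bigwedge^k T^\ast F$ inside the restricted bundle over the leaf; pulling back along the smooth surjection $i^\ast$ then shows $S$ is a smooth subbundle of $\bigwedge^k M$, since the preimage of a constant-rank smooth subbundle under a smooth constant-rank bundle map is again a smooth subbundle.

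**Smoothness of the inverse.** For the second regularity condition, non-degeneracy of $\omega_F$ means $\flat_{\omega_F}: T_x F \to \bigwedge^k T_x^\ast F$ is injective, so $\flat^{-1}$ is well-defined on its image. Because both the leaf dimension and the image are constant-rank smooth subbundles, $\flat_{\omega_F}$ is a smooth bundle isomorphism onto $\Im\flat_{\omega_F}$, and its inverse is therefore smooth; concretely, for $\alpha \in \Omega^k(M)$ taking values in $S$, the assignment $x \mapsto \flat^{-1}(i^\ast \alpha|_x)$ is a composition of smooth maps and hence a smooth vector field on $M$. This is precisely the smoothness hypothesis required by Theorem \ref{Graded_Poisson_structures:thm:Multisymplecticfoliation_implies_Poisson}, so that theorem yields a higher Poisson structure $\sharp: S \to TM$ inducing the given foliation. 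Finally, to obtain a \emph{graded} Poisson structure rather than merely a higher one, I would appeal to the remark preceding the corollary: one checks that the associated subspaces $S^a = \langle \iota_U\alpha : U \in \bigvee_{k-a} M,\ \alpha \in S\rangle$ and $K_p = S^{\circ,p}$ are themselves subbundles (again a constant-rank consequence of constant linear type), which is exactly the hypothesis of Theorem \ref{Graded_Poisson_structures:thm:Equivalence} guaranteeing that the higher Poisson structure lifts to a graded one.

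**Main obstacle.** The subtle point I expect to be the crux is establishing constant rank of $\Im\flat_{\omega_F}$ \emph{across different leaves} and its smooth variation \emph{within} a leaf simultaneously — constant linear type gives the leaf-to-leaf rank comparison pointwise, but translating pointwise linear-algebraic equality of ranks into a genuine smooth-subbundle statement requires care, precisely because a priori the isomorphisms $\phi$ are not assumed to depend smoothly on the points. The resolution is that for a subbundle one only needs \emph{local} constancy of fibre dimension together with the already-assumed smoothness of $\omega_F$; constant linear type supplies the dimension count, and smoothness of $\omega_F$ supplies the regularity, so one never actually needs smooth dependence of $\phi$. Keeping these two roles separate is the key to a clean argument, and verifying that the derived subbundles $S^a$ and $K_p$ inherit constant rank (so that Theorem \ref{Graded_Poisson_structures:thm:Equivalence} applies) is the remaining bookkeeping.
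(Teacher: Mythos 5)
Your proposal is correct and follows essentially the same route as the paper: both reduce the corollary to verifying the regularity hypotheses of \cref{Graded_Poisson_structures:thm:Multisymplecticfoliation_implies_Poisson} (constant leaf dimension plus smoothness of $\omega_F$ makes $S$ a smooth constant-rank subbundle and $\flat^{-1}$ smooth) and then of \cref{Graded_Poisson_structures:thm:Equivalence} (constant linear type forces the derived spaces $S^a$, $K_p$ to have constant rank). The only differences are presentational: where you invoke abstract constant-rank bundle lemmas and defer the $S^a$ step as bookkeeping, the paper writes explicit smooth local generators for $S$ in a foliated chart and carries out the conjugation argument concretely, extending $\phi$ to an ambient isomorphism $\psi\colon T_{x_1}M \to T_{x_2}M$ and exhibiting the induced fibrewise isomorphisms $S^a|_{x_1} \cong S^a|_{x_2}$.
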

\begin{proof} Since $\mathcal{F}$ is of constant dimension, it is integrable in the sense of Frobenius, and thus, locally, there exist coordinates $(x^1, \dots, x^k, y^1, \dots, y^m)$ around any point such that $$T_x F = \left \langle \pdv{x^1} \bigg |_x, \dots, \pdv{x^k} \bigg |_x \right \rangle,$$ where $x \in F.$ In particular, the multisymplectic form (on each leaf) may be written as $$\omega_F |_x  = f_{i_1 \dots i_{k+1}}(x) \d{x}^{i_1} \wedge \cdots \wedge \d{x}^{i_{k+1}} |_x,$$ $x \in F,$ for certain coefficients $f_{i_1, \dots, i_{k+1}}(x) \in \mathbb{R}.$ Furthermore, since the multisymplectic foliation is smooth, these coefficients define smooth functions on $M$. It is clear that if 
$$S = (i^\ast)^{-1} \Im \flat_{\omega_F},$$ where $$i^\ast_x: \bigwedge^k T_x^\ast M \rightarrow \bigwedge^k T^\ast_x F$$ is the restriction and
$$\flat_{\omega_F}: T_xF \rightarrow \bigwedge^k T^\ast_x F$$ is the contraction by $\omega_F$, then $S$ it is locally generated by
$$S = \langle f_{l i_2, \dots, i_{k+1}} \d{x}^{i_2} \wedge \cdots \wedge \d{x}^{i_{k+1}}, l = 1, \dots, k\rangle \oplus \langle \d{x^{i_1}} \wedge \cdots \wedge \d{x}^{i_l}\wedge \d{y}^{j_{l+1}} \wedge \cdots \d{y}^{j_{k}}, i_1 < \cdots i_l, j_{l+1} < j_{k}\rangle,$$ which is smoothly generated and of constant rank. As a consequence, it defines a vector subbundle. Furthermore, it may be readily observed that the family of mappings $$\flat^{-1}: \Im \flat_{\omega_F} \rightarrow TF, F \in \mathcal{F}$$ is smooth. Using \cref{Graded_Poisson_structures:thm:Multisymplecticfoliation_implies_Poisson}, we get a higher Poisson structure $$\sharp: S \rightarrow TM$$ whose multisymplectic foliation is the given one.\\

Finally, since the linear type is constant,
    we are in the hypothesis of \cref{Graded_Poisson_structures:thm:Equivalence}.
    Indeed, given a pair of points $x_1, x_2 \in M$ and denoting by $F_1, F_2$ the corresponding leaves containing $x_1$ and $x_2$, respectively, we have a linear isomorphism $$\phi: T_{x_1} F_1 \rightarrow T_{x_2} F_2$$ satisfying $$\phi^\ast (\omega_{F_2})|_{x_2} = (\omega_{F_1})|_{x_1}.$$
    Extend this isomorphism to a linear isomorphism 
    $$\psi: T_{x_1} M \rightarrow T_{x_2} M.$$
    As a quick check shows, $$S^k \big |_{x_1} = \psi^\ast \left( S^k \big |_{x_2} \right),$$
    where $S^k$ is the subspace of forms defined in 
\cref{Graded_Poisson_structures:thm:Multisymplecticfoliation_implies_Poisson}.
    Then, we can define an isomorphism between the subspaces
    $$S^a \big |_{x_1} = \langle (U, \iota_U \alpha), U \in \bigvee_{k-a} M \big |_{x_1}, \alpha \in S^k \big |_{x_1}\rangle;$$ $$S^a \big |_{x_2} = \langle (U, \iota_U \alpha), U \in \bigvee_{k-a} M \big |_{x_2}, \alpha \in S^k \big |_{x_2}\rangle $$ given by 
    $$(U, \iota_U \alpha) \mapsto (\psi_\ast U, \iota_{\psi_\ast U} (\psi^{-1})^\ast \alpha).$$
    We conclude that 
    $$\dim S^a \big |_{x_1} = \dim S^a \big |_{x_2},$$ and therefore, $S^a$ is a vector subbundle, for each $1 \leq a \leq k$, proving that we get a graded Poisson structure.

\end{proof}

\begin{corollary} 
\label{corollary:volume_forms}
Manifolds foliated by leaves which are $(k+1)-$dimensional orientable manifolds
    together with a smoothly varying family of volume forms,
    admit a graded Poisson structure whose multisymplectic foliation is the given one.
\end{corollary}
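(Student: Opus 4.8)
The plan is to realize this statement as a special case of Corollary \ref{constant-linear-type}. The key observation is that a volume form on an oriented $(k+1)$-dimensional manifold is, pointwise, a nonzero top-degree alternating form, and any two nonzero top forms on $(k+1)$-dimensional vector spaces are related by a linear isomorphism pulling one back to the other. Thus I would first argue that a smoothly varying family of volume forms $\{\omega_F\}_{F\in\mathcal F}$, with each $F$ of dimension $k+1$, is a multisymplectic foliation of order $k$: each $\omega_F$ is a $(k+1)$-form on the leaf and, being top-degree, is automatically closed since $\d{\omega_F}$ would be a $(k+2)$-form on a $(k+1)$-dimensional manifold, hence zero. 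The smoothness requirement in the definition of multisymplectic foliation is exactly the hypothesis that the family varies smoothly.

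Next I would verify non-degeneracy. The map $T_xF \to \bigwedge^k T_x^\ast F$, $v \mapsto \iota_v(\omega_F)|_x$, is injective precisely because $\omega_F|_x$ is a nonzero top form: if $\iota_v\omega_F = 0$ for $v\neq 0$, complete $v$ to a basis and evaluate $\omega_F$ on that basis to obtain a contradiction with $\omega_F|_x \neq 0$. So the foliation is non-degenerate in the sense required by Corollary \ref{constant-linear-type}.

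The final and essentially the only substantive step is to check that this foliation has \emph{constant linear type}. Given two leaves $F_1, F_2$ with points $x_1, x_2$, both tangent spaces $T_{x_1}F_1$ and $T_{x_2}F_2$ are $(k+1)$-dimensional vector spaces each equipped with a distinguished nonzero top-degree form. I would fix linear bases and send one basis to the other; since a top form is determined up to a nonzero scalar and rescaling one basis vector rescales the evaluated top form, one can adjust a single basis vector to arrange $\phi^\ast(\omega_{F_2})|_{x_2} = (\omega_{F_1})|_{x_1}$ exactly. This produces the isomorphism $\phi$ demanded by the definition of constant linear type. With constant linear type established, Corollary \ref{constant-linear-type} applies directly and yields the graded Poisson structure whose multisymplectic foliation is the given one.

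I expect the only genuine obstacle to be the normalization in the constant-linear-type step — making sure the scalar discrepancy between the two volume forms can be absorbed into the choice of isomorphism, which is where finite-dimensionality and the top-degree (hence one-dimensional space of top forms) property are used. Everything else is immediate from the definitions, so the proof is short: establish that the data form a non-degenerate multisymplectic foliation of constant linear type, then invoke Corollary \ref{constant-linear-type}.
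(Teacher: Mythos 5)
Your proposal is correct and takes essentially the same route as the paper: the paper's own proof consists of the single observation that the hypotheses of \cref{constant-linear-type} hold, and your argument simply fills in that verification (automatic closedness of top-degree forms, injectivity of $v \mapsto \iota_v\omega_F$ for a volume form, and the rescaling of one basis vector to absorb the scalar discrepancy and obtain constant linear type), all of which is sound.
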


\begin{proof} We are clearly in the situation of \cref{constant-linear-type}.
\end{proof}

\begin{remark}
Unlike the case of Poisson Geometry, a multisymplectic foliation does not uniquely determine a higher (nor
graded) Poisson structure. Indeed, there could be several available choices for the subbundle $S$. 
\end{remark}

 The results presented in this section may be easily generalized to weak higher (and graded) Dirac structures. In this case, the integrable distribution may be taken as $$E:= \operatorname{pr}_1(D_1),$$ and the corresponding $(k+1)$-form $\omega_x \in \bigwedge^{k+1} (E|_x)^\ast$ is 
$$(\omega|_x)(e_1, \dots, e_{k+1}) := \alpha(e_2, \dots, e_{k+1}),$$ where $\alpha \in \bigwedge^k T^\ast_x M$ is 
a $k$-form such that
$$(e_1, \alpha) \in D_1.$$ Conversely, given a pre-multisymplectic foliation of $M$, $\mathcal{F}, \omega_F$, the natural choice for the corresponding weak higher Dirac structure would be the following:
$$D|_x := \{(e, \alpha): \iota^\ast \alpha = \flat_{\omega_F}(e), e \in E|_x \, \text{and } \ker \flat_{\omega_F} \subseteq \langle \alpha\rangle ^{\circ, 1}\}\footnote{
The reason for the last condition is to obtain a weak Lagrangian subbundle}.$$ 

Then, we obtain

\begin{theorem}
\label{Graded_Poisson_structures:thm:Multisymplecticfoliation_implies_Dirac}
Let $(\mathcal{F}, \omega_F)$ be a multisymplectic foliation of $M$ satisfying the hypothesis above, and suppose that the collection of vector subspaces defined above $D|_x$ defines a vector subbundle $$D \subseteq TM \oplus_M \bigwedge^k M.$$ Then, it is a weak higher Dirac structure whose multisymplectic foliation is the one given.
\end{theorem}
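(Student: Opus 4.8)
The plan is to run the construction of \cref{Graded_Poisson_structures:thm:Poisson_implies_Multisymplecticfoliation} backwards, now keeping track of the ambient $k$-forms $\alpha$ on $M$ and not merely of their leafwise restrictions. I would organise the verification around the defining clauses of a weak higher Dirac structure: $D$ is a subbundle (granted by hypothesis), it is weakly Lagrangian, and it is involutive with respect to the Courant bracket; and I would close by checking that the foliation and leafwise form reconstructed from $D$ reproduce $(\mathcal{F}, \omega_{\mathcal{F}})$.

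First I would identify $\operatorname{pr}_1(D)$ with the tangent distribution $E = T\mathcal{F}$. By construction a pair $(e, \alpha)$ lies in $D|_x$ only if $e \in E|_x = T_x F$, and conversely every $e \in E|_x$ admits a companion $\alpha$ with $i^\ast \alpha = \flat_{\omega_F}(e)$ and $\ker \flat_{\omega_F} \subseteq \langle \alpha \rangle^{\circ, 1}$, so $\operatorname{pr}_1(D) = E$. Since $\mathcal{F}$ is a genuine (Stefan--Sussmann) foliation, $E$ is already involutive as a distribution, so no integrability obstruction survives at the level of first components; the content of involutivity will instead sit in the form part of the Courant bracket.

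Next I would establish the weakly Lagrangian property. The annihilator identity is the more transparent half: from $(e, 0) \in D$ one reads off $D \cap TM = \ker \flat_{\omega_F}$, the inclusion $\ker \flat_{\omega_F} \subseteq (\operatorname{pr}_2(D))^{\circ, 1}$ is exactly the defining clause $\ker \flat_{\omega_F} \subseteq \langle \alpha \rangle^{\circ, 1}$ imposed on each $\alpha \in \operatorname{pr}_2(D) = S$, and the reverse inclusion follows because $S$ is large enough to separate every direction outside $\ker \flat_{\omega_F}$, using that $S$ is a subbundle and that $\flat_{\omega_F}$ is non-degenerate on a complement of its kernel. Isotropy $D \subseteq D^{\perp, 1}$ amounts to the vanishing of the $(k-1)$-form $\iota_e \beta + \iota_f \alpha$ for all $(e, \alpha), (f, \beta) \in D$; after pulling back to the leaf this is the graded skew-symmetry of $\omega_F$, since for $e, f \in T_x F$ one has $i^\ast(\iota_e \beta + \iota_f \alpha) = \iota_e \iota_f \omega_F + \iota_f \iota_e \omega_F = 0$.

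For involutivity I would reproduce almost verbatim the computation of \cref{Graded_Poisson_structures:thm:Poisson_implies_Multisymplecticfoliation}: taking sections $(X, \alpha), (Y, \beta)$ of $D$ with $X, Y$ tangent to $\mathcal{F}$ and $i^\ast \alpha = \iota_X \omega_F$, $i^\ast \beta = \iota_Y \omega_F$, the bracket $\llbracket (X, \alpha), (Y, \beta) \rrbracket$ has first component $[X, Y]$, again tangent to $\mathcal{F}$, while its form part is controlled through \cref{Graded_Poisson_structures:eq:contraction_bracket} by $\d{\omega_F} = 0$ on each leaf, which is precisely what forces the bracket back into $D$. The reconstruction of the foliation is then immediate: $\operatorname{pr}_1(D)$ returns $T\mathcal{F}$ by the first step, and the induced leafwise form $\omega|_x(e_1, \dots, e_{k+1}) = \alpha(e_2, \dots, e_{k+1})$, for $(e_1, \alpha) \in D$, equals $\omega_F$ because $i^\ast \alpha = \iota_{e_1} \omega_F$. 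I expect the genuine obstacle to be isotropy in its full strength: the forms $\alpha$ are ambient $k$-forms on $M$ and $\flat_{\omega_F}$ pins down only their restrictions to the leaves, so $\iota_e \beta + \iota_f \alpha$ must be shown to vanish as a form on all of $TM$ and not merely after restriction to $F$. Reconciling this with the annihilator identity, which instead demands that $\operatorname{pr}_2(D)$ be large in the transverse directions, is exactly the role of the clause $\ker \flat_{\omega_F} \subseteq \langle \alpha \rangle^{\circ, 1}$, and balancing these two requirements for a possibly degenerate $\omega_F$ is the delicate heart of the proof.
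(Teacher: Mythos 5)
Your plan follows the same route as the paper's (very terse) proof: verify the two halves of the weakly Lagrangian condition, deduce involutivity from the leafwise closedness of $\omega_F$, and check that the construction returns the original foliation. Your treatment of the chain $D\cap TM = \ker\flat_{\omega_F} = (\operatorname{pr}_2 D)^{\circ,1}$ and of the reconstruction of $(\mathcal{F},\omega_{\mathcal{F}})$ coincides with what the paper asserts, so those parts are fine.

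The genuine gap is precisely the point you flag and then leave open: isotropy. What you actually prove is $i^\ast(\iota_e\beta + \iota_f\alpha) = 0$, i.e.\ vanishing of the pairing after pullback to the leaf, whereas the definition of a weak higher Dirac structure requires the ambient $(k-1)$-form $\iota_e\beta + \iota_f\alpha$ to vanish on all of $T_xM$. Your hope that the clause $\ker\flat_{\omega_F}\subseteq\langle\alpha\rangle^{\circ,1}$ will ``balance'' this cannot be realized: that clause constrains only contractions with vectors of $\ker\flat_{\omega_F}$ and says nothing about the behaviour of $\alpha$ in directions transverse to the foliation, which is exactly what the ambient pairing detects. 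Worse, the ambient statement is false for $D$ as defined. Take $M=\mathbb{R}^4$ with coordinates $(x,y,z,w)$, foliated by the level sets of $w$, with leafwise volume form $\omega_F = \mathrm{d}x\wedge\mathrm{d}y\wedge\mathrm{d}z$ (so $k=2$, $\ker\flat_{\omega_F}=0$, and the kernel clause is vacuous). Then $D$ is a rank-$6$ subbundle containing both $(\partial_x,\,\mathrm{d}y\wedge\mathrm{d}z)$ and $(0,\,\mathrm{d}x\wedge\mathrm{d}w)$ --- the latter because $i^\ast(\mathrm{d}x\wedge\mathrm{d}w)=0$ --- and their pairing equals $\iota_{\partial_x}(\mathrm{d}x\wedge\mathrm{d}w) = \mathrm{d}w \neq 0$. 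So no completion of your argument (nor, for that matter, the paper's one-line claim that $D$ ``is clearly isotropic'') is possible with $D$ as it stands: isotropy holds only leafwise, and repairing the statement requires either strengthening the definition of $D$ by data beyond $(\mathcal{F},\omega_{\mathcal{F}})$ (compare the tensorial cocycle of Bursztyn--Martinez-Alba--Rubio invoked in the paper's closing remark of this subsection) or weakening isotropy to its leafwise version. A smaller instance of the same phenomenon affects your involutivity step: for the Courant bracket of two sections of $D$ to lie again in $D$, its form component $\theta$ must satisfy not only $i^\ast\theta = \iota_{[X,Y]}\omega_F$ (which your leafwise computation gives) but also the kernel clause $\iota_v\theta=0$ for all $v\in\ker\flat_{\omega_F}$, which neither your sketch nor the paper addresses.
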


\begin{proof} It is clearly isotropic. Now we only need to check that it is weakly Lagrangian or, equivalently, that
$$D \cap TM = \ker \flat_{\omega_F} = \left(\{\alpha \in \ker \flat_{\omega_F}^{\circ, 1}: i^\ast \alpha \in \Im \flat_{\omega_F}\} \right)^{\circ, 1} = (\operatorname{pr}_2 D)^{\circ, 1},$$ which is clear, as a quick argument shows. Integrability is easily obtained from closedness of each of the multisymplectic forms.
\end{proof}

\begin{remark} This results recovers as a particular case that a Dirac manifold is completely determined by its induced presymplectic foliation. As the reader might expect from the discussion of graded Poisson manifolds, this is not true for $k > 1.$ However, if a given weak higher Dirac structure $D$ satisfies
$$\ker \flat_{\omega_F} = D \cap TM\footnote{Notice that this corresponds to the fact that $\omega$ defines a non-singular multisymplectic form when $D$ defines a higher Poisson structure},$$ 
then the Dirac structure obtained through \cref{Graded_Poisson_structures:thm:Multisymplecticfoliation_implies_Dirac}, which we denote by $\widetilde D$, defines an extension of the original, in the sense that $$D \subseteq \widetilde D,$$ as one may easily check.
\end{remark}

\begin{corollary} Let $(\mathcal{F}, \omega_F)$ be a multisymplectic foliation of constant linear type. Then, there exists a graded Dirac structure on $M$ such that $(\mathcal{F}, \omega_F)$ is the induced multisymplectic foliation.
\end{corollary}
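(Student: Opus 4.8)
The plan is to obtain the graded Dirac structure in two stages: first I would build a single weak higher Dirac structure $D$ out of the foliation via \cref{Graded_Poisson_structures:thm:Multisymplecticfoliation_implies_Dirac}, and then upgrade it to a full sequence $D_1, \dots, D_k$ using the equivalence of \cref{Dirac_structures:thm:Differential_equivalence}. The whole point of the constant linear type hypothesis is to supply the two regularity (subbundle) conditions that both theorems require, and the argument will run parallel to the proof of \cref{constant-linear-type}, the only difference being that we no longer assume non-degeneracy.

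Since a foliation of constant linear type is in particular of constant dimension, it is Frobenius integrable, so around any point there are foliated coordinates in which $\omega_F$ has smooth, constant-rank coefficients, exactly as in \cref{constant-linear-type}. First I would check the hypothesis of \cref{Graded_Poisson_structures:thm:Multisymplecticfoliation_implies_Dirac}, namely that the pointwise family
$$D|_x := \{(e, \alpha): i^\ast \alpha = \flat_{\omega_F}(e),\ e \in E|_x,\ \ker \flat_{\omega_F} \subseteq \langle \alpha\rangle^{\circ,1}\}$$
assembles into a vector subbundle. Local smooth generators of $D|_x$ are produced from the coefficients of $\omega_F$ together with the coordinate covectors annihilating $E = \operatorname{pr}_1(D)$, and constancy of the fibre dimension follows by transporting $D|_{x_1}$ onto $D|_{x_2}$ through any linear isomorphism $\psi: T_{x_1}M \to T_{x_2}M$ extending the leafwise isomorphism $\phi: T_{x_1}F_1 \to T_{x_2}F_2$ of the constant linear type. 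Thus \cref{Graded_Poisson_structures:thm:Multisymplecticfoliation_implies_Dirac} produces a weak higher Dirac structure $D$ whose associated multisymplectic foliation is $(\mathcal{F}, \omega_F)$.

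It then remains to verify that $D$ lies in the range of the injection in \cref{Dirac_structures:thm:Differential_equivalence}, i.e. that
$$D_p = \langle (V \wedge U, \iota_U \alpha): U \in \bigvee_p M,\ (V, \alpha) \in D\rangle$$
is a vector subbundle for each $1 \leq p \leq k$. This is where the computation concentrates, but it is settled by the same homogeneity argument: the isomorphism $\psi$ intertwines the generators of $D_p|_{x_1}$ and $D_p|_{x_2}$ via $(V \wedge U, \iota_U \alpha) \mapsto (\psi_\ast(V \wedge U), \iota_{\psi_\ast U}(\psi^{-1})^\ast \alpha)$, whence $\dim D_p|_{x_1} = \dim D_p|_{x_2}$, and smoothness is read off the coordinate expressions. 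With all $D_p$ subbundles, \cref{Dirac_structures:thm:Differential_equivalence} upgrades $D = D_1$ to a graded Dirac structure $D_1, \dots, D_k$; since the induced multisymplectic foliation is computed only from $D_1$ (through $E = \operatorname{pr}_1(D_1)$ and the leafwise form $(\omega|_x)(e_1, \dots, e_{k+1}) = \alpha(e_2, \dots, e_{k+1})$ for $(e_1, \alpha) \in D_1$), it agrees with $(\mathcal{F}, \omega_F)$, finishing the proof.

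The main obstacle I expect is exactly this constant-rank check for the higher pieces $D_p$: one must confirm that removing the non-degeneracy assumption present in \cref{constant-linear-type} does not destroy the smooth generation, which is why the foliated coordinates and the leafwise isomorphisms of constant linear type must be used together.
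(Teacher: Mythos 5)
Your proposal is correct and follows essentially the same route as the paper, whose proof simply invokes ``a similar argument to the one presented in \cref{constant-linear-type}'': foliated coordinates plus transport by a linear isomorphism $\psi$ extending the leafwise isomorphism of the constant linear type, used to verify the subbundle hypotheses of \cref{Graded_Poisson_structures:thm:Multisymplecticfoliation_implies_Dirac} and \cref{Dirac_structures:thm:Differential_equivalence}. Your write-up just makes explicit the constant-rank checks for $D$ and for the higher pieces $D_p$ that the paper leaves implicit.
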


\begin{proof} It follows from a similar argument to the one presented in \cref{constant-linear-type}.
\end{proof}

\begin{remark} The foliated nature of these structures was already studied in \cite{Bursztyn2019}, where the authors gave a characterization of weak higher Dirac structures in terms of a foliation together with a tensorial object (a cocycle of certain differential complex). Nevertheless, our approach allows for the construction of non trivial examples, and the analogy with the classical results is readily seen.
\end{remark}

\subsection{Graded Poisson brackets}
\label{Graded_Poisson_structures:subsection:Graded_Poisson_brackets}

So far, we have studied the tensorial aspect of graded (and higher) Poisson structures. Poisson structures ($k =1$),
that is, bivector fields $\Lambda \in \mathfrak{X}^2(M)$ satisfying $[\Lambda, \Lambda] = 0$, are characterized
by the induced Poisson bracket $\{ \cdot, \cdot \}$  on the space of functions $C^\infty(M).$ This is a fundamental property because,
in terms of mechanics, it allows us to study the dynamics just using the induced algebra, {a fundamental property to study quantization of the system}.\\

In classical field theories, there have been several attempts to identify the corresponding structure that observables
(i.e. forms) have. For instance, we have $L_\infty$-algebras (see \cite{Baez2010,Ryvkin2016,Rogers2012}), by studying the algebraic
structure of Hamiltonian $(k-1)$-forms; or graded Lie algebras (see \cite{Cantrijn1996,Vankerschaver2011,Leon2024a}), by studying 
the structure of general Hamiltonian forms when quotiented by exact forms. However, these structures do not recover the corresponding multisymplectic structure 
(if they are induced by a multisymplectic form), nor they recover the graded Poisson structure (if they are induced by a graded
Poisson structure). In this section we identify the algebraic structure that a graded Poisson structure induces,
and prove that, under some integrability conditions on the vector subbundles $S^a$, it completely characterizes the geometry.\\

\begin{Def}[Hamiltonian form] Let $(S^a, K_p, \sharp_a)$ be a graded Poisson structure of order $k$ 
    on $M$. An $(a-1)$-form $ \alpha \in \Omega^{a-1}(M)$
    is called \textbf{Hamiltonian} if $\d{\alpha}$ takes values in $S^a$. Denote by $\Omega^{a-1}_H(M)$ the space of all Hamiltonian forms.
\end{Def}

\begin{Def}[Poisson bracket of Hamiltonian forms] Given $\alpha \in \Omega^{a-1}_H(M)$, $\beta \in \Omega^{b-1}_H(M)$, define
    their \textbf{Poisson bracket} as the form
    $$\{\alpha, \beta\} := (-1)^{\deg \beta}\iota_{\sharp_b(\d{\beta})} \d \alpha$$
\end{Def}


\begin{proposition} Given two Hamiltonian forms, $\alpha \in \Omega^{a-1}_H(M)$, $\beta \in \Omega^{b-1}_H(M),$ their Poisson bracket is again 
    Hamiltonian, that is, $\{\alpha, \beta\} \in \Omega_H^{a + b - (k+1)}(M)$
\end{proposition}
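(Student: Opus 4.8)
The plan is to reduce the statement to the integrability axiom (iii) of the graded Poisson structure, exploiting crucially that both $\d{\alpha}$ and $\d{\beta}$ are \emph{exact}. First I would settle the bookkeeping. Writing $p = k+1-a$ and $q = k+1-b$, the multivector $\sharp_b(\d{\beta})$ has order $q$ while $\d{\alpha}$ is an $a$-form, so $\{\alpha,\beta\} = (-1)^{\deg\beta}\iota_{\sharp_b(\d{\beta})}\d{\alpha}$ is a form of degree $a-q = a+b-(k+1)$, matching the asserted degree. The bracket is independent of the chosen representative $V$ of $\sharp_b(\d{\beta})$: using the recovery relations $K_q = (S^k)^{\circ,q}$ and $S^a = \langle \iota_W\gamma : W\in\bigvee_{k-a}M,\ \gamma\in S^k\rangle$, one checks that $\iota_{K_q}$ annihilates every $S^a$, so $\iota_{K_q}\d{\alpha}=0$. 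I will use this freely, and likewise to descend condition (ii) from the $\sharp$-classes to honest representatives.

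Next I would apply condition (iii) to the sections $\d{\alpha}$ (valued in $S^a$) and $\d{\beta}$ (valued in $S^b$), with representatives $U,V$ of $\sharp_a(\d{\alpha}),\sharp_b(\d{\beta})$. This produces the $(a+b-k)$-form
$$\theta = (-1)^{(p-1)q}\pounds_U \d{\beta} + (-1)^q\pounds_V \d{\alpha} - \frac{(-1)^q}{2}\d{\left(\iota_V \d{\alpha} + (-1)^{pq}\iota_U \d{\beta}\right)},$$
which condition (iii) guarantees takes values in $S^{a+b-k}$. The decisive observation is that, since $\d{\alpha}$ and $\d{\beta}$ are closed, the Lie-derivative terms collapse: from $\pounds_U\gamma = \d{(\iota_U\gamma)} - (-1)^p\iota_U\d{\gamma}$ we get $\pounds_U\d{\beta} = \d{(\iota_U\d{\beta})}$ and $\pounds_V\d{\alpha} = \d{(\iota_V\d{\alpha})}$. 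Hence every term of $\theta$ is exact and $\theta = \d{\eta}$ with
$$\eta = (-1)^{(p-1)q}\iota_U \d{\beta} + (-1)^q\iota_V \d{\alpha} - \frac{(-1)^q}{2}\left(\iota_V \d{\alpha} + (-1)^{pq}\iota_U \d{\beta}\right).$$

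I would then simplify $\eta$. Collecting coefficients and using $(-1)^{(p-1)q}=(-1)^{pq+q}$ gives $\eta = \tfrac{1}{2}(-1)^{(p-1)q}\iota_U\d{\beta} + \tfrac{1}{2}(-1)^q\iota_V\d{\alpha}$. Inserting the skew-symmetry relation (ii), namely $\iota_U\d{\beta} = (-1)^{pq}\iota_V\d{\alpha}$, collapses this to $\eta = (-1)^q\iota_V\d{\alpha}$. Comparing with $\{\alpha,\beta\} = (-1)^{b-1}\iota_V\d{\alpha}$ and using $(-1)^q = (-1)^{k+1-b}$ yields $\eta = (-1)^k\{\alpha,\beta\}$. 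Therefore $\d{\{\alpha,\beta\}} = (-1)^k\theta$ takes values in $S^{a+b-k}$, which is exactly the assertion that $\{\alpha,\beta\}$ is a Hamiltonian form of degree $(a+b-k)-1 = a+b-(k+1)$.

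The routine part is the sign arithmetic; the one genuinely load-bearing step is recognizing that the exactness of $\d{\alpha}$ and $\d{\beta}$ turns the integrability form $\theta$ into an exact form whose primitive is, up to sign, the Poisson bracket, so that the $S^{a+b-k}$-valuedness granted by condition (iii) transfers directly to $\d{\{\alpha,\beta\}}$. The only subtlety beyond signs is the representative-independence of the contractions, needed both to make sense of $\{\alpha,\beta\}$ and to pass relation (ii) to the chosen $U,V$; this is where the annihilator identities $K_p = (S^k)^{\circ,p}$ are used. Implicit throughout is the degree constraint placing $a+b-k$ in the admissible range $1\le a+b-k\le k$, under which $S^{a+b-k}$ and condition (iii) are defined.
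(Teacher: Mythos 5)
Your proof is correct and takes essentially the same route as the paper: apply the integrability axiom (iii) to the closed sections $d\alpha$, $d\beta$, use closedness to turn the Lie-derivative terms into exact ones so that $\theta$ becomes an exact form whose primitive is, up to sign, $\{\alpha,\beta\}$, and conclude that $d\{\alpha,\beta\}$ takes values in $S^{a+b-k}$. The only slip is immaterial: in the paper's conventions $\deg\beta = k-b$, not $b-1$, so the primitive is $\eta = -\{\alpha,\beta\}$ rather than $(-1)^k\{\alpha,\beta\}$, which changes nothing since membership in the subbundle $S^{a+b-k}$ is insensitive to an overall sign.
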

\begin{proof} Indeed, by the integrability property of the graded Poisson structure, we know that
    $$\pounds_{\sharp_a(\d{\alpha})} \d{\beta} - \iota_{\sharp_b(\d{\beta})} \d{\d{\alpha}} = \d{\iota_{\sharp_a(\d{\alpha})}} \d{\beta}$$
    takes values in $S^{a + b - k}$. Therefore, $\{\alpha, \beta\}$ is Hamiltonian.
\end{proof}

Then, we get a well defined bracket on the space of Hamiltonian forms
$$\Omega^{a-1}_H(M) \otimes \Omega^{b-1}_H(M) \rightarrow \Omega^{a + b - (k+1)}_H(M).$$

Our objective now is studying the properties of $\{\cdot, \cdot\}$. In particular, we will obtain that this bracket defines a graded Lie algebra (modulo exact forms), for a suitable notion of degree.

\begin{Def} Given a Hamiltonian $(a-1)-$form $\alpha \in \Omega^{a-1}_H(M)$, define its degree as $$\deg \alpha := k - a 
    = k - 1 - \operatorname{ord} \alpha.$$ 
\end{Def}

\begin{proposition}
    \label{Graded_Poisson_bracket:prop:Hamiltonian_multivector_of_bracket}
    Let $\alpha, \beta$ be arbitrary Hamiltonian forms. If
    $$\sharp_a(\d{\alpha}) = U + K_{k+1-a}, \, \sharp_b(\d{\beta}) = V + K_{k+1-b},$$ then
    $$\sharp_{a + b - (k+1)} (\d{\{\alpha, \beta\}}) = -[U,V ] + K_{2k +1 - (a  + b)}$$
\end{proposition}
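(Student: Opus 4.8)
The plan is to recognize $\d{\{\alpha,\beta\}}$, up to sign, as the exact form $\theta$ that appears in the integrability axiom (condition (iii)) of the graded Poisson structure, and then to read off the value of $\sharp$ on it directly from that axiom. Since $\alpha$ and $\beta$ are Hamiltonian, the forms $\d{\alpha}$ and $\d{\beta}$ take values in $S^a$ and $S^b$, so they are exactly the sections to which condition (iii) applies, with $\sharp_a(\d{\alpha}) = U + K_p$ and $\sharp_b(\d{\beta}) = V + K_q$, where $p = k+1-a$ and $q = k+1-b$. Note that $\{\alpha,\beta\}$ is an $(a+b-k-1)$-form, so $\d{\{\alpha,\beta\}}$ is an $(a+b-k)$-form taking values in $S^{a+b-k}$, and the relevant contraction map is $\sharp_{a+b-k}$. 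We work in the (non-vacuous) range $k+1 \leq a+b \leq 2k+1$, which guarantees that $\{\alpha,\beta\}$ has non-negative degree and that $K_q$ annihilates $S^a$ (resp. $K_p$ annihilates $S^b$), so that every contraction below is independent of the chosen representatives $U,V$.

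First I would write out the form supplied by the integrability axiom for $\d{\alpha}$, $\d{\beta}$, namely
$$\theta = (-1)^{(p-1)q}\pounds_U \d{\beta} + (-1)^{q}\pounds_V \d{\alpha} - \frac{(-1)^q}{2}\,\d{\left(\iota_V \d{\alpha} + (-1)^{pq}\iota_U \d{\beta}\right)}.$$
Using $\d{\d{\alpha}} = \d{\d{\beta}} = 0$ together with $\pounds_U \gamma = \d{\iota_U \gamma} - (-1)^{p}\iota_U \d{\gamma}$, the Lie derivatives collapse to $\pounds_U \d{\beta} = \d{\iota_U \d{\beta}}$ and $\pounds_V \d{\alpha} = \d{\iota_V \d{\alpha}}$. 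This exhibits $\theta$ as $\d{\Theta}$, where $\Theta$ is an explicit linear combination of the two contractions $\iota_U \d{\beta}$ and $\iota_V \d{\alpha}$; collecting coefficients (and using $(-1)^{(p-1)q} = (-1)^{pq+q}$) gives $\Theta = \tfrac{1}{2}(-1)^{pq+q}\,\iota_U \d{\beta} + \tfrac{1}{2}(-1)^q\,\iota_V \d{\alpha}$.

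The main step is then to collapse $\Theta$ to a single term. Applying the skew-symmetry axiom (condition (ii)) to $\d{\alpha} \in S^a$ and $\d{\beta} \in S^b$ yields $\iota_U \d{\beta} = (-1)^{pq}\iota_V \d{\alpha}$; substituting this into $\Theta$ and simplifying the signs produces $\Theta = (-1)^q \iota_V \d{\alpha}$. On the other hand, $\deg\beta = k-b = q-1$, so by definition $\{\alpha,\beta\} = (-1)^{q-1}\iota_V \d{\alpha}$, whence $\Theta = -\{\alpha,\beta\}$ and therefore $\theta = -\,\d{\{\alpha,\beta\}}$. Finally I would invoke the second half of condition (iii), namely $\sharp_{a+b-k}(\theta) = [U,V] + K_{p+q-1}$; since $p+q-1 = 2k+1-(a+b)$ and $\sharp_{a+b-k}$ is linear, this gives $\sharp_{a+b-k}(\d{\{\alpha,\beta\}}) = -[U,V] + K_{2k+1-(a+b)}$, which is the claim.

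The only genuine obstacle is the sign bookkeeping in the middle step: one must verify that the coefficient of $\iota_U \d{\beta}$ in $\Theta$ is exactly $\tfrac{1}{2}(-1)^{pq+q}$ and that, after the skew-symmetry substitution, the factor $(-1)^{pq}$ combines with it so that it recombines coherently with the $\tfrac{1}{2}(-1)^q$ coefficient of $\iota_V \d{\alpha}$ to produce the clean overall factor $(-1)^q$. Everything else is a direct application of the two defining axioms, and the well-definedness of all contractions modulo the $K_p$'s is secured by the annihilator relation in condition (i) within the stated degree range.
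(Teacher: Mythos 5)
Your proof is correct and follows essentially the same route as the paper's: apply the integrability axiom (condition (iii)) to $\d{\alpha}$, $\d{\beta}$, use $\d{\d{\alpha}}=\d{\d{\beta}}=0$ together with skew-symmetry to collapse $\theta$ to $(-1)^q\,\d{\iota_V \d{\alpha}} = -\d{\{\alpha,\beta\}}$, and then read off the conclusion from $\sharp_{a+b-k}(\theta)=[U,V]+K_{p+q-1}$. Your sign bookkeeping is in fact more careful than the paper's own computation (which contains a couple of typographical slips, writing $(-1)^{(p-1)}$ for $(-1)^{(p-1)q}$ and $-\{\alpha,\beta\}$ for $-\d{\{\alpha,\beta\}}$), and your correction of the subscript to $\sharp_{a+b-k}$ matches the degree of $\d{\{\alpha,\beta\}}$ as required by the integrability axiom.
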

\begin{proof} If follows easily from the integrability of the Poisson structure. Indeed, defining 
    $p:= k +1 -a, q := k +1 - b$, we know that the form
    \begin{align*}
        \theta &:= (-1)^{(p-1)q}\pounds_U \d{\beta} + (-1)^{q} \pounds_V \d{\alpha}
         - \frac{(-1)^q}{2} \d{ \left( \iota_V\d{\alpha} + (-1)^{pq} \iota_U \d{\beta}\right)}\\
        &= (-1)^{(p-1)} \pounds_U \d{\beta} = (-1)^{(p-1)q} \d{ \iota_U} \d{\beta}\\
        &= (-1)^q \d{\iota}_{\sharp_a(\d{\alpha})} \d{\beta} = (-1)^q\d{\iota}_{\sharp_b(\d{\beta})} \d{\alpha} = - \{\alpha, \beta\}
    \end{align*}
    satisfies
    $$\sharp_{a + b - (k+1)}(\theta) = [U, V] + K_{p+q-1},$$
    which finishes the proof.
\end{proof}

\begin{lemma}
    \label{Graded_Poisson_Brackets:lemma:Jacobi}
    Let $\alpha \in \Omega^{a-1}_H(M) , \beta \in \Omega^{b-1}_H(M), \gamma \in \Omega^{c-1}_H(M)$ be Hamiltonian forms. Then (omitting
    the indices of $\sharp_a$), 
    $$(-1)^{(\deg \gamma - 1) \deg \alpha} \iota_{\sharp(\d{\gamma})} \d{\iota}_{\sharp(\d{\beta})} \d{\alpha} + \text{clyc.} = 
    (-1)^{(\deg \alpha + \deg \gamma) (\deg \beta + 1)} \d{\iota_{\sharp(\d{\alpha})}} \iota_{\sharp(\d{\beta})} \d \gamma.$$ 
\end{lemma}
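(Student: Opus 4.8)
The plan is to deduce the whole identity from a single manipulation of $\iota_{[X,Y]}\d{\gamma}$, computed in two different ways. Write $X:=\sharp(\d{\alpha})$, $Y:=\sharp(\d{\beta})$, $Z:=\sharp(\d{\gamma})$ for the Hamiltonian multivectors, of orders $p=k+1-a$, $q=k+1-b$, $r=k+1-c$ (defined modulo $K_p,K_q,K_r$), so that $\deg\alpha=p-1$, $\deg\beta=q-1$, $\deg\gamma=r-1$. First I would record the ingredients from the axioms that get used repeatedly: the skew-symmetry of $\sharp$ in the form $\iota_{\sharp(\d{\beta})}\d{\gamma}=(-1)^{(\deg\beta+1)(\deg\gamma+1)}\iota_{\sharp(\d{\gamma})}\d{\beta}$; the relation $\d{\iota_{\sharp(\d{\beta})}\d{\alpha}}=\pounds_{\sharp(\d{\beta})}\d{\alpha}$ (because $\d{\d{\alpha}}=0$); the graded Cartan formula $\pounds_U=\d{}\circ\iota_U-(-1)^{|U|}\iota_U\circ\d{}$; the commutation identity $\iota_{[U,V]}\omega=(-1)^{(|U|-1)|V|}\pounds_U\iota_V\omega-\iota_V\pounds_U\omega$; and, most importantly, the integrability conclusion of \cref{Graded_Poisson_bracket:prop:Hamiltonian_multivector_of_bracket}, namely $\sharp(\d{\{\alpha,\beta\}})=-[X,Y]$ modulo $K_{p+q-1}$.

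Second, I would expand $\iota_{[X,Y]}\d{\gamma}$ along the Cartan side. Applying the commutation identity with $\omega=\d{\gamma}$ and then $\pounds_U=\d{}\circ\iota_U-(-1)^{|U|}\iota_U\circ\d{}$ (noting $\pounds_X\d{\gamma}=\d{\iota_X\d{\gamma}}$ since $\d{\d{\gamma}}=0$) splits it into a single exact term $\d{\iota_X\iota_Y\d{\gamma}}$ and the two nested terms $\iota_X\d{\iota_Y\d{\gamma}}$ and $\iota_Y\d{\iota_X\d{\gamma}}$, each with an explicit sign. On the algebraic side, $[X,Y]=-\sharp(\d{\{\alpha,\beta\}})$ modulo $K_{p+q-1}$, combined with the definition of the bracket and the skew-symmetry of $\sharp$, shows that $\iota_{[X,Y]}\d{\gamma}$ equals, up to a definite sign, the first cyclic summand $\iota_Z\d{\iota_Y\d{\alpha}}$ of the left-hand side (indeed $\d{\{\alpha,\beta\}}=(-1)^{\deg\beta}\d{\iota_Y\d{\alpha}}$, and one contraction by $\sharp(\d{\gamma})$ reproduces that summand).

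Third, I would recognise the two nested Cartan terms as the remaining two cyclic summands. The term $\iota_Y\d{\iota_X\d{\gamma}}$ is literally the third summand $\iota_{\sharp(\d{\beta})}\d{\iota_{\sharp(\d{\alpha})}\d{\gamma}}$. The term $\iota_X\d{\iota_Y\d{\gamma}}$ becomes the second summand $\iota_X\d{\iota_Z\d{\beta}}$ after a single use of the skew-symmetry rewriting $\iota_Y\d{\gamma}=(-1)^{(\deg\beta+1)(\deg\gamma+1)}\iota_Z\d{\beta}$ under the differential. Substituting these three identifications into the equality (Cartan side) $=$ (algebraic side), and moving the three nested double-bracket terms to one side and the exact term to the other, produces exactly the stated identity once the collected prefactors are matched. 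Note that no separate cyclic summation is needed: the cyclic structure emerges automatically, because the two nested terms already supply the other two double brackets.

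The principal obstacle is the sign bookkeeping: each step introduces factors of the form $(-1)^{(p-1)q}$, $(-1)^{pq}$ and parities of $\deg\alpha,\deg\beta,\deg\gamma$, and these must collapse precisely to the exponents $(\deg\gamma-1)\deg\alpha$ and $(\deg\alpha+\deg\gamma)(\deg\beta+1)$ displayed in the statement; I would carry this out by substituting $p=\deg\alpha+1$, $q=\deg\beta+1$, $r=\deg\gamma+1$ and reducing modulo $2$. A secondary point is well-definedness: $[X,Y]$ is only defined modulo $K_{p+q-1}$, but under the degree constraints for which all the brackets in the statement make sense, the annihilator axiom gives $K_{p+q-1}=(S^{c})^{\circ,\,p+q-1}$, which annihilates $\d{\gamma}\in S^{c}$; hence $\iota_{[X,Y]}\d{\gamma}$, and likewise every contraction appearing above, is independent of the chosen multivector representatives.
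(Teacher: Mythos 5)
Your proposal is correct and follows essentially the same route as the paper: the paper likewise converts the first cyclic summand via skew-symmetry of $\sharp$ and \cref{Graded_Poisson_bracket:prop:Hamiltonian_multivector_of_bracket} into $\iota_{[\sharp(\d{\alpha}),\sharp(\d{\beta})]}\d{\gamma}$, expands this with the commutation identity and Cartan's formula into an exact term plus two nested terms, and identifies the latter with the remaining cyclic summands after one $\beta\leftrightarrow\gamma$ swap by skew-symmetry. Your reorganization (computing $\iota_{[X,Y]}\d{\gamma}$ two ways) and your explicit remark on well-definedness modulo $K_{p+q-1}$ are only cosmetic differences from the paper's argument.
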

\begin{proof}Denote by $p, q, r$ the order of the multivector fields $\sharp(\d{\alpha}), \sharp(\d{\beta}), \sharp(\d{\gamma}),$
    respectively. We have $$p = \deg \alpha + 1, q = \deg \beta + 1, r = \deg \gamma + 1.$$
    Now,
    \begin{align*}
        (-1)^{(\deg\gamma - 1) \deg \alpha}\iota_{\sharp(\d{\gamma})} \d{\iota}_{\sharp(\d{\beta})} d \alpha& = 
        (-1)^{r (p - 1)}\iota_{\sharp(\d{\gamma})} \d{\iota}_{\sharp(\d{\beta})} d \alpha = \\
        &(-1)^{r (p - 1) + r(p +q - 1)} \iota_{\sharp(\d{\iota}_{\sharp(\d{\beta})} d \alpha)} \d \gamma.
    \end{align*}
    From \cref{Graded_Poisson_bracket:prop:Hamiltonian_multivector_of_bracket},
    $$\sharp(\d{\iota}_{\sharp(\d{\beta})} d \alpha) = (-1)^q [\sharp(\d{\alpha}), \sharp(\d{\beta})],$$ and therefore
    \begin{align*}
        &(-1)^{r (p - 1) + r(p +q - 1)} \iota_{\sharp(\d{\iota}_{\sharp(\d{\beta})} d \alpha)} \d \gamma=
        (-1)^{rq +q}\iota_{ [\sharp(\d{\alpha}), \sharp(\d{\beta})]} \d \gamma \\
        &= (-1)^{rq +q} \left ( (-1)^{(p-1)q} \pounds_{\sharp(\d \alpha)} \iota_{\sharp(\d \beta)} \d \gamma 
        - \iota_{\sharp(\d \beta)} \pounds_{\sharp(\d{\alpha})} \d \gamma\right )\\
        & = (-1)^{rq + pq} \d{\iota_{\sharp(\d \alpha)}} \iota_{\sharp(\d \beta)} \d \gamma - 
        (-1)^{rq + pq+ p} {\iota_{\sharp(\d \alpha)}} \d{\iota}_{\sharp(\d \beta)} \d \gamma 
        - (-1)^{rq + q} {\iota_{\sharp(\d \beta)}} \d{\iota}_{\sharp(\d \alpha)} \d \gamma.
    \end{align*}
    Interchange $\beta$ and $\gamma$ using the properties of $\sharp$ to get
    \begin{align*}
        &(-1)^{(\deg\gamma - 1) \deg \alpha}\iota_{\sharp(\d{\gamma})} \d{\iota}_{\sharp(\d{\beta})} d \alpha\\
        &= (-1)^{rq + pq} \d{\iota_{\sharp(\d \alpha)}} \iota_{\sharp(\d \beta)} \d \gamma - 
        (-1)^{ pq+ p} {\iota_{\sharp(\d \alpha)}} \d{\iota}_{\sharp(\d \gamma)} \d \beta 
        - (-1)^{rq + q} {\iota_{\sharp(\d \beta)}} \d{\iota}_{\sharp(\d \alpha)} \d \gamma\\
        &= (-1)^{(\deg \alpha + \deg \gamma)(\deg \beta + 1)}\d{\iota_{\sharp(\d \alpha)}} \iota_{\sharp(\d \beta)} \d \gamma -
        (-1)^{ (\deg \alpha - 1) \deg \beta} {\iota_{\sharp(\d \alpha)}} \d{\iota}_{\sharp(\d \gamma)} \d \beta\\
        & - (-1)^{(\deg \beta - 1) \deg \gamma} {\iota_{\sharp(\d \beta)}} \d{\iota}_{\sharp(\d \alpha)} \d \gamma ,
    \end{align*}
    proving the equality.
 \end{proof}

The main properties of the Poisson bracket $\{ \cdot, \cdot \}$ are the following

\begin{theorem}[Properties of the Poisson bracket] 
    \label{Graded_Poisson_structures:thm:Properties_Poisson_Bracket}
    Let $(S^a, K_p, \sharp_a)$ be a graded Poisson structure on a manifold
    $M$. Then, for arbitrary $\alpha \in \Omega^{a-1}_H(M), \beta \in \Omega^{b-1}_H(M), \gamma \in \Omega^{c-1}_H(M)$, 
    the Poisson bracket $\{\cdot, \cdot\}$ satisfies
    \begin{enumerate}[i)]
        \item \textit{It is graded:} $$\deg \{\alpha, \beta\} = \deg \alpha + \deg \beta;$$
        \item \textit{It is graded-skew-symmetric:} $$\{\alpha, \beta\} = -(-1)^{\deg \alpha \deg \beta} \{\beta,\alpha\};$$
        \item \textit{It is local:} If $\d{\alpha} |_x = 0,$ $\{\alpha, \beta\}|_x = 0$
        \item \textit{It satisfies Leibniz identity:} For $a = k$, if $\beta \wedge \d{\gamma} \in \Omega^{b + c - 1}_H(M),$ then 
        $$\{ \beta \wedge \d{\gamma}, \alpha\} = \{\beta, \alpha\} \wedge \d{\gamma} + (-1)^{k -\deg \beta} \d{\beta} \wedge \{\gamma, \alpha\};$$
        \item \textit{It is invariant by symmetries:} If $X \in \mathfrak{X}(M)$ and $\pounds_X \alpha = 0,$ then 
        $\iota_X \alpha \in \Omega^{a-2}_H(M)$ and
        $$\{\iota_X \alpha, \beta\}=  (-1)^{\deg \beta} \iota_X \{\alpha, \beta\};$$
        \item \textit{It satisfies graded Jacobi identity (up to an exact term):} 
        $$(-1)^{\deg \alpha \deg \gamma}\{ \{\alpha, \beta \} , \gamma\} + \text{cyclic terms} = \text{exact form}.$$
    \end{enumerate}
\end{theorem}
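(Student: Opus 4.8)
The plan is to treat the six properties essentially independently, since each reduces to a manipulation of the defining formula $\{\alpha,\beta\} = (-1)^{\deg\beta}\iota_{\sharp_b(\d\beta)}\d\alpha$ together with one structural input: the degree count showing $\{\alpha,\beta\}\in\Omega^{a+b-(k+1)}_H(M)$, the skew-symmetry of the maps $\sharp_a$, Cartan's magic formula, the derivation property of interior products, and---for the last and only substantial item---the algebraic identity already isolated in \cref{Graded_Poisson_Brackets:lemma:Jacobi}. A recurring simplification I would use throughout is the translation $k+1-a = 1+\deg\alpha$ between form-order bookkeeping and degree bookkeeping, so that $\sharp_a(\d\alpha)$ is a multivector of order $1+\deg\alpha$.

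First I would dispatch i)--iii). Property i) is the arithmetic identity $k-(a+b-k) = (k-a)+(k-b)$, using that the bracket is an $(a+b-(k+1))$-form. Property ii) follows by applying the skew-symmetry $\iota_{\sharp_a(\d\alpha)}\d\beta = (-1)^{(k+1-a)(k+1-b)}\iota_{\sharp_b(\d\beta)}\d\alpha$ to rewrite $\{\beta,\alpha\}$ in terms of $\iota_{\sharp_b(\d\beta)}\d\alpha$; expanding $(k+1-a)(k+1-b) = (1+\deg\alpha)(1+\deg\beta)$ and collecting signs yields exactly $-(-1)^{\deg\alpha\deg\beta}$. Property iii) is immediate, since the formula for $\{\alpha,\beta\}$ contracts $\d\alpha$.

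Then I would handle iv) and v). For iv), the hypothesis $a=k$ makes $X := \sharp_k(\d\alpha)$ a genuine vector field and $\deg\alpha=0$, so $\{\,\cdot\,,\alpha\} = \iota_X\,\d(\,\cdot\,)$; since $\d(\beta\wedge\d\gamma) = \d\beta\wedge\d\gamma$ and $\iota_X$ is a degree-$(-1)$ derivation of the wedge, the Leibniz rule drops out with the sign $(-1)^{b} = (-1)^{k-\deg\beta}$. For v), I would first show $\iota_X\alpha$ is Hamiltonian: Cartan's formula and $\pounds_X\alpha=0$ give $\d\iota_X\alpha = -\iota_X\d\alpha$, and the description $S^{a-1} = \langle\iota_U\omega : U\in\bigvee_{k-(a-1)}M,\ \omega\in S^k\rangle$ forces $\iota_X\d\alpha$ to take values in $S^{a-1}$ (contracting a generator $\iota_U\omega\in S^a$ by a vector field produces $\iota_{U\wedge X}\omega\in S^{a-1}$). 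The bracket identity then reduces to commuting $\iota_X$ past $\iota_{\sharp_b(\d\beta)}$, which for a multivector of order $1+\deg\beta$ produces the compensating sign matching the claimed $(-1)^{\deg\beta}$.

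The heart of the theorem, and the place where I expect the real bookkeeping effort, is vi). Here I would expand $\{\{\alpha,\beta\},\gamma\} = (-1)^{\deg\beta+\deg\gamma}\iota_{\sharp(\d\gamma)}\d\iota_{\sharp(\d\beta)}\d\alpha$ and compare with the left-hand side of \cref{Graded_Poisson_Brackets:lemma:Jacobi}. The key observation is that the sign relating the term $(-1)^{\deg\alpha\deg\gamma}\{\{\alpha,\beta\},\gamma\}$ to the Lemma's term $(-1)^{(\deg\gamma-1)\deg\alpha}\iota_{\sharp(\d\gamma)}\d\iota_{\sharp(\d\beta)}\d\alpha$ is $(-1)^{\deg\alpha+\deg\beta+\deg\gamma}$, a factor invariant under cyclic permutation; I would verify this by a direct exponent subtraction. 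Consequently the whole cyclic sum equals $(-1)^{\deg\alpha+\deg\beta+\deg\gamma}$ times the left-hand side of the Lemma, which the Lemma equates to the exact form $(-1)^{(\deg\alpha+\deg\gamma)(\deg\beta+1)}\d\iota_{\sharp(\d\alpha)}\iota_{\sharp(\d\beta)}\d\gamma$. The main obstacle is purely the sign arithmetic: one must confirm that the per-term signs coming from the definition of the bracket, from $\d$ passing through the contractions, and from the Lemma's conventions, all conspire into a single cyclically-invariant global sign, since any mismatch would spoil the reduction. This verification is where I would be most careful.
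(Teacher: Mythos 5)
Your proposal is correct and follows essentially the same route as the paper's own proof: i)--iii) by direct degree and sign arithmetic from the defining formula, iv) by using that for $a=k$ the multivector $\sharp_k(\d{\alpha})$ is a vector field acting as an antiderivation on $\d{\beta}\wedge\d{\gamma}$, v) via Cartan's formula and the commutation sign $(-1)^{1+\deg\beta}$ between $\iota_X$ and $\iota_{\sharp_b(\d{\beta})}$, and vi) by invoking \cref{Graded_Poisson_Brackets:lemma:Jacobi} and noting that the discrepancy factor $(-1)^{\deg\alpha+\deg\beta+\deg\gamma}$ between the Lemma's terms and the Jacobi terms is cyclically invariant, so the cyclic sum is exact. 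The only (cosmetic) difference is that you also verify $\iota_X\alpha\in\Omega^{a-2}_H(M)$ in v) using the description of $S^{a-1}$ as contractions of $S^k$, a point the paper's proof leaves implicit.
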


\begin{proof}
    \begin{enumerate}[i)]
        \item Indeed, $\{\alpha, \beta\}$ is a $((a + b ) - (k+1))-$form and so,
        $$\deg \{\alpha, \beta\} = 2 k - a - b = \deg \alpha + \deg \beta.$$
        \item 
        \begin{align*}
            \{\alpha, \beta\} &= (-1)^{\deg \beta} \iota_{\sharp_b(\d{\beta})}
            \d\alpha = (-1)^{\deg \beta + (\deg \alpha +1) (\deg \beta + 1)} \iota_{\sharp_a(\d{\alpha})}\d\beta\\
            &= (-1)^{\deg \alpha \deg \beta + \deg \alpha  + 1} = - (-1)^{\deg \alpha \deg \beta}\{\beta , \alpha\}.
        \end{align*}
        \item It is immediate.
        \item 
        \begin{align*}
            \{\beta \wedge \d{ \gamma}, \alpha\}& = (-1)^{\deg \alpha} \iota_{\sharp_k(\d{\alpha})}
            \left( \d{\beta} \wedge \d{\gamma}\right) \\ &= (-1)^{\deg \alpha} \iota_{\sharp_k(\d{\alpha})} \d{\beta} \wedge \d{\gamma}
            + (-1)^{\deg  \alpha + b}\d{\beta} \wedge\iota_{\sharp_k(\d{\alpha})}\d{\gamma} \\
            &=  \{\beta, \alpha\} \wedge \d{\gamma} + (-1)^{b} \d{\beta} \wedge \{\gamma, \alpha\}\\
            &= \{\beta, \alpha\} \wedge \d{\gamma} + (-1)^{k - \deg \beta} \d{\beta} \wedge \{\gamma, \alpha\}
        \end{align*}
        \item
        \begin{align*}
            \{\iota_X \alpha, \beta\} &= (-1)^{\deg \beta} \iota_{\sharp_b(\d{\beta})} \d{\left(\iota_X \alpha \right)} \\
            &= (-1)^{\deg \beta + 1} \iota_{\sharp_b( \d {\beta})} \iota_X \d{\alpha} 
            = \iota_X \iota_{\sharp_b( \d {\beta})}\d{\alpha}  \\
            &= (-1)^{\deg \beta} \iota_X\{\alpha, \beta\}
        \end{align*}
        \item Using \cref{Graded_Poisson_Brackets:lemma:Jacobi} we get
        $$(-1)^{\deg \gamma \deg \alpha + \deg \alpha + \deg \beta + \deg \gamma} \{\{\alpha, \beta\}, \gamma\} + \text{cylc.} = \text{exact form},$$
        and we obtain graded Jacobi identity once we multiply by $(-1)^{\deg \alpha + \deg \beta + \deg \gamma},$ which is a cyclic term.
    \end{enumerate}
\end{proof}
\begin{remark} A different approach for graded Poisson brackets is taken in \cite{Grabowski1997}, where J. Grabowski defines an extension of the Poisson bracket in a Poisson manifold to arbitrary differential forms, having the following graded nature $$\order \{\alpha, \beta\} = \order \alpha + \order \beta$$. In our case, the graded nature is modified to $$\order\{\alpha, \beta\} = \order \alpha + \order \beta - (k-1).$$
\end{remark}
\begin{corollary}
\label{Corollary:oriented_leaves_algebra}
Any manifold foliated by orientable $(k+1)-$dimensional leaves, together with a smoothly varying family of volume forms admits an algebraic invariant, namely, the graded Poisson bracket associated with the graded Poisson structure obtained through \cref{corollary:volume_forms}.
\end{corollary}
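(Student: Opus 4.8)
The plan is to chain together the two principal results of this section. First I would verify that the given geometric data fits the hypotheses of \cref{corollary:volume_forms}. A volume form $\omega_F$ on a $(k+1)$-dimensional leaf $F$ is automatically closed, since $\d{\omega_F}$ would be a $(k+2)$-form on a $(k+1)$-dimensional manifold and hence vanishes; moreover it is non-degenerate, as the contraction $v \mapsto \iota_v \omega_F$ is injective. Thus the prescribed smoothly varying family of leafwise volume forms defines a non-degenerate multisymplectic foliation. Since any two volume forms on vector spaces of the same dimension are related by a linear isomorphism, this foliation is of \textbf{constant linear type}.

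Granting this, \cref{corollary:volume_forms} (which rests in turn on \cref{constant-linear-type}) produces a graded Poisson structure $(S^a, K_{k+1-a}, \sharp_a)$ on $M$ whose induced multisymplectic foliation is precisely the given one. This settles the geometric content of the statement, and it is the only place where the hypotheses on the leaves are actually used.

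For the algebraic content I would then apply the constructions of \cref{Graded_Poisson_structures:subsection:Graded_Poisson_brackets}: to the graded Poisson structure just obtained, the Poisson bracket $\{\cdot,\cdot\}$ of Hamiltonian forms is canonically associated. By \cref{Graded_Poisson_structures:thm:Properties_Poisson_Bracket} this bracket is graded, graded-skew-symmetric, local, and satisfies the graded Jacobi identity modulo exact forms; passing to the quotient by exact forms therefore yields a graded Lie algebra intrinsically attached to the foliation together with its family of volume forms. This is the asserted algebraic invariant. Since every substantive verification has already been discharged in the earlier corollaries and in the properties theorem, there is no real obstacle remaining; the only step that genuinely requires attention is the first one, namely confirming that a smoothly varying family of leafwise volume forms meets the non-degeneracy and constant-linear-type conditions so that \cref{corollary:volume_forms} applies verbatim.
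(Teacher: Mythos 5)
Your proposal is correct and follows exactly the route the paper intends: the corollary is stated as an immediate consequence of \cref{corollary:volume_forms} (whose own proof is just the reduction to \cref{constant-linear-type}) combined with the bracket construction and \cref{Graded_Poisson_structures:thm:Properties_Poisson_Bracket}. Your explicit verification that leafwise volume forms are automatically closed, non-degenerate, and of constant linear type is precisely the content the paper leaves implicit, so there is nothing to object to.
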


 
\begin{obs}
    The previous corollary is a generalization of a result proved by M. Zambon in \cite{Zambon2012}, associating to any compact orientable manifold an $L_\infty$-algebra. This $L_\infty$ algebra is the restriction of the Poisson algebra to $(\dim M - 2)-$forms induced by the non-degenerate multisymplectic structure defined by any volume form orienting the manifold (two such forms induce the same algebra up to isomorphism). Notice that the assumption on the existence of a family of volume forms varying smoothly is fundamental. Indeed, in general, orientable leaves may not admit a continuously varying orientation, e.g. the Möbius band foliated by lines.
\end{obs}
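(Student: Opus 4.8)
The plan is to specialize \cref{Corollary:oriented_leaves_algebra} to a foliation with a single leaf and then to match the resulting bracket, degree by degree, with the one Zambon attaches to a volume form. Write $n = \dim M$ and fix a volume form $\omega \in \Omega^n(M)$, which exists because $M$ is compact and orientable. First I would view $(M, \omega)$ as the trivial multisymplectic foliation $\mathcal{F} = \{M\}$ whose unique leaf is $M$ itself, of dimension $n = k+1$ with $k = n-1$; the smoothly varying family of volume forms is then just $\omega$. This foliation is non-degenerate and trivially of constant linear type, so \cref{constant-linear-type} (equivalently \cref{corollary:volume_forms}) produces a graded Poisson structure $(S^a, K_p, \sharp_a)$. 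Since $\omega$ is nowhere zero, $\flat_\omega \colon TM \to \bigwedge^k M$ is an isomorphism, whence $S^k = \bigwedge^k M$, $K_1 = 0$, and $\sharp_k = \flat_\omega^{-1}$.

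Next I would identify the bracket. As $S^k = \bigwedge^k M$ is full, every $(k-1) = (n-2)$-form is Hamiltonian and has degree $0$. For two such forms $\alpha, \beta$ the bracket of \cref{Graded_Poisson_structures:subsection:Graded_Poisson_brackets} becomes $\{\alpha, \beta\} = \iota_{\sharp_k(\d{\beta})} \d{\alpha} = \iota_{X_\beta} \iota_{X_\alpha} \omega$, where $X_\alpha, X_\beta$ are the Hamiltonian vector fields determined by $\iota_{X_\alpha} \omega = \d{\alpha}$ and $\iota_{X_\beta} \omega = \d{\beta}$. This is exactly the binary bracket on Hamiltonian $(k-1)$-forms carried by the non-degenerate multisymplectic manifold $(M, \omega)$, i.e.\ the $l_2$ of the observables $L_\infty$-algebra that Zambon associates to $\omega$. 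By \cref{Graded_Poisson_structures:thm:Properties_Poisson_Bracket} this bracket is graded-skew and satisfies the Jacobi identity up to an exact term, which is precisely the homotopy-Jacobi relation encoded in that $L_\infty$-structure; restricting our graded Poisson algebra to the $(n-2)$-form piece therefore reproduces the binary bracket of Zambon's algebra together with the up-to-exact relation that governs it. The one delicate point is bookkeeping the signs so that our conventions and Zambon's coincide on the nose.

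Then I would justify the ``up to isomorphism'' clause. Given two volume forms $\omega_1, \omega_2$ orienting $M$, after rescaling by a positive constant I may assume they have equal total volume, and Moser's theorem yields a diffeomorphism $\phi$ with $\phi^\ast \omega_2 = \omega_1$. Pullback of forms and of vector fields intertwines the two constructions, $\phi^\ast X^{(2)}_\alpha = X^{(1)}_{\phi^\ast \alpha}$, and hence $\phi^\ast \{\alpha, \beta\}_{\omega_2} = \{\phi^\ast \alpha, \phi^\ast \beta\}_{\omega_1}$, so $\phi^\ast$ is an isomorphism of the two graded Poisson algebras; a global constant rescaling of $\omega$ only rescales $\sharp_k$, hence the bracket, by a scalar, which is again an isomorphism. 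This gives independence of the algebra on the chosen volume form up to isomorphism.

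Finally, to see that the smoothness hypothesis is indispensable I would spell out the Möbius band foliated by the lines transverse to its core circle. Each leaf is an orientable $1$-manifold, yet a smooth (even merely continuous) leafwise volume, i.e.\ a nowhere-vanishing leafwise top-form, would furnish a continuous orientation of the associated line bundle over $S^1$, contradicting its non-orientability; hence no smoothly varying family of volume forms exists and \cref{Corollary:oriented_leaves_algebra} cannot be applied, which shows the hypothesis cannot be dropped. The only genuine obstacle in the whole argument is the sign matching of the second paragraph; the remaining steps are direct specializations of results already established (\cref{constant-linear-type}, \cref{Graded_Poisson_structures:thm:Poisson_implies_Multisymplecticfoliation}, \cref{Graded_Poisson_structures:thm:Properties_Poisson_Bracket}).
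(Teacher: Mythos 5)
The paper states this as an unproved observation, so there is no written proof to diverge from: your elaboration --- specializing \cref{Corollary:oriented_leaves_algebra} to the single-leaf foliation $\mathcal{F}=\{M\}$ with $k=\dim M-1$ (where non-degeneracy and constant linear type are automatic for a volume form, giving $S^k=\bigwedge^k M$ and $\sharp_k=\flat_\omega^{-1}$), identifying the resulting bracket $\{\alpha,\beta\}=\iota_{X_\beta}\iota_{X_\alpha}\omega$ on $(n-2)$-forms with the binary bracket of Zambon's $L_\infty$-algebra, invoking Moser's theorem together with constant rescaling (which, via $c=-1$, also covers pairs of volume forms inducing opposite orientations) for independence of the chosen volume form, and observing that a smooth leafwise volume form on the Möbius band foliated by fibers would trivialize the non-orientable vertical line bundle --- is precisely the argument the remark intends. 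Your proposal is correct and takes essentially the same approach as the paper.
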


Now, we can ask whether a bracket satisfying the previous properties characterizes the graded Poisson structure.\\

The naive approach would be the following:
\begin{enumerate}[i)]
    \item
    \label{Conditions_on_S_a}
    Fix a sequence of vector subbundles of forms $S^a \subseteq \bigwedge^a M,$ for $1 \leq a \leq k$ 
    satisfying $$(S^a)^{\circ, p} = (S^b)^{\circ, p}, \, (S^k)^{\circ, 1} = 0,$$ for $a, b \geq p.$
    \item Define the set of Hamiltonian forms as $$\Omega^{a}_H(M): = \left \langle \alpha \in \Omega^{a-1}(M) : \, \d{\alpha} \in S^a
    \right \rangle.$$
    \item Define a Poisson bracket of order $k$ as a bilinear operation 
    $$\Omega^{a-1}_H(M) \otimes \Omega^{b-1}_H(M) \rightarrow \Omega^{a + b - (k+1)}_H(M),$$
    satisfying all the properties of \cref{Graded_Poisson_structures:thm:Properties_Poisson_Bracket}.
    \item Check whether this bracket is induced by an unique graded Poisson structure 
    $$\sharp_a : S^a \rightarrow \bigvee_{k+1-a}M /K_{k+1-a}.$$
\end{enumerate}

However, this approach presents some technical difficulties. Indeed, in general, fixed the family of vector
subbundles $S^a,$ $1 \leq a \leq k,$ satisfying property \ref{Conditions_on_S_a}, the set of Hamiltonian forms may be trivial in some degrees.
One possible attempt to fix this first problem would be to restrict to locally defined Hamiltonian forms. This does
not get us very far either, since there may not exist closed forms taking values in $S^a$. Indeed, suppose $S^a$
locally generated as a vector subbundle by
$$S^a \big |_x = \langle \alpha_1 |_x, \dots, \alpha_l\rangle |_x,$$ for certain locally defined $a$-forms $\alpha_1, \dots, \alpha_l.$ Then, an arbitrary form taking values in $S^a$ can be expressed as $$\alpha = f^i \alpha_i.$$
The condition of being closed translates into 
$$ 0 = \d{\alpha} = \d f^i \wedge \alpha_i + f^i  \d \alpha_i,$$
for certain functions $f^i \in C^\infty(M).$
This defines a set of partial differential equations on the coefficients $f^i$. It is clear that $f^i = 0$
defines the trivial closed form. In general, there may not exist non-zero solutions.

\begin{example} Let $M := \mathbb{R}^4$ with coordinates $(x, y ,z, t)$ and define the following vector subbundle
    $$S_2 := \langle (\d{x} + y \d{z}) \wedge \d{t} \rangle \subseteq \bigwedge^2 M.$$
    Any form $$\alpha: M \rightarrow S_2,$$ has the following expression
    $$f(\d{x} + y \d{z}) \wedge \d{t}, $$ for certain function $f(x, y , z , t).$ Then,
    \begin{align*}
        \d{\alpha} & = -\pdv{f}{y} \d{x} \wedge \d{y} \wedge \d{t}\\
        &+ \left( y \pdv{f}{x} - \pdv{f}{z} \right) \d{x} \wedge \d{z} \wedge \d{t}\\
        &+ \left( f + y \pdv{f}{y} \right) \d{y} \wedge \d{z} \wedge \d{t}.
    \end{align*}
    If $\alpha$ is closed, we have $f = 0$, and thus $\alpha = 0,$ showing that the only possible closed form taking values
    in $S^a$ is the trivial one.
\end{example}

This leads us to introduce the following definition:

\begin{Def} A vector subbundle $S^a \subseteq \bigwedge^a M$ is called \textbf{integrable} if for every $x \in M,$
    $\alpha_0 \in (S^a) \big |_x$ there exists a form (possible locally defined around $x$), $\alpha$, taking values in $S^a$ such that
    $\alpha |_x = \alpha_0$, and $\d{\alpha} = 0$.
\end{Def}

Essentially, the definition above guarantees the existence of enough locally Hamiltonian forms.

\begin{Def}[Poisson bracket] 
\label{Graded_Poisson_strcutures:def:bracket}
Let $S^a \subseteq \bigwedge^a M$, $1 \leq a \leq k$ be a sequence of integrable subbundles satisfying
    $$(S^k)^{\circ, 1} = 0,\,\, (S^a)^{\circ, p} = (S^b)^{\circ, p},$$ for each $p \leq a, b$. Denote by
    $$\Omega^{a-1}_H(U) := \langle  \alpha \in \Omega^{a-1}(U) : \d \alpha \in S^a\rangle $$ the set of Hamiltonian forms
    defined on certain open subset $U \subseteq M.$ A Poisson bracket of order $k$ on $M$ is a collection of bilinear operations
    $$\Omega^{a-1}_H(U) \otimes \Omega^{b-1}_H(U) \xrightarrow{\{\cdot, \cdot\}_U} \Omega^{a + b - (k+1)}_H(U),$$
    for each open subset $U \subseteq M$ satisfying the following properties. 
    For $\alpha \in \Omega^{a-1}_H(U), \beta \in \Omega^{b -1 }_H(U), \gamma\in \Omega^{c - 1}_H(U)$
    \begin{enumerate}[i)]
        \item \textit{It is graded:} $$\deg \{\alpha, \beta\}_U = \deg \alpha + \deg \beta;$$
        \item \textit{It is graded-skew-symmetric:} $$\{\alpha, \beta\}_U = -(-1)^{\deg \alpha \deg \beta} \{\beta,\alpha\}_U;$$
        \item \textit{It is local:} If $\d{\alpha} |_x = 0,$ $\{\alpha, \beta\}_U|_x = 0$
        \item \textit{It satisfies Leibniz identity:} Let $\beta^j \in \Omega^{b-1}_H(M), \gamma_j \in \Omega^{c-1}_H(M).$ If $\beta^j \wedge \d{\gamma} \in \Omega^{b+c - 1}_H(M),$ then, for $a = k,$ 
        $$\{ \beta^j \wedge \d{\gamma_j}, \alpha\}_U = \{\beta^j, \alpha\}_U \wedge \d{\gamma}_j + (-1)^{k -\deg \beta^j} \d{\beta} \wedge \{\gamma_j, \alpha\}_U;$$
        \item \textit{It is invariant by symmetries:} If $X \in \mathfrak{X}(U)$ and $\pounds_X \alpha = 0,$ then 
        $\iota_X \alpha \in \Omega^{a-2}_H(U)$ and
        $$\{\iota_X \alpha, \beta\}_U=  (-1)^{\deg \beta} \iota_X \{\alpha, \beta^j\}_U;$$
        \item \textit{It satisfies graded Jacobi identity (up to an exact term):} 
        $$(-1)^{\deg \alpha \deg \gamma}\{ \{\alpha, \beta \}_U , \gamma\}_U + \text{cyclic terms} = \text{exact form}.$$
    \end{enumerate}
    Furthermore, it satisfies the \textit{compatibility condition}, that is, for $V \subseteq U$ two open subsets, $\{\cdot, \cdot\}_V$
    is the restriction of $\{\cdot, \cdot\}_U.$
\end{Def}

\begin{remark} One may wonder why we require this version of Leibniz identity. The not-so-evident reason is that this is a strictly stronger hypothesis than the Leibniz rule from \cref{Graded_Poisson_structures:thm:Properties_Poisson_Bracket}.
Indeed, if $\beta^j \wedge \d{\gamma_j} \in \Omega^{b +c -1}_H(M),$ there is no way of guaranteeing that the individual
terms of the sum are Hamiltonian. This is the version needed in order to recover the graded Poisson structure from 
the graded Poisson bracket.
\end{remark}

We are now ready to state and prove the main result of this section:

\begin{theorem}  
\label{Graded_Poisson_brackets:thm:Bracket_Implies_Structure}
Let $S^a \subseteq \bigwedge^a M$, $1 \leq a \leq k$ be a sequence of integrable vector subbundles satisfying
    $$(S^k)^{\circ, 1} = 0,\,\, (S^a)^{\circ, p} = (S^b)^{\circ, p},$$ for each $p \leq a, b$, and let $\{\cdot, \cdot\}_U,$ $U \subseteq M$ open, be a Poisson bracket on this sequence. Suppose that the family $S^a$ satisfies the following properties:
    \begin{enumerate}[i)] 
        \item Locally, there exists Hamiltonian forms $\gamma_{ij} \in \Omega^{b-2}_H(U),$ and functions $f^j_i$ such that
        $$S^b = \langle \d{f}^j_i \wedge \d{\gamma}_{ij}, i\rangle;$$
        \item For each $1 \leq a \leq k $, locally, there exists a family of Hamiltonian forms forms ${\gamma}^j,$ and a family of vector fields $X^j$ such that 
        $$S^a = \langle \d {\gamma}^j\rangle \,\,\pounds_{X^j} \gamma^j = 0,$$ and $$S^{a-1} = \langle \d{\iota_{X^j} \gamma^j}\rangle.$$
    \end{enumerate}
    Then, there exists an unique graded Poisson structure $(S^a, K_p, \sharp_a)$ such that $\{\cdot, \cdot\}_U$ is the induced bracket by this
    structure. 
\end{theorem}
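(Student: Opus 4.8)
The plan is to reconstruct the graded Poisson structure from the bracket, the essential point being to recover the top map $\sharp_k\colon S^k\to TM$; the remaining maps $\sharp_a$ and subbundles $K_p$ are then forced by the recovery formulas $S^a=\langle\iota_U\alpha:U\in\bigvee_{k-a}M,\ \alpha\in S^k\rangle$, $K_p=(S^k)^{\circ,p}$, $\sharp_a(\iota_U\alpha)=\sharp_k(\alpha)\wedge U+K_{k+1-a}$, and the whole tower assembles into a graded Poisson structure through \cref{Graded_Poisson_structures:thm:Equivalence}. First I would fix $x\in M$ and a Hamiltonian $(k-1)$-form $\beta$ (so $\d{\beta}$ takes values in $S^k$) and define a linear map $\Phi_\beta\colon S^k|_x\to\bigwedge^{k-1}T_x^\ast M$ by $\Phi_\beta(\d{\alpha}|_x):=\{\alpha,\beta\}_U|_x$ for $\alpha\in\Omega^{k-1}_H$. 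Integrability of $S^k$ together with the Poincaré lemma guarantees that every element of $S^k|_x$ arises as $\d{\alpha}|_x$ for some Hamiltonian $\alpha$, and locality (property iii) of \cref{Graded_Poisson_strcutures:def:bracket}) together with bilinearity shows that $\Phi_\beta$ depends only on $\d{\alpha}|_x$, so it is well defined.

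The crux is to show that $\Phi_\beta$ is represented by a single tangent vector, i.e. that there exists $v\in T_xM$ with $\Phi_\beta=\iota_v|_{S^k|_x}$. This is exactly the step where hypotheses (i) and (ii) enter: hypothesis (ii) lets me write the generators of $S^k$ as $\d{\gamma}^j$ with prescribed symmetries $X^j$ (so $\pounds_{X^j}\gamma^j=0$), and the invariance-by-symmetries axiom relates $\{\iota_{X^j}\gamma^j,\beta\}$ to $\iota_{X^j}\{\gamma^j,\beta\}$, while hypothesis (i) expresses the generators as $\d{f}^j_i\wedge\d{\gamma}_{ij}$, allowing the Leibniz axiom to compute $\Phi_\beta$ on products and thereby force antiderivation behaviour. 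Combining these, $\Phi_\beta$ must coincide with contraction by some vector; since $(S^k)^{\circ,1}=0$, this vector is unique, and I set $\sharp_k(\d{\beta}|_x):=v$. This is the hard part: the bracket a priori gives only a linear functional on $S^k$, and one must certify that it is genuinely a contraction operator, which is precisely why the structural hypotheses on the $S^a$ cannot be dropped.

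Once $\sharp_k$ is constructed, skew-symmetry $\iota_{\sharp_k(\d{\alpha})}\d{\beta}=-\iota_{\sharp_k(\d{\beta})}\d{\alpha}$ follows immediately from graded-skew-symmetry of the bracket (property ii) of \cref{Graded_Poisson_strcutures:def:bracket}) in degree $\deg\alpha=\deg\beta=0$, and smoothness of $\sharp_k$ follows from smoothness of the bracket together with the uniqueness just established. For involutivity I would show $[\sharp_k(\d{\alpha}),\sharp_k(\d{\beta})]=\sharp_k(\d{\{\alpha,\beta\}})$ up to sign. Since $(S^k)^{\circ,1}=0$, it suffices to contract both sides against an arbitrary $\d{\gamma}\in S^k$, and the resulting identity is precisely the graded Jacobi identity (property vi)) read through the defining formula $\{\alpha,\beta\}=(-1)^{\deg\beta}\iota_{\sharp(\d{\beta})}\d{\alpha}$; this is the content of \cref{Graded_Poisson_Brackets:lemma:Jacobi} and \cref{Graded_Poisson_bracket:prop:Hamiltonian_multivector_of_bracket} run in the reverse direction. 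Because $\{\alpha,\beta\}$ is again Hamiltonian by construction, $\d{\{\alpha,\beta\}}$ takes values in $S^k$, so the commutator stays in the image of $\sharp_k$ and $(S^k,\sharp_k)$ is a higher Poisson structure.

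Finally I would assemble the full graded Poisson structure. Hypotheses (i) and (ii), via their constant local generation, guarantee that the spaces $S^a=\langle\iota_U\alpha:U\in\bigvee_{k-a}M,\ \alpha\in S^k\rangle$ and $K_p=(S^k)^{\circ,p}$ have locally constant rank, hence define vector subbundles matching the given $S^a$; the compatibility $(S^a)^{\circ,p}=(S^b)^{\circ,p}$ ensures the descriptions are consistent. We are then in the hypotheses of \cref{Graded_Poisson_structures:thm:Equivalence}, which upgrades $(S^k,\sharp_k)$ to a bona fide graded Poisson structure $(S^a,K_p,\sharp_a)$. That its induced bracket equals the given $\{\cdot,\cdot\}$ holds by construction of $\sharp_k$ and the defining formula of the bracket, verified first in top degree and then propagated to all degrees through the recovery formula for $\sharp_a$ and the Leibniz and symmetry axioms. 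Uniqueness is immediate, since $\sharp_k$ is forced by the bracket through the nondegeneracy $(S^k)^{\circ,1}=0$ and the entire tower is in turn forced by $\sharp_k$.
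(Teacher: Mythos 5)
Your outline has the right skeleton (locality plus integrability of $S^k$ to factor the bracket through $\d{\alpha}|_x$; Leibniz and symmetry-invariance together with hypotheses (i)--(ii); Jacobi for involutivity; \cref{Graded_Poisson_structures:thm:Equivalence} to assemble the tower; uniqueness from $(S^k)^{\circ,1}=0$), but there is a genuine gap exactly at the step you yourself call the crux. You pair a Hamiltonian $(k-1)$-form $\beta$ with Hamiltonian $(k-1)$-forms $\alpha$, so your $\Phi_\beta$ takes values in $\bigwedge^{k-1}T^\ast_xM$, and you then assert that Leibniz and symmetry-invariance ``force'' $\Phi_\beta=\iota_v$ for some $v\in T_xM$. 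No candidate $v$ is ever constructed, and for dimension reasons alone a linear map $S^k|_x\to\bigwedge^{k-1}T^\ast_xM$ is almost never a contraction; certifying that this one is, is the entire content of the theorem and cannot be dispatched in one sentence. The paper's proof avoids having to prove any such representation statement by choosing the \emph{complementary} pairing: for $\alpha\in\Omega^{a-1}_H(U)$ it considers $\beta\mapsto(-1)^{\deg\alpha}\{\beta,\alpha\}_U$ on $\Omega^{k-a}_H(U)$, which is \emph{function}-valued; locality makes it factor through a pointwise functional on $S^{k+1-a}$, and finite-dimensional duality identifies $(S^{k+1-a})^\ast$ with $\bigvee_{k+1-a}M/K_{k+1-a}$, so the functional \emph{is} automatically contraction by a multivector modulo $K_{k+1-a}$ (for $a=k$ it is a functional on $S^1=T^\ast M$, hence an honest tangent vector). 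Only after $\sharp_a$ is so defined do hypotheses (i) and (ii) enter, in a double induction (Steps 2 and 3 of the paper's proof): Leibniz plus hypothesis (i) propagates the identity $\{\beta,\alpha\}=\iota_{\sharp_k(\d{\alpha})}\d{\beta}$ upward in the order of $\beta$, and symmetry-invariance plus hypothesis (ii), through the auxiliary identity $\sharp_{a-1}(\d{\iota_X\alpha})=-\sharp_a(\d{\alpha})\wedge X+K_{k+2-a}$, propagates it downward in the order of $\alpha$. Your ``combining these, $\Phi_\beta$ must coincide with contraction by some vector'' is standing in for precisely this two-stage argument, and without it (or without the function-valued base case that produces the vector in the first place) the existence of $v$ is unsupported.

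There is a second, smaller gap in your involutivity step: you verify $[\sharp_k(\d{\alpha}),\sharp_k(\d{\beta})]=\pm\,\sharp_k(\d{\{\alpha,\beta\}})$ only on \emph{exact} arguments $\d{\alpha},\d{\beta}$, but the integrability axiom of a graded (and of a higher) Poisson structure quantifies over \emph{arbitrary} sections of $S^k$, which need not be closed --- indeed closed sections can be scarce, which is the very reason integrability of the subbundles is an explicit hypothesis. The paper's Step 4 closes this by writing an arbitrary section locally as $\gamma=f^i\d{\alpha_i}$ with $\alpha_i$ Hamiltonian (possible by hypothesis (i)) and computing $\pounds_{\sharp_k(\gamma_1)}\gamma_2-\iota_{\sharp_k(\gamma_2)}\d{\gamma_1}$ explicitly, checking it lands in $S^k$ and maps under $\sharp_k$ to $[\sharp_k(\gamma_1),\sharp_k(\gamma_2)]$; some such extension argument is needed in your write-up as well. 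Finally, your claim that the reconstructed spaces $\langle\iota_U\alpha: U\in\bigvee_{k-a}M,\ \alpha\in S^k\rangle$ coincide with the given $S^a$ is correct but not free: it uses the double-annihilator identity together with the compatibility hypothesis $(S^a)^{\circ,p}=(S^k)^{\circ,p}$, i.e.\ the Auxiliary Lemma from the linear theory, and deserves a line of justification.
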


\begin{proof}We divide the proof in four steps.

    \begin{center}
        \textit{\underline{1. Definition of $(S^a, K_p, \sharp_a)$}}
    \end{center}
    Define 
    $$K_p\big |_x := (S^p\big |_x)^{\circ, p}.$$
    By hypothesis, these vector spaces define a vector subbundle of $\bigvee_p M$.
    In order to define $\sharp_a,$ notice that for each Hamiltonian form 
    $\alpha \in \Omega^{a-1}_H(U),$ the bracket induces an operation 
    $$\Omega^{k-a}_H(U) \xrightarrow{\Phi_\alpha} C^\infty(M), \, \beta \mapsto \Phi_\alpha(\beta) = (-1)^{\deg \alpha}\{\beta, \alpha\}.$$
    By locality, it factorizes through certain linear mapping (which only depends on $\d{\alpha}$) 
    $$\Psi_\alpha: S^{k +1 - a} \big |_U \rightarrow \mathbb{R}$$ as $$\Phi_\alpha(\beta) = \Psi_\alpha(\d \beta).$$
    Therefore, we may identify $\Psi_\alpha$ with an element of the dual $(S^{k+1-a})^\ast \big |_U,$ and we get a linear mapping
    $$\Psi : \Omega^{a-1}_H(U) \rightarrow (S^{k+1-a})^\ast \big |_U.$$ Since this mapping only depends on the exterior differential of $\alpha,$
    it again factorizes through
    $$\sharp_a : S^a \big |_U \rightarrow (S^{k+1-a})^\ast \big|_U = \left(\bigvee_{k+1-a}M / K_{k+1-a}\right) |_U,$$
    as $$\Psi_\alpha = \sharp_a(\d{\alpha}).$$
    By construction, this mapping satisfies 
    $$\{\beta, \alpha\}_U = (-1)^{\deg \alpha} \iota_{\sharp_a(\d \alpha)} \d \beta,$$ for $\alpha \in \Omega^{a-1}_H(U)$, and
    $\beta \in \Omega^{k-a}_H(U).$ It is clear that these mappings do not depend on the choice of open subset $U$. Therefore, we get globally well defined mappings $$\sharp_a: S^a \rightarrow \bigvee_{k+1-a} M/{K_{k+1-a}}$$ satisfying the equality above. From now on, making abuse of notation, we omit dependence on $U$.\\

    Steps 2 and 3 are devoted to proving that we have the equality $$\{\beta, \alpha\} = (-1)^{\deg \alpha} \iota_{\sharp_a(\d{\alpha})} \d{\beta}$$ for arbitrary Hamiltonian forms. We will prove it using two induction processes, summarized in the diagram below. Here, we write the case $k = 5$, and the order in which we obtain the equality in the orders $(\operatorname{ord} \beta, \operatorname{ord} \alpha).$ The lower triangle is omitted for degree considerations (the bracket is trivial when evaluated on pairs such that $\operatorname{ord} \beta + \operatorname{ord} \alpha < k - 1$).

\[\begin{tikzcd}[sep=large]
	{(0,4)} & {(1,4)} & {(2,4)} & {(3,4)} & {(4,4)} \\
	& {(1,3)} & {(2,3)} & {(3,3)} & {(4, 3)} \\
	&& {(2, 2)} & {(3,2)} & {(4,2)} \\
	&&& {(3,1)} & {(4,1)} \\
	&&&& {(4,0)}
	\arrow["{\textit{\underline{Step 2}}}", from=1-1, to=1-2]
	\arrow["{\textit{\underline{Step 2}}}", from=1-2, to=1-3]
	\arrow["{\textit{\underline{Step 3}}}", from=1-2, to=2-2]
	\arrow["{\textit{\underline{Step 2}}}", from=1-3, to=1-4]
	\arrow["{\textit{\underline{Step 3}}}", from=1-3, to=2-3]
	\arrow["{\textit{\underline{Step 2}}}", from=1-4, to=1-5]
	\arrow["{\textit{\underline{Step 3}}}", from=1-4, to=2-4]
	\arrow["{\textit{\underline{Step 3}}}", from=1-5, to=2-5]
	\arrow["{\textit{\underline{Step 3}}}", from=2-3, to=3-3]
	\arrow["{\textit{\underline{Step 3}}}", from=2-4, to=3-4]
	\arrow["{\textit{\underline{Step 3}}}", from=2-5, to=3-5]
	\arrow["{\textit{\underline{Step 3}}}", from=3-4, to=4-4]
	\arrow["{\textit{\underline{Step 3}}}", from=3-5, to=4-5]
	\arrow["{\textit{\underline{Step 3}}}", from=4-5, to=5-5]
\end{tikzcd}\]

    \begin{center}
        \textit{\underline{2. $\{\beta, \alpha\} = \iota_{\sharp_{k}(\d{\alpha})} \d{\beta}$, for an
        arbitrary Hamiltonian form $\beta$, and $\alpha \in \Omega^{k-1}_H(U)$.}}
    \end{center}
    We prove it by induction on the order of $\beta$. For $\beta \in C^\infty_H(M) = C^\infty(M),$ it is clear by definition
    of $\sharp_k.$ For the inductive step, let $\beta \in \Omega^{b-1}_H(M)$.\\

    Using the first required property on $S^a$, we only need to prove the result for a Hamiltonian form with the expression
    $$f^j \d{\gamma_{j}},$$
    where $\gamma_j$ is Hamiltonian. Indeed, by Leibniz identity:
    \begin{align*}
        \{f^j \d{\gamma}_j, \alpha\} &= \{f^j, \alpha\} \wedge \d{\gamma}_j - \d{f}^j \wedge \{\gamma_j, \alpha\}\\
        &= \left( \iota_{\sharp_k(\d{\alpha})} \d{f}^j\right)\wedge \d{\gamma}_j- \d{f}^j \wedge (\iota_{\sharp_k(\d{\alpha})} \d{\gamma_j})\\
        &= \iota_{\sharp_{k}(\d{\alpha})} \d{(f^j \d{\gamma_j})},
    \end{align*}
    where in the second equality we have used the induction hypothesis, proving that $$\{\beta, \alpha\} = \iota_{\sharp_a(\d{\alpha})} \d{\beta},$$ for $\alpha \in \Omega^{k-1}_H(M),$
    and $\beta \in \Omega^{b-1}_H(M).$

    \begin{center}
        \textit{\underline{3. $\{\beta, \alpha\} = (-1)^{\deg \alpha} \iota_{\sharp_{k}(\d{\alpha})} \d{\beta}$, for
        arbitrary Hamiltonian forms $\alpha$ and $\beta$.}}
    \end{center}

    Fix $\beta \in \Omega^{b-1}_H(M)$. We proceed by ``reverse'' induction on the order of $\alpha,$ $\alpha \in \Omega^{k-1}_H(M)$ being the base case from \textit{\underline{Step 2}}.
    \\

    Using the second required property on $S^a$, we only need to prove it for Hamiltonian forms $$\iota_X \gamma,$$ for certain vector field $X$ and certain Hamiltonian form $\gamma$ such that $\pounds_X \gamma = 0.$ Indeed, using invariance by symmetries and induction
    hypothesis:
    \begin{align*}
        \{\beta, \iota_X \alpha\} &= - (-1)^{\deg \iota_X \alpha \deg \beta}\{\iota_X \alpha, \beta\} = - (-1)^{(\deg \alpha +1) \deg \beta + \deg \beta} \iota_X\{\alpha, \beta\}\\
        &= - (-1)^{\deg\alpha \deg \beta} \iota_X \{\alpha, \beta\} = \iota_X\{\beta, \alpha\}\\
        &= (-1)^{\deg \alpha} \iota_X \iota_{\sharp_a(\d{\alpha})} \d{\beta} = (-1)^{\deg \alpha } \iota_{\sharp_a(\d{\alpha}) \wedge X} \d{\beta} 
    \end{align*}
    Now we wish to relate $\sharp_a(\d{\alpha}) \wedge X$ and $\sharp_{a-1}(\d{\iota_X \alpha}).$ This relation is given by the following Lemma:
    \begin{lemma} We have $\sharp_{a-1}(\d{\iota_X {\alpha}}) = -\sharp_a(\d{\alpha}) \wedge X + K_{k+2-a}.$
    \end{lemma}
    \begin{proof}
        Let $\gamma \in \Omega^{k - a}_H(M).$ On the one hand, by definition of $\sharp_{a-1},$ we have
        $$\{\gamma, \iota_X \alpha\} = (-1)^{\deg \alpha +1} \iota_{\sharp_{a-1}(\d{\iota_X \alpha})} \d{\gamma}.$$
        On the other hand, by invariance by symmetries, we have
        $$\{\gamma, \iota_X \alpha\} = \iota_X \{\gamma, \alpha\} =  (-1)^{\deg \alpha} \iota_X\iota_{\sharp_a(\d{\alpha})} \d{\gamma} =  (-1)^{\deg \alpha} \iota_{\sharp_a(\d{\alpha}) \wedge X} \d{\gamma}.$$
        Equating both of these equalities and using that $\gamma$ is arbitrary, we get the equality 
        $$\sharp_{a-1}(\d{\iota_X {\alpha}}) = -\sharp_a(\d{\alpha}) \wedge X + K_{k+2-a},$$
        proving the Lemma.
    \end{proof}
    Using the Lemma above, the result follows easily. Indeed,
    \begin{align*}
        \{\beta, \iota_X \alpha\} &= (-1)^{\deg \alpha} \iota_{\sharp_a(\d{\alpha}) \wedge X} \d{\beta}\\
        &= (-1)^{\deg \alpha + 1}\iota_{\sharp_{a-1}(\d{\iota_X \alpha})} \d{\beta} \\
        &= (-1)^{\deg \iota_X \alpha} \iota_{\sharp_{a-1}(\d{\iota_X \alpha})} \d{\beta}.
    \end{align*}

    Now it only remains to show that $(S^a, K_{k+1-a}, \sharp_a)$ is a graded Poisson structure.
    
    \begin{center}
        \textit{\underline{4. $(S^a, K_{k+1-a}, \sharp_a)$ defines a graded Poisson structure.}} 
    \end{center}

    Skew-symmetry is clear, using skew-symmetry of the brackets. To prove integrability, by \cref{Graded_Poisson_structures:thm:Equivalence},
     we only need to check it for $a = k.$ By the graded Jacobi identity, we have
    $$\{f, \{\alpha, \beta \}\} + \{\beta, \{f, \alpha\}\} + \{\alpha, \{\beta, f\}\} = 0,$$
    for every $\alpha, \beta \in \Omega^{k-1}_H(M)$, $f \in C^\infty(M)$
    This yields
    \begin{align*}
        \sharp_{k}(\d{\{\alpha, \beta\}}) (f) - \sharp_k(\d{\beta})(\sharp_k({\d{\alpha}})(f)) 
         + \sharp_k(\d{\alpha})(\sharp_k({\d{\beta}})(f)) = 0.
    \end{align*}
    Since $f \in C^\infty(M)$ is arbitrary, we have
    $$\sharp_{k}(\d{\{\alpha, \beta\}}) = - [\sharp_k(\d{\alpha}), \sharp_k(\d{\beta})],$$
    or, equivalently,
    $$\sharp_k \left( \pounds_{\sharp(\d{\beta})} \d{\alpha} \right) = - [\sharp_k(\d{\alpha}), \sharp_k(\d{\beta})].$$
    This implies integrability. Indeed, given arbitrary forms $\gamma_1, \gamma_2$ taking values in $S_k,$ we
    can express them locally as
    $$\gamma_1 = f^i \d{\alpha_i}, \,\, \gamma_2 = g^j \d{\beta_j},$$
    for certain Hamiltonian forms $\alpha_i, \beta_j \in \Omega^{k-1}_H(M)$, and functions $f^i, g^j \in C^\infty(M)$. Then,
    defining 
    \begin{align*}
        \theta &:= \pounds_{\sharp_k(\d{\gamma_1})} \gamma_2 - \iota_{\sharp_k(\d{\gamma_2})} \d{\gamma_1}\\
        &= \pounds_{f^j \sharp_k(\d{\alpha_j})} g^i \d{\beta_i} - g^i \iota_{\sharp_k(\d{\beta_i})} \d{(f^j \d{\alpha_j})}\\
        &= f^j \sharp_k(\d{\alpha_j})(g^i) \d{\beta_i} - g^i \sharp_k(\d{\beta_i})(f^j) \d{\alpha_j} + 
        f^ig^j\d{\iota}_{\sharp_k(\d{\alpha_j})} \d{\beta_i}\\
        &= f^j \sharp_k(\d{\alpha_j})(g^i) \d{\beta_i} - g^i \sharp_k(\d{\beta_i})(f^j) \d{\alpha_j} + 
        f^ig^j\d{\{\beta_i, \alpha_j\}},
    \end{align*}
    we have $\theta \in S_k$, and
    \begin{align*}
        \sharp_k(\theta) &=  f^j \sharp_k(\d{\alpha_j})(g^i) \sharp_k(\d{\beta^i}) - g^i \sharp_k(\d{\beta_i})(f^j)\sharp_k(\d{\alpha_j}) + 
        f^ig^j\sharp_k(\d{\{\beta, \alpha\}})\\
        &= f^j \sharp_k(\d{\alpha_j})(g^i) \sharp_k(\d{\beta_i}) - g^i \sharp_k(\d{\beta_i})(f^j)\sharp_k(\d{\alpha_j}) - 
        f^ig^j [\sharp_k(\d{\beta_i}), \sharp_k(\d{\alpha_j})]\\
        &= [f^j \sharp_{k}(\d{\alpha}_j), g^i \sharp_k(\d{\beta_i})] = [\sharp_k(\gamma_1), \sharp_k(\gamma_2)],
    \end{align*}
    proving that the graded Poisson structure is integrable.
\end{proof}

\begin{remark}
\cref{Graded_Poisson_brackets:thm:Bracket_Implies_Structure} generalizes the classical result that a Poisson bracket defined on $C^\infty(M)$ is characterized by an integrable bivector field, $\Lambda \in \mathfrak{X}^2(M),$ $[\Lambda, \Lambda] = 0.$ Indeed, since $(S_1)^{\circ, 1} = 0,$ we must have $S_1 = T^\ast M,$ which trivially satisfies the hypothesis.
\end{remark}

There are certain interesting cases that fulfill the hypothesis of
\cref{Graded_Poisson_brackets:thm:Bracket_Implies_Structure}:

\begin{corollary} 
\label{Graded_Poisson_structures:corollary:constant_coefficients}
If the family $S^a$, $1 \leq a \leq k$ is generated by forms of constant coefficients\footnote{Here, constant coefficients means that in certain local chart, the sequence $S^a$ (equivalently, $S^k$) is generated by forms of the type $\omega = c_{i_1, \dots, i_{a}}\d{x}^{i_1} \wedge \cdots \wedge \d{x}^{i_a},$ where $c_{i_1, \dots, i_a} \in \mathbb{R}$ are constants.}, any graded Poisson bracket defined on the space of Hamiltonian forms induces an unique graded Poisson structure.
\end{corollary}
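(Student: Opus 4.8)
The plan is to deduce the corollary directly from \cref{Graded_Poisson_brackets:thm:Bracket_Implies_Structure}. By \cref{Graded_Poisson_strcutures:def:bracket}, a graded Poisson bracket is already defined on a sequence of \emph{integrable} subbundles $S^a$ satisfying $(S^k)^{\circ,1}=0$ and $(S^a)^{\circ,p}=(S^b)^{\circ,p}$; for constant-coefficient families integrability is in any case immediate, since in a chart $(x^1,\dots,x^n)$ every generator is a closed monomial $\omega=dx^{i_1}\wedge\cdots\wedge dx^{i_a}$ and any constant linear combination of generators is a closed form taking values in $S^a$. Thus the only thing left to verify is that the two local generation hypotheses (i) and (ii) of \cref{Graded_Poisson_brackets:thm:Bracket_Implies_Structure} hold. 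First I would record the key structural consequence of the annihilator conditions combined with \cref{Linear_Graded_Dirac_structures:Auxiliary_Lemma_II}, namely the fibrewise description $S^{a-1}=\langle \iota_X\alpha:\ X\in TM,\ \alpha\in S^a\rangle$; since contracting a constant monomial by a coordinate field $\partial_l$ again gives a constant monomial, the whole sequence $S^a$ is generated by constant monomials in a mutually compatible way.

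For hypothesis (i) I would argue on a single generator $\omega=dx^{i_1}\wedge\cdots\wedge dx^{i_b}$ of $S^b$. Setting $f:=x^{i_1}$ and $\gamma:=x^{i_2}\,dx^{i_3}\wedge\cdots\wedge dx^{i_b}$ one has $df\wedge d\gamma=\omega$, while $d\gamma=dx^{i_2}\wedge\cdots\wedge dx^{i_b}=\iota_{\partial_{i_1}}\omega$ lies in $S^{b-1}$ by the contraction description above, so $\gamma$ is Hamiltonian. Extending by linearity over a finite monomial generating set yields $S^b=\langle df^j_i\wedge d\gamma_{ij}\rangle$ with $\gamma_{ij}$ Hamiltonian, which is precisely (i).

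Hypothesis (ii) is the delicate point, and is where the constant-coefficient structure is genuinely used, because one must realize $S^a$ and $S^{a-1}$ through the \emph{same} family $(\gamma^j,X^j)$ subject to $\pounds_{X^j}\gamma^j=0$. Here I would index the family by pairs $(\omega,l)$, with $\omega=dx^{i_1}\wedge\cdots\wedge dx^{i_a}$ running over the monomial generators of $S^a$ and $\partial_l$ over the coordinate fields, and for each pair choose a primitive of $\omega$ whose coefficients are independent of $x^l$: if $l\notin\{i_1,\dots,i_a\}$ take $\gamma=x^{i_1}\,dx^{i_2}\wedge\cdots\wedge dx^{i_a}$, and if $l=i_1$ (after reordering) take $\gamma=-x^{i_2}\,dx^{i_1}\wedge dx^{i_3}\wedge\cdots\wedge dx^{i_a}$. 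In both cases $d\gamma=\omega$, the coefficients do not depend on $x^l$ so that $\pounds_{\partial_l}\gamma=0$, and Cartan's formula then gives $d\iota_{\partial_l}\gamma=-\iota_{\partial_l}\omega\in S^{a-1}$, again a constant monomial. As $\omega$ ranges over the generators the forms $d\gamma$ already span $S^a$, and as $(\omega,l)$ ranges over all pairs the forms $d\iota_{\partial_l}\gamma=-\iota_{\partial_l}\omega$ exhaust the generating set $\{\iota_X\alpha\}$ of $S^{a-1}$, establishing (ii).

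With (i) and (ii) in hand, \cref{Graded_Poisson_brackets:thm:Bracket_Implies_Structure} applies verbatim and produces the unique graded Poisson structure $(S^a,K_p,\sharp_a)$ inducing the given bracket. I expect the only real obstacle to be the bookkeeping in (ii) — simultaneously spanning $S^a$ and $S^{a-1}$ by a family of primitives each invariant under its associated coordinate field — but the constant-coefficient hypothesis trivializes the underlying partial differential equation for closed primitives (every generator is closed and admits an explicit polynomial primitive), so no genuine analytic difficulty arises.
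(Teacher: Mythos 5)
Your skeleton is the same as the paper's---verify hypotheses (i) and (ii) of \cref{Graded_Poisson_brackets:thm:Bracket_Implies_Structure} and then apply it, with integrability of the $S^a$ being immediate because constant-coefficient generators are closed---but your execution has a genuine gap: you have silently replaced ``generated by forms with constant coefficients'' by ``generated by constant \emph{monomials} $dx^{i_1}\wedge\cdots\wedge dx^{i_a}$.'' These are not the same hypothesis. Einstein summation is in force throughout the paper (and the paper's own proof confirms this reading), so a generator $\omega = c_{i_1\dots i_a}\,dx^{i_1}\wedge\cdots\wedge dx^{i_a}$ is an arbitrary constant linear combination of monomials, and its individual monomial summands need not lie in $S^a$; indeed $S^a$ may contain no nonzero decomposable form at all. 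Concretely, on $\mathbb{R}^5$ take $k=3$, $S^3=\langle\omega\rangle$ with $\omega = dx^1\wedge dx^2\wedge dx^3 + dx^1\wedge dx^4\wedge dx^5$ (non-degenerate: $(S^3)^{\circ,1}=0$), and $S^2$, $S^1$ the spans of contractions of $\omega$. Since $\omega$ has trivial kernel while a decomposable $3$-form on a $5$-dimensional space has a $2$-dimensional kernel, no chart makes $\omega$ a monomial, so your ``monomial generating set'' does not exist. Moreover, your recipe for hypothesis (i), applied to the summand $dx^1\wedge dx^2\wedge dx^3$, produces $\gamma = x^2\,dx^3$, whose differential $dx^2\wedge dx^3$ is \emph{not} in $S^2$ (every element of $S^2$ whose $dx^2\wedge dx^3$-coefficient is $1$ also has $dx^4\wedge dx^5$-coefficient $1$); so this $\gamma$ is not Hamiltonian, while hypothesis (i) explicitly demands Hamiltonian $\gamma_{ij}$. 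The same defect breaks your verification of (ii): your spans $\langle d\gamma^j\rangle$ and $\langle d\iota_{X^j}\gamma^j\rangle$ consist of monomials and their contractions, and in this example they strictly contain $S^a$ and $S^{a-1}$ respectively, rather than being equal to them.

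The repair is precisely where the paper's proof deviates from yours: never split a generator into its monomials. For (i), write each full generator as $\omega^m = dx^{i}\wedge d\gamma^m_i$ (summed over $i$) with $\gamma^m_i := c^m_{i\,i_2\dots i_a}\,x^{i_2}\,dx^{i_3}\wedge\cdots\wedge dx^{i_a}$; then $d\gamma^m_i$ is proportional to $\iota_{\partial/\partial x^i}\,\omega^m$, hence lies in $S^{a-1}$ by the linear description $S^{a-1}=\langle \iota_X\alpha : \alpha\in S^a\rangle$, and each $\gamma^m_i$ \emph{is} Hamiltonian. For (ii), for each pair (generator $\omega^m$, coordinate $j$) assemble one primitive $\gamma^m_j$ of the \emph{whole} form $\omega^m$ with coefficients independent of $x^j$, by applying your two monomial recipes simultaneously---on monomials of $\omega^m$ not containing $dx^j$ pull out a variable different from $x^j$, on those containing $dx^j$ keep $dx^j$ and pull out another variable---and summing. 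Then $\pounds_{\partial/\partial x^j}\gamma^m_j=0$, the forms $d\gamma^m_j=\omega^m$ span exactly $S^a$, and the forms $d\iota_{\partial/\partial x^j}\gamma^m_j = -\iota_{\partial/\partial x^j}\omega^m$ span exactly $S^{a-1}$, as hypothesis (ii) requires. In short, your argument proves the corollary only under the strictly stronger assumption that the $S^a$ admit decomposable constant generators; the grouping across each full generator is the missing idea that makes it work in the stated generality.
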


\begin{proof} Indeed, in order to be in the hypotheses of \cref{Graded_Poisson_brackets:thm:Bracket_Implies_Structure},
we only need to check that the two required properties hold. For the first one, given certain local chart such that
$$S^a = \langle c^m_{i_1, \dots, i_a} \d{x^{i_1}} \wedge \cdots \wedge \d{x}^{i_a}, m \rangle,$$ with $c^m_{i_1, \dots, i_a} \in \mathbb{R}$, given that we can write
\begin{align*}
    c^m_{i_1, \dots, i_a} \d{x^{i_1}} \wedge \cdots \wedge \d{x}^{i_a} = \d{x}^{i_1} \wedge 
    \d{\left( c^m_{i_1, \dots, i_a} x^{i_2} \d{x}^{i_3}\wedge \cdots \wedge \d{x}^{i_a}\right)},
\end{align*}
it is enough to define
$$f^{mi} := x^i, \gamma^m_i := c^m_{i, i_2, \dots, i_a} x^{i_2} \d{x}^{i_3}\wedge \cdots \wedge \d{x}^{i_a},$$ and we have
$$S^a = \langle d f^{mi} \wedge \gamma^m_i, m\rangle.$$ 
For the second property, notice that, by the linear study, we know that
$$S^{a-1} = \langle \iota_{\pdv{x^j}} \omega^m, m\rangle,$$ where $\omega^m = c^m_{i_1, \dots, i_a} \d{x^{i_1}} \wedge \cdots \wedge \d{x}^{i_a}.$ Therefore, it is enough to prove that for each $j,$ there exists a primitive $(a-1)-$form $\gamma^{m}_j$ of $\omega^m$, $$\omega^m = \d{\gamma^m_j}$$ such that $\pounds_{\pdv{x^j}} \gamma^m_j = 0.$ Indeed, it suffices to define (here we omit Einstein summation convention)
\begin{align*}
    \gamma^m_j := &\sum_{\substack{i_1 < \cdots < i_k \\ i_1 \neq j \\ i_l = j}} c^m_{i_1 \cdots i_m} x^{i_1} \d{x^{i_2}} \wedge \cdots \wedge \d{x^{i_a}}\\
    &-  \sum_{j < i_2 \cdots < i_a} c^{m}_{ji_2 \dots i_a} x^{i_2} \d{x^j} \wedge \d{x ^{i_3}} \wedge \cdots \wedge \d{x^{i_a}}.
\end{align*}
We clearly have $\omega^m = \d{\gamma^m_j},$ and\footnote{Indeed, because in general, $\pounds_{\pdv{x^j}} f_{i_1 \cdots i_a} \d{x}^{i_1} \wedge \cdots \wedge \d{x} ^{i_a} = \pdv{f_{i_1 \dots i_a}}{x^j} \d{x^{i_1}} \wedge \cdots \wedge \d{x}^{i_a}$, and in our case the coefficients do not depend on $x^j$.} $\pounds_{\pdv{x^j}} \gamma^m_j = 0,$
finishing the proof.
\end{proof}

\begin{example}
    Recall (\cref{Overview:subsection:multisymplectic}) that the Hamiltonian formalism in classical field theories occurs in
    $$M := \bigwedge^n_2 Y  = \{ \alpha \in \bigwedge^n Y, \iota_{e_1 \wedge e_2} \alpha = 0, e_1, e_2 \in \ker d \pi\},$$
    where $$\pi: Y \rightarrow X$$ is a fibered manifold, and $n = \dim X.$ $\bigwedge^n_2 Y$ can be endowed with a non-degenerate multisymplectic form $\Omega$ which, in canonical coordinates $(x^\mu, y^i, p^\mu_i, p)$, takes the expression
    $$\Omega = \d{p} \wedge \d{^n} x + \d{p^\mu_i} \wedge \d{y^i} \wedge \d{^{n-1}} x_\mu.$$ Notice that it is of constant linear type.\\
    
    Now, $\Omega$ is a non-degenerate $(n+1)-$form and we can define the Poisson tensor 
    $$\sharp: S^k \rightarrow TM$$ as the inverse of 
    $$TM \xrightarrow{\flat} \bigwedge^k M, \, v \mapsto \iota_v \Omega.$$ More, precisely, we define
    $$S^k := \iota_{TM} \Omega,$$ and
    $$\sharp: S^k \rightarrow TM$$ by $\sharp(\alpha) := v$,  where $v$ is the unique vector satisfying $$\iota_v \Omega = \alpha.$$  In particular, being of constant linear type, 
    $$S^a := \iota_{\bigvee_{k+1-a} M} \Omega$$ defines a vector subbundle of $\bigwedge^a M$, and we get a well defined graded Poisson structure on $M$, $(S^a, K_{k+1-a}, \sharp_a),$ where $\sharp_a, K_p$ are the mappings and subbundles obtained from \cref{Graded_Poisson_structures:thm:Equivalence}. In terms of the multisymplectic form, $\Omega,$ we have $$K_p = \Omega^{\circ, p},$$ and $$\sharp_a( \alpha) = U + K_p$$ if and only if $$\iota_U \Omega = \alpha.$$ Since the family $S^a$ is generated by forms of constant coefficients, we are in the hypotheses of \cref{Graded_Poisson_structures:corollary:constant_coefficients}, and we conclude that the induced graded Poisson bracket completely determines the multisymplectic structure on $M = \bigwedge^n_2 Y.$ 
\end{example}

We can also obtain the multisymplectic foliation from \cref{Graded_Poisson_structures:thm:Poisson_implies_Multisymplecticfoliation} 
using the Poisson bracket. For certain $\alpha \in \Omega^{k-1}_H(M)$ (which may be locally defined), since 
$$C^\infty(M) \rightarrow C^\infty(M); \, f \mapsto \{f, \alpha\}$$ is a derivation by locality, there exists an unique vector field,
which we shall denote $X_\alpha$ such that $$\{f, \alpha\} = X_\alpha(f),$$
for every $f \in C^\infty(M).$ Of course, in tensorial terms, $$X_\alpha = \sharp_k(\d{\alpha}).$$

\begin{Def} [Hamiltonian vector field] A \textbf{Hamiltonian vector field} of a graded Poisson structure $(S^a, K_p, \sharp_a)$ is a vector field
    of the form $X_\alpha,$ for certain Hamiltonian $(k-1)-$form $\alpha$. 
\end{Def}

Then, we clearly have

\begin{proposition} The characteristic distribution $E = \Im \sharp_k$ of the graded Poisson structure is generated by (locally defined) 
    Hamiltonian vector fields, that is, 
    $$E = \langle  X_\alpha: \alpha \in \Omega^{k-1}_H(M)\rangle.$$
\end{proposition}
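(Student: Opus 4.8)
The plan is to establish the two inclusions separately, with the nontrivial one resting on the integrability of the subbundle $S^k$ together with the Poincaré lemma. The inclusion $\langle X_\alpha : \alpha \in \Omega^{k-1}_H(M)\rangle \subseteq E$ is immediate: any Hamiltonian $(k-1)$-form $\alpha$ has $\d\alpha$ taking values in $S^k$, so at every point $X_\alpha = \sharp_k(\d\alpha) \in \Im \sharp_k = E$, using that $K_1 = 0$ so that $\sharp_k$ genuinely lands in $TM$.

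For the reverse inclusion I would fix a point $x \in M$ and an element $e \in E|_x$, which by definition of $E$ can be written $e = \sharp_k(\beta_0)$ for some $\beta_0 \in S^k|_x$. The key step is to realize $\beta_0$ as the value at $x$ of the differential of a locally defined Hamiltonian form. This is exactly where the integrability of $S^k$ enters: by that hypothesis there is a form $\beta$, possibly only defined on a neighbourhood $U$ of $x$, taking values in $S^k$ with $\beta|_x = \beta_0$ and $\d\beta = 0$. Since $\beta$ is closed, the Poincaré lemma provides, after shrinking $U$, a $(k-1)$-form $\alpha$ with $\d\alpha = \beta$; because $\d\alpha = \beta$ takes values in $S^k$, the form $\alpha$ is Hamiltonian, $\alpha \in \Omega^{k-1}_H(U)$. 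Its Hamiltonian vector field then satisfies $X_\alpha = \sharp_k(\d\alpha) = \sharp_k(\beta)$, and evaluating at $x$ gives $X_\alpha|_x = \sharp_k(\beta_0) = e$.

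Thus every $e \in E|_x$ is the value of a (locally defined) Hamiltonian vector field, so $E|_x \subseteq \langle X_\alpha|_x : \alpha \in \Omega^{k-1}_H(M)\rangle$; combining this with the first inclusion yields the claimed equality pointwise, and hence the equality of distributions.

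The main obstacle is producing enough Hamiltonian forms. Since the differentials $\d\alpha$ are automatically closed, without an integrability assumption on $S^k$ the family $\{X_\alpha\}$ could be strictly smaller than $E$ — precisely the phenomenon illustrated by the earlier example of a subbundle admitting no nonzero closed sections, where every Hamiltonian vector field would be forced to vanish. The integrability of $S^k$, funneled through the Poincaré lemma, is exactly what guarantees that each value $\beta_0 \in S^k|_x$ is hit by the differential of some local Hamiltonian primitive, so that the surjectivity argument closes up.
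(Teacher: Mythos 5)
Your proof is correct and is precisely the argument the paper leaves implicit: the paper states this proposition without proof (``we clearly have''), in a context where the subbundles $S^a$ are integrable by standing hypothesis, and the intended justification is exactly your combination of integrability of $S^k$ with the Poincar\'e lemma. Both inclusions are handled correctly --- the trivial one from $X_\alpha = \sharp_k(\d{\alpha})$ together with $K_1 = 0$, and the reverse one by realizing each $\beta_0 \in S^k|_x$ as $\d{\alpha}|_x$ for a local Hamiltonian primitive $\alpha$ --- and you rightly observe that without the integrability hypothesis the equality can fail, as in the paper's example $S_2 = \langle (\d{x} + y\,\d{z}) \wedge \d{t} \rangle$ where no nonzero closed sections exist.
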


Under this description, integrability is clear. Indeed, using Jacobi identity, we have that the mapping 
$$\Omega^{k-1}_H(M) \rightarrow \mathfrak{X}(M), \, \alpha \mapsto X_\alpha$$ defines a Lie anti-homomorphism,
that is,  $$[X_\alpha, X_\beta] = -X_{\{\alpha, \beta\}},$$ from which involutivity follows. Now we can describe 
the multisymplectic form in each leaf $F \in \mathcal{F}$ of the foliation as follows:

\begin{proposition} Let $\mathcal{F}$ be the foliation induced by the (integrable) characteristic distribution. Then, the
    multisymplectic form $\omega_F$ defined in \cref{Graded_Poisson_structures:thm:Poisson_implies_Multisymplecticfoliation}
    can be expressed as 
    $$(\omega_F|_x )(X_{\alpha_1} |_x, \dots, X_{\alpha_{k+1}}|_x) := (\d{\alpha_1}|_x)(X_{\alpha_2}|_x, \dots, X_{\alpha_k}|_x),$$
    for every $x \in F$.
\end{proposition}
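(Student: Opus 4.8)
The plan is to unwind the definition of $\omega_F$ given in the proof of \cref{Graded_Poisson_structures:thm:Poisson_implies_Multisymplecticfoliation} and to exploit that every $X_{\alpha_i} = \sharp_k(\d{\alpha_i})$ lies in the characteristic distribution $E = \Im \sharp_k$, hence is tangent to the leaf $F$. First I would recall the defining property of $\omega_F$. In that proof the $(k+1)$-form $\omega_F|_x$ is built from the skew-symmetric map $\flat: E|_x \to \bigwedge^k (E|_x)^\ast$, $\flat(\sharp_k(\beta)) := i^\ast \beta$, through the relation $\flat(u) = \iota_u \omega_F|_x$; equivalently, for any $\beta \in (S^k)|_x$ and any $v_1, \dots, v_k \in E|_x$,
\[
\omega_F|_x\big(\sharp_k(\beta), v_1, \dots, v_k\big) = (i^\ast \beta)(v_1, \dots, v_k),
\]
where $i^\ast$ denotes the restriction of forms to $F$.

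Next I would apply this identity with $\beta = \d{\alpha_1}$, which takes values in $S^k$ precisely because $\alpha_1$ is Hamiltonian, and with $v_j = X_{\alpha_{j+1}}|_x$ for $1 \le j \le k$. Since $X_{\alpha_1} = \sharp_k(\d{\alpha_1})$, this yields
\[
\omega_F|_x\big(X_{\alpha_1}|_x, \dots, X_{\alpha_{k+1}}|_x\big) = (i^\ast \d{\alpha_1})\big(X_{\alpha_2}|_x, \dots, X_{\alpha_{k+1}}|_x\big).
\]
Because each $X_{\alpha_j} = \sharp_k(\d{\alpha_j})$ is tangent to $F$, the restriction $i^\ast$ acts trivially on these arguments, so the right-hand side coincides with $(\d{\alpha_1}|_x)(X_{\alpha_2}|_x, \dots, X_{\alpha_{k+1}}|_x)$, which is exactly the claimed expression.

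The argument is in essence a direct verification, so no genuine obstacle arises; the only points requiring care are bookkeeping ones. One must confirm the contraction convention relating $\flat$ and $\omega_F$ with the correct sign, i.e. that $\iota_{\sharp_k(\beta)} \omega_F = i^\ast \beta$, and one should note for well-posedness that the right-hand side of the stated formula carries $k$ arguments $X_{\alpha_2}, \dots, X_{\alpha_{k+1}}$, matching the degree of the $k$-form $\d{\alpha_1}$. With these conventions fixed, the three displayed steps above assemble into the full proof.
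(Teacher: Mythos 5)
Your proof is correct and follows essentially the same route as the paper, whose own proof is simply a one-line appeal to the description of $\omega_F$ via $\flat(\sharp_k(\beta)) = i^\ast\beta$ and $\iota_u\omega_F = \flat(u)$, which you have spelled out explicitly. Your observation that the right-hand side must carry $k$ arguments $X_{\alpha_2},\dots,X_{\alpha_{k+1}}$ (rather than the $X_{\alpha_2},\dots,X_{\alpha_k}$ printed in the statement) is also correct; that is a typo in the paper.
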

\begin{proof} It follows from the description of $\omega_F$ given in 
    \cref{Graded_Poisson_structures:thm:Multisymplecticfoliation_implies_Poisson}.
\end{proof}

\begin{remark} Similarly to how both \cref{Graded_Poisson_structures:thm:Poisson_implies_Multisymplecticfoliation} and \cref{Graded_Poisson_structures:thm:Multisymplecticfoliation_implies_Poisson}  can be generalized to graded Dirac structures, we get the corresponding generalization of \cref{Graded_Poisson_brackets:thm:Bracket_Implies_Structure} to graded Dirac structures. First, notice that the concept of Hamiltonian form, and that of the Poisson bracket is easily generalized using the description given in \cref{Dirac_Structures:Linear_Dirac_Structures:prop:EquivalentDescription_GradedDirac}. In terms of the vector
subbundles $D_1, \dots, D_k$, the space of $(a-1)$-Hamiltonian forms is 
$$\Omega^{a-1}_H(M) = \{ \alpha \in \Omega^{a-1}(M): (U,\d{\alpha}) \in D_{k+1 - a}, \text{ for certain } U \in \bigvee_{k+1-a}M\},$$ and the bracket of two Hamiltonian forms is
$$\{\alpha, \beta\} := (-1)^{\deg \beta} \iota_V \d{\alpha},$$ where $(U, \d{\alpha}) \in D_{k+1-a},$ and $(V, \d{\beta}) \in D_{k+1-b}.$ The condition of the Poisson bracket of two Hamiltonian being again Hamiltonian follows from involutivity with respect to the Courant bracket, since 
$$\llbracket (U, \alpha), (V, \beta)\rrbracket = ([U,V], - \d\{\alpha, \beta\}) \in D_{2k + 2 - (a + b)},$$
as a quick calculation shows.
Furthermore, the properties of the corresponding Poisson bracket are exactly the same as the obtained in \cref{Graded_Poisson_structures:thm:Properties_Poisson_Bracket}. Dropping the non-degeneracy condition $(S^k)^{\circ, 1} = 0$, we can define a graded Poisson bracket as a collection of brackets $$\Omega^{a-1}_H(U) \otimes \Omega^{b-1}_H(U) \xrightarrow{\{\cdot, \cdot \}_U} \Omega^{a + b - (k+1)}$$ satisfying all the properties listed in \cref{Graded_Poisson_strcutures:def:bracket}. Then, we obtain the corresponding result in graded Dirac manifolds:
\end{remark}

\begin{theorem} \label{GradedPoissonStructures:thm:Bracket_implies_GradedDirac}
Let $S^a \subseteq \bigwedge^a M$, $1 \leq a \leq k$ be a sequence of integrable vector subbundles satisfying
    $$ (S^a)^{\circ, p} = (S^b)^{\circ, p},$$ for each $p \leq a, b$, and let $\{\cdot, \cdot\}_U,$ (where $U \subseteq M$ is an open set), be a Poisson bracket on this sequence. Suppose that the family $S^a$ satisfies the following properties:
    \begin{enumerate}[i)]
        \item Locally, there exists Hamiltonian forms $\gamma_{ij} \in \Omega^{b-2}_H(U),$ and functions $f^j_i$ such that
        $$S^b = \langle \d{f}^j_i \wedge \d{\gamma}_{ij}, i\rangle;$$
        \item For each $1 \leq a \leq k $, locally, there exists a family of Hamiltonian forms forms ${\gamma}^j,$ and a family of vector fields $X_i$ such that 
        $$TM = \langle X_i \rangle, \,\,S^a = \langle \d {\gamma}^j\rangle,$$ and $$\pounds_{X_i} \gamma^j = 0.$$
    \end{enumerate}
    Then, there exists an unique graded Dirac structure $D_1, \dots, D_k$ such that $\{\cdot, \cdot\}_U$ is the induced bracket by this
    structure. 
\end{theorem}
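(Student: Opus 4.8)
The plan is to reproduce, almost verbatim, the four-step argument of \cref{Graded_Poisson_brackets:thm:Bracket_Implies_Structure}, the only structural change being that we no longer impose the non-degeneracy condition $(S^k)^{\circ,1}=0$, so that the resulting tensorial data defines a graded \emph{Dirac} rather than a graded Poisson structure. Concretely, I would work throughout with the equivalent description of linear graded Dirac structures furnished by \cref{Dirac_Structures:Linear_Dirac_Structures:prop:EquivalentDescription_GradedDirac}, i.e. a triple $(S^a,K_p,\sharp_a)$ with $K_p=(S^{k+1-q})^{\circ,p}$ and $\iota_{\sharp_a(\alpha)}\beta=(-1)^{pq}\iota_{\sharp_b(\beta)}\alpha$, whose graph recovers the family $D_1,\dots,D_k$; and at the very end I would invoke \cref{Dirac_structures:thm:Differential_equivalence} to reduce involutivity of the whole family to involutivity of $D_1$ alone.

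For the construction (Steps 1--3) I would first set $K_p:=(S^p)^{\circ,p}$; the compatibility hypothesis $(S^a)^{\circ,p}=(S^b)^{\circ,p}$ guarantees both that this is a subbundle and that $K_{k+1-a}=(S^{k+1-a})^{\circ,k+1-a}$, yielding the dual identification $(S^{k+1-a})^{\ast}\cong\bigvee_{k+1-a}M/K_{k+1-a}$. Exactly as before, for a Hamiltonian form $\alpha\in\Omega^{a-1}_H(U)$ the assignment $\beta\mapsto(-1)^{\deg\alpha}\{\beta,\alpha\}$ factors, by locality, through $\d{\beta}$ and hence through $S^{k+1-a}$, defining $\sharp_a$ by $\{\beta,\alpha\}=(-1)^{\deg\alpha}\iota_{\sharp_a(\d{\alpha})}\d{\beta}$ on the relevant pair of degrees. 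I would then promote this identity to all Hamiltonian forms by the same double induction: property (i), that $S^b$ is generated by forms $\d{f^j_i}\wedge\d{\gamma_{ij}}$, lets one raise the order of $\beta$ via the Leibniz rule assumed of $\{\cdot,\cdot\}_U$, while property (ii) lets one lower the order of $\alpha$. Here property (ii) enters slightly differently than in the Poisson case: since $TM=\langle X_i\rangle$ and $\pounds_{X_i}\gamma^j=0$, one has $\d{\iota_{X_i}\gamma^j}=-\iota_{X_i}\d{\gamma^j}$, so the $\iota_{X_i}\gamma^j$ are Hamiltonian and their differentials generate $S^{a-1}$; invariance by symmetries together with the auxiliary identity $\sharp_{a-1}(\d{\iota_X\alpha})=-\sharp_a(\d{\alpha})\wedge X+K_{k+2-a}$ then closes the induction.

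It remains (Step 4) to check that $(S^a,K_p,\sharp_a)$ is a graded Dirac structure. The annihilator condition $K_p=(S^{k+1-q})^{\circ,p}$ is immediate from compatibility, and the skew-symmetry $\iota_{\sharp_a(\alpha)}\beta=(-1)^{pq}\iota_{\sharp_b(\beta)}\alpha$ follows from graded skew-symmetry of the bracket; thus the family is weakly Lagrangian pointwise. For involutivity, by \cref{Dirac_structures:thm:Differential_equivalence} it suffices to establish involutivity of $D_1$, i.e. that for $\alpha,\beta\in\Omega^{k-1}_H(M)$ with $\sharp_k(\d{\alpha})=U+K_1$ and $\sharp_k(\d{\beta})=V+K_1$ one has $\sharp_k(\d{\{\alpha,\beta\}})=-[U,V]+K_1$. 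As in the Poisson case this is read off the graded Jacobi identity applied to $\alpha,\beta$ and a Hamiltonian function $f$, i.e. one with $\d{f}\in S^1$: Jacobi gives $\iota_{\sharp_k(\d{\{\alpha,\beta\}})}\d{f}=-\iota_{[U,V]}\d{f}$ for every such $f$, and integrability of $S^1$ supplies, locally, enough exact forms $\d{f}$ valued in $S^1$ to span $S^1$, so the two sides agree modulo $(S^1)^{\circ,1}=K_1$, which is precisely the desired equality. Uniqueness is automatic, since the defining formula forces each $\sharp_a$.

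The main obstacle, and the only genuine departure from \cref{Graded_Poisson_brackets:thm:Bracket_Implies_Structure}, is the loss of non-degeneracy: when $K_1\neq0$ the Hamiltonian vector field $\sharp_k(\d{\alpha})$ is defined only modulo $K_1$, and only functions with $\d{f}\in S^1$ are Hamiltonian, so the clean derivation-on-$C^\infty(M)$ argument of the Poisson proof is unavailable. The resolution is to verify that every contraction $\iota_{\sharp_a(\d{\alpha})}(\cdot)$ appearing in Steps 1--4 is evaluated against forms lying in the appropriate $S^{\bullet}$, against which $K_{\bullet}=(S^{\bullet})^{\circ,\bullet}$ acts trivially; this makes all the identities well defined modulo $K_{\bullet}$, and the integrability hypothesis on $S^1$ furnishes exactly the Hamiltonian functions needed to detect equality modulo $K_1$ in the Jacobi step. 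Once this bookkeeping is in place, every remaining computation is identical to the Poisson case.
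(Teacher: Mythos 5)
Your proposal is correct and is exactly the paper's approach: the paper's own proof is a one-line reduction stating that ``the same argument'' as in \cref{Graded_Poisson_brackets:thm:Bracket_Implies_Structure} produces a family $(S^a,K_{k+1-a},\sharp_a)$ satisfying the hypotheses of \cref{Dirac_Structures:Linear_Dirac_Structures:prop:EquivalentDescription_GradedDirac}, which then determines the unique graded Dirac structure. Your elaboration of where degeneracy matters --- that all contractions are taken against forms in the appropriate $S^{\bullet}$ so everything is well defined modulo $K_{\bullet}$, and that integrability of $S^1$ supplies enough Hamiltonian functions to detect equality modulo $K_1=(S^1)^{\circ,1}$ in the Jacobi step --- is a faithful (indeed more detailed) account of what ``the same argument'' requires.
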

\begin{proof} The same argument from \cref{Graded_Poisson_brackets:thm:Bracket_Implies_Structure} gives us a family $(S^a, K_{k+1-a}, \sharp_a)$ satisfying the hypothesis of \cref{Dirac_Structures:Linear_Dirac_Structures:prop:EquivalentDescription_GradedDirac} and, therefore, it determines an unique graded Dirac structure, $D_1, \dots, D_k$.
\end{proof}

\begin{remark} This result clearly resembles the original idea behind Dirac structures, structures arising from Poisson algebras defined on a suitable subclass of functions on a manifold (see \cite{Courant1986}). In this case, we get a graded Dirac structure arising from a graded Poisson bracket defined on a subclass of forms.
\end{remark}


\subsection{Currents and conserved quantities}
\label{Graded_Poisson_structures:subsection:Currents}

Recently, F. Gay-Balmaz, J. C. Marrero, and N. Martínez-Alba introduced a bracket formulation of classical field theory in \cite{GayBalmaz2024}. Following the notation of \cref{ex:Classical_field_theory}, a Hamiltonian in classical field theory is a section
$$h: \bigwedge^n_2 Y / \bigwedge^n_1 Y \rightarrow \bigwedge^n Y_2,$$ where $$\bigwedge^n_1 Y = \{\text{semi-basic $n$-forms over }X\} = \{\alpha \in \bigwedge^n Y: \iota_e \alpha = 0, \text{when } e \in \ker \d{\pi}\}.$$ The Hamiltonian section has the local expression
$$h(x^\mu, y^i, p^\mu_i) = (x^\mu, y^i, p^\mu_i, p = -H),$$ 
where, usually, $$H = \pdv{L}{z^i_\mu} z^\mu_i - L$$ is the Hamiltonian function associated to a regular Lagrangian. This bracket is defined for a very specific family of forms, those $\alpha \in \Omega^{n-1}(Z^\ast)$\footnote{where we are denoting $Z^\ast = \bigwedge^n_2 Y / \bigwedge^n_1 Y.$} with $$\alpha = \left(A^i p^\mu_i + B^\mu\right) \d{^{n-1}}x_\mu,$$ where $A^i, B^i$ only depend on $(x^\mu, y^i)$. Then, their bracket is a semi-basic $n$-form over $X$
$$\{\alpha, h\}: \bigwedge^n_2 Y \rightarrow \bigwedge^n X$$ which takes the following local expression
$$\{\alpha, h\} = \left(\pdv{A^i}{x^\mu} p^\mu_i + \pdv{H}{p^\mu_j} \left( \pdv{A^i}{y^j} p^\mu_i + \pdv{B^\mu}{y^j}\right) - \pdv{H}{y^i} A^i\right) \d{^n}x.$$

This bracket measures the evolution of the observable $\alpha$ on a solution of the system determined by $h$. We can define it intrinsically using the geometry of the Poisson bracket on $\bigwedge^n_2 Y$ and, furthermore, we can extend the domain of definition. The property that this bracket satisfies follows from the properties of $\sharp_{n}$, as we will shortly see.\\

First, we notice that the Hamiltonian section is completely determined by the following $n$-form:

\begin{Def} Given a Hamiltonian section $h$, define the form $$\widetilde h: \bigwedge^n_2 Y \rightarrow \bigwedge^n X$$ as $\widetilde h(\alpha) := \alpha - h(\alpha + \bigwedge^n_1 Y).$
\end{Def}

Locally, 
$$\widetilde h(\alpha) = (p + H) \d{^n}x.$$

\begin{proposition} If $\alpha \in \Omega^{n-1}(Z^\ast)$ has the local expression 
$$\alpha = \left(A^i p^\mu_i + B^\mu\right) \d{^{n-1}}x_\mu,$$ where $A^i, B^i$ only depend on $(x^\mu, y^i),$ then $\tau^\ast \alpha$ is Hamiltonian, being $$\tau: \bigwedge^n_2 Y \rightarrow Z^\ast$$ the natural projection.
\end{proposition}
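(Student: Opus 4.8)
The plan is to unwind what ``Hamiltonian'' means for the graded Poisson structure attached to the multisymplectic manifold $\bigwedge^n_2 Y$ and then verify the condition by an explicit contraction. Here the order is $k=n$, and for the canonical multisymplectic structure $\Omega$ the subbundle of top-forms is $S^n = \iota_{TM}\Omega$ (this is the $a=k$ case of $S^a = \iota_{\bigvee_{k+1-a}M}\Omega$). Since $\tau^\ast\alpha$ is an $(n-1)$-form, the statement $\tau^\ast\alpha \in \Omega^{n-1}_H(M)$ amounts to $\d(\tau^\ast\alpha)$ taking values in $S^n$, i.e.\ to the existence of a vector field $X$ on $M = \bigwedge^n_2 Y$ with $\iota_X\Omega = \d(\tau^\ast\alpha)$. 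So it suffices to produce such an $X$.

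First I would note that, because the projection $\tau$ only forgets the fibre coordinate $p$, the pullback $\tau^\ast\alpha$ keeps the coordinate expression $(A^i p^\mu_i + B^\mu)\,\d{^{n-1}}x_\mu$, now read on $M$ with $p^\mu_i$ as coordinate functions. Using that $A^i, B^\mu$ depend only on $(x^\mu, y^i)$ and that $\d{x^\nu}\wedge\d{^{n-1}}x_\mu = \delta^\nu_\mu\,\d{^n}x$, a direct computation gives
\[
\d(\tau^\ast\alpha) = \Big(\frac{\partial A^i}{\partial x^\mu}p^\mu_i + \frac{\partial B^\mu}{\partial x^\mu}\Big)\d{^n}x + \Big(\frac{\partial A^i}{\partial y^j}p^\mu_i + \frac{\partial B^\mu}{\partial y^j}\Big)\d{y^j}\wedge\d{^{n-1}}x_\mu + A^i\,\d{p^\mu_i}\wedge\d{^{n-1}}x_\mu .
\]
The salient feature is that no term containing $\d p$ survives.

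Next I would expand $\iota_X\Omega$ for a general $X = X^\mu\partial_{x^\mu} + X^i\partial_{y^i} + X^i_\mu\partial_{p^\mu_i} + X^p\partial_p$ and match coefficients. The contraction of the first summand $\d p\wedge\d{^n}x$ produces a term $-X^\mu\,\d p\wedge\d{^{n-1}}x_\mu$; since $\d(\tau^\ast\alpha)$ carries no $\d p$, this forces $X^\mu = 0$. With $X^\mu=0$ the double-contraction contributions $\iota_X\bigl(\iota_{\partial_\mu}\d{^n}x\bigr)$ vanish identically, and comparing the remaining coefficients of $\d{p^\mu_i}\wedge\d{^{n-1}}x_\mu$, of $\d{y^j}\wedge\d{^{n-1}}x_\mu$, and of $\d{^n}x$ yields
\[
X^i = -A^i,\qquad X^j_\mu = \frac{\partial A^i}{\partial y^j}p^\mu_i + \frac{\partial B^\mu}{\partial y^j},\qquad X^p = \frac{\partial A^i}{\partial x^\mu}p^\mu_i + \frac{\partial B^\mu}{\partial x^\mu}.
\]
These are smooth functions on $M$, so they define a genuine vector field $X$ with $\iota_X\Omega = \d(\tau^\ast\alpha)$, which exhibits $\d(\tau^\ast\alpha)$ as a section of $S^n$ and hence shows $\tau^\ast\alpha$ is Hamiltonian.

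The only delicate point is the sign bookkeeping in $\iota_X\Omega$, in particular the handling of the factors $\d{^{n-1}}x_\mu = \iota_{\partial_\mu}\d{^n}x$ and the verification that the pieces which \emph{cannot} be matched to $\d(\tau^\ast\alpha)$ — those involving $\d p$ and those coming from $\iota_X\iota_{\partial_\mu}\d{^n}x$ — are precisely the ones annihilated by the choice $X^\mu=0$. Once that consistency is in place, the rest is a routine coefficient comparison, and the fact that the seemingly overdetermined system admits a solution is guaranteed exactly by the vanishing of the horizontal components $X^\mu$.
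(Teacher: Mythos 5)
Your proposal is correct and follows essentially the same route as the paper: compute $\d(\tau^\ast\alpha)$ in the canonical coordinates, observe that no $\d{p}$ term appears, and exhibit the (unique, by non-degeneracy) vector field $X$ with $\iota_X\Omega = \d(\tau^\ast\alpha)$, whose components coincide with the paper's $X_\alpha$. The only difference is presentational — the paper writes $X_\alpha$ down directly, while you derive it by matching coefficients against a general vector field, which makes explicit why the horizontal components $X^\mu$ must vanish.
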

\begin{proof} Indeed, denoting $\widetilde \alpha := \tau^\ast \alpha,$ we have
\begin{align*}
    \d{\widetilde \alpha} =& \left(\pdv{A^i}{x^\mu}p^\mu_i + \pdv{B^\mu}{x^\mu} \right) \d{^n}x\\
    &+ \left(\pdv{A^i}{y^j}p^\mu_i + \pdv{B^\mu}{y^j} \right) \d{y^j} \wedge \d{^{n-1}}x_\mu\\
    &+ A^i \d{p^\mu_i} \wedge \d{^{n-1}}x_\mu.
\end{align*}
Then, it is clear that if we define 
$$X_\alpha := \left(\pdv{A^i}{x^\mu}p^\mu_i + \pdv{B^\mu}{x^\mu} \right) \pdv{p} + \left(\pdv{A^i}{y^j}p^\mu_i + \pdv{B^\mu}{y^j} \right) \pdv{p^\mu_j} - A^i \pdv{y^i},$$
we have 
$$\iota_{X_\alpha} \Omega = \d{\widetilde \alpha},$$
proving the result.
\end{proof}

The relation between the bracket defined in \cite{GayBalmaz2024} and our theory is the following
\begin{proposition} We have that $$\{\alpha, h\} = \iota_{X_\alpha}\d{\widetilde h}$$
\end{proposition}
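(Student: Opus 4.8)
The plan is to establish the identity by a direct computation in the canonical coordinates $(x^\mu, y^i, p^\mu_i, p)$, comparing $\iota_{X_\alpha}\d{\widetilde h}$ against the explicit local expression of $\{\alpha, h\}$ recorded above. The two ingredients I would take as given are the local form $\widetilde h = (p + H)\d{^n}x$ of the Hamiltonian form, and the Hamiltonian vector field
$$X_\alpha = \left(\pdv{A^i}{x^\mu}p^\mu_i + \pdv{B^\mu}{x^\mu}\right)\pdv{p} + \left(\pdv{A^i}{y^j}p^\mu_i + \pdv{B^\mu}{y^j}\right)\pdv{p^\mu_j} - A^i\pdv{y^i}$$
produced in the previous proposition, which satisfies $\iota_{X_\alpha}\Omega = \d{\widetilde\alpha}$ and is what gives meaning to $X_\alpha$ here.

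First I would compute $\d{\widetilde h}$. Since $H = H(x^\mu, y^i, p^\mu_i)$, expanding $\d(p + H)$ and wedging with $\d{^n}x$ annihilates every $\d{x^\mu}$ contribution (in particular the explicit $\pdv{H}{x^\mu}$ term is killed), leaving
$$\d{\widetilde h} = \d{p}\wedge\d{^n}x + \pdv{H}{y^i}\,\d{y^i}\wedge\d{^n}x + \pdv{H}{p^\mu_i}\,\d{p^\mu_i}\wedge\d{^n}x.$$

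Next I would contract with $X_\alpha$. Because $X_\alpha$ carries no $\pdv{x^\mu}$-component we have $\iota_{X_\alpha}\d{^n}x = 0$, so each summand collapses to a single interior product against its leading one-form. Reading $\iota_{X_\alpha}\d{p}$, $\iota_{X_\alpha}\d{y^i}$ and $\iota_{X_\alpha}\d{p^\mu_i}$ off the coefficients of $X_\alpha$ and collecting the coefficient of $\d{^n}x$ produces the three groups of $\{\alpha, h\}$: the contraction of $\d p$ yields the $\pdv{A^i}{x^\mu}p^\mu_i$ term, the contraction of $\d{p^\mu_i}$ (against $\pdv{H}{p^\mu_i}$) yields the $\pdv{H}{p^\mu_j}\bigl(\pdv{A^i}{y^j}p^\mu_i + \pdv{B^\mu}{y^j}\bigr)$ term after relabelling, and the contraction of $\d{y^i}$ (against $\pdv{H}{y^i}$) yields $-\pdv{H}{y^i}A^i$.

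The computation is essentially mechanical once these two formulas are in hand, so the only genuine work is bookkeeping: keeping the index placements straight and applying the interior-product-of-a-wedge sign rule consistently. The one spot where I expect care to be needed is matching the terms that come from the $x$-derivatives of the coefficients of $\alpha$ — specifically confirming that the piece carrying $\pdv{B^\mu}{x^\mu}$ assembles correctly into (or is accounted for against) the stated GayBalmaz expression, rather than surviving as an unwanted divergence-type remainder. Verifying that this term behaves as claimed is the crux; the remaining terms follow immediately from the explicit forms of $\widetilde h$ and $X_\alpha$.
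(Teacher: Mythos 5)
You take the same route the paper intends: its entire proof is ``It is an immediate calculation,'' and the calculation it has in mind is precisely yours --- expand $\d{\widetilde h}$, discard the $\d{x^\mu}$-contributions against $\d{^n}x$, and contract with $X_\alpha$ using that $X_\alpha$ has no $\pdv{x^\mu}$-component. Your expression for $\d{\widetilde h}$ and the three contractions $\iota_{X_\alpha}\d{p} = \pdv{A^i}{x^\mu}p^\mu_i + \pdv{B^\mu}{x^\mu}$, $\iota_{X_\alpha}\d{y^i} = -A^i$, $\iota_{X_\alpha}\d{p^\mu_j} = \pdv{A^i}{y^j}p^\mu_i + \pdv{B^\mu}{y^j}$ are all correct.

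The crux you flag is genuine, but it resolves against the paper's displayed formula, not against your computation. One gets
$$\iota_{X_\alpha}\d{\widetilde h} = \left(\pdv{A^i}{x^\mu}p^\mu_i + \pdv{B^\mu}{x^\mu} + \pdv{H}{p^\mu_j}\left(\pdv{A^i}{y^j}p^\mu_i + \pdv{B^\mu}{y^j}\right) - \pdv{H}{y^i}A^i\right)\d{^n}x,$$
so the divergence term $\pdv{B^\mu}{x^\mu}$ does survive, and it is absent from the local expression of $\{\alpha, h\}$ quoted at the start of the subsection: that display is a misprint, and the proposition is to be read against the corrected bracket. Internal consistency of the paper settles this. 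In the proof of the well-definedness proposition for $\{\cdot,\cdot\}$ and of the later theorem $\psi^\ast(\d{\alpha}) = (h\circ\psi)^\ast\{\alpha,h\}$, the $\pdv{p}$-coefficient of $\sharp_n(\d{\widetilde\alpha})$ is exactly $A = \pdv{A^i}{x^\mu}p^\mu_i + \pdv{B^\mu}{x^\mu}$; moreover, pulling back $\d{\alpha}$ along a solution of the Hamilton--De Donder--Weyl equations produces the term $\pdv{B^\mu}{x^\mu}$, which the field equations do not cancel, so the bracket could not measure the evolution of currents (its defining purpose) without it. With that correction, the three groups of terms match exactly as you describe, and your proof is complete.
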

\begin{proof} It is an immediate calculation.
\end{proof}

Therefore, we have 
$$\{\alpha, h\} = \iota_{\sharp_{n}(\d{\widetilde\alpha})} \d{\widetilde h},$$
providing the evolution of any observable in terms of the tensor $\sharp_n.$ We can use this description to extend the domain of definition:

\begin{Def}[Observable] We say that an $(n-1)-$form $\alpha \in \Omega^{n-1}(Z^\ast)$ is an \textbf{observable} if it induces a Hamiltonian form on $\bigwedge^n_2 Y$, that is, if $\tau^\ast \alpha \in \Omega^{n-1}_H\left(\bigwedge^n_2 Y\right),$ where $$\tau: \bigwedge^n_2 Y \rightarrow Z^\ast$$ denotes the natural projection. Denote by $\mathcal{O}$ the set of all observables.
\end{Def}
Also, notice that the form induced $\widetilde h$ by any Hamiltonian section $h$ is a semi-basic $n$-form over $X$ on $\bigwedge^n_2 Y$. Thus, we define the following bracket.

\begin{Def} Denote by {$\mathcal{A}$} the linear space of all semi-basic $n$-forms over $X$ defined on $\bigwedge^n_2 Y$ and define
$$ \mathcal{O} \otimes \mathcal{A}\xrightarrow{\{\cdot, \cdot\}} \mathcal{A}$$ by 
$$\{\alpha, \eta\} := \iota_{\sharp_n({ \d{\widetilde \alpha}})} \d{\eta},$$ where $\widetilde \alpha =\tau^\ast \alpha \in \Omega^{n-1}_H\left( \bigwedge^n_2 Y\right).$
\end{Def}

Of course, we have to check well-definedness

\begin{proposition}
\label{Graded_Poisson_structures:prop:Currents}
We have 
$$\{\alpha, \eta\} \in \mathcal{A},$$
for every $\alpha \in \mathcal{O}, \eta  \in \mathcal{A}.$
\end{proposition}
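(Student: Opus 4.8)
The plan is to prove membership in $\mathcal{A}$ by showing that $\{\alpha,\eta\}$ is annihilated by contraction with any vector tangent to the fibres of the projection $\bigwedge^n_2 Y\to X$; this is exactly the condition that $\{\alpha,\eta\}$ be a semi-basic $n$-form over $X$. In canonical coordinates $(x^\mu,y^i,p^\mu_i,p)$ the vertical distribution over $X$ is $\langle \pdv{y^i},\pdv{p^\mu_i},\pdv{p}\rangle$, and a semi-basic $n$-form over $X$ is precisely one of the shape $g\,\d{^n}x$ for some function $g$. Since, by definition, $\{\alpha,\eta\}=\iota_{X_\alpha}\d{\eta}$ with $X_\alpha:=\sharp_n(\d{\widetilde\alpha})$, the whole statement reduces to two facts: (a) the Hamiltonian vector field $X_\alpha$ is vertical over $X$ (has no $\pdv{x^\mu}$ component); and (b) contracting a vector field that is vertical over $X$ against $\d{\eta}$, with $\eta$ semi-basic, again yields a semi-basic $n$-form.

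The essential point, and the one I expect to require the most care, is (a). Because $\alpha\in\mathcal{O}$, the form $\widetilde\alpha=\tau^\ast\alpha$ is a pullback along $\tau:\bigwedge^n_2 Y\to Z^\ast$, so $\d{\widetilde\alpha}=\tau^\ast(\d{\alpha})$ contains neither $\d{p}$ nor any dependence on $p$. On the other hand, writing a general vector field as $W=W^\nu\pdv{x^\nu}+(\text{vertical part})$ and using the explicit form $\Omega=\d{p}\wedge\d{^n}x+\d{p^\mu_i}\wedge\d{y^i}\wedge\d{^{n-1}}x_\mu$, one checks that the only term of $\iota_W\Omega$ capable of involving $\d{p}$ comes from leaving the factor $\d{p}$ uncontracted in the first summand, namely $-\,W^\mu\,\d{p}\wedge\d{^{n-1}}x_\mu$ (the second summand never produces $\d{p}$). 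Matching the $\d{p}$–component in the equation $\iota_{X_\alpha}\Omega=\d{\widetilde\alpha}$ therefore forces $W^\mu\,\d{p}\wedge\d{^{n-1}}x_\mu=0$; since the forms $\d{p}\wedge\d{^{n-1}}x_\mu$ are linearly independent, every $W^\mu$ vanishes and $X_\alpha$ is vertical over $X$. (This is consistent with, and generalizes, the explicit $X_\alpha$ computed earlier for observables of the special form $(A^ip^\mu_i+B^\mu)\d{^{n-1}}x_\mu$, which manifestly has no $\pdv{x^\mu}$ component.)

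With (a) in hand, (b) is a one-line computation. Writing $\eta=g\,\d{^n}x$ gives $\d{\eta}=\d{g}\wedge\d{^n}x$, and verticality of $X_\alpha$ yields $\iota_{X_\alpha}\d{^n}x=0$, whence
$$\{\alpha,\eta\}=\iota_{X_\alpha}\d{\eta}=(\iota_{X_\alpha}\d{g})\,\d{^n}x-\d{g}\wedge\iota_{X_\alpha}\d{^n}x=(X_\alpha g)\,\d{^n}x,$$
which is again a semi-basic $n$-form over $X$, i.e. an element of $\mathcal{A}$. The same conclusion is obtained invariantly: for any $Z$ vertical over $X$ one has $\iota_Z\{\alpha,\eta\}=-\iota_{X_\alpha}\iota_Z\d{\eta}=-\iota_{X_\alpha}\big((Zg)\,\d{^n}x\big)=-(Zg)\,\iota_{X_\alpha}\d{^n}x=0$, using $\iota_Z\d{^n}x=0$ and again the verticality of $X_\alpha$. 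I would present the coordinate computation for brevity, emphasizing only that verticality of $X_\alpha$ is precisely what makes both the cross term and $\iota_{X_\alpha}\d{^n}x$ drop out, completing the proof.
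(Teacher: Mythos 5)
Your proposal is correct and follows essentially the same route as the paper: the paper also writes $X_\alpha = \sharp_n(\d{\widetilde\alpha})$ in canonical coordinates, uses the fact that $\widetilde\alpha = \tau^\ast\alpha$ forces the $\d{p}$-component of $\iota_{X_\alpha}\Omega$ to vanish (phrased there as $\iota_{\partial/\partial p}\,\d{\widetilde\alpha}=0$, equivalent to your matching of $\d{p}$-components), concluding that $X_\alpha$ has no $\pdv{x^\nu}$ part, and then computes $\iota_{X_\alpha}\d{\eta} = (X_\alpha f)\,\d{^n}x$ for $\eta = f\,\d{^n}x$. Your added invariant verification via contraction with vertical vectors is a harmless stylistic supplement, not a different argument.
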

\begin{proof} Indeed, we prove it in coordinates. 
In general, if $\widetilde \alpha = \tau^\ast \alpha$ is Hamiltonian, there exists a vector field 
$$X = A \pdv{p} + B^\mu_i \pdv{p^\mu_i} + C^i\pdv{y^i} + D^\nu \pdv{x^\nu}$$ such that
$$\d{\widetilde \alpha} = \iota_X \Omega,$$ that is, such that
\begin{align*}
    \d{\widetilde \alpha} = &A \d{^n} x + B^\mu_i \d{y^i}\wedge \d{^{n-1}}x_\mu - C^i \d{p^\mu_i} \wedge \d{^{n-1}} x_\mu
    + D^\nu \d{p^\mu_i} \wedge \d{y^i} \wedge \d{^{n-2}} x_{\mu \nu}\\
    &- D^\nu \d{p} \wedge \d{^{n-1}}x_\nu.
\end{align*}
Now, since $\widetilde \alpha = \tau^\ast \alpha,$ $\d{\widetilde \alpha}$ must be zero when contracted with a $\tau-$vertical vector field (a multiple of $\pdv{p}$). Hence, $D^\nu = 0,$ and we have
$$
\sharp_{n}(\d{\widetilde \alpha}) = A \pdv{p} + B^\mu_i \pdv{p^\mu_i} + C^i\pdv{y^i}.
$$
Suppose $$\eta = f \d{^n}x,$$ where $f \in C^\infty\left( \bigwedge^n_2 Y\right).$
Since $$\d{\eta} = \d{f} \wedge \d{^n}x,$$ we have
\begin{align*}
    \{\alpha, \eta\} &= \iota_{\sharp_n(\d{ \alpha})} \d{\eta}\\
    &= \left(A \pdv{f}{p} + \pdv{f}{y^i} C^i + \pdv{f}{p^\mu_i} B^\mu_i\right) \d{^n x},
\end{align*}
which is semi-basic over $X$, concluding the proof.
\end{proof}

Notice that the Poisson bracket on $\Omega^{n-1}_H\left( \bigwedge^n_2 Y\right)$ induces a bracket on $\mathcal{O}.$ Indeed, if $\alpha, \beta \in \mathcal{O},$ and 
\begin{align*}
    \sharp_n(\d{\widetilde \alpha}) = A \pdv{p} + B^i_\mu \pdv{p^i_\mu} + C^i\pdv{y^i},\\
    \sharp_n(\d{\widetilde \beta}) = \widetilde{A} \pdv{p} + \widetilde{B}^i_\mu \pdv{p^i_\mu} + \widetilde{C^i}\pdv{y^i},
\end{align*}
where $\widetilde \alpha = \tau^\ast \alpha, \widetilde \beta = \tau^\ast \beta,$ we have\footnote{$\{\cdot, \cdot\}$ denotes the Poisson bracket of Hamiltonian $(n-1)-$forms.}
$$\{\widetilde \alpha, \widetilde \beta\} = \left(B^\mu_i \widetilde C^i - C^i \widetilde{B}^i_\mu\right)\d{^{n-1}}x_\mu,$$ which is clearly a semi-basic $(n-1)-$form on $\bigwedge^n_2 Y$ over $Z^\ast,$ so it induces an unique form, which we will denote\footnote{We are aware that this notation overlaps with the Courant bracket defined in \cref{Graded_Dirac_structures:Graded_Courant_bracket}; however, we hope that the reader will distinguish them based on the context they are used on.}
$$\llbracket \alpha, \beta\rrbracket \in \mathcal{O}.$$
\begin{remark} $(\mathcal{O}, \llbracket\cdot, \cdot \rrbracket)$ \textit{is not} a Lie algebra. Indeed, we know that $\{\cdot, \cdot\}$ induces a Lie algebra structure on $\Omega^{n-1}_H\left( \bigwedge^n_2 Y\right)$ modulo exact forms. This behaviour is transmitted onto $(\mathcal{O}, \llbracket\cdot, \cdot \rrbracket)$. Nevertheless, when we restrict this bracket to the space of forms with the local expression
$$\alpha = \left(A^i p^\mu_i + B^\mu\right) \d{^{n-1}}x_\mu,$$ where $A^i, B^i$ only depend on $(x^\mu, y^i),$ it does define a Lie algebra (see \cite{GayBalmaz2024, CastrillonLopez2003}).
\end{remark}

Now we have

\begin{theorem} Let $\alpha, \beta \in \mathcal{O},$ and $\eta \in \mathcal{A}$. Then, 
$$\{\llbracket \alpha, \beta\rrbracket, \eta\} = - \{\alpha, \{\beta, \eta\}\} + \{\beta, \{\alpha, \eta\}\}.$$
\end{theorem}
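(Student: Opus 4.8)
The plan is to reduce the whole identity to Cartan calculus on the closed form $\d{\eta}$. First I would set $X_\alpha := \sharp_n(\d{\widetilde\alpha})$ and $X_\beta := \sharp_n(\d{\widetilde\beta})$, where $\widetilde\alpha = \tau^\ast\alpha$ and $\widetilde\beta = \tau^\ast\beta$, so that by definition $\{\alpha,\eta\} = \iota_{X_\alpha}\d{\eta}$ and $\{\beta,\eta\} = \iota_{X_\beta}\d{\eta}$. Since $\widetilde\alpha,\widetilde\beta$ are Hamiltonian $(n-1)$-forms and, by the very construction of $\llbracket\cdot,\cdot\rrbracket$, one has $\widetilde{\llbracket\alpha,\beta\rrbracket} = \{\widetilde\alpha,\widetilde\beta\}$, the Lie anti-homomorphism property $[X_\alpha,X_\beta] = -X_{\{\widetilde\alpha,\widetilde\beta\}}$ established above yields $X_{\llbracket\alpha,\beta\rrbracket} = -[X_\alpha,X_\beta]$. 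Consequently the left-hand side becomes $\{\llbracket\alpha,\beta\rrbracket,\eta\} = \iota_{X_{\llbracket\alpha,\beta\rrbracket}}\d{\eta} = -\iota_{[X_\alpha,X_\beta]}\d{\eta}$.

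Next I would expand $\iota_{[X_\alpha,X_\beta]}\d{\eta}$. Writing $\sigma := \d{\eta}$, which is closed, and using the graded identity $\iota_{[X,Y]} = \pounds_X\iota_Y - \iota_Y\pounds_X$ together with $\pounds_X = \d{}\circ\iota_X + \iota_X\circ\d{}$ and $\d{\sigma} = 0$, I obtain
\[
\iota_{[X_\alpha,X_\beta]}\sigma = \d{\iota_{X_\alpha}\iota_{X_\beta}\sigma} + \iota_{X_\alpha}\d{\iota_{X_\beta}\sigma} - \iota_{X_\beta}\d{\iota_{X_\alpha}\sigma}.
\]
Because $\iota_{X_\beta}\sigma = \{\beta,\eta\}$ and $\iota_{X_\alpha}\sigma = \{\alpha,\eta\}$ both lie in $\mathcal{A}$ by \cref{Graded_Poisson_structures:prop:Currents}, the two middle terms are exactly $\iota_{X_\alpha}\d{\{\beta,\eta\}} = \{\alpha,\{\beta,\eta\}\}$ and $\iota_{X_\beta}\d{\{\alpha,\eta\}} = \{\beta,\{\alpha,\eta\}\}$. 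Substituting back, the left-hand side equals $-\{\alpha,\{\beta,\eta\}\} + \{\beta,\{\alpha,\eta\}\} - \d{\iota_{X_\alpha}\iota_{X_\beta}\sigma}$, so the stated identity holds precisely when the exact remainder $\d{\iota_{X_\alpha}\iota_{X_\beta}\sigma}$ vanishes.

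The hard part, and the only place where the field-theoretic geometry genuinely enters, is showing that $\iota_{X_\alpha}\iota_{X_\beta}\d{\eta} = 0$. Here I would reuse the coordinate computation in the proof of \cref{Graded_Poisson_structures:prop:Currents}: since $\widetilde\alpha = \tau^\ast\alpha$ and $\widetilde\beta = \tau^\ast\beta$ are pullbacks, the associated Hamiltonian vector fields have no $\partial_{x^\nu}$ component (the coefficient $D^\nu$ is forced to vanish), i.e. $X_\alpha,X_\beta$ are vertical over $X$. As $\eta$ is a semi-basic $n$-form over $X$, locally $\eta = f\,\d{^n}x$ and $\d{\eta} = \d{f}\wedge\d{^n}x$; contracting by a vertical field annihilates the $\d{^n}x$ factor, so $\iota_{X_\beta}\d{\eta} = (X_\beta f)\,\d{^n}x$ and a second vertical contraction gives $\iota_{X_\alpha}\iota_{X_\beta}\d{\eta} = (X_\beta f)\,\iota_{X_\alpha}\d{^n}x = 0$. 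With the remainder gone, the relation $\{\llbracket\alpha,\beta\rrbracket,\eta\} = -\{\alpha,\{\beta,\eta\}\} + \{\beta,\{\alpha,\eta\}\}$ follows. I expect the verticality of the Hamiltonian vector fields of observables, combined with the top-degree semi-basic nature of $\eta$, to be the crux; every other step is formal Cartan calculus.
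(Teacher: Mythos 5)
Your proof is correct and takes essentially the same route as the paper's: both rewrite the left-hand side as $-\iota_{[X_\alpha,X_\beta]}\d{\eta}$ using the anti-homomorphism property $\sharp_n(\d{\{\widetilde\alpha,\widetilde\beta\}}) = -[\sharp_n(\d{\widetilde\alpha}),\sharp_n(\d{\widetilde\beta})]$ and then expand with Cartan calculus. Your explicit isolation of the exact remainder $\d{\iota_{X_\alpha}\iota_{X_\beta}\d{\eta}}$ and its vanishing because $X_\alpha, X_\beta$ have no $\pdv{x^\nu}$ components is precisely the step the paper absorbs into its opening verticality remark (which it misattributes to the bundle $\bigwedge^n_2 Y \rightarrow Z^\ast$; what its own coordinate computation shows, and what is actually needed, is verticality over $X$, exactly as you state).
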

\begin{proof} Indeed, since both $\sharp_n(\d{\widetilde \alpha}), \sharp_n(\d{\widetilde \beta})$ are vertical vector fields on the bundle 
$$\bigwedge^n_2 Y \rightarrow Z^\ast,$$ we have that
\begin{align*}
    \{\llbracket \alpha, \beta\rrbracket, \eta\} &= \iota_{\sharp_n(\d{\{\widetilde \alpha, \widetilde \beta\})}} \d{\eta}  = - \iota_{[\sharp_n( \d{\widetilde \alpha}), \sharp_n(\d{\widetilde \beta})]} \d{\eta}\\
    &= -\pounds_{\sharp_n(\d{\widetilde \alpha})} \iota_{\sharp_n(\d{\widetilde \beta})} \d{\eta}+
    \iota_{\sharp_n(\d{\widetilde \beta})}\pounds_{\sharp_n(\d{\widetilde \alpha})}  \d{\eta}\\
    &= -\iota_{\sharp_n(\d{\widetilde \alpha})} \d{\iota_{\sharp_n(\d{\widetilde \beta})}} \d{\eta}
    +\iota_{\sharp_n(\d{\widetilde \beta})} \d{\iota_{\sharp_n(\d{\widetilde \alpha})}} \d{\eta}\\
    &=  - \{\alpha, \{\beta, \eta\}\} + \{\beta, \{\alpha, \eta\}\}.
\end{align*}
\end{proof}

\begin{corollary} When restricted to the forms with local expression $$\alpha = \left(A^i p^\mu_i + B^\mu\right) \d{^{n-1}}x_\mu,$$ where $A^i, B^i$ only depend on $(x^\mu, y^i),$ the mapping $$\alpha \mapsto \{\alpha, \cdot\}$$ defines a linear anti-representation on $\mathcal{A}$ of the Lie algebra induced by restricting $\llbracket \cdot, \cdot \rrbracket$
\end{corollary}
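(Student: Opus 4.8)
The plan is to recognize this corollary as a direct operator-theoretic reformulation of the preceding theorem, combined with the observation (from the remark just above) that on the indicated class of forms $\llbracket\cdot,\cdot\rrbracket$ is a genuine Lie bracket rather than merely a Lie bracket modulo exact forms. First I would set $\rho(\alpha) := \{\alpha, \cdot\}$ and note that, by \cref{Graded_Poisson_structures:prop:Currents}, each $\rho(\alpha)$ is a well-defined linear endomorphism of $\mathcal{A}$. Linearity of the assignment $\alpha \mapsto \rho(\alpha)$ is immediate, since $\{\alpha, \eta\} = \iota_{\sharp_n(\d{\widetilde\alpha})}\d{\eta}$ depends linearly on $\alpha$ (via $\widetilde\alpha = \tau^\ast\alpha$, the exterior derivative, and the bundle map $\sharp_n$).

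Next I would restrict attention to the subspace $\mathfrak{g} \subseteq \mathcal{O}$ of forms with local expression $\alpha = (A^i p^\mu_i + B^\mu)\d{^{n-1}}x_\mu$ with $A^i, B^\mu$ depending only on $(x^\mu, y^i)$. By the remark preceding the corollary, $\llbracket\cdot,\cdot\rrbracket$ closes on $\mathfrak{g}$ and makes it a Lie algebra; in particular $\llbracket\alpha,\beta\rrbracket \in \mathfrak{g}\subseteq\mathcal{O}$, so $\rho(\llbracket\alpha,\beta\rrbracket)$ is again defined by the same formula. This is the point that rescues the construction: on all of $\mathcal{O}$ the bracket is a Lie bracket only up to exact forms, so the identity below would hold only modulo that ambiguity; passing to $\mathfrak{g}$ removes it.

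The core step is then purely formal: rewriting the identity of the preceding theorem, $\{\llbracket\alpha,\beta\rrbracket,\eta\} = -\{\alpha,\{\beta,\eta\}\} + \{\beta,\{\alpha,\eta\}\}$, in terms of $\rho$. Since $\rho(\alpha)\rho(\beta)\eta = \{\alpha,\{\beta,\eta\}\}$, this reads
$$\rho(\llbracket\alpha,\beta\rrbracket)\eta = \rho(\beta)\rho(\alpha)\eta - \rho(\alpha)\rho(\beta)\eta$$
for every $\eta\in\mathcal{A}$, i.e. $\rho(\llbracket\alpha,\beta\rrbracket) = -[\rho(\alpha),\rho(\beta)]$, which is exactly the anti-homomorphism condition.

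I do not expect a genuine obstacle here: the entire analytic content already sits in \cref{Graded_Poisson_structures:prop:Currents} (that the bracket preserves $\mathcal{A}$) and in the preceding theorem (the double-bracket identity), both available at this stage. The only matters requiring care are bookkeeping: confirming that $\mathfrak{g}$ is an honest Lie subalgebra and not merely a subspace closed modulo exact terms (invoking the cited remark and \cite{GayBalmaz2024, CastrillonLopez2003}), and tracking the sign so that one lands on an \emph{anti}-representation, $-[\rho(\alpha),\rho(\beta)]$, rather than a representation.
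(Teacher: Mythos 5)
Your proposal is correct and follows exactly the route the paper intends: the corollary is stated without an explicit proof precisely because it is the operator reformulation of the preceding theorem's identity $\{\llbracket \alpha, \beta\rrbracket, \eta\} = - \{\alpha, \{\beta, \eta\}\} + \{\beta, \{\alpha, \eta\}\}$, combined with the remark that $\llbracket\cdot,\cdot\rrbracket$ closes to a genuine Lie bracket on the restricted class of forms and with \cref{Graded_Poisson_structures:prop:Currents} for well-definedness on $\mathcal{A}$. Your sign bookkeeping, yielding $\rho(\llbracket\alpha,\beta\rrbracket) = -[\rho(\alpha),\rho(\beta)]$, is exactly the anti-representation condition the paper asserts.
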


\begin{obs} Notice that the bracket 
$$\mathcal{O}\otimes \mathcal{A} \xrightarrow{\{\cdot, \cdot\}} \mathcal{A}$$ 
restricts well to semi-basic forms over $X$ which are also basic over $Z^\ast,$ that is, forms with the expression $$\alpha = f \d{^n}x,$$ where $f = f(x^\mu, y^i, p^\mu_i).$ Therefore, if we denote by $$\mathcal{B} := \{\text{semi basic $n$-forms over }X\} \cap \{\text{basic $n$-forms over } Z^\ast\},$$ we have an induced bracket 
$$\mathcal{O} \otimes \mathcal{B} \xrightarrow{\{\cdot, \cdot\}} \mathcal{B},$$ for which the previous corollary still holds. Also notice (and this is the main reason to consider the previous restriction) that the space of Hamiltonian sections $$\Gamma \left( \bigwedge^n_2 Y \rightarrow Z^\ast\right)$$ is an affine space modelled by $\mathcal{B}.$ Furthermore, if we define $$\{\alpha, h\} := \iota_{\sharp_n(\d{\alpha})} \d{\widetilde h}\footnote{
where $h(x^\mu, y^i, p^\mu_i) = (x^\mu, y^i, p^\mu_i, p = - H),$ $\widetilde h = (p + H) \d{^n x}$}, $$
we have $$\{\alpha, h\} \in \mathcal{B}.$$
\end{obs}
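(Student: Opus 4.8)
The plan is to reduce all four assertions to a single coordinate fact about the Hamiltonian multivector field attached to an observable. Fix $\alpha\in\mathcal{O}$ and write $\widetilde\alpha=\tau^\ast\alpha$. As recorded in the proof of \cref{Graded_Poisson_structures:prop:Currents}, solving $\iota_\bullet\Omega=\d{\widetilde\alpha}$ gives $\sharp_n(\d{\widetilde\alpha})=A\pdv{p}+B^\mu_i\pdv{p^\mu_i}+C^i\pdv{y^i}$, the $\pdv{x^\nu}$-component being forced to vanish because $\widetilde\alpha$ is a $\tau$-pullback and hence $\d{\widetilde\alpha}$ carries no $\d{p}$. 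The first and only substantive step I would isolate is a small lemma: the coefficients $A$, $B^\mu_i$, $C^i$ are all independent of $p$. This is immediate, since the coefficients of $\alpha$ are functions on $Z^\ast$ (hence $p$-independent), so the same holds for those of $\d{\widetilde\alpha}$, and $\Omega$ has constant coefficients, so each of $A,B^\mu_i,C^i$ is a constant-coefficient linear combination of $p$-independent functions. Equivalently, $\sharp_n(\d{\widetilde\alpha})$ is $\tau$-projectable.

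With this lemma, the restriction claim is a one-line contraction. For $\eta=f\,\d{^n}x\in\mathcal{B}$ the function $f$ is $p$-independent, so $\d{\eta}=\d{f}\wedge\d{^n}x$ has no $\d{p}$ term and the $A\pdv{p}$ summand contributes nothing; thus $\{\alpha,\eta\}=\bigl(C^i\pdv{f}{y^i}+B^\mu_i\pdv{f}{p^\mu_i}\bigr)\d{^n}x$, whose coefficient is a sum of products of $p$-independent quantities and hence $p$-independent, giving $\{\alpha,\eta\}\in\mathcal{B}$. This is exactly the asserted restriction $\mathcal{O}\otimes\mathcal{B}\to\mathcal{B}$. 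The persistence of the previous corollary then costs nothing: the identity $\{\llbracket\alpha,\beta\rrbracket,\eta\}=-\{\alpha,\{\beta,\eta\}\}+\{\beta,\{\alpha,\eta\}\}$ already holds on $\mathcal{A}$, and applying the restriction twice keeps both sides inside $\mathcal{B}$, so $\alpha\mapsto\{\alpha,\cdot\}$ still defines an anti-representation there.

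For the affine-space assertion I would recall that the fibres of $\bigwedge^n_2 Y\to Z^\ast$ are the cosets $\alpha+\bigwedge^n_1 Y$, modelled on the one-dimensional space of semi-basic $n$-forms $\bigwedge^n_1 Y=\langle\d{^n}x\rangle$ with coefficient a function on $Z^\ast$; in coordinates a section reads $p=-H(x^\mu,y^i,p^\mu_i)$, and the difference of two sections is $(H'-H)\,\d{^n}x$, precisely an element of $\mathcal{B}$. Finally, for the bracket with a Hamiltonian section I would use $\widetilde h=(p+H)\,\d{^n}x$ to get $\d{\widetilde h}=\bigl(\d{p}+\pdv{H}{y^i}\d{y^i}+\pdv{H}{p^\mu_i}\d{p^\mu_i}\bigr)\wedge\d{^n}x$ and contract with $\sharp_n(\d{\widetilde\alpha})$, obtaining $\{\alpha,h\}=\bigl(A+C^i\pdv{H}{y^i}+B^\mu_i\pdv{H}{p^\mu_i}\bigr)\d{^n}x$. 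Here the $\d{p}$ term survives, but only with the constant coefficient $1$, so $A$ enters undifferentiated and the coefficient is again $p$-independent; hence $\{\alpha,h\}\in\mathcal{B}$. The whole observation therefore rests on the $p$-independence of $A,B^\mu_i,C^i$ — the only step requiring any argument — after which every clause is a routine interior product.
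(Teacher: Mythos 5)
Your proposal is correct and follows essentially the same route as the paper: the observation is there presented as a direct byproduct of the coordinate computation in \cref{Graded_Poisson_structures:prop:Currents} (which gives $\sharp_n(\d{\widetilde\alpha}) = A\pdv{p}+B^\mu_i\pdv{p^\mu_i}+C^i\pdv{y^i}$ together with the local formulas for $\{\alpha,\eta\}$ and $\{\alpha,h\}$), and your argument reproduces exactly that computation, including the coset description of the fibres of $\bigwedge^n_2 Y \rightarrow Z^\ast$ for the affine-space clause. The one point you make explicit that the paper leaves implicit --- and it is precisely the ingredient needed, beyond the semi-basicness already established in \cref{Graded_Poisson_structures:prop:Currents}, for the bracket to land in $\mathcal{B}$ rather than merely in $\mathcal{A}$ --- is the lemma that $A$, $B^\mu_i$, $C^i$ are independent of $p$ (equivalently, that $\sharp_n(\d{\widetilde\alpha})$ is $\tau$-projectable), which follows correctly from $\d{\widetilde\alpha}=\tau^\ast\d{\alpha}$ having $p$-independent coefficients and $\Omega$ having constant coefficients.
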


Therefore, we have an affine bracket 
$$\mathcal{O} \otimes \Gamma\left(\bigwedge^n_2 Y \rightarrow Z^\ast\right) \xrightarrow{\{\cdot, \cdot\}} \mathcal{B}$$ that satisfies the equality
$$\{\llbracket\alpha, \beta \rrbracket, h\} = -\{\alpha, \{\beta, h\}\} + \{\beta, \{\alpha, h\}\}.$$
This implies the following
\begin{corollary}[\cite{GayBalmaz2024}] When restricted to forms with the expression $$\alpha = \left(A^i p^\mu_i + B^\mu\right) \d{^{n-1}}x_\mu,$$ where $A^i, B^i$ only depend on $(x^\mu, y^i),$ the previous bracket defines an affine Lie algebra representation (for more details we refer to the cited paper).
\end{corollary}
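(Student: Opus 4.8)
The plan is to read the identity established just above,
$$\{\llbracket\alpha, \beta \rrbracket, h\} = -\{\alpha, \{\beta, h\}\} + \{\beta, \{\alpha, h\}\},$$
as precisely the statement that $\rho(\alpha) := \{\alpha, \cdot\}$ is an affine (anti-)representation of the Lie algebra of distinguished observables, and to supply the two structural facts that make this reading legitimate: that $h \mapsto \{\alpha, h\}$ is genuinely affine in $h$, and that its linear part is the anti-representation on $\mathcal{B}$ obtained in the preceding corollary. I will not redo any computation, since the analytic content is already contained in the preceding theorem.

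First I would fix a reference Hamiltonian section $h_0$. Since $\Gamma\left(\bigwedge^n_2 Y \to Z^\ast\right)$ is an affine space modelled on $\mathcal{B}$, every section can be written $h = h_0 + \eta$ with $\eta \in \mathcal{B}$, and then $\widetilde h$ depends affinely on $h$, its displacement being the corresponding element of $\mathcal{B}$. Because $\{\alpha, h\} = \iota_{\sharp_n(\d{\widetilde\alpha})} \d{\widetilde h}$ is linear in $\widetilde h$, one obtains
$$\{\alpha, h\} = \{\alpha, h_0\} + \{\alpha, \eta\},$$
so $\rho(\alpha)$ is affine with translation part $\{\alpha, h_0\} \in \mathcal{B}$ and linear part the operator $L_\alpha : \eta \mapsto \{\alpha, \eta\}$ on $\mathcal{B}$. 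This linear part is exactly the bracket $\mathcal{O} \otimes \mathcal{B} \to \mathcal{B}$ whose anti-representation property $L_{\llbracket\alpha,\beta\rrbracket} = -[L_\alpha, L_\beta]$ was recorded in the preceding corollary, and well-definedness ($\{\alpha,\eta\}\in\mathcal{B}$) was checked in the preceding observation.

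To conclude, I would split the identity into its linear and constant parts along $h = h_0 + \eta$. The terms linear in $\eta$ reproduce $L_{\llbracket\alpha,\beta\rrbracket}\eta = -L_\alpha L_\beta\eta + L_\beta L_\alpha\eta$, i.e. the linear anti-representation already in hand; setting $\eta = 0$ yields the translation identity $\{\llbracket\alpha,\beta\rrbracket, h_0\} = -L_\alpha\{\beta, h_0\} + L_\beta\{\alpha, h_0\}$, which is precisely the $1$-cocycle condition pairing the linear anti-representation with the translation part. Together these two facts are equivalent to $\rho$ being a Lie algebra anti-homomorphism into the affine vector fields of $\Gamma\left(\bigwedge^n_2 Y \to Z^\ast\right)$, that is, an affine representation in the sign convention of the preceding corollary. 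The single point that genuinely requires restricting to forms $\alpha = \left(A^i p^\mu_i + B^\mu\right)\d{^{n-1}} x_\mu$ with $A^i, B^i$ depending only on $(x^\mu, y^i)$ is that the source be a strict Lie algebra: in general $\llbracket\cdot,\cdot\rrbracket$ closes only modulo exact forms, whereas on this subclass it closes and is a genuine Lie algebra by \cite{GayBalmaz2024, CastrillonLopez2003}. I expect the only real obstacle to be the clean bookkeeping of the affine structure — pinning down the linear part as the $\mathcal{B}$-bracket and the translation part as the value at $h_0$ — rather than any new estimate.
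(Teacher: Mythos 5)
Your proposal is correct and follows essentially the same route as the paper, which states the corollary as an immediate consequence of the displayed identity $\{\llbracket\alpha, \beta \rrbracket, h\} = -\{\alpha, \{\beta, h\}\} + \{\beta, \{\alpha, h\}\}$ together with the affine structure of $\Gamma\left(\bigwedge^n_2 Y \rightarrow Z^\ast\right)$ modelled on $\mathcal{B}$, deferring further details to the cited reference. Your explicit bookkeeping --- choosing an origin $h_0$, identifying the linear part of $h \mapsto \{\alpha, h\}$ with the $\mathcal{B}$-bracket anti-representation and the constant part with a cocycle condition, and noting that the restriction to forms $\alpha = \left(A^i p^\mu_i + B^\mu\right) \d{^{n-1}}x_\mu$ is needed only so that $\llbracket\cdot,\cdot\rrbracket$ is a genuine Lie bracket rather than one closing modulo exact forms --- is a faithful and correct filling-in of what the paper leaves implicit.
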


Now that we have seen how to interpret (and extend the domain of definition) of this bracket previously introduced in the literature, let us prove that, even though we have extended the domain of definition, it still measures the evolution of observables.  

\begin{Def} A section $$\psi: X \rightarrow Z^\ast$$ is called a solution for the Hamilton-De Donder-Weyl equations defined by a Hamiltonian section 
$$h: Z^\ast \rightarrow \bigwedge^n_2 Y,$$ if $$\psi^\ast \iota_\xi \Omega_h = 0, \forall \xi \in \mathfrak{X}(Z^\ast),$$ where $\Omega_h = h^\ast \Omega.$
\end{Def}
Locally, this is equivalent to $\psi$ solving the following set of partial differential equations
\begin{align*}
\pdv{\psi^i}{x^\mu} = \pdv{H}{p^\mu_i},\\
\pdv{\psi^\mu_i}{x^\mu} = -\pdv{H}{y^i}.
\end{align*}
\begin{theorem} If $\psi$ is a solution for Hamilton-De Donder- Weyl equations, and $\alpha \in \Omega^{n-1}(Z^\ast)$ is an observable, we have
$$\psi^\ast (\d{\alpha}) = (h \circ \psi)^\ast \{\alpha, h\}.$$
\end{theorem}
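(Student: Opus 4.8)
The plan is to reduce both sides to pullbacks along the single map $\phi := h \circ \psi \colon X \to \bigwedge^n_2 Y$, which satisfies $\tau \circ \phi = \psi$ because $\tau \circ h = \operatorname{id}_{Z^\ast}$. For the left-hand side, since $\widetilde\alpha = \tau^\ast \alpha$ we have $\d{\widetilde\alpha} = \tau^\ast \d{\alpha}$, hence $\phi^\ast \d{\widetilde\alpha} = (\tau\circ\phi)^\ast \d{\alpha} = \psi^\ast \d{\alpha}$. For the right-hand side, I first record the structural identity $\widetilde h = \Theta - \tau^\ast h^\ast \Theta$, where $\Theta$ is the Liouville $n$-form with $\Omega = \d{\Theta}$; this is immediate from the local expressions ($\Theta = p\,\d{^n}x + p^\mu_i\,\d{y^i}\wedge\d{^{n-1}}x_\mu$ and $h^\ast p = -H$). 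Differentiating gives the key relation
$$\d{\widetilde h} = \Omega - \tau^\ast \Omega_h, \qquad \Omega_h = h^\ast\Omega .$$

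Using this together with the defining property $\iota_{X_\alpha}\Omega = \d{\widetilde\alpha}$ of $X_\alpha = \sharp_n(\d{\widetilde\alpha})$, the bracket splits as
$$\{\alpha, h\} = \iota_{X_\alpha}\d{\widetilde h} = \d{\widetilde\alpha} - \iota_{X_\alpha}\tau^\ast\Omega_h .$$
Pulling back by $\phi$ and using the left-hand-side computation, the claim $\psi^\ast\d{\alpha} = \phi^\ast\{\alpha,h\}$ collapses to the single identity $\phi^\ast\big(\iota_{X_\alpha}\tau^\ast\Omega_h\big) = 0$. To handle this term I would show that $X_\alpha$ is $\tau$-projectable: by the explicit form of $\sharp_n(\d{\widetilde\alpha})$ established earlier, $X_\alpha$ has no $\partial/\partial x^\nu$ component and its $\partial/\partial y^i$, $\partial/\partial p^\mu_i$ coefficients are independent of the fibre coordinate $p$ of $\tau$ (equivalently, $X_\alpha$ is invariant under the $p$-translations, which preserve both $\Omega$ and the basic form $\d{\widetilde\alpha}$, so the invariance follows from uniqueness of $X_\alpha$). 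Writing $\bar X_\alpha := \tau_\ast X_\alpha \in \mathfrak{X}(Z^\ast)$ we then have $\iota_{X_\alpha}\tau^\ast\Omega_h = \tau^\ast\big(\iota_{\bar X_\alpha}\Omega_h\big)$, whence $\phi^\ast\big(\iota_{X_\alpha}\tau^\ast\Omega_h\big) = \psi^\ast\big(\iota_{\bar X_\alpha}\Omega_h\big)$.

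The proof then concludes by invoking the Hamilton--De Donder--Weyl equation itself: since $\psi^\ast\iota_\xi\Omega_h = 0$ for every $\xi\in\mathfrak{X}(Z^\ast)$, taking $\xi = \bar X_\alpha$ gives exactly $\psi^\ast\big(\iota_{\bar X_\alpha}\Omega_h\big) = 0$, as needed. The main obstacle is the $\tau$-projectability of $X_\alpha$ together with the justification that the residual term is a genuine contraction of $\Omega_h$ by a vector field on $Z^\ast$; everything else is formal naturality of pullback and contraction. As a cross-check (and an alternative, purely computational proof) one can instead expand both sides in the canonical coordinates $(x^\mu, y^i, p^\mu_i, p)$: a short calculation shows that $\psi^\ast\d{\alpha}$ and $\{\alpha,h\}$ carry the same coefficient function, which is basic over $Z^\ast$, so that $(h\circ\psi)^\ast$ evaluates it precisely along $\psi$, reproducing the identity after substituting $\partial\psi^i/\partial x^\mu = \partial H/\partial p^\mu_i$ and $\partial\psi^\mu_i/\partial x^\mu = -\partial H/\partial y^i$.
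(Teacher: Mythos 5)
Your proof is correct, but it takes a genuinely different route from the paper's. The paper's proof is precisely the coordinate computation you relegate to a cross-check at the end: it expands $\d{\alpha}$, $\{\alpha,h\}$ and $\psi^\ast(\d{\alpha})$ in the canonical coordinates $(x^\mu,y^i,p^\mu_i,p)$, reusing the coefficients $A$, $B^\mu_i$, $C^i$ from the proof of \cref{Graded_Poisson_structures:prop:Currents}, and then substitutes the local Hamilton--De Donder--Weyl equations $\pdv{\psi^i}{x^\mu}=\pdv{H}{p^\mu_i}$, $\pdv{\psi^\mu_i}{x^\mu}=-\pdv{H}{y^i}$. Your main argument is instead structural: the identity $\widetilde h = \Theta - \tau^\ast h^\ast\Theta$, hence $\d{\widetilde h} = \Omega - \tau^\ast\Omega_h$, splits the bracket as $\{\alpha,h\} = \d{\widetilde\alpha} - \iota_{X_\alpha}\tau^\ast\Omega_h$, and the $\tau$-projectability of $X_\alpha$ (which does hold: the paper's proposition already gives $D^\nu=0$, and the remaining coefficients are $p$-independent because $\d{\widetilde\alpha}=\tau^\ast\d{\alpha}$ is basic; your uniqueness-plus-invariance argument via non-degeneracy of $\Omega$ is an acceptable coordinate-free substitute) turns the residual term into $\psi^\ast\bigl(\iota_{\tau_\ast X_\alpha}\Omega_h\bigr)$, which vanishes by the intrinsic form of the field equations with $\xi=\tau_\ast X_\alpha$. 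Each approach has its merits: the paper's calculation is shorter and self-contained, while yours isolates the reusable identity $\d{\widetilde h} = \Omega - \tau^\ast\Omega_h$, never chooses coordinates in the main line of reasoning, and makes transparent \emph{why} the statement is equivalent to the Hamilton--De Donder--Weyl equations --- the discrepancy between the two sides is exactly a contraction of $\Omega_h$ with a projected vector field, which is the kind of term the intrinsic equation $\psi^\ast\iota_\xi\Omega_h=0$ is designed to kill.
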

\begin{proof} It is an immediate calculation in coordinates. Using the notation from \cref{Graded_Poisson_structures:prop:Currents}, we have
$$\d{\alpha} = A \d{^n} x + B^\mu_i \d{y^i} \wedge \d{^{n-1}}x_\mu - C^i \d{p^\mu_i} \wedge \d{^{n-1}} x_\mu,$$ and
$$\{\alpha, h\} = \left(A + \pdv{H}{y^i}C^i + \pdv{H}{p^\mu_i} B^\mu_i \right) \d{^n}x.$$
Therefore, 
\begin{align*}
    \psi^\ast(\d{\alpha}) = \left(A - \pdv{\psi^\mu_i}{x^\mu}C^i + \pdv{\psi^i}{x^\mu} B^\mu_i \right) \d{^n}x,
\end{align*}
and we obtain the result using the local version of Hamilton-De Donder-Weyl equations.
\end{proof}

As a consequence, we obtain the following:

\begin{theorem}If an observable $\alpha \in \Omega^{n-1}_H(M)$ satisfies $$\{\alpha, h\} = 0,$$ then it is conserved for any solution of the Hamilton-De Donder-Weyl equations defined by the Hamiltonian section $h$\footnote{where conserved means that $\psi^\ast (\d{\alpha}) = 0$}. 
\end{theorem}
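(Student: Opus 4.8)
The plan is to read this off directly from the preceding evolution theorem, which already does all the analytic work. That theorem asserts that for any solution $\psi$ of the Hamilton--De Donder--Weyl equations and any observable $\alpha$, one has the identity
$$\psi^\ast(\d{\alpha}) = (h \circ \psi)^\ast \{\alpha, h\}.$$
My whole task therefore reduces to propagating the hypothesis $\{\alpha, h\} = 0$ through the right-hand side of this equality.

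First I would fix an arbitrary solution $\psi\colon X \to Z^\ast$ of the Hamilton--De Donder--Weyl equations for the Hamiltonian section $h$, and invoke the evolution identity above for the given observable $\alpha$. Since $\{\alpha, h\}$ is the zero $n$-form on $\bigwedge^n_2 Y$ by assumption, its pullback along the smooth map $h \circ \psi \colon X \to \bigwedge^n_2 Y$ is again zero, because pullback of forms is a linear (indeed algebra) homomorphism and in particular sends the zero form to the zero form. Substituting this into the identity yields $\psi^\ast(\d{\alpha}) = 0$, which is exactly the asserted conservation of $\alpha$ along $\psi$. As $\psi$ was an arbitrary solution, the conclusion holds for every solution of the Hamilton--De Donder--Weyl equations defined by $h$.

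There is no genuine obstacle here: the statement is a one-line corollary of the evolution theorem, with the entire geometric and computational content already absorbed into that result (and, further upstream, into the coordinate expressions for $\d{\alpha}$ and $\{\alpha,h\}$ together with the local form of the field equations). The only point worth flagging is a notational sanity check, namely that the $\{\alpha,h\}$ appearing in the hypothesis is the same bracket $\iota_{\sharp_n(\d{\widetilde\alpha})}\d{\widetilde h}$ used to state the evolution theorem, so that the two occurrences match and the substitution is legitimate; once this is confirmed the argument is complete.
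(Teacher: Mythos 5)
Your proposal is correct and is exactly the paper's own argument: the paper states this theorem as an immediate consequence ("As a consequence, we obtain the following") of the preceding evolution identity $\psi^\ast(\d{\alpha}) = (h \circ \psi)^\ast \{\alpha, h\}$, and your substitution of $\{\alpha,h\}=0$ into that identity is precisely the intended one-line deduction. Nothing is missing, and your sanity check that both occurrences of $\{\alpha,h\}$ refer to the same bracket $\iota_{\sharp_n(\d{\widetilde\alpha})}\d{\widetilde h}$ is well taken.
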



\section{Conclusions and further work}
\label{section:Conclusions}
In this paper, we reviewed the different definitions of Poisson and Dirac structures of higher degree, introducing some new concepts which (under some mild regularity conditions) include all previous concepts present in the literature, generalizing Dirac geometry to a graded version. We also proved some results whose analogue in classical mechanics was of fundamental importance: the recovery of graded Poisson structures from graded Poisson brackets, and its connection to currents and conserved quantities (\cref{Graded_Poisson_brackets:thm:Bracket_Implies_Structure}, \cref{Graded_Poisson_structures:subsection:Currents}).\\


As a result of this work, we expect this theory to apply in the study of classical field theories:

\begin{enumerate}[i)]

\item The usual Dirac structures are not only a common framework for presymplectic and Poisson geometries, but allow us to include those pairs formed by a vector field and a 1-form on the manifold that are related by structure. Thus, in field theories, we can pair in different degrees multivector Hamiltonians and currents. We expect this theory to permit to discuss the so-called higher-form symmetries \cite{Gomes2023}, a hot topic in theoretical physics.

\item Although the graded nature of the objects studied in this paper (graded Poisson and graded Dirac structures) are characterized by the structure induced on vector fields and $k$-forms, we believe that the pairs of multivector fields and forms of different order will be of fundamental importance in the study of classical field theories. As an example, we have the recovery of the graded Poisson structure given a graded Poisson bracket (\cref{Graded_Poisson_brackets:thm:Bracket_Implies_Structure}), where the interplay between different degree objects seems essential.

\item In the near future, we plan on using the graded brackets to study the properties of the distinguished submanifolds of the evolution space of the theory.

\item  We also aim to extend the results of this paper to an analogous study for the case of classical action-dependent field theories (see \cite{Leon2023, Leon2024b}).

\item The study of singular Lagrangians to obtain a well-posed problem through a constraint algorithm may benefit from the use of graded Dirac structures, and so it could be an interesting direction for research. For applications of Dirac structures to the study of Dirac brackets, we refer to \cite{Bhaskara1988}. In this line of thought, an algebraic formalism like the one proposed in \cite{Ibort1999} for the brackets appearing in classical mechanics could also be extended to the theory presented in this paper.

\item Similar to the results in \cite{Grabowski1997},
it is important to understand the possible extensions of the bracket to families of forms where the order is not restricted to $0 \leq a \leq k-1,$ but can vary freely. Perhaps this could give an interpretation of the bracket given \cref{Graded_Poisson_structures:subsection:Currents} in terms of an extension of the graded Poisson bracket studied in this text.

\item Another possible direction for research is the extension of moment maps and reduction from symplectic geometry to this new setting. Previous work regarding these questions can be found in \cite{Callies2016, Fregier2015, Ryvkin2015, DELEON2004, Blacker2021}.
\end{enumerate}

\section*{Acknowledgments}

We acknowledge financial support of the Ministerio de Ciencia, Innovación y Universidades (Spain), grants PID2022-137909NB-C21 and RED2022-134301-T. We also acknowledge financial support from the Severo Ochoa Programmes for Centers of Excellence in R\&D (CEX2019-000904-S and CEX2023-001347-S). Rubén Izquierdo-López also acknowledges a Severo Ochoa-ICMAT scholarship for master students. Finally, we would like to thank the referee for the thorough revision of the paper and suggestions, which improved its quality. 

\appendix
\section{Schouten-Nijenhuis bracket}
The Schouten-Nijenhuis bracket is a generalization of the classical Lie bracket to multivector fields
$$\mathfrak{X}^p(M) \otimes \mathfrak{X}^q(M) \xrightarrow{[\cdot, \cdot]} \mathfrak{X}^{p+q-1}(M).$$
For locally decomposable multivector fields, it is defined as
\begin{align*}
    &[X_1 \wedge \cdots \wedge X_p, Y_1 \wedge \cdots \wedge Y_q] :=\\
    &\sum_{i = 1}^p \sum_{j= 1}^{q} (-1)^{i+j}[X_i, Y_j] \wedge X_1 \wedge \cdots \wedge \widehat{X}_i \wedge \cdots X_p \wedge Y_1 \wedge \cdots \widehat{Y}_j \wedge \cdots Y_q,
\end{align*}
and is extended to arbitrary multivector fields by linearity. We use the sign conventions of \cite{Marle1997}, \cite{Forger2003a}. Note that \cite{Vaisman1994} uses different sign conventions, differing by a factor of $(-1)^{p+1}$ from ours.\\

Define the Lie derivative of a form $\omega \in \Omega^a(M)$ with respect to a multivector field $U \in \mathfrak{X}^p(M)$ as $$\pounds_U \omega := \d{\iota}_U \omega - (-1)^p \iota_U \d{\omega}.$$ We have the following properties, which we enunciate without proving them (we refer to \cite{Forger2003a} for proofs).

\begin{theorem}[Properties of the Schouten-Nijenhuis bracket and the Lie derivative] Let $U, V, W$ be multivector fields of order $p, q , r$, respectively, and $\omega \in \Omega^a(M).$ Then,
\begin{enumerate}[i)]
    \item $$[U, V] = - (-1)^{(p-1)(q-1)} [V, U],$$
    \item $$\iota_{[U,V]} \omega =(-1)^{(p-1)q} \pounds_U \iota_V \omega - \iota_V \pounds_U \omega,$$
    \item $$[U, V \wedge W] = [U, V ] \wedge W + (-1)^{(p-1)q} V \wedge [U, W],$$
    \item $$(-1)^{(p-1)(q-1)} [U, [V, W]] + \text{cyclic} = 0.$$
\end{enumerate}
\end{theorem}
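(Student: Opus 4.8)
The plan is to avoid the purely combinatorial expansion of the defining double sum and instead exploit the faithful representation of multivector fields as contraction operators on the de Rham complex. Two preliminary observations drive everything. First, $U \mapsto \iota_U$ is an injective morphism of graded algebras from $(\mathfrak{X}^\bullet(M),\wedge)$ into the operators on $\Omega^\bullet(M)$ (so that $\iota_{U\wedge V}$ is the composition of $\iota_U$ and $\iota_V$ up to the sign fixed by the conventions, and two multivector fields agree as soon as their contractions agree on all forms). Second, the Lie derivative is the graded commutator $\pounds_U = [d,\iota_U] = d\iota_U - (-1)^p\iota_U d$, an operator of degree $1-p$. With these in hand, item ii) is the keystone: read as an identity of operators it asserts precisely that $\iota_{[U,V]} = (-1)^{(p-1)q}[\pounds_U,\iota_V]$, where the right-hand side is the graded commutator $\pounds_U\iota_V - (-1)^{(p-1)q}\iota_V\pounds_U$. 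I would therefore prove ii) first and deduce the other three items from it.

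To establish ii), I would first treat the base case in which $U,V$ are ordinary vector fields, where the claim reduces to the classical Cartan identity $\iota_{[X,Y]} = \pounds_X\iota_Y - \iota_Y\pounds_X$. The general case then follows by induction on the total order using the decompositions $U = X\wedge U'$, $V = Y\wedge V'$ and the Leibniz behaviour of $\iota$ and $\pounds$ under $\wedge$. Equivalently, one checks that the operator $(-1)^{(p-1)q}[\pounds_U,\iota_V]$ is $C^\infty(M)$-linear of contraction type and of degree $-(p+q-1)$, hence equal to $\iota_W$ for a unique $W\in\mathfrak{X}^{p+q-1}(M)$, and that this $W$ coincides with the combinatorial $[U,V]$ on decomposables. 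Injectivity of $U\mapsto\iota_U$ makes the bracket, and all subsequent identities, determined by their contraction operators.

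Items i) and iii) then come cheaply. Graded antisymmetry i) can be read directly off the defining sum using $[X_i,Y_j]=-[Y_j,X_i]$ and counting the transpositions needed to reorder the surviving wedge factors; alternatively it follows from ii) together with the graded antisymmetry of the operator commutator. The Leibniz rule iii) says that $[U,-]$ is a graded derivation of degree $p-1$ for $\wedge$: via ii) it reduces to the fact that the graded commutator $[\pounds_U,-]$ is a derivation of the operator product, so that $[\pounds_U,\iota_V\iota_W] = [\pounds_U,\iota_V]\iota_W + (-1)^{(p-1)q}\iota_V[\pounds_U,\iota_W]$, and matching degrees and signs against $\iota_{[U,V]\wedge W}$ and $\iota_{V\wedge[U,W]}$ gives the claim.

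The graded Jacobi identity iv) is the real obstacle, and I expect the sign bookkeeping to be the only genuinely delicate point. The clean strategy is to first prove the auxiliary operator identity
\[
\pounds_{[U,V]} = (-1)^{(p-1)(q-1)}[\pounds_U,\pounds_V],
\]
which one obtains by applying the graded commutator $[d,-]$ to the operator form of ii), using $d^2=0$ (so that $d$ graded-commutes with every $\pounds_U$) and the graded Jacobi identity for $d,\pounds_U,\iota_V$; the resulting exponent $(p-1)q+1-p$ collapses to $(p-1)(q-1)$. Granting this, $U\mapsto\pounds_U$ intertwines, up to that sign, the Schouten bracket with the graded commutator of operators. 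To prove iv) I would expand $\iota_{[U,[V,W]]}$ by applying ii) twice, turning it into a double commutator $[\pounds_U,[\pounds_V,\iota_W]]$ times a sign; the three cyclic terms are then related by the graded Jacobi identity for the operators $\pounds_U,\pounds_V,\iota_W$ (automatic in any associative algebra of graded operators), the term $[\iota_W,[\pounds_U,\pounds_V]]$ being converted back to a contraction along $[U,V]$ through the auxiliary identity above. Faithfulness of $U\mapsto\iota_U$ then descends the vanishing from operators to multivector fields. The remaining labour is purely to verify that the accumulated signs collapse to the stated $(-1)^{(p-1)(q-1)}$ (with the orders of $U,V,W$ substituted as in the statement), and here I would cross-check against the vector-field case, where the identity must reduce to the ordinary Jacobi identity for $[\cdot,\cdot]$ on $\mathfrak{X}(M)$.
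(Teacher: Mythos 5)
The first thing to say is that the paper does not prove this theorem at all: it appears in the appendix with the explicit remark that the properties are ``enunciated without proving them,'' deferring to the reference \cite{Forger2003a}. So there is no argument in the text to compare yours against; your proposal has to be judged on its own merits, and on those merits it is correct and is in fact the standard Cartan-calculus route that the cited literature follows. The two pillars of your plan check out. Reading ii) as the operator identity $\iota_{[U,V]} = (-1)^{(p-1)q}\,[\pounds_U,\iota_V]$ is consistent with the graded-commutator sign, since $\pounds_U$ has degree $1-p$ and $\iota_V$ has degree $-q$, so $[\pounds_U,\iota_V]=\pounds_U\iota_V-(-1)^{(p-1)q}\iota_V\pounds_U$, which reproduces the displayed formula after multiplying by $(-1)^{(p-1)q}$. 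Likewise your auxiliary identity $\pounds_{[U,V]}=(-1)^{(p-1)(q-1)}[\pounds_U,\pounds_V]$ comes out right: applying $[d,\cdot\,]$ to ii), using $[d,\pounds_U]=0$ and $[d,\iota_V]=\pounds_V$, produces the exponent $(p-1)q+1-p$, which indeed equals $(p-1)(q-1)$. Granting ii), the deductions of i), iii) and iv) from faithfulness of $U\mapsto\iota_U$, the derivation property of graded commutators, and the graded Jacobi identity in the operator algebra are all sound, and iv) reducing to the ordinary Jacobi identity for vector fields is the right sanity check.

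Two caveats, which are matters of completeness rather than errors. First, the proposal is a plan: the inductive step for ii) on decomposables and the final collapse of the cyclic signs in iv) are precisely where all the labour (and all the convention-dependence) sits, and they are deferred; note in particular that whether one takes $\iota_{U\wedge V}=\iota_U\iota_V$ or $\iota_V\iota_U$ changes intermediate signs, so the hedge ``up to the sign fixed by the conventions'' must eventually be resolved against the conventions of \cite{Marle1997} that the paper adopts. Second, your alternative characterization of $[U,V]$ as the unique $W$ with $\iota_W=(-1)^{(p-1)q}[\pounds_U,\iota_V]$ is imprecise as stated: $C^\infty(M)$-linearity and the correct degree do not by themselves force an operator to be a contraction (operators such as $\omega\mapsto\alpha\wedge\iota_W\omega$ also qualify), so ``of contraction type'' begs the question and would need to be established, e.g.\ by the very induction on decomposables you propose as the primary route. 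Since that primary route is complete in outline, this does not undermine the proof, but the abstract shortcut should not be presented as an independent argument.
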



\phantomsection
\addcontentsline{toc}{section}{References}
\printbibliography

@Book{Abraham2008,
  author     = {Abraham, R. and Marsden, J.},
  publisher  = {American Mathematical Society},
  title      = {Foundations of {M}echanics},
  year       = {2008},
  isbn       = {9781470411343},
  month      = may,
  doi        = {10.1090/chel/364},
  file       = {:Abraham2008 - Foundations of Mechanics.pdf:PDF},
  groups     = {Classical Mechanics},
  readstatus = {read},
}

@Book{Arnold1989,
  author     = {Arnold, V. I.},
  publisher  = {Springer New York},
  title      = {Mathematical Methods of Classical Mechanics},
  year       = {1989},
  isbn       = {9781475720631},
  doi        = {10.1007/978-1-4757-2063-1},
  file       = {:Arnold1989 - Mathematical Methods of Classical Mechanics.pdf:PDF},
  groups     = {Classical Mechanics},
  issn       = {0072-5285},
  journal    = {Graduate Texts in Mathematics},
  readstatus = {read},
}

@Article{Baez2010,
  author   = {Baez, J.C. and Hoffnung, A.E. and Rogers, C.L.},
  journal  = {Communications in Mathematical Physics},
  title    = {Categorified {Symplectic} {Geometry} and the {Classical} {String}},
  year     = {2010},
  issn     = {1432-0916},
  month    = feb,
  number   = {3},
  pages    = {701--725},
  volume   = {293},
  abstract = {A Lie 2-algebra is a ‘categorified’ version of a Lie algebra: that is, a category equipped with structures analogous to those of a Lie algebra, for which the usual laws hold up to isomorphism. In the classical mechanics of point particles, the phase space is often a symplectic manifold, and the Poisson bracket of functions on this space gives a Lie algebra of observables. Multisymplectic geometry describes an n-dimensional field theory using a phase space that is an ‘n-plectic manifold’: a finite-dimensional manifold equipped with a closed nondegenerate (n + 1)-form. Here we consider the case n = 2. For any 2-plectic manifold, we construct a Lie 2-algebra of observables. We then explain how this Lie 2-algebra can be used to describe the dynamics of a classical bosonic string. Just as the presence of an electromagnetic field affects the symplectic structure for a charged point particle, the presence of a B field affects the 2-plectic structure for the string.},
  doi      = {10.1007/s00220-009-0951-9},
  file     = {Full Text PDF:Baez2010 - Categorified Symplectic Geometry and the Classical String.pdf:PDF:https\://link.springer.com/content/pdf/10.1007%2Fs00220-009-0951-9.pdf},
  groups   = {Momentum maps + algebras of observables},
  keywords = {Manifold, Poisson Bracket, Symplectic Structure, Hamiltonian Vector, Geometric Quantization},
  language = {en},
  url      = {https://doi.org/10.1007/s00220-009-0951-9},
  urldate  = {2024-06-27},
}

@Book{Binz1988,
  author       = {Binz, E. and Śniatycki, J. and Fischer, H.},
  editor       = {},
  publisher    = {North-Holland ;},
  title        = {Geometry of classical fields},
  year         = {1988},
  address      = {Amsterdam ;},
  isbn         = {9780080872650},
  number       = {v. 123},
  series       = {Notas de matemática},
  creationdate = {2024-08-23T09:57:58},
  groups       = {Field Theory},
  pagetotal    = {450},
  ppn_gvk      = {1748929437},
  readstatus   = {read},
}

@InProceedings{Bursztyn2011,
  author     = {Bursztyn, H.},
  booktitle  = {6th {Summer} {School} on {Geometric} and {Topological} {Methods} for {Quantum} {Field} {Theory}},
  title      = {A {Brief} introduction to {Dirac} manifolds},
  year       = {2011},
  month      = dec,
  file       = {Full Text PDF:Bursztyn2011 - A Brief Introduction to Dirac Manifolds.pdf:PDF:https\://arxiv.org/pdf/1112.5037},
  groups     = {Dirac Structures},
  keywords   = {field theory: topological, lectures, geometry: symplectic, algebra: Poisson, field theory: vector},
  readstatus = {read},
}

@InCollection{Bursztyn2015,
  author     = {Bursztyn, H. and Cabrera, A. and Iglesias, D.},
  publisher  = {Springer},
  title      = {Multisymplectic {Geometry} and {Lie} {Groupoids}},
  year       = {2015},
  address    = {New York, NY},
  editor     = {Chang, Dong Eui and Holm, Darryl D. and Patrick, George and Ratiu, Tudor},
  isbn       = {9781493924417},
  pages      = {57--73},
  abstract   = {We study higher-degree generalizations of symplectic groupoids, referred to as multisymplectic groupoids. Recalling that Poisson structures may be viewed as infinitesimal counterparts of symplectic groupoids, we describe “higher” versions of Poisson structures by identifying the infinitesimal counterparts of multisymplectic groupoids. Some basic examples and features are discussed.},
  doi        = {10.1007/978-1-4939-2441-7_3},
  file       = {Full Text PDF:Bursztyn2015 - Multisymplectic Geometry and Lie Groupoids.pdf:PDF:https\://link.springer.com/content/pdf/10.1007%2F978-1-4939-2441-7_3.pdf},
  groups     = {Multi-Dirac and Higher Dirac},
  keywords   = {Poisson Bracket, Poisson Structure, Poisson Manifold, Symplectic Groupoid, Poisson Geometry},
  language   = {en},
  readstatus = {read},
  url        = {https://doi.org/10.1007/978-1-4939-2441-7_3},
  urldate    = {2024-06-27},
}

@Article{Bursztyn2019,
  author     = {Bursztyn, H. and Martinez-Alba, N. and Rubio, R.},
  journal    = {International Mathematics Research Notices},
  title      = {On {Higher} {Dirac} {Structures}},
  year       = {2019},
  issn       = {1073-7928},
  month      = mar,
  number     = {5},
  pages      = {1503--1542},
  volume     = {2019},
  abstract   = {We study higher-order analogues of Dirac structures, extending the multisymplectic structures that arise in field theory. We define higher Dirac structures as involutive subbundles of \$TM+{\textbackslash}wedge{\textasciicircum}k TM{\textasciicircum}*\$ satisfying a weak version of the usual lagrangian condition (which agrees with it only when \$k=1\$). Higher Dirac structures transversal to \$TM\$ recover the higher Poisson structures introduced in Bursztyn et al. [8] as the infinitesimal counterparts of multisymplectic groupoids. We describe the leaf-wise geometry underlying an involutive isotropic subbundle in terms of a distinguished 1-cocycle in a natural differential complex, generalizing the presymplectic foliation of a Dirac structure. We also identify the global objects integrating higher Dirac structures.},
  doi        = {10.1093/imrn/rnx163},
  file       = {:Bursztyn2019 - On Higher Dirac structures.pdf:PDF},
  groups     = {Multi-Dirac and Higher Dirac},
  readstatus = {read},
  url        = {https://doi.org/10.1093/imrn/rnx163},
  urldate    = {2024-06-27},
}

@Article{Callies2016,
  author   = {Callies, M. and Frégier, Y. and Rogers, C.L. and Zambon, M.},
  journal  = {Advances in Mathematics},
  title    = {Homotopy moment maps},
  year     = {2016},
  issn     = {0001-8708},
  month    = nov,
  pages    = {954--1043},
  volume   = {303},
  abstract = {Associated to any manifold equipped with a closed form of degree {\textgreater}1 is an ‘L∞-algebra of observables’ which acts as a higher/homotopy analog of the Poisson algebra of functions on a symplectic manifold. In order to study Lie group actions on these manifolds, we introduce a theory of homotopy moment maps. Such a map is a L∞-morphism from the Lie algebra of the group into the observables which lifts the infinitesimal action. We establish the relationship between homotopy moment maps and equivariant de Rham cohomology, and analyze the obstruction theory for the existence of such maps. This allows us to easily and explicitly construct a large number of examples. These include results concerning group actions on loop spaces and moduli spaces of flat connections. Relationships are also established with previous work by others in classical field theory, algebroid theory, and dg geometry. Furthermore, we use our theory to geometrically construct various L∞-algebras as higher central extensions of Lie algebras, in analogy with Kostant's quantization theory. In particular, the so-called ‘string Lie 2-algebra’ arises this way.},
  doi      = {10.1016/j.aim.2016.08.012},
  file     = {ScienceDirect Full Text PDF:https\://www.sciencedirect.com/science/article/pii/S0001870816310192/pdfft?md5=b716b86190b204868a7a3c8b833e868c&pid=1-s2.0-S0001870816310192-main.pdf&isDTMRedir=Y:application/pdf},
  groups   = {Momentum maps + algebras of observables},
  keywords = {Strong homotopy Lie algebra, Moment map, Equivariant cohomology, Multisymplectic geometry},
  url      = {https://www.sciencedirect.com/science/article/pii/S0001870816310192},
  urldate  = {2024-06-27},
}

@Article{Cantrijn1999,
  author     = {Cantrijn, F. and Ibort, A. and De León, M.},
  journal    = {Journal of the Australian Mathematical Society. Series A. Pure Mathematics and Statistics},
  title      = {On the geometry of multisymplectic manifolds},
  year       = {1999},
  issn       = {0263-6115},
  month      = jun,
  number     = {3},
  pages      = {303--330},
  volume     = {66},
  doi        = {10.1017/s1446788700036636},
  file       = {:C\:/Users/Acer/Downloads/on-the-geometry-of-multisymplectic-manifolds.pdf:PDF},
  groups     = {Multisymplectic Geometry, Generalities},
  publisher  = {Cambridge University Press (CUP)},
  readstatus = {read},
}

@Article{Cantrijn1996,
  author     = {Cantrijn, F. and Ibort, A. and de León, M.},
  journal    = {Rend. Sem. Mat. Univ. Politec. Torino},
  title      = {Hamiltonian structures on multisymplectic manifolds},
  year       = {1996},
  month      = {01},
  number     = {3},
  pages      = {225–236},
  volume     = {54},
  file       = {:Cantrijn1996 - Hamiltonian structures on multisymplectic manifolds.pdf:PDF},
  groups     = {Generalities},
  readstatus = {read},
}

@Article{CastrillonLopez2003,
  author       = {{Castrillón López}, M. and Marsden, J. E.},
  journal      = {Journal of Geometry and Physics},
  title        = {Some remarks on Lagrangian and Poisson reduction for field theories},
  year         = {2003},
  issn         = {0393-0440},
  month        = oct,
  number       = {1},
  pages        = {52--83},
  volume       = {48},
  creationdate = {2024-10-01T12:58:57},
  doi          = {10.1016/s0393-0440(03)00025-1},
  file         = {:CastrillonLopez2003 - Some remarks on Lagrangian and Poisson reduction for field theories.pdf:PDF},
  groups       = {Multisymplectic Formalism},
  publisher    = {Elsevier BV},
}

@Article{Courant1990,
  author     = {Courant, T. J.},
  journal    = {Transactions of the American Mathematical Society},
  title      = {Dirac manifolds},
  year       = {1990},
  issn       = {1088-6850},
  number     = {2},
  pages      = {631--661},
  volume     = {319},
  doi        = {10.1090/s0002-9947-1990-0998124-1},
  file       = {:Courant1990 - Dirac Manifolds.pdf:PDF},
  groups     = {Dirac Structures},
  publisher  = {American Mathematical Society (AMS)},
  readstatus = {read},
}

@InCollection{Courant1986,
  author      = {Courant, T. and Weinstein, A.},
  booktitle   = {Action hamiltoniennes de groupes. {T}roisi\`eme th\'eor\`eme de {L}ie ({L}yon, 1986)},
  publisher   = {Hermann, Paris},
  title       = {Beyond {P}oisson structures},
  year        = {1988},
  isbn        = {2-7056-6085-2},
  pages       = {39--49},
  series      = {Travaux en Cours},
  volume      = {27},
  groups      = {Dirac Structures},
  mrclass     = {58F05 (17B65)},
  mrnumber    = {951168},
  mrreviewer  = {Richard\ W.\ Montgomery},
  reationdate = {2024-10-01T15:05:36},
}

@Article{Dorfman1987,
  author       = {Dorfman, I.},
  journal      = {Phys. Lett. A},
  title        = {Dirac structures of integrable evolution equations},
  year         = {1987},
  issn         = {0375-9601,1873-2429},
  number       = {5},
  pages        = {240--246},
  volume       = {125},
  creationdate = {2024-10-01T15:11:24},
  doi          = {10.1016/0375-9601(87)90201-5},
  fjournal     = {Physics Letters. A},
  groups       = {Dirac Structures},
  mrclass      = {58F07 (35Q20)},
  mrnumber     = {918697},
  mrreviewer   = {Klaus\ Horneffer},
  url          = {https://doi.org/10.1016/0375-9601(87)90201-5},
}

@Book{dorfman,
  author    = {Dorfman, I.},
  publisher = {John Wiley \& Sons, Ltd.},
  title     = {Dirac structures and integrability of nonlinear evolution equations},
  year      = {1993},
  series    = {Nonlinear Sci. Theory Appl.},
  city      = {Chichester},
  groups    = {Dirac Structures},
  pages     = {xii+176 pp.},
}

@Article{Forger2003a,
  author       = {Forger, M. and Paufler, C. and Roemer, H.},
  journal      = {Reviews in Mathematical Physics},
  title        = {The {Poisson} {Bracket} for {Poisson} {Forms} in {Multisymplectic} {Field} {Theory}},
  year         = {2003},
  issn         = {0129-055X, 1793-6659},
  month        = sep,
  note         = {arXiv:math-ph/0202043},
  number       = {07},
  pages        = {705--743},
  volume       = {15},
  abstract     = {We present a general definition of the Poisson bracket between differential forms on the extended multiphase space appearing in the geometric formulation of first order classical field theories and, more generally, on exact multisymplectic manifolds. It is well defined for a certain class of differential forms that we propose to call Poisson forms and turns the space of Poisson forms into a Lie superalgebra.},
  annote       = {Comment: 40 pages LaTeX},
  creationdate = {2024-08-02T12:00:20},
  doi          = {10.1142/S0129055X03001734},
  file         = {arXiv Fulltext PDF:Forger2003a - The Poisson Bracket for Poisson Forms in Multisymplectic Field Theory.pdf:PDF:https\://arxiv.org/pdf/math-ph/0202043.pdf},
  groups       = {Brackets},
  keywords     = {Mathematical Physics, Mathematics - Differential Geometry},
  url          = {http://arxiv.org/abs/math-ph/0202043},
  urldate      = {2024-08-02},
}

@Article{Fregier2015,
  author   = {Frégier, Y. and Laurent-Gengoux, C. and Zambon, M.},
  journal  = {Journal of Geometry and Physics},
  title    = {A cohomological framework for homotopy moment maps},
  year     = {2015},
  issn     = {0393-0440},
  month    = nov,
  pages    = {119--132},
  volume   = {97},
  abstract = {Given a Lie group acting on a manifold M preserving a closed n+1-form ω, the notion of homotopy moment map for this action was introduced in Fregier (0000), in terms of L∞-algebra morphisms. In this note we describe homotopy moment maps as coboundaries of a certain complex. This description simplifies greatly computations, and we use it to study various properties of homotopy moment maps: their relation to equivariant cohomology, their obstruction theory, how they induce new ones on mapping spaces, and their equivalences. The results we obtain extend some of the results of Fregier (0000).},
  doi      = {10.1016/j.geomphys.2015.07.010},
  file     = {ScienceDirect Full Text PDF:https\://www.sciencedirect.com/science/article/pii/S0393044015001564/pdfft?md5=bf5324e433a56225cf6e080714feddb7&pid=1-s2.0-S0393044015001564-main.pdf&isDTMRedir=Y:application/pdf},
  groups   = {Momentum maps + algebras of observables},
  keywords = {Homotopy moment maps, Lie algebras up to homotopy, N-plectic geometry},
  url      = {https://www.sciencedirect.com/science/article/pii/S0393044015001564},
  urldate  = {2024-06-27},
}

@InCollection{Garcia1973,
  author       = {Garc\'ia, P. L.},
  booktitle    = {Symposia {M}athematica, {V}ol. {XIV} ({C}onvegno di {G}eometria {S}implettica e {F}isica {M}atematica \& {C}onvegno di {T}eoria {G}eometrica dell'{I}ntegrazione e {V}ariet\`a{} {M}inimali, {INDAM}, {R}ome, 1973)},
  publisher    = {Istituto Nazionale di Alta Matematica, Rome},
  title        = {The {P}oincar\'e-{C}artan invariant in the calculus of variations},
  year         = {1974},
  pages        = {219--246},
  creationdate = {2024-10-01T16:17:33},
  groups       = {Multisymplectic Formalism},
  mrclass      = {81.58},
  mrnumber     = {406246},
  mrreviewer   = {Manfred\ Andri\'e},
}

@Article{GayBalmaz2024,
  author     = {Gay-Balmaz, F. and Marrero, J.C. and Martínez-Alba, N.},
  journal    = {Revista de la Real Academia de Ciencias Exactas, Físicas y Naturales. Serie A. Matemáticas},
  title      = {A new canonical affine bracket formulation of {Hamiltonian} classical field theories of first order},
  year       = {2024},
  issn       = {1579-1505},
  month      = may,
  number     = {3},
  pages      = {103},
  volume     = {118},
  abstract   = {It has been a long standing question how to extend, in the finite-dimensional setting, the canonical Poisson bracket formulation from classical mechanics to classical field theories, in a completely general, intrinsic, and canonical way. In this paper, we provide an answer to this question by presenting a new completely canonical bracket formulation of Hamiltonian Classical Field Theories of first order on an arbitrary configuration bundle. It is obtained via the construction of the appropriate field-theoretic analogues of the Hamiltonian vector field and of the space of observables, via the introduction of a suitable canonical Lie algebra structure on the space of currents (the observables in field theories). This Lie algebra structure is shown to have a representation on the affine space of Hamiltonian sections, which yields an affine analogue to the Jacobi identity for our bracket. The construction is analogous to the canonical Poisson formulation of Hamiltonian systems although the nature of our formulation is linear-affine and not bilinear as the standard Poisson bracket. This is consistent with the fact that the space of currents and Hamiltonian sections are respectively, linear and affine. Our setting is illustrated with some examples including Continuum Mechanics and Yang–Mills theory.},
  doi        = {10.1007/s13398-024-01603-1},
  file       = {Full Text PDF:GayBalmaz2024 - A New Canonical Affine Bracket Formulation of Hamiltonian Classical Field Theories of First Order.pdf:PDF:https\://link.springer.com/content/pdf/10.1007%2Fs13398-024-01603-1.pdf},
  groups     = {Brackets},
  keywords   = {Hamiltonian classical field theories, Hamiltonian sections, Space of currents, Hamilton–deDonder–Weyl equations, Linear-affine bracket, Affine representations, Continuum mechanics, Yang–Mills theory, 53D42, 70S05, 70S15},
  language   = {en},
  readstatus = {read},
  url        = {https://doi.org/10.1007/s13398-024-01603-1},
  urldate    = {2024-07-02},
}

@Article{Goldschmidt1973,
  author       = {Goldschmidt, H. and Sternberg, S.},
  journal      = {Annales de l'Institut Fourier},
  title        = {The {Hamilton}-{Cartan} formalism in the calculus of variations},
  year         = {1973},
  issn         = {1777-5310},
  number       = {1},
  pages        = {203--267},
  volume       = {23},
  creationdate = {2024-10-01T16:19:18},
  doi          = {10.5802/aif.451},
  file         = {:Goldschmidt1973 - The Hamilton-Cartan formalism in the calculus of variations .pdf:PDF},
  groups       = {Multisymplectic Formalism},
  url          = {https://aif.centre-mersenne.org/articles/10.5802/aif.451/},
  urldate      = {2024-10-01},
}

@Article{Gomes2023,
  author       = {Gomes, P. R. S.},
  journal      = {SciPost Physics Lecture Notes},
  title        = {An {Introduction} to {Higher}-{Form} {Symmetries}},
  year         = {2023},
  issn         = {2590-1990},
  month        = sep,
  note         = {arXiv:2303.01817 [cond-mat, physics:hep-th]},
  pages        = {74},
  abstract     = {These notes are intended to be a pedagogical introduction to higher-form symmetries, which are symmetries whose charged objects are extended operators supported on lines, surfaces, and etc. This subject has been one of the most popular and effervescent topics of theoretical physics in recent years. Gauge theories are central in the study of higher-form symmetries, with Wilson and 't Hooft operators corresponding to the charged objects. Along these notes, we discuss in detail some basic aspects, including Abelian Maxwell and Chern-Simons theories, and \$SU(N)\$ non-Abelian gauge theories. We also discuss spontaneous breaking of higher-form symmetries.},
  annote       = {Comment: 73 pages, 13 figures, review notes, several typos corrected, references added, comments are welcome!},
  creationdate = {2024-07-29T16:29:18},
  doi          = {10.21468/SciPostPhysLectNotes.74},
  file         = {arXiv Fulltext PDF:Gomes2023 - An Introduction to Higher Form Symmetries.pdf:PDF:https\://arxiv.org/pdf/2303.01817.pdf},
  groups       = {Field Theory (physics point of view)},
  keywords     = {High Energy Physics - Theory, Condensed Matter - Strongly Correlated Electrons},
  url          = {http://arxiv.org/abs/2303.01817},
  urldate      = {2024-07-29},
}

@TechReport{Gotay2004,
  author       = {Gotay, M. J. and Isenberg, J. and Marsden, J. E. and Montgomery, R.},
  title        = {Momentum {Maps} and {Classical} {Relativistic} {Fields}. {Part} {I}: {Covariant} {Field} {Theory}},
  year         = {2004},
  month        = aug,
  note         = {arXiv:physics/9801019 type: article},
  abstract     = {This is the first paper of a five part work in which we study the Lagrangian and Hamiltonian structure of classical field theories with constraints. Our goal is to explore some of the connections between initial value constraints and gauge transformations in such theories (either relativistic or not). To do this, in the course of these four papers, we develop and use a number of tools from symplectic and multisymplectic geometry. Of central importance in our analysis is the notion of the ``energy-momentum map'' associated to the gauge group of a given classical field theory. We hope to demonstrate that many different and apparently unrelated facets of field theories can be thereby tied together and understood in an essentially new way. In Part I we develop some of the basic theory of classical fields from a spacetime covariant viewpoint. We begin with a study of the covariant Lagrangian and Hamiltonian formalisms, on jet bundles and multisymplectic manifolds, respectively. Then we discuss symmetries, conservation laws, and Noether's theorem in terms of ``covariant momentum maps.''},
  annote       = {Comment: LaTeX2e, 68 pages, 1 figure, GIMMsy 1; Updated, with minor revisions and corrections},
  creationdate = {2024-10-01T16:01:25},
  doi          = {10.48550/arXiv.physics/9801019},
  file         = {arXiv Fulltext PDF:Gotay2004 - Momentum Maps and Classical Relativistic Fields. Part I_ Covariant Field Theory.pdf:PDF:https\://arxiv.org/pdf/physics/9801019.pdf},
  groups       = {Multisymplectic Formalism},
  keywords     = {Mathematical Physics, General Relativity and Quantum Cosmology, High Energy Physics - Theory},
  school       = {arXiv},
  shorttitle   = {Momentum {Maps} and {Classical} {Relativistic} {Fields}. {Part} {I}},
  url          = {http://arxiv.org/abs/physics/9801019},
  urldate      = {2024-10-01},
}

@Book{Kijowski1979,
  author       = {Kijowski, J. and Tulczyjew, W. M.},
  publisher    = {Springer Berlin Heidelberg},
  title        = {A Symplectic Framework for Field Theories},
  year         = {1979},
  isbn         = {9783540095385},
  creationdate = {2024-10-01T16:24:18},
  doi          = {10.1007/3-540-09538-1},
  groups       = {Field Theory},
  journal      = {Lecture Notes in Physics},
}

@Article{Grabowski1997,
  author       = {J. Grabowski},
  journal      = {Reviews in Mathematical Physics},
  title        = {Z-Graded Extensions of Poisson Brackets},
  year         = {1997},
  issn         = {1793-6659},
  month        = jan,
  number       = {01},
  pages        = {1--27},
  volume       = {09},
  creationdate = {2024-11-11T17:18:20},
  doi          = {10.1142/s0129055x97000026},
  groups       = {Brackets},
  publisher    = {World Scientific Pub Co Pte Lt},
}

@Article{Marle1997,
  author       = {Marle, C. M.},
  journal      = {Journal of Geometry and Physics},
  title        = {The {Schouten}-{Nijenhuis} bracket and interior products},
  year         = {1997},
  issn         = {0393-0440},
  month        = nov,
  number       = {3},
  pages        = {350--359},
  volume       = {23},
  abstract     = {The main properties of the Schouten-Nijenhuis bracket are reviewed and a new formula is proven, which relates that bracket with the right interior product of multivectors by one-forms.},
  creationdate = {2024-08-20T16:28:09},
  doi          = {10.1016/S0393-0440(97)80009-5},
  file         = {ScienceDirect Full Text PDF:https\://www.sciencedirect.com/science/article/pii/S0393044097800095/pdf?md5=52f30e95ba75e9b090d988a93b383952&pid=1-s2.0-S0393044097800095-main.pdf&isDTMRedir=Y:application/pdf},
  groups       = {Differential Geometry},
  keywords     = {Differential geometry, Schouten-Nijenhuis bracket},
  readstatus   = {read},
  url          = {https://www.sciencedirect.com/science/article/pii/S0393044097800095},
  urldate      = {2024-08-20},
}

@Article{Rogers2012,
  author     = {Rogers, C.L.},
  journal    = {Letters in Mathematical Physics},
  title      = {{$L_\infty$}-{Algebras} from {Multisymplectic} {Geometry}},
  year       = {2012},
  issn       = {1573-0530},
  month      = apr,
  number     = {1},
  pages      = {29--50},
  volume     = {100},
  abstract   = {A manifold is multisymplectic, or more specifically n-plectic, if it is equipped with a closed nondegenerate differential form of degree n + 1. In previous work with Baez and Hoffnung, we described how the ‘higher analogs’ of the algebraic and geometric structures found in symplectic geometry should naturally arise in 2-plectic geometry. In particular, just as a symplectic manifold gives a Poisson algebra of functions, any 2-plectic manifold gives a Lie 2-algebra of 1-forms and functions. Lie n-algebras are examples of L∞-algebras: graded vector spaces equipped with a collection of skew-symmetric multi-brackets that satisfy a generalized Jacobi identity. Here, we generalize our previous result. Given an n-plectic manifold, we explicitly construct a corresponding Lie n-algebra on a complex consisting of differential forms whose multi-brackets are specified by the n-plectic structure. We also show that any n-plectic manifold gives rise to another kind of algebraic structure known as a differential graded Leibniz algebra. We conclude by describing the similarities between these two structures within the context of an open problem in the theory of strongly homotopy algebras. We also mention a possible connection with the work of Barnich, Fulp, Lada, and Stasheff on the Gelfand–Dickey–Dorfman formalism.},
  doi        = {10.1007/s11005-011-0493-x},
  file       = {Full Text PDF:Rogers2012 - L∞ Algebras from Multisymplectic Geometry.pdf:PDF:https\://link.springer.com/content/pdf/10.1007%2Fs11005-011-0493-x.pdf},
  groups     = {Momentum maps + algebras of observables},
  keywords   = {53D05, 17B55, 70S05, strongly homotopy Lie algebras, multisymplectic geometry, classical field theory},
  language   = {en},
  readstatus = {skimmed},
  url        = {https://doi.org/10.1007/s11005-011-0493-x},
  urldate    = {2024-06-26},
}

@Book{Ryvkin2016,
  author    = {Ryvkin, L.},
  publisher = {Springer Fachmedien Wiesbaden},
  title     = {Observables and Symmetries of n-Plectic Manifolds},
  year      = {2016},
  isbn      = {9783658123901},
  doi       = {10.1007/978-3-658-12390-1},
  file      = {:Ryvkin2016 - Observavbles and symmetries of n-plectic manifolds.pdf:PDF},
  groups    = {Momentum maps + algebras of observables},
}

@Article{Ryvkin2019,
  author     = {Ryvkin, L. and Wurzbacher, T.},
  journal    = {Journal of Geometry and Physics},
  title      = {An invitation to multisymplectic geometry},
  year       = {2019},
  issn       = {0393-0440},
  month      = aug,
  pages      = {9--36},
  volume     = {142},
  abstract   = {In this article we study multisymplectic geometry, i.e., the geometry of manifolds with a non-degenerate, closed differential form. First we describe the transition from Lagrangian to Hamiltonian classical field theories, and then we reformulate the latter in “multisymplectic terms”. Furthermore, we investigate basic questions on normal forms of multisymplectic manifolds, notably the questions whether and when Darboux-type theorems hold, and “how many” diffeomorphisms certain, important classes of multisymplectic manifolds possess. Finally, we survey recent advances in the area of symmetries and conserved quantities on multisymplectic manifolds.},
  doi        = {10.1016/j.geomphys.2019.03.006},
  file       = {:Ryvkin 2019 - An invitation to multisymplectic geometry.pdf:PDF},
  groups     = {Generalities},
  keywords   = {Multisymplectic geometry, Lagrangian and Hamiltonian field theories, Darboux-type theorems, Groups of multisymplectic diffeomorphisms, Homotopy comoment},
  readstatus = {read},
  url        = {https://www.sciencedirect.com/science/article/pii/S0393044018301189},
  urldate    = {2024-06-26},
}

@Article{Ryvkin2015,
  author   = {Ryvkin, L. and Wurzbacher, T.},
  journal  = {Differential Geometry and its Applications},
  title    = {Existence and unicity of co-moments in multisymplectic geometry},
  year     = {2015},
  issn     = {0926-2245},
  month    = aug,
  pages    = {1--11},
  volume   = {41},
  abstract = {Given a multisymplectic manifold (M,ω) and a Lie algebra g acting on it by infinitesimal symmetries, Fregier–Rogers–Zambon define a homotopy (co-)moment as an L∞-algebra-homomorphism from g to the observable algebra L(M,ω) associated to (M,ω), in analogy with and generalizing the notion of a co-moment map in symplectic geometry. We give a cohomological characterization of existence and unicity for homotopy co-moment maps and show its utility in multisymplectic geometry by applying it to special cases as exact multisymplectic manifolds and simple Lie groups and by deriving from it existence results concerning partial co-moment maps, as e.g. covariant multimomentum maps and multi-moment maps.},
  doi      = {10.1016/j.difgeo.2015.04.001},
  file     = {ScienceDirect Full Text PDF:https\://www.sciencedirect.com/science/article/pii/S0926224515000534/pdfft?md5=4063eb3c330d9cda5a802512193c9919&pid=1-s2.0-S0926224515000534-main.pdf&isDTMRedir=Y:application/pdf},
  groups   = {Momentum maps + algebras of observables},
  keywords = {Multisymplectic geometry, Lie-infinity algebra, Homotopy co-moment},
  url      = {https://www.sciencedirect.com/science/article/pii/S0926224515000534},
  urldate  = {2024-06-26},
}

@Article{Stefan1974,
  author       = {Stefan, P.},
  journal      = {Proceedings of the London Mathematical Society},
  title        = {Accessible Sets, Orbits, and Foliations with Singularities},
  year         = {1974},
  issn         = {0024-6115},
  month        = dec,
  number       = {4},
  pages        = {699--713},
  volume       = {s3-29},
  creationdate = {2024-08-01T14:48:55},
  doi          = {10.1112/plms/s3-29.4.699},
  file         = {:Stefan1984 - Accesible sets orbits and foliations with singularities.pdf:PDF},
  groups       = {Differential Geometry},
  publisher    = {Wiley},
}

@Article{Sussmann1973,
  author       = {Sussmann, H. J.},
  journal      = {Transactions of the American Mathematical Society},
  title        = {Orbits of families of vector fields and integrability of distributions},
  year         = {1973},
  issn         = {1088-6850},
  number       = {0},
  pages        = {171--188},
  volume       = {180},
  creationdate = {2024-08-01T14:51:19},
  doi          = {10.1090/s0002-9947-1973-0321133-2},
  file         = {:Sussmann1973 - Orbits of families of vector fields and integrability distributions.pdf:PDF},
  groups       = {Differential Geometry},
  publisher    = {American Mathematical Society (AMS)},
}

@Book{Vaisman1994,
  author     = {Vaisman, I.},
  publisher  = {Birkhäuser Basel},
  title      = {Lectures on the {G}eometry of {P}oisson {M}anifolds},
  year       = {1994},
  isbn       = {9783034884952},
  doi        = {10.1007/978-3-0348-8495-2},
  file       = {:Vaisman1994 - Lectures on the Geometry of Poisson Manifolds.pdf:PDF},
  groups     = {Poisson Manifolds},
  readstatus = {skimmed},
}

@Article{Vankerschaver2011,
  author     = {Vankerschaver, J. and Yoshimura, H. and Leok, M.},
  journal    = {Journal of Geometry and Physics},
  title      = {On the geometry of multi-{Dirac} structures and {Gerstenhaber} algebras},
  year       = {2011},
  issn       = {03930440},
  month      = aug,
  number     = {8},
  pages      = {1415--1425},
  volume     = {61},
  abstract   = {Semantic Scholar extracted view of "On the geometry of multi-Dirac structures and Gerstenhaber algebras☆" by J. Vankerschaver et al.},
  copyright  = {https://www.elsevier.com/tdm/userlicense/1.0/},
  doi        = {10.1016/j.geomphys.2011.03.005},
  file       = {:Vankerschaver2011 - On the Geometry of Multi Dirac Structures and Gerstenhaber Algebras.pdf:PDF},
  groups     = {Multi-Dirac and Higher Dirac},
  language   = {en},
  readstatus = {read},
  url        = {https://linkinghub.elsevier.com/retrieve/pii/S0393044011000702},
  urldate    = {2024-06-27},
}

@Article{Vankerschaver2012,
  author     = {Vankerschaver, J. and Yoshimura, H. and Leok, M.},
  journal    = {Journal of Mathematical Physics},
  title      = {The Hamilton-Pontryagin principle and multi-Dirac structures for classical field theories},
  year       = {2012},
  issn       = {1089-7658},
  month      = jul,
  number     = {7},
  volume     = {53},
  doi        = {10.1063/1.4731481},
  file       = {:Vankerschaver2012 - The Hamilton-Pontryagin principle and multi-Dirac structures for Classical Field theories.pdf:PDF},
  groups     = {Multi-Dirac and Higher Dirac},
  publisher  = {AIP Publishing},
  readstatus = {read},
}

@Article{Zambon2012,
  author     = {Zambon, M.},
  journal    = {Journal of Symplectic Geometry},
  title      = {{$L_\infty$}-algebras and higher analogues of {Dirac} sturctures and {Courant} albegroids},
  year       = {2012},
  issn       = {1527-5256, 1540-2347},
  month      = dec,
  number     = {4},
  pages      = {563--599},
  volume     = {10},
  abstract   = {We define a higher analogue of Dirac structures on a manifold \$M\$. Under a regularity assumption, higher Dirac structures can be described by a foliation and a (not necessarily closed, non-unique) differential form on \$M\$, and are equivalent to (and simpler to handle than) the multi-Dirac structures recently introduced in the context of field theory by Vankerschaver et al. We associate an \$L\_{\textbackslash}infty\$-algebra of observables to every higher Dirac structure, extending work of Baez et al. on multisymplectic forms. Further, applying a recent result of Getzler, we associate an \$L\_{\textbackslash}infty\$-algebra to any manifold endowed with a closed differential form \$H\$, via a higher analogue of split Courant algebroid twisted by \$H\$. Finally, we study the relations between the \$L\_{\textbackslash}infty\$-algebras appearing above.},
  file       = {Full Text PDF:Zambon2012 - L∞ Algebras and Higher Analogues of Dirac Sturctures and Courant Albegroids.pdf:PDF:https\://projecteuclid.org/journals/journal-of-symplectic-geometry/volume-10/issue-4/L-algebras-and-higher-analogues-of-Dirac-sturctures-and-Courant/jsg/1357153429.pdf},
  groups     = {Multi-Dirac and Higher Dirac},
  publisher  = {International Press of Boston},
  readstatus = {read},
  url        = {https://projecteuclid.org/journals/journal-of-symplectic-geometry/volume-10/issue-4/L-algebras-and-higher-analogues-of-Dirac-sturctures-and-Courant/jsg/1357153429.full},
  urldate    = {2024-06-27},
}

@Article{Ibort1999,
  author       = {Ibort, A. and de Leon, M. and Marrero, J. C. and de Diego, D. M.},
  journal      = {Fortschritte der Physik},
  title        = {Dirac {Brackets} in {Constrained} {Dynamics}},
  year         = {1999},
  issn         = {1521-3978},
  number       = {5},
  pages        = {459--492},
  volume       = {47},
  abstract     = {An unified geometric description of various Dirac brackets for regular and singular lagrangians with holonomic or non-holonomic constraints is presented. Such unified picture relies only on two simple physical arguments: “classical complementarity principle” and “principle of deterministic evolution”. The appropriate geometrization of these principles allows to construct a recursive constraint algorithm that eventually produces a maximal final state manifold where a well defined dynamics exists, naturally equipped with a Dirac bracket such that the dynamics is hamiltonian with respect to it. A classification of constraints in first and second class as envisaged by Dirac emerges also naturally from this picture. The Dirac brackets constructed show explicitly the existence of classical anomalies for such lagrangian theories since in general they do not verify Jacobi's identity. Such features are discussed using a variety of examples.},
  copyright    = {© 1999 WILEY-VCH Verlag Berlin GmbH, Fed. Rep. of Germany},
  creationdate = {2024-10-02T10:27:50},
  doi          = {10.1002/(SICI)1521-3978(199906)47:5<459::AID-PROP459>3.0.CO;2-E},
  file         = {Full Text PDF:https\://onlinelibrary.wiley.com/doi/pdfdirect/10.1002/%28SICI%291521-3978%28199906%2947%3A5%3C459%3A%3AAID-PROP459%3E3.0.CO%3B2-E:application/pdf},
  groups       = {Geometric Classical Mechanics},
  language     = {en},
  url          = {https://onlinelibrary.wiley.com/doi/abs/10.1002/%28SICI%291521-3978%28199906%2947%3A5%3C459%3A%3AAID-PROP459%3E3.0.CO%3B2-E},
  urldate      = {2024-10-02},
}

@Book{Bhaskara1988,
  author       = {Bhaskara, K. H. and Viswanath, K.},
  publisher    = {Longman Scientific \& Technical},
  title        = {Poisson algebras and {P}oisson manifolds},
  year         = {1988},
  address      = {Harlow},
  isbn         = {0470210761},
  note         = {Zugl.: Hyderabad, Univ., Diss., 1984},
  number       = {174},
  series       = {Pitman research notes in mathematics series},
  creationdate = {2024-10-02T10:31:56},
  groups       = {Geometric Classical Mechanics},
  pagetotal    = {128},
  ppn_gvk      = {024120146},
}

@Misc{Leon2024b,
  author       = {de León, M. and Gaset-Rifà, J. and Muñoz-Lecanda, M. C. and Rivas, X. and Román-Roy, N.},
  title        = {Practical Introduction to Action-Dependent Field Theories},
  year         = {2024},
  copyright    = {arXiv.org perpetual, non-exclusive license},
  creationdate = {2024-10-02T10:58:05},
  doi          = {10.48550/ARXIV.2409.08340},
  groups       = {Multicontact},
  keywords     = {High Energy Physics - Theory (hep-th), Mathematical Physics (math-ph), FOS: Physical sciences, FOS: Physical sciences, 70S05, 70S10, 53D10, 35R01},
  publisher    = {arXiv},
}

@Article{Blacker2021,
  author     = {Blacker, C.},
  journal    = {Letters in Mathematical Physics},
  title      = {Reduction of multisymplectic manifolds},
  year       = {2021},
  issn       = {1573-0530},
  month      = may,
  number     = {3},
  volume     = {111},
  doi        = {10.1007/s11005-021-01408-y},
  file       = {:Blacker2021 - Reduction of multisymplectic manifolds.pdf:PDF},
  groups     = {Reduction},
  publisher  = {Springer Science and Business Media LLC},
  readstatus = {read},
}

@Article{Leon2023,
  author       = {de León, M. and Gaset, J. and Muñoz-Lecanda, M. C. and Rivas, X. and Román-Roy, N.},
  journal      = {Journal of Physics A: Mathematical and Theoretical},
  title        = {Multicontact formulation for non-conservative field theories},
  year         = {2023},
  issn         = {1751-8121},
  month        = feb,
  number       = {2},
  pages        = {025201},
  volume       = {56},
  abstract     = {A new geometric structure inspired by multisymplectic and contact geometries, which we call multicontact structure, is developed to describe non-conservative classical field theories. Using the differential forms that define this multicontact structure as well as other geometric elements that are derived from them while assuming certain conditions, we can introduce, on the multicontact manifolds, the variational field equations which are stated using sections, multivector fields, and Ehresmann connections on the adequate fiber bundles. Furthermore, it is shown how this multicontact framework can be adapted to the jet bundle description of classical field theories; the field equations are stated in the Lagrangian and the Hamiltonian formalisms both in the regular and the singular cases.},
  creationdate = {2024-10-02T10:56:39},
  doi          = {10.1088/1751-8121/acb575},
  file         = {IOP Full Text PDF:https\://iopscience.iop.org/article/10.1088/1751-8121/acb575/pdf:application/pdf},
  groups       = {Multicontact},
  language     = {en},
  publisher    = {IOP Publishing},
  url          = {https://dx.doi.org/10.1088/1751-8121/acb575},
  urldate      = {2024-10-02},
}

@Article{DELEON2004,
  author       = {de León, M. and de Diego, D. M. and Santamaría-{M}erino, A.},
  journal      = {International Journal of Geometric Methods in Modern Physics},
  title        = {Symmetries in Classical Field Theory},
  year         = {2004},
  issn         = {1793-6977},
  month        = oct,
  number       = {05},
  pages        = {651--710},
  volume       = {01},
  creationdate = {2024-10-02T11:00:33},
  doi          = {10.1142/s0219887804000290},
  groups       = {Reduction},
  publisher    = {World Scientific Pub Co Pte Lt},
}

@Article{Skinner1983,
  author       = {Skinner, R. and Rusk, R.},
  journal      = {Journal of Mathematical Physics},
  title        = {Generalized Hamiltonian dynamics. I. Formulation on $T*Q\oplus TQ$},
  year         = {1983},
  issn         = {1089-7658},
  month        = nov,
  number       = {11},
  pages        = {2589--2594},
  volume       = {24},
  creationdate = {2024-10-02T11:21:28},
  doi          = {10.1063/1.525654},
  groups       = {Geometric Classical Mechanics},
  publisher    = {AIP Publishing},
}

@Article{Lichnerowicz1977,
  author       = {Lichnerowicz, André},
  journal      = {Journal of Differential Geometry},
  title        = {Les variétés de {Poisson} et leurs algèbres de {Lie} associées},
  year         = {1977},
  issn         = {0022-040X},
  month        = jan,
  number       = {2},
  volume       = {12},
  abstract     = {Semantic Scholar extracted view of "Les variétés de Poisson et leurs algèbres de Lie associées" by A. Lichnerowicz},
  creationdate = {2024-10-07T17:14:45},
  doi          = {10.4310/jdg/1214433987},
  file         = {Semantic Scholar Link:Lichnerowicz1977 - Les Variétés De Poisson Et Leurs Algèbres De Lie Associées.html:URL:https\://www.semanticscholar.org/paper/Les-vari%C3%A9t%C3%A9s-de-Poisson-et-leurs-alg%C3%A8bres-de-Lie-Lichnerowicz/853aae502829af1a486619d66ea1e48fb5d7e892},
  groups       = {Poisson Manifolds},
  url          = {https://projecteuclid.org/journals/journal-of-differential-geometry/volume-12/issue-2/Les-vari%c3%a9t%c3%a9s-de-Poisson-et-leurs-alg%c3%a8bres-de-Lie-associ%c3%a9es/10.4310/jdg/1214433987.full},
  urldate      = {2024-10-07},
}

@article{Leon2024a,
author = {M. de Le\'{o}n and R. Izquierdo-L\'{o}pez},
title = {Coisotropic reduction in multisymplectic geometry},
journal = {Geometric Mechanics},
volume = {01},
number = {03},
pages = {209-244},
year = {2024},
doi = {10.1142/S2972458924500096},

}
\end{document}